\documentclass[%
 reprint,
 superscriptaddress,
 nofootinbib,
 amsmath,amssymb,
 aps,
 pra,
 floatfix,
]{revtex4-2}
\usepackage{graphicx}
\usepackage{amsmath,amsfonts,amssymb}
\usepackage{mathtools}
\usepackage[normalem]{ulem}
\usepackage{hyperref}
\usepackage{cleveref}
\usepackage{physics}
\usepackage{tikz}
\usetikzlibrary{arrows.meta,positioning,shapes.geometric,shapes.multipart,calc,decorations.pathreplacing}
\usepackage{bm}
\usepackage{amsthm}
\usepackage{thmtools}
\usepackage{thm-restate}
\usepackage[table,dvipsnames]{xcolor}

\usepackage[enableskew]{youngtab}
\usepackage{ytableau}
\usepackage{bbm}
\usepackage[T1]{fontenc}
\usepackage[utf8]{inputenc}

\newcounter{subfigure}

\ytableausetup{smalltableaux}

\newtheorem{theorem}{Theorem}%
\newtheorem{proposition}{Proposition}%
\newtheorem{lemma}{Lemma}%
\newtheorem{definition}{Definition}%
\newtheorem{remark}{Remark}%
\newtheorem{conjecture}{Conjecture}%
\newtheorem{problem}{Problem}%

\theoremstyle{plain}
\newtheorem{res}{Result}
\usepackage[framemethod=TikZ]{mdframed}
\definecolor{thmgray}{gray}{0.95}
\definecolor{conclusionbackground}{gray}{0.95}
\newcommand{\thmspaceafter}{2mm}
\newenvironment{result}%
{\begin{mdframed}[backgroundcolor=thmgray,nobreak,roundcorner=5pt,linewidth=1pt]\begin{res}}%
		{\end{res} \vspace{\thmspaceafter} \end{mdframed}}

\DeclareMathOperator*{\per}{per}

\newcommand{\Sym}{\mathrm{Sym}}
\DeclareMathOperator{\sgn}{sgn}

\newcommand{\sym}{\mathrm{sym}}
\newcommand{\U}{\mathrm{U}}

\renewcommand{\d}{m}

\newcommand{\poly}{\mathrm{poly}}

\newcommand{\const}{19.1\,\left(\frac{m}{n}+1\right)}

\definecolor{darkgreen}{RGB}{0,100,0}

\newcommand{\wigner}{HUN-REN Wigner Research Centre for Physics, Konkoly–Thege Miklós út 29-33, H-1525 Budapest, Hungary}
\newcommand{\elte}{E\"otv\"os  Lor\'and  University, Pázmány Péter sétány 1/a, H-1117 Budapest, Hungary}
\newcommand{\helsinki}{University of Helsinki, Yliopistonkatu 4 00100 Helsinki, Finland}
\newcommand{\algo}{Algorithmiq Ltd, Kanavakatu 3C 00160 Helsinki, Finland}
\newcommand{\pas}{Center for Quantum Enabled-Computing, Center for Theoretical Physics of the Polish Academy of Sciences, Al. Lotników 32/46, 02-668 Warsaw, Poland}

\begin{document}
\title{
    General framework for anticoncentration and linear cross-entropy\\ benchmarking in photonic quantum advantage experiments
}

\author{Zoltán Kolarovszki}
\email{kolarovszki.zoltan@wigner.hun-ren.hu}
\affiliation{\wigner}
\affiliation{\elte}
\author{Ágoston Kaposi}
\affiliation{\wigner}
\affiliation{\elte}
\author{Zoltán Zimborás}
\affiliation{\wigner}
\affiliation{\helsinki}
\affiliation{\algo}
\author{Michał Oszmaniec}
\affiliation{\pas}

\begin{abstract}
    Photonic architectures are one of the leading platforms for demonstrating quantum computational advantage, with Boson Sampling  and Gaussian Boson Sampling as the primary schemes. Yet, we lack for these photonic primitives a systematic theoretical understanding of linear cross-entropy benchmarking (LXEB), which is a central tool for testing quantum advantage proposals. In this work, we develop a representation-theoretic framework for the classical computation of average LXEB scores and second moments of output probability distributions, covering a range of quantum advantage experiments based on scattering $n$-photon states through $m$-mode Haar-random interferometers. Our methods apply in any regime, including the saturated regime, where the (expected) number of photons is comparable to the number of optical modes. The same second-moment techniques also allow us to prove anticoncentration for traditional Fock-state Boson Sampling in the saturated regime. Interestingly, for Gaussian Boson Sampling second moments are not sufficient to establish a meaningful anticoncentration statement. The technical core of our approach rests on decomposing two copies of the $n$-particle bosonic space $\mathrm{Sym}^n(\mathbb{C}^m)$ into irreducible representations of $\mathrm{U}(m)$. This reduces two-copy Haar averages to computing purities of initial states after partial traces over particles, highlighting the role that particle entanglement plays for LXEB and anticoncentration.
\end{abstract}

\maketitle

\section{Introduction}

    \begin{figure*}[t]
        \centering
        \begin{tikzpicture}[
    x=1cm,y=1cm,
    font=\small,
    >=Latex,
    line/.style={thick},
    statebox/.style={
        draw, rounded corners=2pt,
        minimum width=1.25cm, minimum height=0.55cm,
        fill=blue!4
    },
    interferometer/.style={
        draw, rounded corners=3pt,
        minimum width=2.9cm, minimum height=4.7cm,
        fill=orange!8
    },
    detector/.style={
        draw, semicircle,
        minimum width=0.8cm, minimum height=0.52cm,
        rotate=270,
        fill=green!8
    },
    box/.style={
        draw, rounded corners=3pt,
        minimum width=3.9cm, minimum height=0.9cm,
        fill=gray!8
    },
    every node/.style={align=center}
]

\def\ytop{1.8}
\def\ymid{0}
\def\ybot{-1.8}

\node[statebox] (s1) at (0,\ytop) {$\sigma^{(1)}$};
\node[statebox] (s2) at (0,\ymid) {$\sigma^{(a)}$};
\node[statebox] (s3) at (0,\ybot) {$\sigma^{(m)}$};

\node at (0,0.9) {$\vdots$};
\node at (0,-0.9) {$\vdots$};

\node[interferometer] (U) at (2.6,0) {};
\node at (2.6,0.45) {\large Haar-random};
\node at (2.6,-0.05) {\large interferometer};
\node at (2.6,-0.8) {$U\sim \mathrm{Haar}(m)$};

\draw[line] (s1.east) -- (U.west |- s1.east);
\draw[line] (s2.east) -- (U.west |- s2.east);
\draw[line] (s3.east) -- (U.west |- s3.east);

\draw[line] (U.east |- 0,\ytop) -- ++(1.2,0);
\node[detector] (d1) at ($(U.east |- 0,\ytop)+(1.0,0)$) {};
\draw[line] ($(U.east |- 0,\ytop)+(1.0,0)$) -- (d1);
\node[above right=0.0cm and -0.51cm of d1] {$n_1$};

\draw[line] (U.east |- 0,\ymid) -- ++(1.2,0);
\node[detector] (d2) at ($(U.east |- 0,\ymid)+(1.0,0)$) {};
\draw[line] ($(U.east |- 0,\ymid)+(1.0,0)$) -- (d2);
\node[above right=0.0cm and -0.52cm of d2] {$n_a$};

\draw[line] (U.east |- 0,\ybot) -- ++(1.2,0);
\node[detector] (d3) at ($(U.east |- 0,\ybot)+(1.0,0)$) {};
\draw[line] ($(U.east |- 0,\ybot)+(1.0,0)$) -- (d3);
\node[above right=0.0cm and -0.53cm of d3] {$n_m$};

\node at (5.0,1.0) {$\vdots$};
\node at (5.0,-0.8) {$\vdots$};

\node at (0,2.65) {product input};

\node at (0,-2.95) {$\rho=\bigotimes_{a=1}^m \sigma^{(a)}$};

\node at (5.2,2.90) {particle-number \\ resolving detectors};
\node at (5.2,-3.00) {$\bm n={(n_1,\ldots,n_m)} \in \mathcal{S}_{m,n}$};

\node at (6.5,0.8) {output\\probability};

\node at (6.4,-0.1) {\large $\sim p_{\rho, U}$};

\node[box, text width=7.5cm, minimum height=1.15cm] (eval) at (11.8,1.9)
{ \vspace*{-3mm}
\begin{flushleft}
\hspace*{2mm} \textbf{Result 1.}
    For product inputs $\rho=\bigotimes_{a=1}^m \sigma^{(a)}$
    \\[1pt]
 \hspace*{2mm}   the LXEB reference value
    \\[-0.5pt]
    \begin{center}
    $\mathrm{LXE}^{(n)}_{\mathrm{ref}}(\rho) = \underset{U\sim \mathrm{Haar}(m)}{\mathbb E}\underset{\bm n \in \mathcal S_{m,n}}{\sum}  p_{\rho,U}(\bm n)^2$\\[-5mm]
    \end{center}
   \vspace{-2mm} 
  \hspace*{2mm}  can be classically computed in time $\operatorname{poly}(m,n)$.\\
    \end{flushleft}
};

\node[box, text width=7.5cm, minimum height=1.20cm] (eval) at (11.8,-1.6)
{\vspace*{-3mm}
\begin{flushleft}
\hspace*{2mm} \textbf{Result 2.}
   The probability distribution $p_U$ of the\\ 
\hspace*{2mm} standard Boson Sampling  in the saturated regime\\
\hspace*{2mm} \(m=\Theta(n)\) satisfies average-case anticoncentration.\\
\hspace*{2mm} There exists \(C>0\) such
 that for every \(\tau\in[0,1]\)
    \vspace{-1mm}
    \[
        \underset{\substack{U\sim\mathrm{Haar}(m)\\ \bm n\sim \mathcal U_{m,n}}}{\Pr}
        \left(
            p_U(\bm n)\ge \frac{\tau}{|\mathcal S_{m,n}|}
        \right)
        \ge
        \frac{(1-\tau)^2}{C}.
    \]\\
 \end{flushleft}
 \vspace*{-3mm}
 \hspace{2mm}
};

\end{tikzpicture}
        \caption{
            Setting and key results of the paper. Product input states are evolved via  a Haar-random \(m\)-mode interferometer and measured with particle-number-resolving detectors, resulting in an outcome $\bm{n}$. The set of allowed outcomes is denoted by $\mathcal{S}_{m,n}$ and the uniform distribution over by $\mathcal{U}_{m,n}$. We develop tools to classically compute the reference value $\mathrm{LXE}^{(n)}_{\mathrm{ref}}(\rho) = \mathbb{E}_{U\sim \mathrm{Haar}(m)}\sum_{\bm n \in \mathcal S_{m,n}}  p_{\rho,U}(\bm n)^2$ for $n$-particle sector. The same tools also establish anticoncentration for Boson Sampling in the saturated regime \(m=\Theta(n)\). 
        }
        \label{fig:general_setup_lxeb}
    \end{figure*}
    
    Photonic platforms have emerged as leading candidates for demonstrating quantum computational advantage, owing to their low decoherence, high-fidelity linear optical components, and scalability to high number of modes~\cite{O_Brien_2009,Flamini_2017,Bao2023}. This progress has enabled increasingly sophisticated photonic experiments, most notably Boson Sampling~\cite{aaronson:2010} and Gaussian Boson Sampling~\cite{hamilton2017,Kruse_2019}, which are widely regarded as one of the strongest candidates for attaining quantum computational advantage with near-term hardware. 
    The scale of conventional Boson Sampling experiments has increased steadily over the years~\cite{wang2017,brod2019,wang2019}, reaching up to 20 particles in 60 modes~\cite{wang2019}. A non-photonic realization was demonstrated in Ref.~\cite{Young_2024}, in which analog Boson Sampling was performed with approximately 180 atoms.
    At the same time, Gaussian Boson Sampling experiments have rapidly advanced to regimes with hundreds~\cite{Zhong_2020, borealis} or even thousands~\cite{liu2025robustquantumcomputationaladvantage} of optical modes and photon-number detection events.

    In practice, however, photonic quantum devices inevitably suffer from imperfections such as particle losses~\cite{saleh2016,Oszmaniec_2018,Garc_a_Patr_n_2019,LossGBS2020}, partial distinguishability~\cite{Shchesnovich_2015,Renema_2018}, detector imperfections~\cite{resta2023}, inaccuracies in circuit implementations~\cite{arkhipov2015}.
    Experimental imperfections therefore lead to deviations from the ideal theoretical model and, in sufficiently noisy regimes, can even move the device into a parameter range that admits efficient classical simulation, thereby putting constraints on claims of quantum advantage \cite{changhunclass2024, {dodd2025fastfrugalgaussianboson}}. Consequently, the development of reliable certification and benchmarking methods has become central in the study of photonic quantum advantage proposals~\cite{Spagnolo_2014,Eisert_2020,Hangleiter_2023}. However, certifying from experimental data that the samples produced by a quantum device are close in total variation (TV) distance to the target output distribution requires an exponential number of repetitions~\cite{Hangleiter_2019,Hangleiter_2023}. This is particularly significant because closeness in TV distance is the notion of approximation that underlies current hardness results in quantum advantage proposals based on sampling ~\cite{aaronson:2010, movassagh2020quantumsupremacyrandomcircuits,Bouland2019ComplexityVerificationRCS,bouland2025complexitytheoreticfoundationsbosonsamplinglinear,Oszmaniec_2022}. 
    
    Since certification in TV distance is infeasible for larger experiments, a variety of benchmarking and validation protocols have been developed in recent years for Boson Sampling and other quantum advantage schemes.
    One of the most important methods for validating quantum advantage experiments is cross-entropy benchmarking (XEB) and its linear variant (LXEB) ~\cite{boixo2018characterizing,Arute_2019,AaronsonGunn2020XEB}. These tests quantify the correlation between experimental data and ideal distribution generated by a quantum circuit of interest, and so far have been developed and realized mostly for random quantum circuits (RQC) implemented on superconducting qubits~\cite{Arute_2019, Heifei2021RQC, PhaseTransitionsRCSRQC}. 
    Importantly, both XEB and LXEB require classical computation of probabilities associated with the noiseless implementation of the quantum circuit generating the samples. The LXEB test for a given random quantum circuit amounts to (i) collecting a number of samples  from a device realizing the circuit, and (ii) computing the empirical mean of the probabilities of the samples. A device ``passes'' the LXEB test if the empirical mean is close to the reference LXEB score, i.e., the expectation value corresponding to sampling from an ideal circuit.
    Although LXEB can be efficiently spoofed~\cite{BarakChouGao2021SpoofingXEB,Aharonov23,Gao_2024}, it remains a widely used tool for benchmarking and validating quantum advantage proposals based on sampling from random quantum circuits~\cite{Hangleiter_2023}.

    Despite significant experimental progress in photonic quantum advantage proposals, the behavior of LXEB in these architectures remains poorly understood. Nevertheless, cross-entropy-based measures have already appeared in recent photonic experiments, including comparisons against adversarial classical samplers~\cite{borealis}. For Boson Sampling and Gaussian Boson Sampling, existing analytical results for average LXEB scores are asymptotic and largely restricted to the dilute regime, i.e., where the number of modes $m$ scales at least quadratically with the number of photons~\cite{aaronson:2010,martinezcifuentes2024linearcrossentropy,Ehrenberg_2025}. 
    Beyond these special cases, characterized by no collisions of photons at the output and Gaussian approximations of interferometers describing the scattering in linear optical network, no general analytical or efficiently computable expression for the reference LXEB score is currently known in photonic settings, and the available formulas are mostly restricted to the dilute regime; see \Cref{tab:bs_regimes}. This significantly restricts the practical applicability of LXEB as a benchmark for photonic quantum advantage experiments, particularly in the saturated regime, where the (expected) number of photons is comparable to the number of modes. As virtually all experiments to date operate in the saturated regime, this is both experimentally relevant and well-motivated by recent complexity-theoretic hardness results~\cite{bouland2025complexitytheoreticfoundationsbosonsamplinglinear}.
    
    \subsection*{Main results and their significance}

    In this work, we address the problem of evaluating the average LXEB score for a class of quantum advantage proposals in which an initial product photonic state \(\rho\) on \(m\) optical modes is transformed by a Haar-random passive unitary transformation \(U\in\mathrm{U}(m)\) and subsequently measured using particle-number detectors. This results in a string of occupation numbers for each mode $\bm{n}=(n_1,\ldots,n_m)$ distributed according to a  probability distribution $p_{\rho,U}$.
    Our methods apply to a broad class of photonic sampling proposals including Boson Sampling~\cite{aaronson:2010}, Gaussian Boson Sampling~\cite{hamilton2017}, and several related variants~\cite{Bentivegna_2015,Li2025,Renema_2020}. Our approach is valid for general relations between the number of photons $n$ and the number of modes $m$, as summarized in \Cref{tab:bs_regimes}. Importantly, our methods cover the saturated regime and accommodate optical loss, nonuniform squeezing, and displacements, thereby extending LXEB analysis for photonic quantum advantage proposals far beyond the current state of the art.

    \begin{table}[h]
        \centering
        \caption{Reference LXEB values available in the literature. }
        \label{tab:bs_regimes}
        \renewcommand{\arraystretch}{1.4}
        \begin{tabular}{lccc}
        \hline
        \textbf{Experiment} 
        & \textbf{$m = \Omega(n^2)$} 
        & \textbf{Arbitrary $n,m$} 
        & \textbf{Losses} \\
        \hline
        Boson Sampling (BS)  & \cite{aaronson:2010} & \color{ForestGreen} This work  & \color{ForestGreen} This work  \\
        Gaussian BS (GBS)      & \cite{martinezcifuentes2024linearcrossentropy} & \color{ForestGreen} This work & \cite{martinezcifuentes2024linearcrossentropy} \\
        Displaced GBS~\cite{Li2025} & \color{ForestGreen} This work & \color{ForestGreen} This work & \color{ForestGreen} This work \\
        Superposition BS~\cite{Renema_2020} & \color{ForestGreen} This work  & \color{ForestGreen} This work  & \color{ForestGreen} This work  \\
        Scattershot BS~\cite{Bentivegna_2015} & \cite{aaronson:2010} & \color{ForestGreen} This work  & \color{ForestGreen} This work  \\
              \hline
        \end{tabular}
    \end{table}

    On the technical side, our approach is based on decomposing two-copy Hilbert spaces at a fixed particle number $n$, denoted by $\mathrm{Sym}^n(\mathbb{C}^m)^{\otimes 2}$, onto irreducible components of the $m$-mode unitary group $\mathrm{U}(m)$. The expected LXEB score can be written as a sum (over possible outcomes $\bm{n})$ of second moments of the output distribution, $\mathbb{E}_{U\sim \mathrm{Haar}(m)}[p_{\rho,U}(\bm{n})^2]$. These objects are quadratic in the probabilities and therefore naturally admit a two-copy formulation, which brings the tensor-square representation of the interferometer on the symmetric 
    $n$-particle sector into play, allowing the relevant space to be decomposed into irreducible components. The Haar average then reduces to a finite sum of elementary terms, which can be rewritten in terms of purities of partial traces (over particles, not modes) of the initial quantum state $\rho$. This group-theoretic and first-quantization approach allows us to circumvent previous limitations in the computation of expected LXEB scores \cite{martinezcifuentes2024linearcrossentropy,Ehrenberg_2025,Ehrenberg_2025_second_moment,aaronson:2010} that resulted from (i) the use of specific forms of probability distributions involved in quantum advantage experiments (Permanents, Hafnians, etc.),  (ii) utilizing Gaussian approximations of submatrices of Haar random unitaries $U$, and (iii) the necessity of the hiding property, i.e., equivalence between different outputs $\bm{n},\bm{n}'$. Importantly, properties (ii) and (iii) hold only in the (experimentally less realistic) dilute regime $m = \Omega(n^2)$.

    Beyond benchmarking, we apply the second-moment quantities appearing in $\mathrm{LXEB}$ to study anticoncentration of the output distributions generated by these quantum advantage proposals. At an intuitive level, anticoncentration means that the output distribution is not excessively peaked: rather than placing almost all of its weight on a very small subset of outcomes, a non-negligible fraction of outputs should occur on the natural probability scale set by the inverse sample-space size~\cite{aaronson:2010}. Anticoncentration is highly relevant in complexity-theoretic arguments for the classical hardness of approximately sampling from output probability distributions in quantum advantage experiments. This is because Stockmeyer’s approximate counting algorithm~\cite{stockmeyer} produces, given an approximate classical sampler, additive estimates of output probabilities $p_{\rho,U}(\bm{n})$, and the anticoncentration property is precisely what allows such additive estimates to be upgraded to multiplicative ones on a constant fraction of instances. Anticoncentration has been established for several quantum advantage proposals, but in the photonic setting it has so far remained unproven, although substantial effort has been made in this direction \cite{nezami2021permanentrandommatricesrepresentation, PhaseTransitionsRCSRQC,Ehrenberg_2025_second_moment} since the explicit conjecture was stated in the original work of Aaronson and Arkhipov~\cite{aaronson:2010}.

    We show that anticoncentration for typical interferometers can be directly controlled by LXEB reference values. Crucially, for regular Boson Sampling (initialized with a collision-free Fock state) in the saturated regime $m=\Theta(n)$, our second moment analysis is strong enough to prove a meaningful anticoncentration result comparable to the earlier results obtained using second moment techniques for IQP~\cite{Bremner_2017}, Random Circuit Sampling~\cite{Hangleiter2018anticoncentration,dalzell_2022}, and Fermion Sampling~\cite{Oszmaniec_2022}. We emphasize that this is the first proof of anticoncentration for a photonic quantum advantage scheme. Interestingly, for other schemes, such as Gaussian Boson Sampling or Scattershot Boson Sampling, second moment techniques allow only to establish \emph{weak anticoncentration} (for arbitrary relations between $m$ and $n$), in accordance with earlier works \cite{Ehrenberg_2025_second_moment,PhaseTransitionsRCSRQC} that obtained analogous results for the dilute regime ($m = \Omega(n^2)$). We also note that our Boson Sampling result \emph{does not} prove the \emph{Permanent of Gaussians Anticoncentration Conjecture} from \cite{aaronson:2010}, which is a statement that bears significance only in the dilute regime, and requires analysis beyond second moments.

    As a byproduct of our findings and computational techniques, we establish a few additional results that may be of independent interest. First, following \cite{Hangleiter_2019}, we use formulas for expected LXEB scores to establish an exponential lower bound on the sample complexity of device-independent  certification for Boson Sampling, Gaussian Boson Sampling, and Scattershot Boson Sampling. These bounds hold for arbitrary $m,n$, going beyond the specific parameter regime studied in \cite{Hangleiter_2019}. 
    Second, we use our formulas for the second moments of output probabilities in Boson Sampling to establish a variant of Hunter-Jones conjecture about moments of Permanents of Haar random matrices (formulated in \cite{nezami2021permanentrandommatricesrepresentation}).   
    Finally, through the computation of expectation values of bosonic swap operators \(\mathbb{S}_q\), we develop an algorithmic method for computing and bounding the particle-entanglement Rényi-2 entropy of bosonic Fock states \(\ket{\bm{n}}\), also known as multimode Dicke states~\cite{Oszmaniec_2016}.
    In particular, we show that typical Fock states from the bosonic Hilbert space $\mathrm{Sym}^n(\mathbb{C}^m)$ are characterized by volume-law entanglement when $m=\Theta(n)$, complementing earlier results~\cite{Popkov_2005,Carrasco_2016} that established area-law scaling in the regime of fixed number of modes $m$.

\subsection*{Relation to prior work}

    \subsubsection*{Existing benchmarking and validation strategies for photonic quantum advantage}

    Since the hardness assumptions  of quantum advantage experiments is not directly verifiable at experimental scale, a suite of tailored and restricted methods to certify Boson Sampling-like experiments have been proposed in recent years. First, the Aaronson-Arkhipov row-norm estimator, used to rule out uniform-sampling hypotheses~\cite{aaronson2013bosonsamplingfarfromuniform,spagnolo2014experimental}; likelihood-ratio tests comparing indistinguishable-boson and distinguishable-particle models~\cite{Bentivegna_2015}; bosonic randomized benchmarking protocols~\cite{Arienzo_2025,Valahu_2024}; and Bayesian approaches~\cite{bentivegna2014,Mart_nez_Cifuentes_2023}.
    Additional certification strategies exploit device-level information, including interferometer tomography from one- and two-photon data \cite{laing2012superstabletomography} and correlation-based benchmarks capturing many-body interference, ranging from Boson Sampling signatures \cite{walschaers2016statistical} to two-point correlators for Gaussian Boson Sampling that remain informative under loss and noise \cite{phillips2019benchmarkinggbs}.
    These checks are often complemented by validation against physically motivated imperfection models and by identifying regimes where loss/noise render the experiment efficiently classically simulable \cite{rahimikeshari2016sufficientconditions}. More recently, data-driven validators based on unsupervised learning \cite{agresti2019patternrecognition}, coarse-grained ``binned'' photon-number tests \cite{seron2024binnedvalidation}, and efficient verification protocols leveraging additional single-mode Gaussian measurements \cite{chabaud2021efficientverification} have been developed to scale certification without the need full probability estimation.

\subsubsection*{Relevance of XEB and LXEB for\\ certification of quantum advantage}
    Under certain assumptions regarding experimental noise \cite{Bouland2019ComplexityVerificationRCS} XEB  bounds TV distance between ideal and experimental probability distributions. With regards to more commonly used LXEB the situation is more subtle---first, for random quantum circuits it is was used \cite{Arute_2019,Heifei2021RQC} as the diagnostic of fidelity $f$ to of the the ideal probability distribution $p_{\text{ideal}}$ to its experimental realization $p_{\text{exp}}=f p_{\text{ideal}} +(1-f)p_{\text{unif}}$, with $p_{\text{unif}}$ being a uniform distribution on $N$ bits (this simple model of experimental imperfections noise can be proved rigorously in the regime of low unital noise \cite{Dalzell2024WhiteNoise}, but breaks down for larger errors \cite{PhaseTransitionsRCSRQC} or in the presence of non-unital noise  \cite{Fefferman2024NonunitalRCS}). Second, inspired by the original proposal of RQC by Google \cite{boixo2018characterizing}, Aaronson and Gunn proposed \cite{AaronsonGunn2020XEB} a complexity-theoretic conjecture named XQUATH (``Linear Cross-Entropy Quantum Threshold Assumption'') and used it to justify classical hardness of  spoofing LXEB for RCS. However, it was shown subsequently that linear cross entropy benchmarking can be classically spoofed efficiently in experimentally relevant regimes ~\cite{BarakChouGao2021SpoofingXEB,Aharonov23,Gao_2024}. Despite this, LXEB remains a key benchmark for assessing quality of quantum advantage experiments based on RQC, which is substantiated further by the fact that it can be regarded as a form of randomized benchmarking~\cite{Helsen2022GeneralFrameworkRB}.

\subsubsection*{Prior studies on anticoncentration and\\  LXEB in the photonic context}

   The key conceptual reason why LXE can be computed for RQC lays in the fact that they form, even for shallow depths approximate 2-designs \cite{BrandaoHarrowHorodecki2016PolynomialDesigns,SchusterHaferkampHuang2024ExtremelyLowDepth}, which allows to analytically compute (approximate) second moments of output probabilities of RQC  $\mathbb{E}_C [p_C(\bm{x})^2] \approx 2/4^{N}$. Furthermore, for RQCs all outcomes are equivalent (due to so-called \emph{hiding} property) and as a result the expected LXEB score can expressed solely in terms $\mathbb{E}_C [p_C(\bm{x})^2]$, for specific outcome $\bm{x}$. In contrast, analogous  properties do not hold and for random passive linear optical circuits underlying quantum advantage experiments \cite{aaronson:2010,hamilton2017}, which makes computation of expected LXEB challenging. More specifically, until recently analysis of hardness of sampling, as well as LXEB for photonic proposals was limited to the dilute regime in which number of optical modes greatly exceeds squared number of photons, $m =\Omega(n^2)$. This condition ensured that (i) with high probability collisions of photons in the output do not occur and all events are equivalent up to permutation of modes (hiding), (ii) output statistics are approximately given by suitably normalized complex-Gaussian matrices. 

   A systematics study of LXEB for Gaussian Boson Sampling was carried out in \cite{martinezcifuentes2024linearcrossentropy}. Therein, authors derived efficient numerical tools to estimate reference values for a suite of Gaussian Boson Sampling experiments. However, the results relied on asymptotic analysis on complicated group-theoretics objects (Weingarten functions)  valid only in the extreme diluted scenarios. In contrast, our methods allow to compute LXEB scores exactly for arbitrary relation between $m$ and $n$, for the same class of input states.
   
    In Ref.~\cite{Ehrenberg_2025}, the authors employ methods introduced earlier in ~\cite{Ehrenberg_2025_second_moment} to analyze second moments of the Gaussian Boson Sampling distribution. Within this work, the authors show that Gaussian Boson Sampling exhibits a transition in anticoncentration depending on how the number of initially squeezed modes scales relative to the number of detected photons. If the number of squeezed modes grows too slowly with the number of modes, anticoncentration fails. On the other hand, when it grows sufficiently rapidly, the output probabilities satisfy a weak form of anticoncentration. Our approach is able to recover this transition in the saturated regime.

    Our work generalizes the framework presented in Ref.~\cite{martinezcifuentes2024linearcrossentropy}, which develops a linear-cross-entropy-based certification framework for Gaussian Boson Sampling that is closely related in spirit to our approach. The quantities studied there are closely analogous to ours, but their analysis is tailored specifically to Gaussian Boson Sampling, and the associated reference value is efficiently computable only in the asymptotic regime. In contrast, our formulas hold more generally at finite numbers of modes, although we do not extract their asymptotic behavior in the Gaussian Boson Sampling case; doing so could be of independent theoretical interest. Furthermore, our algorithm applies in a more general setting, including instances with displacements or nonuniform squeezing across the inputs.
    
    Finally, significant progress in understanding anticoncentration and  high moments of of Permanents of Gaussian matrices and submatrices of Haar random matrices  Ref.~\cite{nezami2021permanentrandommatricesrepresentation}. Operationally, these correspond only to collision-free probabilities in Boson Sampling experiments.  Our setting, however, differs in an essential way: we establish anticoncentration in the \emph{saturated} regime in which (i)  matrix entries contributing to the Permanent are highly correlated, (ii) we have many repeated rows (collisions).

    \vspace{2em}
    \paragraph*{Organization of the paper.}
        \Cref{sec:basics} introduces the notation and basic concepts used throughout the paper, including first- and second-quantized descriptions of bosonic systems, the Fock-space formalism, Boson Sampling and Gaussian Boson Sampling.
        In \Cref{sec:main_results}, we present the main results of this work, and in \Cref{sec:discussion} we discuss their implications together with related open problems.
        The general framework for LXEB in photonic quantum advantage experiments is developed in \Cref{sec:framework}, where we derive exact formulas for Haar-averaged LXEB reference values.
        In \Cref{sec:specific_experiments}, we apply this framework to concrete photonic quantum advantage proposals, specializing it to both Fock-state and Gaussian Boson Sampling and analyzing estimator variances as well as the effect of uniform losses.
        Finally, in \Cref{sec:anticoncentration}, we revisit the framework from a second perspective, using it to study anticoncentration, and provide a proof of average anticoncentration for saturated-regime Boson Sampling, which in turn provides support for a Stockmeyer-based hardness argument.
       The technical details and more involved proofs are given in the Appendix.

\section{Basics and notation}\label{sec:basics}

    We start by presenting basic notations and terminology necessary to describe systems of bosons that can occupy $m$ modes. Our exposition incorporates  both first  and second quantization. These perspectives are complementary---first quantization emphasizes the particle picture, which is useful for studying higher-order invariants in bosonic systems, while second quantization is familiar for quantum optics community and is often more useful to perform concrete computations.
    
    An individual particle that can be found in one of $m$ levels (modes) can be described by a single-particle Hilbert space $\mathbb{C}^{m}$, on which we have a natural orthonormal basis of modes $\{\ket{i} \}_{i=1}^m$.  Let $\pi \in S_n $, where \( S_n \) is the permutation group of \( n \) elements, and consider the unitary permutation of factors of $(\mathbb{C}^{m})^{\otimes n}$,  \( U_\pi : (\mathbb{C}^m)^{\otimes n} \to (\mathbb{C}^m)^{\otimes n} \) defined by
    $
        U_\pi \psi = U_\pi \, \phi_1 \otimes \dots \otimes \phi_n = \phi_{\pi^{-1}(1)} \otimes \dots \otimes \phi_{\pi^{-1}(n)}.
    $
    For bosons, the \(n\)-particle Hilbert space is the subspace of $(\mathbb{C}^m)^{\otimes n}$ defined by
    \begin{equation}
        \Sym^n(\mathbb{C}^m) \!  \coloneqq \!  \big \{ \psi \in (\mathbb{C}^m)^{\otimes n}: U_\pi \psi = \psi ,\;\forall \,\pi \in S_n \big \}.
    \end{equation}
    In other words, bosonic wavefunctions are the ones that are totally symmetric upon permutation of particles. The natural orthonormal basis of $\Sym^n(\mathbb{C}^m)$ is given by Fock states (equivalent to generalized Dicke states~\cite{Popkov_2005,Carrasco_2016,Nepomechie_2024}, which are parametrized by a tuple of occupation numbers $\bm{n}=(n_1,\ldots,n_m)$ (satisfying $\sum_{i=1}^m n_i=n$)  and are given by
    \begin{equation}
        \ket{\bm{n}}\!= \binom{n}{\bm{n}}^{-1/2} \mathbb{P}_{\mathrm{sym}} \left(\ket{1}^{\otimes n_1} \otimes \ldots \otimes\ket{m}^{\otimes n_m}\right)\ ,
    \end{equation}
    where $\mathbb{P}_{\mathrm{sym}}\coloneqq \frac{1}{n!}\sum_{\pi\in S_n} U_\pi$ is the orthonormal projector onto $\Sym^n(\mathbb{C}^m)$ and $\binom{n}{\bm{n}}= \frac{n!}{n_1! \ldots n_m!}$. Throughout the work, we will use $\mathcal{D}(\Sym^n(\mathbb{C}^m))$ to refer to the set of  quantum states on $\Sym^n(\mathbb{C}^m)$. 

    When the number of bosons is not fixed, the appropriate Hilbert space is the \textit{bosonic Fock space}, which is defined as the direct sum of the \(n\)-particle Hilbert spaces:
    \(
        \mathcal{F}(\mathbb{C}^m)=\bigoplus_{n=0}^{\infty}\Sym^n(\mathbb{C}^m).
    \)
    By convention, \(\Sym^0(\mathbb{C}^m)\cong \mathbb{C}\ket{0}\), where \(\ket{0}\) denotes the vacuum state containing no particles. On $\mathcal{F}(\mathbb{C}^m)$ we define the creation and annihilation operators by their action on the Fock basis states as
    \begin{align} \begin{split} \label{eq:ladder_operators_action}
        a^\dagger_j \ket{ n_1, \dots, n_j, \dots, n_m} &\!=\! \sqrt{n_j + 1} \ket{n_1, \dots, n_j + 1, \dots, n_m},\\
        a_j \ket{n_1, \dots, n_j, \dots, n_m} &\!=\! \sqrt{n_j} \ket{n_1, \dots, n_j -1, \dots, n_m},
    \end{split} \end{align}
    which satisfy the canonical commutation relations $[a_i, a_j^\dagger] = \delta_{ij} \mathbbm{1}, [a_i, a_j] = [a_i^\dagger, a_j^\dagger] = 0$.

   A passive linear optical transformation is described by a unitary matrix $U \in \U(m)$, which represents how a interferometer acts on a single-particle Hilbert space $\mathbb{C}^m$. The interferometer acts on the space of $n$ bosons  $\Sym^n(\mathbb{C}^m)$ by 
    \begin{equation}\label{eq:interferometer_action}
        \varphi_n (U) \coloneqq U^{\otimes n} |_{\Sym^n(\mathbb{C}^m)}
    \end{equation}
    Importantly, $\varphi_n$ is an irreducible representation of $\mathrm{U}(m)$~\cite{fultonharris}. Additionally, an interferometer acts on the full Fock space as $\varphi(U) = \bigoplus_{n=0}^\infty \varphi_n(U)$. Note that $\varphi(U)$ preserves the total numer of particles.
    It is useful to recall that $\varphi(U)$ induces the following transformation on the annihilation and creation operators:
    \begin{equation}
        \varphi(U)\, a_j^{\dagger} \,\varphi(U^{\dagger}) = \sum_{k=1}^m U_{kj}\, a_k^\dagger.
    \end{equation}
    Throughout this work, a photonic quantum advantage experiment is specified by an input state $\rho$ on $\mathcal{F}(\mathbb{C}^m)$, followed by a passive linear optical transformation described by  $U \in \mathrm{U}(m)$. The resulting state is measured by the photon-number-resolving POVM
    $\{ \ketbra{\bm{n}} \}_{\bm{n} \in \mathbb{N}_0^m}$, 
    each outcome corresponds to an occupation pattern
    $\bm{n} = (n_1,\ldots,n_m)$ and $n_i$ denotes the number of detected photons
    in mode $i$ (note that $n_i$ is unconstrained).  
    The probability of observing a given occupation pattern $\bm{n}$ is then given by the Born rule  
    \begin{equation}\label{eq:particle_detection_probability} 
        p_{\rho, U}(\bm{n})
        \coloneqq
        \Tr\!\left[
            \varphi_n(U)\, \rho \,\varphi_n(U)^\dagger
            \ketbra{\bm{n}}
        \right].
    \end{equation}
    Sampling from $p_{\rho, U}$, for different initial states $\rho$ constitutes the computational task underlying photonic quantum advantage experiments  such as Boson Sampling ~\cite{aaronson:2010,bouland2025complexitytheoreticfoundationsbosonsamplinglinear}, Gaussian Boson Sampling \cite{hamilton2017}, and their variants \cite{Bentivegna_2015,Grier_2022,Li2025}. Throughout this work we will be often interested in the subset of outputs $\bm{n}$ that correspond to a fixed particle number $|\bm{n}|\coloneqq \sum_{i=1}^m n_i =n$:
    \begin{equation} \label{eq:Smn}
        \mathcal S_{m,n}
        \coloneqq
        \left\{
            \bm n\in\mathbb N_0^m:\ |\bm n|=n
        \right\}\ .
    \end{equation}
    For the regular Boson Sampling \cite{aaronson:2010} the input state is a collision-free Fock state 
    \begin{equation}\label{eq:fockin}
        \ket{\bm n_0}
        \coloneqq
        |\!\underbrace{1, \dots, 1}_{n}, \underbrace{0, \dots, 0}_{m-n}\rangle\ .
    \end{equation}
    The corresponding probability distribution has support precisely on $\mathcal{S}_{m,n}$ and individual probabilities are expressed in terms of Permanents of matrices constructed from the matrix $U$ describing the interferometer. 
    
    In the Gaussian Boson Sampling scheme~\cite{hamilton2017}, the input is instead a multimode Gaussian state; in the most common variant, this is a product of single-mode squeezed vacuum states characterized by a squeezing parameter $r\in \mathbb{R}_+)$ and  defined as
    \begin{equation}\label{eq:squeezed_state}
        \ket{r} \coloneqq \exp\left(r (a^\dagger)^2 - r a^2 \right) \ket{0} .
    \end{equation}
    In this case output probability has support on $\mathcal{S}_{m,n}$ corresponding to different even occupation numbers $n$, and the probabilities themselves are expressed in terms of Hafnians constructed from submatrices of $U$.

    In our work, we will make extensive use of Haar measure on the on \(\mathrm{U}(m)\), denote by $\mu_m$. Recall that $\mathrm{Haar}(m)$ is a unique probability measure on Lie group  \(\mathrm{U}(m)\) that  both left and right group multiplication, known as the \emph{Haar measure}. Accordingly, the Haar measure provides the natural notion of sampling an \(m\times m\) unitary matrix uniformly at random which is used in formulation of  quantum advantage experiments. For more details on the Haar measure, see Ref.~\cite{Mele_2024}.
    The expectation values over unitaries drawn from $\mathrm{Haar}(m)$ are can be written as integration with respect to the Haar measure:
    \begin{equation}
        \underset{U \sim \mu_m }{\mathbb{E}}\!\left[f(U)\right]
        \coloneqq \int_{U \in \mathrm{U}(m)} f(U)\,\mathrm{d}\mu_{\mathrm{Haar}}(U) . 
    \end{equation}
    where \(f:\mathrm{U}(m)\to\mathbb{C}\) is an  arbitrary integrable function on \(\mathrm{U}(m)\).
    Similarly, let \(\mathcal{U}_{m,n}\) denote the uniform distribution on \(\mathcal{S}_{m,n}\). Then, for any function $X:\mathcal{S}_{m,n}\rightarrow \mathbb{C}$ we have
    \begin{equation}
        \underset{\bm{n}\sim \mathcal{U}_{m,n}}{\mathbb{E}}\!\left[X_{\bm{n}}\right]
        \coloneqq
        \frac{1}{|\mathcal{S}_{m,n}|}\sum_{\bm{n}\in\mathcal{S}_{m,n}} X_{\bm{n}}.
    \end{equation}

    We conclude this part with a formal formulation on linear cross-entropy  benchmarking (LXEB) for photonic quantum advantage proposals (cf. \cite{martinezcifuentes2024linearcrossentropy} for extended discussion of this topic). Consider a probability distribution $p_{\rho,U}$ corresponding to outputs statistics at the output of the interferometer $U$, for initial state $\rho$ (cf. Eq. \eqref{eq:particle_detection_probability}). Assume that $\rho$ is characterized by the fixed particle number\footnote{Since $\varphi(U)$ preserves particle number, this condition can be enforced by considering ``postselected'' probability distribution $p_{\rho,U}$ conditioned on observing specific total photon number $n$.} $n$ and thus $\bm{n}\in\mathcal{S}_{m,n}$.  
    Let $q$ be another probability distribution corresponding to experimental realization of the sampling scheme corresponding to $\rho$ and $U$.
    The linear cross entropy between $p_{\rho,U}$ and $q$ is defined by
    \begin{equation}\label{LEXdef} 
        \mathrm{LXE}(p_{\rho,U}, q) \coloneqq \underset{\bm{n}\sim q}{\mathbb{E}}\  p_{\rho,U}(\bm{n}) = \sum_{\bm{n}\in\mathcal{S}_{m,n}} q(\bm{n}) p_{\rho,U}(\bm{n})\ .
    \end{equation}
    This measure indicates correlations between \emph{ideal} probabilistic distribution $p_{\rho,U}, q)$ (that may be hard to sample from classically) and experimental distribution $q$. If $q = p_{\rho,U}$ (i.e., $q$ perfectly implements ideal probability distribution) the LXEB score equals the $2$-norm of $p_{\rho,U}$, $\mathrm{LXE}(p_{\rho,U}, p_{\rho,U})= \sum_{\bm{n}\in\mathcal{S}_{m,n}} p_{\rho,U}(\bm{n})^2$. 
    The \emph{reference value} for LXEB for a given input state $\rho$ is obtained by taking $\mathrm{LXE}(p_{\rho,U}, p_{\rho,U})$ and averaging it over $U\sim\mathrm{Haar}(m)$, yielding
    \begin{equation}\label{eq:ref_def}
        \mathrm{LXE}_{\mathrm{ref}}^{(n)}(\rho)
        \coloneqq
        \underset{U\sim\mathrm{Haar}(m)}{\mathbb E} \sum_{\bm n\in\mathcal S_{m,n}} p_{\rho,U}(\bm n)^2.
    \end{equation}
    This quantity is central in our analysis. From the perspective of LXEB it sets the normalization for the associated LXEB fidelity (see \cref{sec:numerical_experiments}) and therefore plays the role of the fundamental benchmark reference value.
    
    In the  LXEB test, which is defined for a given input state $\rho$ and unitary $U\in\mathrm{U}(m)$, we collect now $s$ independent samples $\mathcal{X}=\{\bm{n}_1,\ldots,\bm{n}_s\}$, where $\bm{n}_i \sim q$. We define empirical LXEB score by
    \begin{equation}\label{eq:empiricalLXE}
        \widehat{\mathrm{LXE}}(p_{\rho,U}, \mathcal{X})\coloneq \frac{1}{s} \sum_{i=1}^s p_{\rho,U}(\bm{n}_i)\ .
    \end{equation}
    Clearly, $ \widehat{\mathrm{LXE}}(p_{\rho,U}, \mathcal{X})$ is an unbiased estimator of $\mathrm{LXE}(p_{\rho,U}, q)$ for $\bm{n}_i \sim q$.   
    We say that a collection of samples $\mathcal{X}$ passes the LXEB test if
    \begin{equation}
        \frac{|\,\widehat{\mathrm{LXE}}(p_{\rho, U}, \mathcal{X}) - \mathrm{LXE}_{\mathrm{ref}}^{(n)}(\rho)\,|}{\mathrm{LXE}_{\mathrm{ref}}^{(n)}(\rho)} < \Delta
    \end{equation}
    for some $\Delta \in[0,1)$, i.e., when $\widehat{\mathrm{LXE}}(p_{\rho,U}, \mathcal{X})$ approximates $\mathrm{LXE}_{\mathrm{ref}}^{(n)}(\rho)$ up to relative error $\Delta$ (see \Cref{sec:discussion} for a discussion of well-posedness of this task). 

    Throughout the paper, we use standard asymptotic notation: for $f,g$ nonnegative functions, $f(x)=O(g(x))$ means that $f$ grows no faster than $g$ up to a constant factor, while $f(x)=\Omega(g(x))$ means that $f$ grows at least as fast as $g$. Finally, $f(x)=\Theta(g(x))$ means that both statements hold, i.e., $f$ and $g$ have the same asymptotic scaling up to constant factors.

    Finally, we adopt the notation of the rising Pochhammer symbol:
    \begin{equation}
        (a)_b \coloneqq
        \begin{cases}
            1, & b=0,\\
            a(a+1)\cdots(a+b-1), & b\ge 1,
        \end{cases}
    \end{equation}
    where $b\in\mathbb{N}_0$.

\section{Overview of Results}
    \label{sec:main_results}

    Below, we state the main results of our work on LXEB and anticoncentration in photonic quantum advantage experiments. We also discuss the consequences and scope of these findings, and indicate where in the manuscript further details can be found.

    \subsubsection*{Linear cross-entropy benchmarking}

    \begin{result}[Efficient LXEB reference value computation]\label{res:LXEB}
        Let
        \(
        \rho=\bigotimes_{a=1}^m \sigma^{(a)}
        \)
        be an \(m\)-mode product state, where each \(\sigma^{(a)}\) is a single-mode state.  Assume that for each mode \(a\), the matrix elements in the single-mode Fock basis
        \(
        \langle k \vert \sigma^{(a)} \vert \ell \rangle\) for 
        \( 0 \le k,\,\ell \le n,
        \)
        are given. Then the LXEB reference value in the $n$ particle sector
        \[
        \mathrm{LXE}_{\mathrm{ref}}^{(n)}(\rho)
        \coloneqq
        \sum_{\bm n \in \mathcal S_{m,n}}
        \underset{U \sim \mathrm{Haar}(m)}{\mathbb E}
        [p_{\rho,U}(\bm n)^2],
        \]
        can be computed classically in time $\mathrm{poly}(m,n)$.
    \end{result}

    The technical ingredients behind this statement are developed mostly in \Cref{sec:framework}. There, we show that the Haar-averaged second moments of the output probabilities, and hence the reference value $\mathrm{LXE}_{\mathrm{ref}}^{(n)}(\rho)$, can be written as a finite linear combination of bosonic-swap $\mathbb{S}_q$ expectation values on $\rho^{\otimes 2}$. Equivalently, for two identical copies this becomes a linear combination of purities of particle-reduced density matrices of the relevant fixed-$n$ component of the input state, which makes explicit the role played by particle entanglement in the LXEB problem.

    The scope of Result \ref{res:LXEB} is very broad. It covers standard Boson Sampling~\cite{aaronson:2010}, Scattershot Boson Sampling~\cite{Bentivegna_2015}, Gaussian Boson Sampling~\cite{hamilton2017}, Superposition Boson Sampling~\cite{Renema_2020}, and displaced variants such as Displaced Gaussian Boson Sampling~\cite{Li2025}. The common reason is that in these models the input state can be assumed to be a product across modes. The same reasoning also applies to lossy variants, including nonuniform losses whenever they can be commuted onto the input state~\cite{Oszmaniec_2018}, since this only modifies the initial state while preserving the product structure. Importantly, the procedure works for arbitrary relations between the number of modes $m$ and the detected particle number $n$. In summary, Result \ref{res:LXEB} provides a tractable classical procedure for evaluating benchmark reference values in experimentally relevant regimes across a wide range of photonic quantum advantage proposals.

    The Fock basis coefficients of the constituent single-mode states truncated to a fixed particle number $n$ can be obtained, for example, using \textsc{Piquasso}~\cite{Kolarovszki_2025}. In \Cref{sec:numerical_experiments}, we test how empirical estimators converge to the corresponding reference values for Boson Sampling and Gaussian Boson Sampling in the saturated regime, and we further explore how losses affect the LXEB scores.

    In several important cases, we go beyond efficient computation and derive explicit closed formulas for $\mathrm{LXE}_{\mathrm{ref}}$, which is important for the asymptotic analysis of anticoncentration. The scenarios for which $\mathrm{LXE}_{\mathrm{ref}}$ can be computed exactly include Boson Sampling (including the case of uniform losses), Scattershot Boson Sampling, and Gaussian Boson Sampling; these are presented in \Cref{sec:specific_experiments}.

    \subsubsection*{Anticoncentration in the saturated regime}

    Recent work by \cite{bouland2025complexitytheoreticfoundationsbosonsamplinglinear} substantially narrowed the gap between complexity-theoretic hardness arguments and realistic photonic experiments by extending evidence for
    hardness from the traditional dilute regime~\cite{aaronson:2010} to the experimentally relevant saturated regime $m = \Theta(n)$ for Boson Sampling and a variant of Gaussian Boson Sampling. Our results complement this development from a different angle.
    The explicit expression for the LXEB reference value, combined with the Paley-Zygmund inequality~\cite{PETROV20072703}, yields an average-case anticoncentration theorem over output configurations $\bm{n}\in\mathcal{S}_{m,n}$, showing that a constant fraction of output probabilities remains of order inverse of the size of the sample space.

    \begin{result}[Average anticoncentration in saturated-regime Boson Sampling]\label{res:anticon}
        Let $p_U$ be a probability distribution in a Boson Sampling experiment initialized with a standard collision-free input $\bm{n}_0=(1^n 0^{m-n})$. For every \(\tau\in[0,1]\) and every \(n\ge 1\), we have the inequality
        \[
            \underset{\substack{U\sim\mathrm{Haar}(m)\\ \bm n\sim \mathcal U_{m,n}}}{\Pr}
            \left(
                p_U(\bm n)\ge \frac{\tau}{|\mathcal S_{m,n}|}
            \right)
            \ge
            \frac{(1-\tau)^2}{C_{m, n}},
        \]
        where $C_{m, n} = \const$.
        In particular, when \(m=\Theta(n)\), this lower bound remains bounded away from zero uniformly in \(n\). 
    \end{result}

    The appearance of the uniform distribution \(\mathcal{U}_{m,n}\) over the possible outcomes \(\bm{n}\in\mathcal{S}_{m,n}\) reflects the absence of the hiding property in the saturated regime~\cite{bouland2025complexitytheoreticfoundationsbosonsamplinglinear}. In particular, when $m = \Theta(n)$, the outcomes \(\bm{n}, \bm{n}' \in \mathcal{S}_{m, n}\) are inequivalent unless are related by a permutation of modes. 

    Although the notion of anticoncentration in Result \ref{res:anticon} is weaker than the one in the original Permanent of Gaussians anticoncentration conjecture of Ref.~\cite{aaronson:2010}, it is still sufficient for the relevant complexity-theoretic application. As we explain in \Cref{sec:anticoncentration}, average anticoncentration upgrades the additive estimate produced by Stockmeyer's approximate counting algorithm~\cite{stockmeyer} into a multiplicative estimate on a constant fraction of instances. Consequently, an efficient classical sampler approximating the Boson Sampling distribution to total variation error \(\beta=1/\poly(n)\) would yield an algorithm for solving \( |\mathrm{SUPER}|^2\)---the problem introduced in Ref.~\cite{bouland2025complexitytheoreticfoundationsbosonsamplinglinear} to justify hardness of Boson Sampling in the saturated regime---up to multiplicative error on a non-negligible fraction of instances. Thus, if one conjectures that this multiplicative approximation task for \( |\mathrm{SUPER}|^2\) is \(\#\mathsf{P}\)-hard on a non-negligible fraction of instances, then the existence of such a sampler would imply, by the standard arguments based on Toda's theorem, a collapse of the polynomial hierarchy. In this sense, average anticoncentration places saturated Boson Sampling on a footing comparable to RCS, IQP and Fermion Sampling, where analogous multiplicative average-case hardness assumptions are made ~\cite{Hangleiter2018anticoncentration,Bremner_2017,Oszmaniec_2022}.

    The formal proof of Result \ref{res:anticon} is given in \Cref{sec:anticoncentration}, which uses the Paley-Zygmund inequality and a careful upper bound on LXEB reference values for regular Boson Sampling. Importantly, we get bounds on anticoncentration for an arbitrary $n$-photon input state\footnote{If input state does not have a fixed number of photons we consider part of the statistics characterized by a fixed photon number $n$, c.f. Section \ref{sec:anticoncentration}. } $\rho$:
    \begin{equation}\label{eq:acMAIN}
    \underset{\substack{U\sim\mathrm{Haar}(m)\\ \bm n\sim \mathcal U_{m,n}}}{\Pr}
        \left(
            p_{\rho,U}(\bm n)\ge \frac{\tau}{|\mathcal S_{m,n}|}
        \right)
        \ge
        \frac{(1-\tau)^2}{\mathrm{AC}^{(n)}_\rho},
    \end{equation}
    where $ \mathrm{AC}^{(n)}_\rho
            = |\mathcal{S}_{m,n}| \times\mathrm{LXE}_{\mathrm{ref}}^{(n)}(\rho)
    $.
    The bounds based on second moments turn out to be too weak to establish strong (i.e., corresponding to a constant denominator in the right hand side of \eqref{eq:acMAIN}) anticoncentration for experimentally relevant Gaussian Boson Sampling and Scattershot Boson Samplings schemes in the saturated regime. However, for these models it is possible to prove \textit{weak anticoncentration}~\cite{aaronson:2010}, which corresponds to $\mathrm{AC}^{(n)}_\rho=\mathrm{poly}(m,n)$, at least for the case where the number of sources $d$ of squeezed states (the remaining $m-d$ modes are initialized with vacuum) is comparable the total number of  modes $m$. On the other hand, when we only have a single source ($d=1$), then $\mathrm{AC}^{(n)}_\rho$ exhibits exponential scaling and even weak anticoncentration fails. We note, that these observations are reminiscent of the anticoncentration phenomena previously observed for Gaussian Boson Sampling in the dilute regime~\cite{PhaseTransitionsRCSRQC}. The details are given in Proposition \ref{prop:scatter_asymptotic_bounds} and \ref{prop:gaussian_asymptotic_bounds} formulated in Section \ref{sec:anticoncentration}.

   \subsubsection*{Further results and consequences}

    \noindent The framework developed in this work has also further consequences beyond already discussed results.

    \medskip
    
    \noindent\textbf{Sample complexity of certification tests.}
        Following a line of thought of Ref.~\cite{Hangleiter_2019}, we can use the second moment of the output distribution to give a bound on the sample complexity of certifying different Boson Sampling schemes. As described in detail in \Cref{app:sample_complexity}, the certification of the output distribution requires at least 
        \begin{equation}
            \Omega\!\left(|\mathcal E|^{1/4}/ \sqrt{\mathrm{AC}_\rho^{(n)}}\right)
        \end{equation}
        samples for a typical Haar-random interferometer, where \(|\mathcal E|\) denotes the number of possible outcomes\footnote{Typically $|\mathcal E|=|\mathcal{S}_{m,n}|$, but for Scattershot Boson Sampling, that slightly diverges from the setup in main part, we have $|\mathcal E|=|\mathcal{S}_{m,n}| |\mathcal{S}_{d,n}|$, where $d$ is the number of two-mode squeezed states in the input.}. Since \(\mathrm{AC}_\rho^{(n)}\) grows at most polynomially in the system size  whereas \(|\mathcal E|\) grows exponentially in 
        all the considered photonic quantum advantage schemes (Boson Sampling and its Scattershot and Gaussian variants), sample-efficient non-interactive device-independent certification is impossible for these schemes. 

    \medskip

    \noindent \textbf{Proof of Hunter-Jones conjecture for $t=2$.}
        As a further application of our framework, we establish the following asymptotic moment statement, which can be viewed as a rigorous confirmation of the \(t=2\) case of the Hunter-Jones conjecture. In Ref.~\cite{nezami2021permanentrandommatricesrepresentation}, the author conjectured an explicit asymptotic formula for Haar moments of \(|\!\per(U)|^2\), where \(U\sim\mathrm{Haar}(n)\). While the case \(t=1\) is immediate, the case \(t=2\) had only been supported by numerical evidence. A consequence of our results is a proof of this first nontrivial instance.
        Let \(U\sim \mathrm{Haar}(n)\), and define
        $
            X_U \coloneqq |\! \per(U)|^2
        $. Then
        \begin{equation}
            \lim_{n\to\infty}
            \frac{
                \underset{U\sim \mathrm{Haar}(n)}{\mathbb{E}}\!\left[X_U^2\right]
            }{
                \left(
                \underset{U\sim \mathrm{Haar}(n)}{\mathbb{E}}\!\left[X_U\right]
                \right)^2
            }
            = 2.
        \end{equation}
        The proof of this fact can be found in \Cref{app:proof_hunter_jones}.
        To the best of our knowledge, this is the first rigorous proof of the \(t=2\) case of the Hunter-Jones conjecture.

    \medskip
   
    \noindent \textbf{R\'enyi-2 entropies.}
        A central object in our analysis is the bosonic swap operator $\mathbb{S}_q$, which implements a permutation-symmetric swap of $q$ out of $n$ particles. Its expectation value in $\rho \otimes \rho$ directly yields the purity of $q$-particle reduced density matrix
        $\Tr\left[\mathbb{S}_q\,\rho^{\otimes 2}\right]
        =
        \Tr\left[\Tr_{n-q}(\rho)^2\right]$,
        the negative logarithm of which is the R\'enyi-2 entropy. 
        Our framework provides an efficient method to evaluate this quantity, making it a practical numerical tool for studying particle entanglement in highly symmetric bosonic states, including generalized Dicke states. On the analytical side, we derive a closed-form expression for the purity of the particle-reduced density matrix averaged uniformly over all Fock states with fixed particle number $n$
        and mode number $m$, from which a lower bound on the average Rényi-2 entropy follows. In the regime $m = \Theta(n)$, this lower bound grows as $\Theta(n)$, i.e., a volume law holds. For details, see \Cref{app:entanglement}.

\section{Outlook and open problems}\label{sec:discussion}
    In this work, we have developed representation-theoretic methods for computing expected LXEB scores across a broad class of photonic quantum advantage schemes. The same second-moment techniques enabled us to prove anticoncentration for Fock-state Boson Sampling in the saturated regime ($m = \Theta (n)$), thereby strengthening the hardness guarantees for this quantum advantage proposal~\cite{bouland2025complexitytheoreticfoundationsbosonsamplinglinear}. Beyond this progress, several interesting open problems remain.

    A natural direction for future work is the extension of our results to scenarios in which photon-number resolution is replaced by \emph{threshold detectors}, which distinguish only between vacuum and non-vacuum outcomes~\cite{Quesada_2018} In this setting, measurement outcomes no longer correspond to fixed photon-number subspaces. Our results also do not cover \emph{Bipartite Gaussian Boson Sampling}~\cite{Grier_2022}, though we expect that this case can be addressed with closely related techniques. 
    
    Another natural generalization is the treatment of \emph{nonuniform optical losses}~\cite{Oszmaniec_2018,Renema2026mitig}: while uniform losses are naturally accommodated by the present framework, spatially varying loss breaks the symmetries underlying the uniform-case derivations. Extending our LXEB reference value results to \textit{partially distinguishable photons} is a further open direction of direct experimental relevance~\cite{Shchesnovich_2015,Renema_2018}.
    
    Several foundational questions concerning LXEB in photonic quantum computing also remain open. We did not rigorously analyze the convergence of the empirical estimator 
    $\widehat{\mathrm{LXE}}(p_{\rho,U},\mathcal{X})$ to $\mathrm{LXE}(p_{\rho,U}, p_{\rho,U})$. Similarly, we did not prove, that with high probability over Haar random $U$, the LXEB score
    concentrates around its reference value, although this is strongly supported by our numerical experiments (see \Cref{sec:numerical_experiments}).
    Establishing results of this kind would require control over higher moments of \(p_{\rho,U}(\bm{n})\) which we believe is within reach using similar techniques---note that a formal proof of analogous concentration behavior for RCS-based LXEB tests has only recently appeared~\cite{NickJonas2026}. It would also be worthwhile to explore the photonic analogue of the XQUATH assumption of~\cite{AaronsonGunn2020XEB} and examine its implications for spoofing LXEB tests in photonic quantum advantage schemes.

    Furthermore, techniques introduced in this work could be useful in the context of \emph{Bosonic RB}~\cite{Arienzo_2025,Valahu_2024}, a method which is currently developed mostly in the dilute regime ($m = \Omega(n^2)$) and suffer from computational overheads resulting from complicated group-theoretics structures involved, i.e., summations over multiple Clebsch-Gordan coefficients. We believe that the application of the first-quantization-inspired techniques developed in this paper could make Bosonic RB more efficient and more amenable to the analysis of noise in realistic quantum advantage experiments operating in the saturated regime.

    Finally, from a structural perspective, it would be interesting to understand higher-order invariants of passive linear optics in the language of second quantization, without explicitly relying of first quantization techniques and Schur-Weyl duality.

\section*{Acknowledgement}
    The authors used AI tools, in particular ChatGPT 5.2 PRO and 5.4 Pro, in the preparation of this manuscript to assist with symbolic manipulations, simplification of derivations, tightening of bounds for the AC scores, and formatting. Notably, the formula stated in \Cref{prop:mean_purity_simple} was suggested by AI; this identity substantially simplified the proof of anticoncentration for Boson Sampling in the saturated regime. AI assistance also helped identify several useful simplifications in combinatorial expressions, including the applicability of the Pfaff-Saalschütz formula (cf. \Cref{app:useful_comb}). All such suggestions and derivations were subsequently checked and validated by the authors, who are fully responsible for the final content.

    We thank Michał Kotowski, Marcin Kotowski, Bartosz Naskr\k{e}cki, Marcel Hinsche, Ingo Roth and Markus Heinrich for interesting discussions.
    The authors would also like to thank the support of the Hungarian National Research, Development and Innovation Office (NKFIH) through the Quantum Information National Laboratory of Hungary and the grants TKP-2021-NVA-04, TKP-2021-NVA-29 and FK 135220. ZZ was partially supported by the Horizon Europe programme HORIZON-CL4-2022-QUANTUM-01-SGA via the project 101113946 OpenSuperQPlus100 and the QuantERA II project HQCC-101017733. 
    MO acknowledges the support from the European Union’s Horizon Europe research and innovation program under EPIQUE Project GA No. 101135288, and from National Science Center, Poland within the QuantERA III Programme (No 2023/05/Y/ST2/00140 acronym Tuquan).
    The C4QEC project is carried out within the IRAP of the Foundation for Polish Science co-financed by the European Union.
    The authors also acknowledge the computational resources provided by the Wigner Scientific Computational Laboratory (WSCLAB).

    \emph{Note added.} Upon completion of this work, we became aware of independent work by Mhiri et al.~\cite{mhiri2026} that studies anticoncentration and LXEB in Boson Sampling using different group-theoretic techniques and proves anticoncentration in the saturated regime.

\section{Framework}\label{sec:framework}
   
   In this section, we develop a general framework to compute second moments of output probability distributions $p_{\rho,U}$ and LXEB reference values across a broad class of photonic quantum advantage experiments. The tools developed here can also be used to compute general averages involving two copies $\varphi_n (U)\otimes \varphi_n(U)$ of the bosonic representation of $\mathrm{U}(m)$.

   In our derivations, we will adapt the first-quantization perspective and the Schur-Weyl duality, linking representation theory of unitary and symmetric groups acting on tensor product space $(\mathbb{C}^m)^{\otimes t}$ ~\cite{fultonharris}. In particular, irreducible representations of $\mathrm{U}(m)$ in this space are indexed by Young diagrams $\lambda$ of size $t$ and height $h(\lambda)\leq m$ \footnote{A Young diagram $\lambda$ of size $t$ is a partition $\lambda\vdash t$, i.e., a weakly decreasing sequence of nonnegative integers $\lambda=(\lambda_1,\lambda_2,\ldots)$ satisfying $\sum_i \lambda_i=t$. It is represented graphically as a left-justified array of boxes with $\lambda_i$ boxes in the $i$-th row. The hight of $\lambda$, $h(\lambda)$ is a number of its nonzero rows.}. For instance, the fully symmetric subspace $\Sym^t(\mathbb{C}^m)\subset (\mathbb{C}^m)^{\otimes t}$ corresponds to the one-row Young diagram $(t)$, while the fully antisymmetric subspace $\bigwedge^t(\mathbb{C}^m)\subset (\mathbb{C}^m)^{\otimes t}$  corresponds to the one-column Young diagram $(1,\ldots,1)$.
    
    The Haar averages we need to evaluate contain the average with respect to two-copy $\varphi_n^{\otimes 2}$ representation of $\mathrm{U}(m)$. We will evaluate these averages by using Schur's lemma and decomposing $\Sym^n(\mathbb{C}^m)^{\otimes 2}$ into irreducible subspaces.
    According to the
    Pieri rule~\cite{serafini}, we have
    \begin{equation} \label{eq:irrep_decomp}
        \Sym^n(\mathbb{C}^m) \otimes \Sym^n(\mathbb{C}^m) \cong \bigoplus_{k=0}^n \mathbb{S}_{(2n-k, k)}(\mathbb{C}^m),
    \end{equation}
    where $\mathbb{S}_{(2n-k,\,k)}(\mathbb{C}^m)$ denotes the irreducible
    $\mathrm{U}(m)$-representation associated with the Young diagram having two rows
    of lengths $2n-k$ and $k$.
    Consequently, the representation $\varphi_n\otimes\varphi_n$
    decomposes as
    \begin{equation}
        \varphi_n(U)\otimes\varphi_n(U)
        \;\cong\;
        \bigoplus_{k=0}^{n}\Pi_k (U),
    \end{equation}
    where each irreducible representation $\Pi_k$ is supported on the subspace
    $\mathbb{S}_{(2n-k,\,k)}(\mathbb{C}^m)$, and are mutually
    inequivalent.

    Let us denote the projector onto the subspace $\mathbb{S}_{(2n-k,\,k)}(\mathbb{C}^m)$ by $\mathbb{P}_k$. Our key contribution is to make these projectors explicit and to express them in a form adapted to bosonic systems. To this end, we construct them from restrictions to of sums over Young symmetrizes corresponding to diagrams $\lambda_k=(2n-k,k)$ acting on $(\mathbb{C}^m)^{\otimes 2n}$, then rewrite them as linear combinations of a family of operators that we call \emph{bosonic swap operators} (the formal proof is is provided in \Cref{app:expressing_the_projector}). 
  
    \begin{restatable}[Explicit expression for $\mathbb{P}_k$]{lemma}{lemmapk}\label{lem:explicit_P_k}
        We can write the projector $\mathbb{P}_k$ as the following sum:
        \begin{equation}\label{eq:swap_expansion}
            \mathbb{P}_k
            =
            \sum_{q=0}^{n} c_{k,q}\,\mathbb{S}_q,
        \end{equation}
        where the coefficients are
        \begin{equation}\label{eq:coefficients}
            c_{k,q}\!
            =\!
            \frac{2n-2k+1}{2n-k+1}\binom{n}{k}\binom{n}{q}\!\!
            \sum_{l=\max(0,k-q)}^{\min(k,n-q)}
            \!\!\!\!\!(-1)^{k-l}
            \frac{\binom{k}{l}\binom{n-k}{l+q-k}}{\binom{2n-k}{n-l}}.
        \end{equation}
        and $\mathbb{S}_q:  \Sym^n(\mathbb{C}^m)^{\otimes 2} \rightarrow   \Sym^n(\mathbb{C}^m)^{\otimes 2} $ is the bosonic swap operator defined by
        \begin{equation}\label{eq:bosonic_swap_to_trace}
            \Tr [\mathbb{S}_q (\rho \otimes \sigma) ]
            = \Tr [ \Tr_{n-q}(\rho) \Tr_{n-q}(\sigma)],
        \end{equation}
        where $\Tr_{n-q}$ denotes the partial trace over $(n-q)$ particles.
    \end{restatable}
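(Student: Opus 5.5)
The plan is to work entirely on the Schur–Weyl side. On $(\mathbb{C}^m)^{\otimes 2n}$ the operator $\mathbb{P}_{\mathrm{sym}}\otimes\mathbb{P}_{\mathrm{sym}}$ is the image of the idempotent $e_H=\frac{1}{(n!)^2}\sum_{h\in H}h$ with $H=S_n\times S_n\subset S_{2n}$. The commutant of $\varphi_n\otimes\varphi_n$ on $\Sym^n(\mathbb{C}^m)^{\otimes 2}$ is then the image of the Hecke algebra $e_H\,\mathbb{C}[S_{2n}]\,e_H$.

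Its natural basis consists of the double-coset sums $H\tau_q H$, for $q=0,\dots,n$. Here $\tau_q$ is the product of $q$ transpositions exchanging particle $i$ of the first copy with particle $n+i$ of the second, for $i\le q$.

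\textbf{Step 1: the bosonic swap.} I define
\[
\mathbb{S}_q \coloneqq (\mathbb{P}_{\mathrm{sym}}\otimes\mathbb{P}_{\mathrm{sym}})\,U_{\tau_q}\,(\mathbb{P}_{\mathrm{sym}}\otimes\mathbb{P}_{\mathrm{sym}}).
\]
Since $\rho$ and $\sigma$ are supported on the symmetric subspaces, $\Tr[\mathbb{S}_q(\rho\otimes\sigma)]=\Tr[U_{\tau_q}(\rho\otimes\sigma)]$. The standard swap-trick identity then gives exactly $\Tr[\Tr_{n-q}(\rho)\Tr_{n-q}(\sigma)]$, which is \eqref{eq:bosonic_swap_to_trace}.

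\textbf{Step 2: $(S_{2n},S_n\times S_n)$ is a Gelfand pair.} This follows from the Pieri decomposition \eqref{eq:irrep_decomp}, which is multiplicity free. Consequently the Hecke algebra is commutative with $n+1$ primitive idempotents, one for each $\lambda_k=(2n-k,k)$. The projector $\mathbb{P}_k$ is the image of
\[
e_{\lambda_k}=\frac{f_{\lambda_k}}{(2n)!}\sum_{g\in S_{2n}}\overline{\omega_k(g)}\,g .
\]
Here $f_{\lambda_k}=\binom{2n}{k}\frac{2n-2k+1}{2n-k+1}$ is the dimension of the $S_{2n}$-irrep, and $\omega_k$ is the zonal spherical function, which is constant on double cosets.

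Grouping the sum by double cosets, and using $|H\tau_qH|=(n!)^2\binom{n}{q}^2$, gives
\[
\mathbb{P}_k=\sum_{q} \frac{f_{\lambda_k}(n!)^2\binom{n}{q}^2}{(2n)!}\,\omega_k(q)\,\mathbb{S}_q .
\]
The prefactor simplifies to $\frac{2n-2k+1}{2n-k+1}\binom{n}{k}\binom{n}{q}^2\big/\binom{2n-k}{n}$ up to the exact normalization of $\omega_k$. This already matches the $m$-independent prefactor $\frac{2n-2k+1}{2n-k+1}\binom{n}{k}\binom{n}{q}$ in \eqref{eq:coefficients}.

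\textbf{Step 3: the spherical function.} This is the main obstacle. I will compute $\omega_k(q)$ as the eigenvalue of $\mathbb{S}_q$ on a convenient vector $v_k$ in the $k$-th isotypic component, normalized so that $\omega_k(0)=1$.

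For $v_k$ I take the highest-weight vector, using only modes $1$ and $2$:
\[
v_k=(\mathbb{P}_{\mathrm{sym}}\otimes\mathbb{P}_{\mathrm{sym}})\bigl[(e_1\otimes e_2-e_2\otimes e_1)^{\otimes k}\otimes e_1^{\otimes 2(n-k)}\bigr],
\]
arranged so that each antisymmetric pair straddles the two copies. In second quantization this is
\[
v_k\propto\sum_{l}(-1)^{l}\binom{k}{l}\sqrt{\tbinom{n-k+l}{l}\tbinom{n-l}{k-l}}\ \ket{n-l,\,l}\otimes\ket{n-k+l,\,k-l}.
\]
I then evaluate $\langle v_k|\mathbb{S}_q|v_k\rangle$, or more simply a single matrix element of $\mathbb{S}_q$ against $v_k$, by counting which of the $q$ exchanged particles are "mode-2" particles. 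The resulting hypergeometric sum should be
\[
\omega_k(q)=\binom{n}{q}^{-1}\binom{2n-k}{n}\sum_{l}(-1)^{k-l}\binom{k}{l}\binom{n-k}{l+q-k}\Big/\binom{2n-k}{n-l}.
\]
This is a dual Hahn (Eberlein-type) polynomial, as for the Johnson scheme. Substituting it into Step 2 should reproduce \eqref{eq:coefficients}.

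The bookkeeping in this step is the risky part: the normalizations of the Fock-state basis, the factor $\binom{n}{q}$ coming from choosing which $q$ particles are swapped, and the range $\max(0,k-q)\le l\le\min(k,n-q)$.

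\textbf{Step 4: checks.} As a check I will verify orthogonality and completeness. The identity $\sum_k\mathbb{P}_k=\mathbb{S}_0$ (recall $\mathbb{S}_0=\mathbb{1}$) should reduce to the Vandermonde identity. The trace $\Tr\mathbb{P}_k=\dim\mathbb{S}_{(2n-k,k)}(\mathbb{C}^m)$ can be computed from $\Tr\mathbb{S}_q=\binom{m+q-1}{q}\binom{m+n-q-1}{n-q}^2$, i.e.\ from the dimensions of the reduced symmetric spaces. If the direct eigenvalue computation in Step 3 becomes unwieldy, the fallback is to obtain $\omega_k$ via the known spherical functions of the Gelfand pair $(S_{2n},S_n\times S_n)$ (Hahn polynomials) and only match normalizations.
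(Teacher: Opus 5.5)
Your Steps 1 and 2 are correct and take a genuinely different route from the paper's. The paper compresses a Young symmetrizer of shape $(2n-k,k)$ by $\mathbb{P}_{\mathrm{sym}}^{\otimes 2}$ and expands the column antisymmetrizer. That expansion is where the alternating $l$-sum comes from, with $l$ the number of column transpositions applied. The paper then uses multiplicity-freeness to get $\mathbb{P}_k\propto\mathbb{T}_k$, and fixes the constant by matching the $m^{2n}$ coefficients of $\Tr\mathbb{P}_k$ (hook-content formula) and $\Tr\mathbb{T}_k$. Your Gelfand-pair expansion gets the normalization for free from $f_{\lambda_k}$ and $\omega_k(0)=1$, and your prefactor $\frac{2n-2k+1}{2n-k+1}\binom{n}{k}\binom{n}{q}^2/\binom{2n-k}{n}$ is right.

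The price is that the whole explicit content of the lemma now sits in $\omega_k(q)$, and Step 3 does not compute it. The displayed $\omega_k(q)$ is just \eqref{eq:coefficients} divided by your prefactor, so this is a genuine gap. Your fallback is also understated. The standard Johnson-scheme (Eberlein) form $\omega_k(q)=\binom{n}{q}^{-2}\sum_j(-1)^j\binom{k}{j}\binom{n-k}{q-j}^2$ does not agree termwise with the sum in \eqref{eq:coefficients}. Matching it would need a terminating ${}_3F_2$ transformation, not merely a normalization check.

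The concrete tool you propose for Step 3 is also wrong. Since $\mathbb{P}_{\mathrm{sym}}(e_1^{\otimes(n-l)}\otimes e_2^{\otimes l})=\binom{n}{l}^{-1/2}\ket{n-l,l}$ with $\ket{n-l,l}$ normalized, your first-quantized $v_k$ equals $\sum_l(-1)^l\binom{k}{l}\bigl(\binom{n}{l}\binom{n}{k-l}\bigr)^{-1/2}\ket{n-l,l}\otimes\ket{n-k+l,k-l}$. Your second-quantized coefficients are larger by the $l$-dependent factor $\binom{n}{k}\binom{k}{l}$. That vector is not a highest-weight vector. For $n=k=2$ it is $\ket{2,0}\ket{0,2}-2\ket{1,1}\ket{1,1}+\ket{0,2}\ket{2,0}$, which is not annihilated by $a_1^\dagger a_2\otimes\mathbbm{1}+\mathbbm{1}\otimes a_1^\dagger a_2$ and is not an eigenvector of $\mathbb{S}_1$. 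Its matrix element against $\ket{2,0}\ket{0,2}$ returns $-1$ instead of $\omega_2(1)=-1/2$.

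With the correct vector the computation is short and lands exactly on the target:
\begin{itemize}
\item Pair $\mathbb{S}_q v_k$ with $\ket{n,0}\otimes\ket{n-k,k}$. In the second-quantized form of $\mathbb{S}_q$, only $\bm q=(q-l,l)$, $\bm r=(q,0)$ contributes to the $l$-th term.
\item This gives $\omega_k(q)=\binom{n}{q}^{-2}\sum_l(-1)^l\binom{k}{l}\binom{n-l}{q-l}\binom{n-k+l}{q}$.
\item Since $\binom{n-l}{q-l}\binom{n-k+l}{q}/\binom{n}{q}=\binom{n-k}{q-l}\binom{2n-k}{n}/\binom{2n-k}{n-k+l}$, this agrees termwise with your displayed $\omega_k(q)$ after $l\to k-l$.
\item Your Step 2 then yields $c_{k,q}$.
\end{itemize}

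Finally, your Step 4 trace formula is wrong. The correct value is $\Tr\mathbb{S}_q=\binom{m+n-1}{n}^2/\binom{m+q-1}{q}$; for $n=m=2$, $q=1$ this is $9/2$, whereas your formula gives $8$.
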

    
    This decomposition is particularly useful because the bosonic swap operator \(\mathbb{S}_q\) admits a transparent interpretation in first quantization (as operator acting on $(\mathbb{C}^m)^{\otimes 2n}$), namely
    \begin{equation}
        \mathbb{S}_q = \mathbb{P}_{\mathrm{sym}}^{\otimes 2} \left( \prod_{i=1}^q \mathbb{F}_{i, i+n} \right) \mathbb{P}_{\mathrm{sym}}^{\otimes 2} \ ,
    \end{equation}
    where $\mathbb{P}_{\mathrm{sym}}$ is the projector onto $\Sym^n(\mathbb{C}^m)$ and $\mathbb{F}_{i,j}$ is the swap operator between $i$-th and $j$-th component of $(\mathbb{C}^m)^{\otimes 2n}$. In other words, $\mathbb{S}_q$ exchanges \(q\) particles between the two copies of symmetric subspace uniformly. We emphasize that the decomposition \eqref{eq:swap_expansion} formally does not depend on the number of modes $m$. However, this parameter enters in the dimension of $\mathbb{S}_{(2n-k, k)}(\mathbb{C}^m)$ as
    \begin{equation}\label{eq:dimensions}
        \Tr \mathbb{P}_k = \frac{2n-2k+1}{2n-k+1}
        \binom{2n+m-k-1}{m-1}\binom{m-2+k}{m-2}\ .
    \end{equation}

    Let us start by deriving a closed-form expression for the second moment $\mathbb{E}_{U\sim\mathrm{Haar}(m)} p_{\rho,U}(\bm{n})^2$. Using \eqref{eq:particle_detection_probability} and the ``replica trick'' $\Tr[AB]^2=\Tr[A^{\otimes 2} B^{\otimes 2}]$, we get 
    \begin{multline}\label{eq:expansion}
    \underset{U\sim\mathrm{Haar}(m)}{\mathbb{E}} [p_{\rho,U}(\bm{n})^2] \\= \underset{U\sim\mathrm{Haar}(m)}{\mathbb{E}} \Tr\!\left[
                \varphi_n(U)\, \rho \,\varphi_n(U)^\dagger
                \ketbra{\bm{n}}\right]^2 
              \\=  \Tr\!\left[ \mathcal{T}(\rho^{\otimes 2})
                \ketbra{\bm{n}}^{\otimes 2}\right]\ ,
    \end{multline}
    where for a linear operator $X$ on $\Sym^n(\mathbb{C}^m)^{\otimes 2}$ we defined the twirling operation
    \begin{equation}
        \mathcal{T}(X) \coloneqq \underset{U \sim \mathrm{Haar}(m)}{\mathbb{E}} 
        [\varphi_n(U)^{\otimes 2} X (\varphi_n(U)^{\otimes 2})^\dagger] ,
    \end{equation}
    Since for every $U\in \mathrm{U}(m)$ we have  $[\mathcal{T}(\rho^{\otimes 2}),\varphi_n(U)^{\otimes 2}]=0$  (i.e., $\mathcal{T}(\rho^{\otimes 2})$ lies in the commutant of $\varphi_n^{\otimes 2}$) and $\varphi_n^{\otimes 2}$ is multiplicity-free, by Schur's lemma, we have
    \begin{equation}\label{eq:twirl}
         \mathcal{T}(\rho)
        = \sum_{k=0}^n \frac{\Tr [\mathbb{P}_k \rho^{\otimes 2}]}{\Tr \mathbb{P}_k} \mathbb{P}_k.
    \end{equation}
    By inserting the above to \eqref{eq:expansion} and by using $\Tr[\mathbb{P}_k  \ketbra{\bm{n}}^{\otimes 2}]=0$ for odd $k$ (this is because $\mathbb{P}_k$ has support in the antisymmetric subspace of two copies of bosonic space, $\bigwedge^2 (\Sym^n(\mathbb{C}^m))$, c.f. \Cref{remark:symmetry} in \Cref{app:expressing_the_projector})) we obtain
    
     \begin{restatable}{proposition}{secondMoment}\label{prop:secondmoment}
       For arbitrary state $\rho$ supported on $\mathrm{Sym}^n(\mathbb{C}^m)$ and $\bm{n}\in \mathcal{S}_{m,n}$ we have
        \begin{equation}\label{eq:SECONDmom}
            \underset{U\sim\mathrm{Haar}(m)}{\mathbb{E}} [p_{\rho,U}(\bm{n})^2]\!=\!
            \sum_{r=0 }^{\lfloor n/2 \rfloor}  
            \frac{\Tr [\mathbb{P}_{2r} \rho^{\otimes 2}] }{\Tr \mathbb{P}_{2r}} \Tr [\mathbb{P}_{2r} \ketbra{\bm{n}}^{\otimes 2}].
        \end{equation}
    \end{restatable}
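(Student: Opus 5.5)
The proof combines the replica trick with Schur's lemma and then removes the odd terms by a symmetry argument. First, by \eqref{eq:expansion}, the second moment equals $\Tr[\mathcal{T}(\rho^{\otimes 2})\ketbra{\bm n}^{\otimes 2}]$. This uses only linearity of the Haar expectation and $\Tr[AB]^2=\Tr[A^{\otimes 2}B^{\otimes 2}]$, where $A=\varphi_n(U)\rho\varphi_n(U)^\dagger$ and $B=\ketbra{\bm n}$. Left and right invariance of the Haar measure place $\mathcal{T}(\rho^{\otimes 2})$ in the commutant of $\varphi_n^{\otimes 2}$.

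By the Pieri decomposition \eqref{eq:irrep_decomp}, $\varphi_n^{\otimes 2}$ is multiplicity-free with inequivalent components $\Pi_k$, so its commutant is spanned by the projectors $\mathbb{P}_k$. Hence $\mathcal{T}(\rho^{\otimes 2})=\sum_k \alpha_k\mathbb{P}_k$. The twirl preserves the trace against each $\mathbb{P}_k$, because $\mathbb{P}_k$ commutes with $\varphi_n(U)^{\otimes 2}$. This gives $\alpha_k\Tr\mathbb{P}_k=\Tr[\mathbb{P}_k\rho^{\otimes 2}]$, which is \eqref{eq:twirl}. Substituting yields the unrestricted sum over $k=0,\dots,n$.

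It remains to show that the odd-$k$ terms vanish. Let $F$ be the swap of the two copies of $\Sym^n(\mathbb{C}^m)$. Then $F$ commutes with $\varphi_n(U)^{\otimes 2}$, so it lies in the commutant and acts as a scalar $\pm1$ on each irreducible block. I would determine the sign by a highest-weight computation. The highest-weight vector of $\mathbb{S}_{(2n-k,k)}$ inside $\Sym^n\otimes\Sym^n$ can be written explicitly as
\[
\sum_{j=0}^{k}(-1)^j\binom{k}{j}\,\big(e_1^{n-j}e_2^{j}\big)\otimes\big(e_1^{n-k+j}e_2^{k-j}\big),
\]
up to normalization. This is the polarized form of $(x_1y_2-x_2y_1)^k(x_1y_1)^{n-k}$. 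Swapping $x\leftrightarrow y$ multiplies it by $(-1)^k$, so $F\mathbb{P}_k=(-1)^k\mathbb{P}_k$. Alternatively, one can read this off from \Cref{lem:explicit_P_k} as in \Cref{remark:symmetry}.

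For odd $k$, $\mathbb{P}_k$ is therefore supported on $\bigwedge^2(\Sym^n(\mathbb{C}^m))$. Since $\ketbra{\bm n}^{\otimes 2}$ is supported on the symmetric part, $\Tr[\mathbb{P}_k\ketbra{\bm n}^{\otimes 2}]=0$, and only $k=2r$ with $0\le r\le\lfloor n/2\rfloor$ survive. The main obstacle is this sign identification, in particular checking that the polynomial model correctly realises the highest-weight vector. I would first verify the model with the lowering operator $E_{21}$ acting on the $\mathfrak{sl}_2$ pair of modes $1,2$, and fall back on \Cref{remark:symmetry} if needed.
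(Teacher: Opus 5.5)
Your proposal is correct and follows the paper's route: the replica trick, Schur's lemma on the multiplicity-free $\varphi_n^{\otimes 2}$ to obtain \eqref{eq:twirl}, and discarding odd $k$ because those blocks lie in $\bigwedge^2(\Sym^n(\mathbb{C}^m))$, which is orthogonal to $\ket{\bm n}\otimes\ket{\bm n}$. The only difference is how you justify $F\mathbb{P}_k=(-1)^k\mathbb{P}_k$: the paper uses the column-swap remark in \Cref{remark:symmetry}, whereas you use the explicit highest-weight vector $(x_1y_2-x_2y_1)^k(x_1y_1)^{n-k}$, which is more self-contained; note that the check that it is a highest-weight vector is annihilation by the raising operator $x_1\partial_{x_2}+y_1\partial_{y_2}$ (which is immediate), not the lowering operator you mention.
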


   By summing \eqref{eq:lxe_ref_boxed} over $\bm{n}\in \mathcal{S}_{m,n}$ and using the definition of LXEB reference score from \eqref{eq:ref_def} we obtain 
   \begin{equation}\label{eq:lxe_ref_boxed}
        \boxed{
            \mathrm{LXE}_{\text{ref}}^{(n)}(\rho) \!=\!| \mathcal{S}_{m, n} |   \sum_{r=0 }^{\lfloor n/2 \rfloor}  
            \frac{\Tr [\mathbb{P}_{2r} \rho^{\otimes 2}] }{\Tr \mathbb{P}_{2r}} \Tr [\mathbb{P}_{2r}  \mathbb{D}_{m, n}],
        }
    \end{equation}
    where 
    \begin{equation}
        \mathbb{D}_{m, n} \coloneqq \underset{\bm{n} \sim \mathcal{U}_{m,n}}{\mathbb{E}} [\ketbra{\bm{n}}^{\otimes 2}]
    \end{equation}
    represents the average of the projectors corresponding to all outcomes in $\mathcal{S}_{m,n}$.

    To proceed, we express $\mathbb{P}_{2r}$ in terms of bosonic swaps using \eqref{eq:coefficients}: $\mathbb{P}_{2r}=
            \sum_{q=0}^{n} c_{2r,q}\,\mathbb{S}_q$. For the case of $\mathbb{D}_{m,n}$ this strategy yields the following Proposition, proven in \Cref{app:simplified_formulas}, which gives us the exact value of $\Tr[\mathbb{P}_{2r} \mathbb{D}_{m,n}]$.

     \begin{restatable}{lemma}{trpkdmn}\label{lem:tr_p2r_dmn}
        For $0\leq r \leq \lfloor n/2\rfloor$ we can write $\Tr\!\left[\mathbb P_{2r} \mathbb{D}_{m, n} \right]$ as
        \begin{equation}
            \Tr [\mathbb{P}_{2r}\mathbb{D}_{m, n}] \!= \!
            \frac{2n-4r+1}{2n-2r+1}
            \frac{\binom{n-r}{r}}{\binom{2n-2r}{n}}
            \frac{(m+n)_{n-2r}\left(\frac{m-1}{2}\right)_r}
            {\left(\frac{m+1}{2}\right)_{n-r}}\ .
        \end{equation}
    \end{restatable}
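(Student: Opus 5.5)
The plan is to go through the bosonic-swap expansion of \Cref{lem:explicit_P_k}. Since $\Tr[\mathbb{P}_{2r}\mathbb{D}_{m,n}]=\sum_{q=0}^n c_{2r,q}\,\Tr[\mathbb{S}_q\mathbb{D}_{m,n}]$, it suffices to compute the average purity
\[
\Tr[\mathbb{S}_q\mathbb{D}_{m,n}]=\frac{1}{|\mathcal S_{m,n}|}\sum_{\bm n\in\mathcal S_{m,n}}\Tr\!\left[\Tr_{n-q}(\ketbra{\bm n})^2\right]
\]
and then resum against the coefficients \eqref{eq:coefficients}.

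\textbf{Step 1: reduced states of Fock states.} In first quantization, $\ket{\bm n}$ is a uniform superposition of the words with content $\bm n$. Tracing out $n-q$ particles therefore gives a state that is diagonal in the $q$-particle Fock basis,
\[
\Tr_{n-q}\ketbra{\bm n}=\sum_{\bm q\le\bm n,\ |\bm q|=q}\frac{\prod_i\binom{n_i}{q_i}}{\binom{n}{q}}\ketbra{\bm q},
\]
which is the multivariate hypergeometric distribution. Squaring and summing, and writing $\bm p=\bm n-\bm q$, the quantity $\sum_{\bm n}\Tr[\mathbb{S}_q\ketbra{\bm n}^{\otimes 2}]$ becomes
\[
\binom{n}{q}^{-2}\sum_{|\bm q|=q,\ |\bm p|=n-q}\ \prod_{i=1}^m\binom{q_i+p_i}{q_i}^2 .
\]

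\textbf{Step 2: generating functions.} The bivariate identity $\sum_{a,b\ge0}\binom{a+b}{a}^2x^ay^b=\big((1-x-y)^2-4xy\big)^{-1/2}$ turns the product over modes into an $m$-th power. The sum in Step 1 is therefore the coefficient
\[
[x^qy^{n-q}]\big((1-x-y)^2-4xy\big)^{-m/2}.
\]
Expanding in the variables $s=x+y$ and $xy$ expresses this coefficient as a single finite hypergeometric sum, essentially a Gegenbauer/Jacobi-type polynomial coefficient. Together with $|\mathcal S_{m,n}|=\binom{n+m-1}{n}$, this gives a closed single sum for $\Tr[\mathbb{S}_q\mathbb{D}_{m,n}]$. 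As a sanity check I will verify it at $q=0$ (value $1$) and at $q=n$ (value $1/|\mathcal S_{m,n}|$ times the number of states, since $\mathbb{S}_n$ is the swap).

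\textbf{Step 3: resummation, the main obstacle.} I then insert this into $\sum_q c_{2r,q}(\cdot)$. The coefficient $c_{2r,q}$ already contains an alternating inner sum over $l$, so the result is a triple sum that must collapse to the single product of Pochhammer symbols in the statement. My plan is to interchange the order of summation so that the sum over $q$ (for fixed $l$ and the internal index from Step 2) is performed first. I expect that sum to be a balanced terminating ${}_3F_2$ at argument $1$, which the Pfaff--Saalschütz formula evaluates. This should leave a Vandermonde-type sum (Chu--Vandermonde) that produces the factors $(m+n)_{n-2r}$, $(\tfrac{m-1}{2})_r$ and $(\tfrac{m+1}{2})_{n-r}$. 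The half-integer Pochhammers suggest using duplication-type identities for $(m)_{2j}$ along the way.

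If the direct collapse proves unwieldy, the fallback is structural. Both sides are rational in $m$ of bounded degree, so it suffices to prove the identity for all large integers $m$. I would check it symbolically for small $n$ and $r$, and independently verify the $r=0$ case, which amounts to $\sum_k\Tr[\mathbb{P}_k\mathbb{D}_{m,n}]=1$ combined with the even/odd symmetry, to fix normalizations before attacking the general hypergeometric reduction.
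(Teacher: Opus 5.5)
Your Steps 1--2 match the paper's setup: the same reduced states of Fock states, the same generating function $F(x,y)$, and the same expansion in powers of $xy$. However, the proposal stops one evaluation short, and the resummation you plan in Step 3 fails as stated.

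The ``closed single sum'' of Step 2 can itself be summed. Set $a=\tfrac{m+1}{2}$ and use the duplication formula $(m)_{2j}=4^j(\tfrac m2)_j(a)_j$. The sum becomes $\binom{n}{q}^{-1}\sum_j\frac{(-q)_j(q-n)_j}{(a)_j\,j!}$, and Chu--Vandermonde gives $\Tr[\mathbb S_q\mathbb D_{m,n}]=\binom{n}{q}^{-1}\frac{(a)_n}{(a)_q(a)_{n-q}}$ (\Cref{prop:mean_purity_simple}).

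Suppose instead you keep this $j$-sum open and sum over $q$ first at fixed $l$ and $j$. Write $q=2r-l+s$ and $N=n-2r$. The $q$-dependence of the summand is then $\binom{N}{s}\frac{q!}{(q-j)!}\frac{(n-q)!}{(n-q-j)!}$. This is a non-alternating binomial sum against a polynomial of degree $2j$ in $s$, since the only sign, $(-1)^l$, does not depend on $s$. As a hypergeometric series it therefore has argument $-1$, not $+1$, so Pfaff--Saalschütz cannot evaluate it. What it actually produces are powers of $2$, via $\sum_s\binom{N}{s}\frac{s!}{(s-i)!}\frac{(N-s)!}{(N-s-i')!}=\frac{N!}{(N-i-i')!}\,2^{N-i-i'}$. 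These would then have to cancel across the remaining $j$- and $l$-sums, and the proposal gives no mechanism for that.

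The order that works is essentially the reverse of what you predicted.
\begin{itemize}
\item Close the $j$-sum first. The factor $\binom{n}{q}^{-1}$ then cancels the $\binom{n}{q}$ in $c_{2r,q}$.
\item Next take the $s$-sum $\sum_s\binom{N}{s}\bigl[(a)_{2r-l+s}(a)_{N+l-s}\bigr]^{-1}$. After the reflection $1/(a)_{N+l-s}=(-1)^s(1-a-N-l)_s/(a)_{N+l}$, the factor $(-1)^s$ cancels the sign from $\binom{N}{s}$. This gives a terminating ${}_2F_1$ at argument $+1$, which Chu--Vandermonde evaluates to $\frac{(2a+n-1)_N}{(a)_{n-l}(a)_{N+l}}$.
\item Only the final sum, over $l$, is a ${}_3F_2$: namely ${}_3F_2(-2r,N+1,1-a-n;-n,a+N;1)=\frac{\binom{2r}{r}}{\binom{n}{r}}\frac{(a-1)_r}{(a+N)_r}$.
\end{itemize}
The paper attributes this last evaluation to Pfaff--Saalschütz. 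The series is in fact well-poised rather than balanced, and the value comes from the terminating form of Dixon's theorem.

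Your fallback does not close the gap either. Symbolic checks for small $n,r$ establish nothing for general $n,r$. Also, the $r=0$ case is not equivalent to the sum rule $\sum_k\Tr[\mathbb P_k\mathbb D_{m,n}]=\Tr\mathbb D_{m,n}=1$, which only constrains the total over all even $k$.
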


    At this point, we should point out that the problem of computing $\mathrm{LXE}_{\text{ref}}^{(n)}(\rho)$ boils down to computation of expectation values of bosonic swap operators $\Tr[\mathbb{S}_q \rho^{\otimes 2}]$ needed to evaluate \eqref{eq:lxe_ref_boxed}. To this end, we developed a a fast algorithm which enables to compute it efficiently for arbitrary product states (in modes). Specifically, we have the following result:
   
    \begin{restatable}{proposition}{swapexpval}
        \label{prop:efficient_swap_product}
        Let
        $\rho=\bigotimes_{a=1}^m \sigma^{(a)}$ be a product (across optical modes) of single-mode states, and let
        \begin{equation}
            \rho_{(n)} \coloneqq \frac{\rho|_{n}}{\Tr[\rho|_{n}]}
        \end{equation}
        denote its normalized $n$-particle component. Given the matrix elements of individual components in the single mode Fock basis $\bra{l} \sigma^{(a)} \ket{k}$ for $k,l\in\{0,\ldots,n\}$, $a\in\{1,\ldots,m\}$, the expectation value
        \begin{equation}
            \Tr\!\left[\mathbb{S}_q\,\rho_{(n)}^{\otimes 2}\right]
        \end{equation}
        can be computed in
        $
            O\!\left(m\,(n-q)^2q^2\log\bigl((n-q)q\bigr)\right)
        $
        time using
        $
            O\!\left((n-q)^2q^2\right)
        $
        memory.
    \end{restatable}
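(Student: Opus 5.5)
The plan is to pass from the first-quantized definition \eqref{eq:bosonic_swap_to_trace} to second quantization, where a product structure across modes becomes visible, and then to evaluate the resulting mode-factorized sum by dynamic programming over the $m$ modes.

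\textbf{Step 1: a normal-ordered operator for the swap.} The $q$-particle reduced density matrix of an $n$-particle bosonic state $\tau$ has matrix elements proportional to normal-ordered correlators,
\[
\langle I|\Tr_{n-q}(\tau)|J\rangle \propto \Tr[\tau\, a^\dagger_{J} a_{I}],
\]
where $I$ and $J$ are ordered $q$-tuples of modes. The proportionality constant is $(n-q)!/n!$. Summing $\Tr_{n-q}(\rho_{(n)})_{IJ}\Tr_{n-q}(\rho_{(n)})_{JI}$ over $I,J$ then gives
\[
\Tr[\mathbb{S}_q\rho_{(n)}^{\otimes 2}]=\Big(\tfrac{(n-q)!}{n!}\Big)^2\,\Tr\big[(\rho_{(n)}\otimes\rho_{(n)})\,O_q\big].
\]
Here the two copies are treated as $2m$ modes $a_i,b_i$, and
\[
O_q=\;:\!\Big(\sum_i a_i^\dagger b_i\Big)^q\Big(\sum_j b_j^\dagger a_j\Big)^q\!: .
\]
I will check this identity, including the constant, against the first-quantized formula $\mathbb{S}_q=\mathbb{P}_{\mathrm{sym}}^{\otimes 2}\prod_{i\le q}\mathbb{F}_{i,i+n}\mathbb{P}_{\mathrm{sym}}^{\otimes 2}$ on small cases.

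\textbf{Step 2: factorize over modes.} Expand both powers with the multinomial theorem. This gives a sum over $\bm k,\bm l\in\mathcal S_{m,q}$ with weight $\frac{q!^2}{\bm k!\,\bm l!}$ and operator $\prod_i a_i^{\dagger k_i}a_i^{l_i}\otimes b_i^{\dagger l_i}b_i^{k_i}$. Because normal ordering separates the modes and $\rho$ is a product, the expectation factorizes into single-mode quantities.

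The one subtlety is the projection onto the $n$-particle sector. Write the operator for copy one as acting on $\rho|_n$. Since it conserves total photon number, its trace is
\[
\sum_{\bm p}\prod_i \langle p_i+k_i|\,a^{\dagger k_i}a^{l_i}\,|p_i+l_i\rangle\,\langle p_i+l_i|\sigma^{(i)}|p_i+k_i\rangle,
\]
subject to the global constraint $\sum_i p_i = n-q$. The integers $p_i$ are the ``spectator'' photons, and each matrix element is an explicit ratio of square roots of factorials. Copy two is handled the same way, with $k$ and $l$ exchanged and independent spectators $p'_i$ obeying $\sum_i p'_i=n-q$. Only indices $\le n$ of each $\sigma^{(a)}$ ever appear, which matches the stated input data.

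\textbf{Step 3: dynamic programming.} The full sum is now a sum over per-mode tuples $(k_i,l_i,p_i,p'_i)$ of a product of local weights, constrained by the running totals $K=\sum k_i$, $L=\sum l_i$, $P=\sum p_i$, $P'=\sum p'_i$. I will process the modes one at a time and keep a table indexed by $(K,L,P,P')\in[0,q]^2\times[0,n-q]^2$, which uses the claimed $O(q^2(n-q)^2)$ memory. Adding one mode is a four-dimensional convolution of the current table with that mode's local weight array, and the final answer is the entry at $(q,q,n-q,n-q)$.

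Done naively, each convolution costs the square of the table size. Done with a multidimensional FFT, it costs $O(q^2(n-q)^2\log(q(n-q)))$ per mode, giving the stated total over $m$ modes. The normalization $\Tr[\rho|_n]$ comes from the same recursion restricted to $q=0$, or from a one-dimensional convolution of the diagonals of the $\sigma^{(a)}$.

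\textbf{Main obstacle.} I expect the hardest part to be Step 1: getting the normal-ordered representation of $\mathbb{S}_q$ exactly right. This includes the factor $q!^2$ against the multinomial weights and the correct ordering of $a$ versus $b$ operators. Once that identity holds, the factorization and the convolution structure are routine bookkeeping.
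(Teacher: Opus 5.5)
Your proposal is correct and is essentially the paper's proof. Your operator $\bigl(\tfrac{(n-q)!}{n!}\bigr)^2 O_q$ expands to exactly the second-quantized form \eqref{eq:bosonic_swap_boxed}, since $q!^2(n-q)!^2/n!^2=\binom{n}{q}^{-2}$. Your spectator variables are the paper's $\bm k,\bm l$ in $A(\bm r,\bm q)$, and your table over running totals $(K,L,P,P')$ with FFT convolution is precisely the truncated four-variable polynomial product of \Cref{prop:polynomial_machinery}. The only difference is cosmetic: you obtain the normal-ordered form of $\mathbb{S}_q$ from the correlator formula for particle-reduced density matrices, while the paper computes the action of $\mathbb{P}_{\mathrm{sym}}^{\otimes 2}\prod_{i\le q}\mathbb{F}_{i,i+n}\mathbb{P}_{\mathrm{sym}}^{\otimes 2}$ on Fock states, and both give the same identity.
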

    \noindent The proof of this statement is given in \Cref{app:proof_bosonic_swap_efficient}, which uses coefficient extraction via truncated polynomial arithmetic through \Cref{prop:polynomial_machinery}, also given therein.
    We also note that similar result works for the products of non-single-mode states as well, and is efficient when the locality of the components can be bounded by a constant.

    For completeness, we present the algorithmic workflow for computing $\mathrm{LXE}_{\text{ref}}^{(n)}(\rho)$ in \Cref{fig:lxeb_algorithmic_workflow}, combining all the techniques presented in this section.
    We note that the reference values are not only efficiently computable, but often admit closed-form expressions, as we show in the following \Cref{sec:specific_experiments} for Fock-state Boson Sampling, Scattershot Boson Sampling and Gaussian Boson Sampling. These expressions also make the asymptotic behavior transparent for these models, which is especially useful for investigating anticoncentration in  quantum advantage experiments; see \Cref{sec:anticoncentration}.

    \begin{figure*}[t]
    \centering
    \begin{tikzpicture}[
    font=\small,
    node distance=6mm and 10mm,
    box/.style={draw, align=center, inner sep=5pt, text width=0.20\linewidth},
    branch/.style={draw, align=center, inner sep=5pt, text width=0.35\linewidth},
    final/.style={draw, align=center, inner sep=6pt, text width=0.55\linewidth},
    emph/.style={box, fill=blue!8},
    proc/.style={box, fill=gray!10},
    arrow/.style={-Latex, thick}
]

    \node[box, text width=0.25\linewidth] (nd) {\textbf{Total photon number} $n$ \\ \textbf{Number of modes} $\d$};
    \node[box, right=of nd, xshift=30mm] (rho) {\textbf{Input state} $\rho$};

    \node[box, below=20mm of nd, xshift=-10mm, text width=0.20\linewidth, xshift=-25mm] (trpk) {
        \textbf{Dimensions}\\ Compute $\Tr\,\mathbb{P}_{2r}$ from \Cref{eq:dimensions}.
    };

    \node[branch, below=22mm of rho, text width=5cm] (stateswap) {
        \textbf{$\mathbb{S}_q$ expectation values}\\[2pt]
        Compute $\Tr[\mathbb{S}_q\,\rho^{\otimes 2}]$\\
        using \Cref{prop:efficient_swap_product}.
    };
    \node[box, below=10mm of nd, xshift=32mm] (coeffs) {
        \textbf{Coefficients $c_{k,q}$}\\[2pt]
        Compute swap expansion coefficients from $\mathbb{P}_k=\sum_q c_{k,q}\mathbb{S}_q$\\
        using \Cref{eq:coefficients}.
    };

    \node[branch, below=45mm of nd, xshift=0mm, text width=5cm] (det) {
        \textbf{Detector-dependent factors}\\[2pt]
        Compute $
            \Tr [\mathbb{P}_{2r}  \; \mathbb{D}_{m, n}]
        $ \\
        directly using \Cref{lem:tr_p2r_dmn}.
    };

    \node[branch, below=10mm of stateswap] (statepk) {
        \textbf{Input state-dependent factors}\\[2pt]
        Combine $c_{k,q}$ and $\mathbb{S}_q$ expectation values by $\rho$ to get
        $ \displaystyle \Tr[\mathbb{P}_{2r}\,\rho^{\otimes 2}] $ using \Cref{prop:efficient_swap_product}.
    };
    
    \node[final, below=20mm of det, xshift=20mm, text width=7cm] (combine) {
        \textbf{LXEB reference value}\\[2pt]
        Combine detector-dependent and input state-dependent factors to get
        $$
            \displaystyle
            \mathrm{LXE}_{\mathrm{ref}}(\rho)
            =
            |\mathcal{S}_{\d,n}|
            \sum_{r=0}^{\lfloor n / 2 \rfloor }
            \frac{\Tr[\mathbb{P}_{2r}\,\rho^{\otimes 2}]}{\Tr\,\mathbb{P}_{2r}}
            \Tr [\mathbb{P}_{2r}  \; \mathbb{D}_{m, n}].
        $$
    };

    \draw[arrow] (nd) -| (trpk);
    \draw[arrow] (nd) -- (det);
    \draw[arrow] (nd) -| (coeffs);

    \draw[arrow] (rho) -- (stateswap);

    \draw[arrow] (stateswap) -- (statepk);
    \draw[arrow] (coeffs.south) |- (statepk);
    \draw[arrow] (trpk) |- (combine);
    \draw[arrow] (statepk) |- (combine.east);
    \coordinate (mid) at ($(det)+(0,-1.5cm)$);
    \draw[arrow] (det) |- (mid) -| (combine);
\end{tikzpicture}
    \caption{Algorithmic workflow for computing the LXEB reference value.
    Starting from the problem parameters $(n,m,\rho)$, the computation reduces the Haar
    average of $p_{\rho,U}(\bm{n})^2$ to expectation values of bosonic swap operators.
    The resulting expression separates into a detector-dependent part, involving uniform
    averages over $\mathcal{S}_{m,n}$, and an input-state-dependent part, involving
    $\Tr[\mathbb{S}_q\,\rho^{\otimes 2}]$, assembled using the coefficients $c_{k,q}$.
    This reduction enables an efficient classical evaluation of $\mathrm{LXE}_{\mathrm{ref}}^{(n)}(\rho)$.}
    \label{fig:lxeb_algorithmic_workflow}
    \end{figure*}
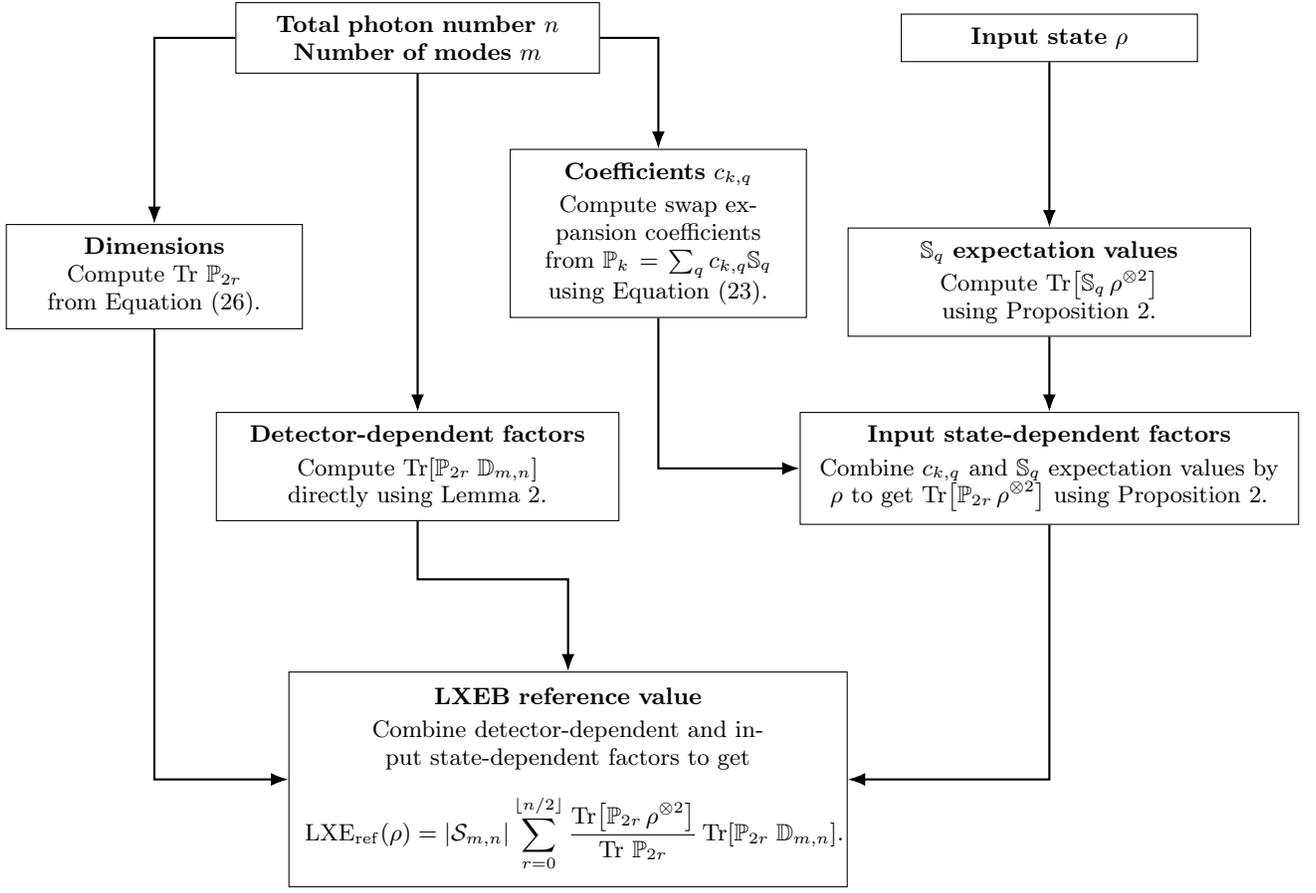

\section{Linear cross-entropy benchmarking in specific experiments}\label{sec:specific_experiments}
    In this section, we specialize the general LXEB framework to a number of concrete photonic quantum advantage experiments. For each setting, we identify the relevant output distribution, formulate the associated reference value, and show how it can be computed efficiently using the polynomial techniques introduced above. This demonstrates that the framework is sufficiently general to cover a broad range of experimentally relevant regimes while remaining amenable to explicit calculation.

    \subsection{Boson Sampling}\label{sec:lxeb_boson_sampling}
        As discussed in \Cref{sec:basics}, in the (Fock-state) Boson Sampling setting we consider a $m$-mode, $n$-photon collision-free (i.e., up to one particle per mode) Fock input state $\ket{\bm{n}_0}$ injected into an $m$-mode Haar-random linear interferometer $U \sim \mathrm{Haar}(m)$, terminated by particle number detectors.  

        In the following, starting from \Cref{eq:lxe_ref_boxed}, we will give a simple formula for $\mathrm{LXE}_{\mathrm{ref}}^{(n)}(\ketbra{\bm{n}_0})$, which is the relevant reference value for LXEB in Boson Sampling experiments. Before this, we analyze the bosonic swap expectation values of collision-free Fock states, using
        \begin{equation}
            \Tr\!\left[\mathbb S_q \ketbra{\bm{n}_0}^{\otimes 2}\right]
            =
            \Tr\!\left[\Tr_{n-q}(\ketbra{\bm{n}_0})^2\right],
        \end{equation}
        where we can write that
        \begin{equation}
            \Tr_{n-q}(\ketbra{\bm{n}_0})
            =
            \frac{1}{\binom{n}{q}}
                \sum_{\substack{\bm{0} \leq \bm{q} \leq \bm{n}_0 \\ |\bm{q}| = q }}
                \ketbra{\bm{q}}.
        \end{equation}
        Direct evaluation yields that the bosonic swap expectation value is
        \begin{equation}\label{eq:swapexp_collfree}
            \Tr\!\left[\mathbb{S}_q\,\ketbra{\bm{n}_0}^{\otimes 2}\right]
            = \frac{1}{\binom{n}{q}}.
        \end{equation}
        We can use this formula to get the following simplified formula for the LXEB reference value:
        \begin{proposition}\label{prop:bs_lxeref_simplified}
            We can write the reference value $\mathrm{LXE}_{\mathrm{ref}}^{(n)}(\ketbra{\bm{n}_0})$ in the form of
            \begin{multline}
                \mathrm{LXE}_{\mathrm{ref}}^{(n)}(\ketbra{\bm{n}_0}) \\= 2^{n} n! \sum_{r=0}^{\lfloor n/2\rfloor}
                \frac{2n-4r+1}{2n-2r+1}
                \frac{
                    16^{-r} \binom{2r}{r}
                }{
                    \binom{2n-2r}{\,n-r\,}
                    \left(\frac m2\right)_r
                    \left(\frac{m+1}{2}\right)_{n-r}
                }.
            \end{multline}
        \end{proposition}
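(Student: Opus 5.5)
We plan to start from the boxed formula \eqref{eq:lxe_ref_boxed} with $\rho=\ketbra{\bm n_0}$, which writes $\mathrm{LXE}_{\mathrm{ref}}^{(n)}(\ketbra{\bm n_0})$ as $|\mathcal S_{m,n}|\sum_r \Tr[\mathbb P_{2r}\rho^{\otimes2}]\,\Tr[\mathbb P_{2r}\mathbb D_{m,n}]/\Tr\mathbb P_{2r}$. Two of the three $r$-dependent factors are already explicit. The detector factor $\Tr[\mathbb P_{2r}\mathbb D_{m,n}]$ is given by \Cref{lem:tr_p2r_dmn}, and $\Tr\mathbb P_{2r}$ is given by \eqref{eq:dimensions} with $k=2r$. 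The remaining task is therefore to compute the state factor $\Tr[\mathbb P_{2r}\ketbra{\bm n_0}^{\otimes 2}]$ in closed form and then simplify the product.

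For the state factor we use \Cref{lem:explicit_P_k} together with \eqref{eq:swapexp_collfree}. Since $\Tr[\mathbb S_q\ketbra{\bm n_0}^{\otimes2}]=1/\binom nq$, the binomial $\binom nq$ in $c_{2r,q}$ cancels exactly. This leaves
\[
\Tr[\mathbb P_{2r}\ketbra{\bm n_0}^{\otimes2}]=\frac{2n-4r+1}{2n-2r+1}\binom n{2r}\sum_{q}\sum_{l}(-1)^{l}\frac{\binom{2r}{l}\binom{n-2r}{l+q-2r}}{\binom{2n-2r}{n-l}} .
\]
Interchanging the sums, the $q$-sum is $\sum_q\binom{n-2r}{l+q-2r}=2^{n-2r}$, independent of $l$. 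We are then left with the single alternating sum $\sum_{l=0}^{2r}(-1)^l\binom{2r}{l}/\binom{2n-2r}{n-l}$. Writing the inverse binomial as a Beta integral, $1/\binom{N}{j}=(N+1)\int_0^1 t^j(1-t)^{N-j}\,dt$, turns it into
\[
(2n-2r+1)\int_0^1 t^{n-2r}(1-t)^{n-2r}\,(1-2t)^{2r}\,dt .
\]
Substituting $s=(1-2t)^2$ reduces this to a Beta function. We expect the result to be proportional to $\binom{2r}{r}4^{-r}/\binom{2n-2r}{n-r}$ times simple ratios of factorials, which is consistent with the factor $16^{-r}\binom{2r}{r}$ in the claim. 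Alternatively, one can recognize the sum as a balanced ${}_3F_2$ and apply Pfaff--Saalschütz.

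Finally we multiply everything together. Here $|\mathcal S_{m,n}|=\binom{n+m-1}{n}=(m)_n/n!$. In the quotient $\Tr[\mathbb P_{2r}\mathbb D_{m,n}]/\Tr\mathbb P_{2r}$ the prefactor $\frac{2n-4r+1}{2n-2r+1}$ cancels, and the binomials in \eqref{eq:dimensions} are rewritten as Pochhammer symbols. The remaining manipulations use the duplication identities $(m)_{2j}=4^j(\tfrac m2)_j(\tfrac{m+1}2)_j$ and $(m+n)_{n-2r}=(m)_{2n-2r}/(m)_n$. These should cancel $(\tfrac{m-1}{2})_r$ against part of $\binom{m-2+2r}{m-2}$ and leave exactly $(\tfrac m2)_r(\tfrac{m+1}2)_{n-r}$ in the denominator, together with the global factor $2^n n!$. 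We expect the main obstacle to be the bookkeeping in this last step together with the alternating binomial sum. For the sum, we would first check small cases ($r=0$, $n=2$, $r=1$) to pin down the constant before committing to the general Beta-integral evaluation.
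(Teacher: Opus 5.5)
Your proposal is correct and follows the paper's proof (together with that of Proposition~\ref{prop:trace_pk_n0}) step for step. Both start from the boxed formula and use \Cref{lem:tr_p2r_dmn}, the dimension formula, and \Cref{lem:explicit_P_k} with $\Tr[\mathbb S_q\ketbra{\bm n_0}^{\otimes 2}]=1/\binom nq$. Both reduce the $q$-sum to $2^{n-2r}$ and finish with the same three cancellation identities.

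The only genuine difference is how you evaluate $\Sigma_{n,r}=\sum_{l=0}^{2r}(-1)^l\binom{2r}{l}\binom{2n-2r}{n-l}^{-1}$. Your Beta-integral route works as written. With $u=1-2t$ and then $s=u^2$, one gets $\int_0^1[t(1-t)]^{n-2r}(1-2t)^{2r}\,dt=4^{-(n-2r)}\tfrac12 B(r+\tfrac12,\,n-2r+1)$. Hence $\Sigma_{n,r}=\frac{(2r)!\,(n-r)!\,(n-2r)!}{r!\,(2n-2r)!}=\binom{2r}{r}\big/\bigl[\binom{2n-2r}{n-r}\binom{n-r}{r}\bigr]$, which is exactly the paper's \eqref{eq:sigma_closed_form}.

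Note that $\Sigma_{n,r}$ itself carries no $4^{-r}$. The two factors of $4^{-r}$ making up $16^{-r}$ come from $2^{n-2r}=2^n4^{-r}$ and from $(m-1)_{2r}=4^r(\tfrac{m-1}2)_r(\tfrac m2)_r$ in the final bookkeeping.

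The paper instead rewrites $\Sigma_{n,r}$ as $\binom{2n-2r}{n}^{-1}{}_2F_1(-2r,\,n-2r+1;\,-n;\,-1)$ and cites Chu--Vandermonde. That series is well-poised at argument $-1$, so Kummer's theorem is the relevant identity, not Chu--Vandermonde. Moreover, the intermediate value \eqref{eq:vandermonde_eval} is off by exactly $4^r$ from the paper's own final closed form. For $n=2$, $r=1$ the sum equals $1$, whereas \eqref{eq:vandermonde_eval} gives $1/4$. The final closed form the paper records is nonetheless correct.

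Your integral representation therefore gives a self-contained and easily checkable derivation of the one nontrivial summation. For the same reason, drop the suggested Pfaff--Saalsch\"utz alternative. This sum is not a balanced ${}_3F_2$ at unit argument without a further transformation.
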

        \begin{proof}
            We begin with \Cref{prop:trace_pk_n0}, which for even \(k=2r\) and using \Cref{eq:swapexp_collfree}, gives the compact closed-form expression
            \begin{equation}\label{eq:tr_pk_n0}
                \Tr\!\left[\mathbb{P}_{2r}\,\ketbra{\bm{n}_0}^{\otimes 2}\right]
                = 2^{\,n-2r}\,\frac{2n-4r+1}{2n-2r+1}\,
                \frac{\binom{n}{r}}{\binom{2n-2r}{\,n-r\,}}.
            \end{equation}
            Next, for \(\Tr[\mathbb{P}_{2r}\mathbb{D}_{m,n}]\), we invoke \Cref{lem:tr_p2r_dmn}, together with the simplified formula for the dimensions of the irreducible subspaces given in \Cref{eq:dimensions}.
            Combining these expressions and using the elementary identities
            \begin{subequations}
            \begin{align}
                \binom{m+n-1}{n}\,
                \frac{(m+n)_{n-2r}}{\binom{2n+m-2r-1}{m-1}}
                &=
                \frac{(2n-2r)!}{n!},
                \label{eq:id1_pf}
                \\[0.5ex]
                \frac{\left(\frac{m-1}{2}\right)_r}{\binom{m+2r-2}{m-2}}
                &=
                \frac{(2r)!}{4^r\left(\frac m2\right)_r},
                \label{eq:id2_pf}
                \\[0.5ex]
                \frac{\binom{n}{r}\binom{n-r}{r}}{\binom{2n-2r}{n}}
                \frac{(2n-2r)!}{n!}
                &=
                \frac{n!}{r!^2},
                \label{eq:id3_pf}
            \end{align}
            \end{subequations}
            we obtain the claimed formula.
        \end{proof}
        \noindent Although Boson Sampling was originally proposed in the dilute regime \(m=\Omega(n^2)\), our formula is not confined to this case. Rather, it extends naturally to other scaling regimes, including the saturated regime \(m=\Theta(n)\), whose classical hardness has also been studied in Ref.~\cite{bouland2025complexitytheoreticfoundationsbosonsamplinglinear}.
        
        \subsubsection{Uniform losses}\label{sssec:uniform_losses_bs}
            Here, we aim to show that the LXEB reference value in Boson Sampling can also be efficiently computed in the presence of uniform losses. The uniform loss channel $\Lambda_\eta$ parametrized by transmissivity $\eta$ acts on a density operator $\rho$ defined on the $n$-particle subspace as~\cite{Oszmaniec_2018}
            \begin{equation}\label{eq:loss_channel}
              \Lambda_\eta(\rho) \coloneqq \sum_{\ell = 0}^n \binom{n}{\ell} \eta^\ell (1-\eta)^{n-\ell} \Tr_{n-\ell} ( \rho ),
            \end{equation}
            where the probability that $\ell$ particles remain in the system is $\binom{n}{\ell} \eta^\ell (1-\eta)^{n-\ell}$.
            Importantly, in the presence of uniform losses, the loss channel can be commuted to the input state, and the output probabilities can be written as
            \begin{multline}
                p_{\Lambda_\eta(\ketbra{\bm{n}_0}), U}(\bm{n})
                \\=
                \Tr\!\left[
                    \varphi_n(U)\, \Lambda_\eta(\ketbra{\bm{n}_0}) \,\varphi_n(U)^\dagger
                    \ketbra{\bm{n}}
                \right]
            \end{multline}
            Suppose that we restrict attention to outcomes with total photon number \(\ell\), which we will signal in the superscript of the reference value as $\mathrm{LXE}_{\text{ref}}^{(\ell)}$ in this discussion.
            We can write the reference value as
            \begin{multline}
                \mathrm{LXE}_{\text{ref}}^{(\ell)}(\Lambda_\eta(\ketbra{\bm{n}_0})) \\= \sum_{\bm{n} \in \mathcal{S}_{m,\ell}} 
                \underset{U \sim \mathrm{Haar}(m)}{\mathbb{E}}
                \left[
                     p_{\Lambda_\eta(\ketbra{\bm{n}_0}), U}(\bm{n})^2
                \right].
            \end{multline}
            We know that $\Lambda_\eta(\ketbra{\bm{n}_0})_{(\ell)}$ (i.e., its restriction to $\ell$ particles) is independent of the transmissivity $\eta$ and can be written as a uniform mixture
            \begin{equation}
                \Lambda_\eta(\ketbra{\bm{n}_0})_{(\ell)}
                = \Tr_{n-\ell} (\ketbra{\bm{n}_0})
                .
            \end{equation}
            Since \(\sigma_\ell\) is supported on the \(\ell\)-particle sector, we may apply \Cref{eq:bosonic_swap_to_trace} with \(n\) replaced by \(\ell\). Thus,
            \begin{multline}
                \Tr\!\left[\mathbb S_q^{(\ell)}\Tr_{n-\ell} (\ketbra{\bm{n}_0})^{\otimes 2}\right]
                \\=
                \Tr\!\left[\Tr_{\ell-q}(\Tr_{n-\ell} (\ketbra{\bm{n}_0}))^2\right]
                \\= \Tr\!\left[\Tr_{n-q}(\ketbra{\bm{n_0}})^2\right]
            \end{multline}
            where $\mathbb S_q^{(\ell)}$ is the bosonic swap operator acting on two copies of the $\ell$-particle subspace.
            Therefore,
            \begin{equation}
                \Tr\!\left[\mathbb S_q^{(\ell)}\,\Lambda_\eta(\ketbra{\bm n_0})_{(\ell)}^{\otimes 2}\right]
                =
                \Tr\!\left[\Tr_{\ell-q}(\sigma_\ell)^2\right]
                =
                \frac{1}{\binom{n}{q}}.
            \end{equation}
            As expected, this formula independent of both \(\ell\) and \(\eta\).
            Consequently, we can write that
            \begin{equation}
                \Tr\!\left[\mathbb{P}^{(\ell)}_k\, \Lambda_\eta(\ketbra{\bm n_0})_{(\ell)}^{\otimes 2} \right] = \Tr\!\left[\mathbb{P}_k^{(n)}\,\ketbra{\bm{n}_0}^{\otimes 2}\right],
            \end{equation}
            where $\mathbb{P}^{(\ell)}_k$ and $\mathbb{P}_k^{(n)}$ denote projectors onto the subspace $\mathbb{S}_{(2\ell-k,\,k)}(\mathbb{C}^m)$ and $\mathbb{S}_{(2n-k,\,k)}(\mathbb{C}^m)$, respectively.
            As a consequence, we can write the final formula as
            \begin{multline}
                \mathrm{LXE}_{\text{ref}}^{(\ell)}(\Lambda_\eta(\ketbra{\bm{n}_0}))
                \\= 
                | \mathcal{S}_{m, \ell} | \sum_{r = 0}^{\lfloor \ell/2 \rfloor} 
                \frac{\Tr [\mathbb{P}_{2r}^{(n)} \ketbra{\bm{n}_0}^{\otimes 2}] }{\Tr \mathbb{P}^{(\ell)}_{2r}} \Tr [\mathbb{P}^{(\ell)}_{2r}  \; \mathbb{D}_{m, \ell}],
            \end{multline}
            where, following the discussion of \Cref{sec:framework}, each term in the sum can be efficiently evaluated analytically.
            Moreover, we can give an explicit formula in the form of
            \begin{widetext}
            \begin{equation}
                 \mathrm{LXE}_{\text{ref}}^{(\ell)}(\Lambda_\eta(\ketbra{\bm{n}_0})) 
                 = 2^n
                \sum_{r=0}^{\lfloor n/2\rfloor}
                \frac{2\ell-4r+1}{2\ell-2r+1}\,
                \frac{1}{\binom{2n+m-2r-1}{m-1}\binom{m-2+2r}{m-2}}
                \frac{\binom{n}{r}}{4^r\binom{2n-2r}{\,n-r\,}}\,
                \frac{\binom{\ell-r}{r}}{\binom{2\ell-2r}{\,\ell\,}}
                \,
                \frac{(m+n)_{\ell-2r}\left(\frac{m-1}{2}\right)_r}{
                \left(\frac{m+1}{2}\right)_{\ell-r}
                }.
            \end{equation}
            \end{widetext}
            We note that in the special case 
            $\ell = n$, one recovers exactly the same expression for the reference value as in the ideal, lossless setting, i.e., 
            \begin{equation}
                 \mathrm{LXE}_{\text{ref}}^{(n)}(\Lambda_\eta(\ketbra{\bm{n}_0})) =  \mathrm{LXE}_{\text{ref}}^{(n)}(\ketbra{\bm{n}_0}).
            \end{equation}

\subsection{Scattershot Boson Sampling}\label{ssec:lxeb_scattershot_boson_sampling}
    
    Scattershot Boson Sampling~\cite{Bentivegna_2015} and Twofold Scattershot Boson Sampling~\cite{Chakhmakhchyan_2017} give rise to output distributions whose probabilities are expressed in terms of Permanents of submatrices of the same interferometer unitary as in standard Fock-state Boson Sampling. Since the reference values depend only on second moments of these probabilities and are invariant under the random choice of occupied input modes, the corresponding reference values can be computed efficiently in these settings as well.
    
    More concretely, in Scattershot Boson Sampling, one considers a product of \(1 \leq d \leq m\) two-mode squeezed states with a uniform squeezing parameter \(r\) and $(m-d)$ pairs of vacuum modes as initial state.
    One mode from each pair is injected into a linear interferometer, while the other is immediately measured using photon-number-resolving detectors. The joint probability of observing output pattern \(\bm{n}\) in the interferometer modes and heralding pattern \(\bm{m}\) in the measured modes is
    \begin{equation}\label{eq:scattershot_joint_prob_dist}
        p_U(\bm{n}, \bm{m}) =
        \frac{\tanh(r)^{2|\bm{n}|}}{\cosh(r)^{2m}}
        \frac{\left|\mathrm{per}(U_{\bm{n},\bm{m}})\right|^2}{\bm{n}!\,\bm{m}!},
    \end{equation}
    where \(\bm{n}\) and \(\bm{m}\) denote the photon-number patterns in the interferometer and heralding modes, respectively. Here, $U_{\bm{n}, \bm{m}}$ corresponds to the matrix formed from $U$ by repeating its rows according to $\bm{n}$ and its columns according to $\bm{m}$.
    
    This distribution admits a natural time-unfolded interpretation: conditioned on the heralding outcome \(\bm{m}\), the interferometer effectively receives the Fock input state \(\ket{\bm{m}}\). For our framework, we also postselect for outcomes with $n$ particles in each leg. The corresponding joint output distribution is therefore
    \begin{equation}\label{eq:scattershot_cond_prob_dist}
        p_U(\bm{n}, \bm{m}) = \frac{1}{|\mathcal{S}_{d,n}|}
        \frac{\left|\mathrm{per}(U_{\bm{n},\bm{m}})\right|^2}{\bm{n}!\,\bm{m}!}.
    \end{equation}
    Accordingly, the LXEB reference value for Scattershot Boson Sampling is obtained by summing \(\mathrm{LXE}_{\mathrm{ref}}^{(n)}(\ketbra{\bm{m}})\) over all heralding patterns:
    \begin{equation}\label{eq:lxe_ref_sbs_def}
        \mathrm{LXE}_{\mathrm{ref,SBS}}(m, n, d)
        \coloneqq
        \frac{1}{|\mathcal{S}_{d,n}|^2}
        \sum_{\bm{m}\in\mathcal{S}_{d,n}}
        \mathrm{LXE}_{\mathrm{ref}}^{(n)}(\ketbra{\bm{m}}).
    \end{equation}
    Here, we sum over $\mathcal{S}_{d,n}$, because the initial state on the second leg (which is immediately measured) is supported on $d$ modes.
    Using \Cref{eq:lxe_ref_boxed}, this can be rewritten as
    \begin{equation}\label{eq:lxe_ref_sbs_sum}
        \mathrm{LXE}_{\mathrm{ref,SBS}}(m,n, d)
        \!=\!
        \frac{|\mathcal{S}_{m, n}|}{|\mathcal{S}_{d, n}|}
        \sum_{r = 0}^{\lfloor n / 2 \rfloor}
        \frac{\Tr\!\left[\mathbb{P}_{2r}\mathbb{D}_{m,n}\right] \Tr\!\left[\mathbb{P}_{2r}\mathbb{D}_{d,n}\right]}{\Tr \mathbb{P}_{2r}}.
    \end{equation}
    This expression is independent of the squeezing parameter \(r\), and depends only on the number of modes \(m\), the number of photons \(n\) under consideration, and the number of input two-mode squeezed states $d$.
    As discussed in \Cref{sec:framework}, this expression can be evaluated efficiently, but we can also give a further simplification as follows:
    \begin{proposition}\label{prop:scattershot_final_form}
        We can write the reference value
        \begin{equation}
            \mathrm{LXE}_{\mathrm{ref,SBS}}(m,n,d)
            \coloneqq
            \sum_{\bm{m}\in\mathcal{S}_{d,n}}
            \mathrm{LXE}_{\mathrm{ref}}^{(n)}(\ketbra{\bm{m}})
        \end{equation}
        in the form
        \begin{multline}
            \mathrm{LXE}_{\mathrm{ref,SBS}}(m,n,d)
            \\=
            \frac{2^{2n}}{\binom{d+n-1}{n}^2}
            \sum_{r=0}^{\lfloor n/2\rfloor}
            \frac{2n-4r+1}{2n-2r+1}\,
            \frac{2^{-4r}\binom{2r}{r}}{\binom{2n-2r}{n-r}}\,
            \frac{ \left(\frac d2\right)_{n-r}\left(\frac{d-1}{2}\right)_{r}}
            {\left(\frac m2\right)_{r}\left(\frac{m+1}{2}\right)_{n-r}}.
        \end{multline}
    \end{proposition}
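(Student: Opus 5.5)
Starting from the sum formula \eqref{eq:lxe_ref_sbs_sum}, this is purely a closed-form simplification: each ingredient on the right-hand side is already given explicitly. Concretely, I would substitute three results into \eqref{eq:lxe_ref_sbs_sum}. First, \Cref{lem:tr_p2r_dmn} for $\Tr[\mathbb{P}_{2r}\mathbb{D}_{m,n}]$. Second, the same lemma with $m$ replaced by $d$ for $\Tr[\mathbb{P}_{2r}\mathbb{D}_{d,n}]$. This is legitimate because $\mathbb{D}_{d,n}$ is supported on the first $d$ modes, and the swap expansion \eqref{eq:swap_expansion} of $\mathbb{P}_{2r}$ is independent of the number of modes, so $\Tr[\mathbb{S}_q\,\cdot\,]$ can be computed inside $\Sym^n(\mathbb{C}^d)^{\otimes 2}$. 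Third, the dimension formula \eqref{eq:dimensions} for $\Tr\mathbb{P}_{2r}$. I would also use $|\mathcal{S}_{m,n}|=\binom{m+n-1}{n}$ and $|\mathcal{S}_{d,n}|=\binom{d+n-1}{n}$.

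The $m$-dependent factors should be handled exactly as in the proof of \Cref{prop:bs_lxeref_simplified}. Identity \eqref{eq:id1_pf} combines $\binom{m+n-1}{n}(m+n)_{n-2r}/\binom{2n+m-2r-1}{m-1}$ into $(2n-2r)!/n!$. Identity \eqref{eq:id2_pf} turns $(\tfrac{m-1}{2})_r/\binom{m+2r-2}{m-2}$ into $(2r)!/(4^r(\tfrac m2)_r)$. After these, the factor $(\tfrac{m+1}{2})_{n-r}$ remains in the denominator. The prefactors $\tfrac{2n-4r+1}{2n-2r+1}$ appear three times (twice from the lemma, once from the dimension), so one net factor survives.

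The $d$-dependent factor needs more work. It is $\frac{\binom{n-r}{r}}{\binom{2n-2r}{n}}\frac{(d+n)_{n-2r}(\frac{d-1}{2})_r}{(\frac{d+1}{2})_{n-r}}$, divided by $|\mathcal{S}_{d,n}|$. The goal is to convert $(d+n)_{n-2r}/(\frac{d+1}{2})_{n-r}$ into $(\tfrac d2)_{n-r}$ times a power of $2$ divided by $\binom{d+n-1}{n}$. The plan is as follows.
\begin{itemize}
\item Write $(d+n)_{n-2r}=\frac{(d+2n-2r-1)!}{(d+n-1)!}$.
\item Apply the duplication formula $(x)_{2j}=4^j(\tfrac x2)_j(\tfrac{x+1}{2})_j$ with $x=d$ and $j=n-r$, giving $\frac{(d+2n-2r-1)!}{(d-1)!}=4^{n-r}(\tfrac d2)_{n-r}(\tfrac{d+1}{2})_{n-r}$.
\item Then $(d+n)_{n-2r}/(\tfrac{d+1}{2})_{n-r}=4^{n-r}(\tfrac d2)_{n-r}\,(d-1)!/(d+n-1)!=4^{n-r}(\tfrac d2)_{n-r}/\big(n!\binom{d+n-1}{n}\big)$.
\end{itemize}
Together with the explicit $1/|\mathcal{S}_{d,n}|$, this produces the claimed $\binom{d+n-1}{n}^{-2}$.

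What remains is purely numeric: collect $4^{n-r}$, $4^{-r}(2r)!$, $(2n-2r)!/n!$, $1/n!$ and the binomials $\binom{n-r}{r}/\binom{2n-2r}{n}$, and show they equal $2^{2n}\,2^{-4r}\binom{2r}{r}/\binom{2n-2r}{n-r}$. The statement and \eqref{eq:lxe_ref_sbs_def} differ by the normalization $|\mathcal{S}_{d,n}|^{-2}$, so I would track which convention the final formula matches and adjust the prefactor accordingly. Up to that, the identity should follow from $\binom{n-r}{r}\frac{(2n-2r)!}{\binom{2n-2r}{n}}=\frac{n!(n-r)!}{r!(n-2r)!}$ and similar elementary cancellations.

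I expect the main obstacle to be this bookkeeping. The normalization ambiguity between the two definitions could leave a spurious factor of $|\mathcal{S}_{d,n}|$, and the final factorial cancellations need care. I would verify them against small cases, such as $n=1$ or $d=1$, where $(\tfrac12\cdot 0)_r$-type factors vanish for $r\ge1$, before trusting the final form.
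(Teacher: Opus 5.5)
Your proposal is correct and is essentially the paper's proof: you substitute \Cref{lem:tr_p2r_dmn} (once with $m$ and once with $d$), the dimension formula \eqref{eq:dimensions} and $|\mathcal{S}_{x,n}|=\binom{x+n-1}{n}$ into \eqref{eq:lxe_ref_sbs_sum}, then cancel using the identities from \Cref{prop:bs_lxeref_simplified} together with the duplication step you spell out for the $d$-dependent factor. Two small tidy-ups are needed: first, the closed form is exactly what \eqref{eq:lxe_ref_sbs_sum} yields, i.e.\ it matches the normalized definition \eqref{eq:lxe_ref_sbs_def} with the factor $|\mathcal{S}_{d,n}|^{-2}$ (for $n=1$ both sides equal $2/(d(m+1))$), so the unnormalized ``$\coloneqq$'' in the statement is the inconsistency, not your bookkeeping; second, your side identity should read $\binom{n-r}{r}(2n-2r)!/\binom{2n-2r}{n}=n!\,(n-r)!/r!$ (the $(n-2r)!$ cancels), after which the squared ratio $\bigl(\binom{n-r}{r}/\binom{2n-2r}{n}\bigr)^2$ times $4^{n-2r}(2r)!\,(2n-2r)!/n!^2$ gives exactly $2^{2n}2^{-4r}\binom{2r}{r}/\binom{2n-2r}{n-r}$.
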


    \begin{proof}
        Starting from
        \begin{equation}
            \mathrm{LXE}_{\mathrm{ref,SBS}}(m,n,d)
            \!=\!
            \frac{|\mathcal{S}_{m,n}|}{|\mathcal{S}_{d,n}|}
            \!\sum_{r=0}^{\lfloor n/2\rfloor}
            \!\frac{\Tr\!\left[\mathbb{P}_{2r}\mathbb{D}_{m,n}\right]
                  \!\Tr\!\left[\mathbb{P}_{2r}\mathbb{D}_{d,n}\right]}
                 {\Tr \mathbb{P}_{2r}},
        \end{equation}
        we substitute \Cref{lem:tr_p2r_dmn}, \Cref{eq:dimensions}, and
        \(
        |\mathcal{S}_{x,n}|=\binom{x+n-1}{n}=\frac{(x)_n}{n!}.
        \)
        After using similar identities as for the proof of \Cref{prop:bs_lxeref_simplified}, and canceling the common factors, we get the desired formula.
    \end{proof}
    \noindent We emphasize, that---as for the case of Boson Sampling from \Cref{prop:bs_lxeref_simplified}---these formulas are not limited to any regime, and work beyond the dilute regime $m = \Omega(n^2)$.
    
    Finally, when $m = d$, Haar invariance implies that the same expressions remain valid for Twofold Scattershot Boson Sampling~\cite{Chakhmakhchyan_2017}, where the second leg is also transformed by a linear interferometer.

    \subsection{Gaussian Boson Sampling}\label{sec:lxeb_gaussan_boson_sampling}
        Our framework naturally extends to Gaussian Boson Sampling. Unlike standard Boson Sampling with Fock-state inputs, Gaussian Boson Sampling has support on all total photon-number sectors, with sector weights determined by the squeezing parameters. To fit this setting into the framework of \Cref{sec:framework}, we fix a total photon number $n$ and consider the normalized $n$-particle component $\rho_{(n)}$ of the input density operator $\rho$. Combining \Cref{eq:swap_expval_algorithm_form} with the method of \Cref{app:proof_bosonic_swap_efficient} then yields an efficient algorithm for computing the bosonic swap expectation value on two copies, and hence the reference $\mathrm{LXE}$ value. The necessary matrix elements of single-mode squeezed states are given in Refs.~\cite{yuen_1976,Weedbrook_2012}.

        We can obtain a particularly compact expression for the reference value in the case of uniform squeezing. Specifically, consider an input state consisting of $d \leq m$ single-mode squeezed states with identical squeezing parameters and $(m-d)$ vacuum modes. We restrict attention to even total photon number, writing $n=2N$ where $N$ is the number of photon pairs, since odd total photon numbers are absent from the ideal Gaussian Boson Sampling distribution. 
        In this setup, we can get the following formula for the reference value:
        \begin{proposition}\label{prop:LXE_ref_uniform_squeezed_closed_binom}
            Let \(n=2N\), and let \(\rho\) be the normalized \(n\)-particle restriction
            of a product state consisting of \(1 \leq d \leq m\) single-mode squeezed states and
            \((m-d)\) vacuum modes. Then
            \begin{multline}
                \label{eq:LXE_ref_uniform_squeezed_closed_binom}
                \mathrm{LXE}_{\mathrm{ref}}^{(2N)}(\rho)
                =\\
                N!^2
                \!\sum_{j=0}^{N}\!
                \frac{4N-4j+1}{4N-2j+1}
                \!
                \frac{\binom{2j}{j}
                \binom{2N-2j}{N-j}^{2} \left(\frac{d-1}{2}\right)_j\left(\frac d2+N\right)_{N-j}}
                {\binom{4N-2j}{2N-j}\left(\frac d2\right)_N \left(\frac m2\right)_j\left(\frac{m+1}{2}\right)_{2N-j}}.
            \end{multline}
        \end{proposition}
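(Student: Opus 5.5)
The plan is to feed the boxed formula \eqref{eq:lxe_ref_boxed} with $n=2N$. The factors $\Tr\mathbb P_{2r}$ and $\Tr[\mathbb P_{2r}\mathbb D_{m,2N}]$ are already given by \eqref{eq:dimensions} and \Cref{lem:tr_p2r_dmn}. The whole task therefore reduces to computing the state-dependent factor $\Tr[\mathbb P_{2r}\rho^{\otimes 2}]$ in closed form, and then simplifying the resulting sum over $r$. The summation index $j$ in the claimed formula will be identified with $r$.

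\emph{Step 1: a first-quantized description of $\rho$.} Write $\rho=\ketbra{\psi}$ and expand $\ket{\psi}$ using the single-mode squeezed amplitudes $\langle 2k|r\rangle\propto \sqrt{(2k)!}/(2^k k!)$. Restricting to the $2N$-particle sector and normalizing, I expect
\[
\ket{\psi}\propto \mathbb P_{\mathrm{sym}}\,\omega^{\otimes N},\qquad \omega=\sum_{a=1}^{d}\ket{a}\otimes\ket{a},
\]
and the squeezing parameter drops out after normalization. This form shows that $\ket{\psi}$ is invariant under $O(d)$ acting on the first $d$ modes. It also gives an explicit Fock expansion: $\ket{\psi}$ has support on $\bm n=2\bm k$ with $|\bm k|=N$, and its weights are $\prod_a \binom{2k_a}{k_a}4^{-k_a}$ up to normalization. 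The normalization constant is a coefficient of $(1-x)^{-d/2}$, which is where $(d/2)_N$ enters.

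\emph{Step 2: swap expectation values.} Next I compute $\Tr[\mathbb S_q\rho^{\otimes 2}]=\Tr[\Tr_{2N-q}(\rho)^2]$ by the generating-function method of \Cref{prop:efficient_swap_product}, specialised to $d$ identical single modes. Because all modes are identical, the product over modes becomes a single power of a two-variable series. I expect it to be built from $(1-x)^{-1/2}$-type factors coming from the central binomials. Coefficient extraction then gives $\Tr[\mathbb S_q\rho^{\otimes 2}]$ as a short hypergeometric sum in Pochhammer symbols of $d/2$. As a consistency check, for $q=0$ and $q=2N$ the result must equal $1$.

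\emph{Step 3: projecting onto $\mathbb P_{2r}$ and simplifying.} Inserting the result into $\mathbb P_{2r}=\sum_q c_{2r,q}\mathbb S_q$ from \Cref{lem:explicit_P_k}, I aim to reach a form analogous to \Cref{lem:tr_p2r_dmn}, roughly
\[
\Tr[\mathbb P_{2r}\rho^{\otimes 2}]\propto \frac{2n-4r+1}{2n-2r+1}\,\frac{\left(\frac{d-1}{2}\right)_r\left(\frac d2+N\right)_{N-r}}{\left(\frac d2\right)_N}\times(\text{$d$-independent binomials}).
\]
This mirrors how \Cref{prop:scattershot_final_form} produced $(d/2)_{n-r}((d-1)/2)_r$. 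The plan for collapsing the double sum over $q$ and $l$ in $c_{2r,q}$ is to reorder the summations and recognise a terminating balanced ${}_3F_2(1)$, which is then evaluated by the Pfaff--Saalsch\"utz identity (cf.\ \Cref{app:useful_comb}).

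This collapse is the main obstacle. If it resists a direct attack, I would first guess the closed form by matching small $N$ symbolically. I would then prove it by a representation-theoretic shortcut: $\Tr[\mathbb P_{2r}\rho^{\otimes2}]$ is $\dim$-weighted by the $O(d)$-invariant vector $\omega^{\otimes N}$ inside $\mathbb S_{(4N-2r,2r)}$, and Gelfand-pair spherical dimensions for $(\mathrm U(d),O(d))$ give Pochhammers in $d/2$. Finally, I combine this with \eqref{eq:dimensions} and \Cref{lem:tr_p2r_dmn}, set $n=2N$, and cancel factors using the identities \eqref{eq:id1_pf}--\eqref{eq:id3_pf}. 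This should produce $\binom{2j}{j}\binom{2N-2j}{N-j}^2/\binom{4N-2j}{2N-j}$, together with $(m/2)_j\left(\frac{m+1}{2}\right)_{2N-j}$ in the denominator and the overall prefactor $N!^2$.
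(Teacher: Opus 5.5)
Your outer reduction is the paper's: set $n=2N$ in \eqref{eq:lxe_ref_boxed}, identify $j=r$, and take $\Tr\mathbb{P}_{2j}$ and $\Tr[\mathbb{P}_{2j}\mathbb{D}_{m,2N}]$ from \eqref{eq:dimensions} and \Cref{lem:tr_p2r_dmn}. Your Step 1 is also correct: $\rho\propto\mathbb{P}_{\mathrm{sym}}\omega^{\otimes N}$, it is $\mathrm{O}(d)$-invariant, and it has normalization $(\tfrac d2)_N/N!$. The state-independent bookkeeping closes. Using \eqref{eq:id1_pf}, \eqref{eq:id2_pf} and $\binom{2N-j}{j}(2N-2j)!=(2N-j)!/j!$ gives $|\mathcal S_{m,2N}|\,\Tr[\mathbb P_{2j}\mathbb D_{m,2N}]/\Tr\mathbb P_{2j}=(2N-j)!\,(2j)!\big/\bigl(j!\,4^j\,(\tfrac m2)_j\,(\tfrac{m+1}{2})_{2N-j}\bigr)$; \eqref{eq:id3_pf} is specific to the Fock input and is not needed. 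The genuine gap is the state factor $\Tr[\mathbb P_{2j}\rho^{\otimes 2}]$. It carries the whole $d$-dependence and the binomials $\binom{2N-2j}{N-j}^2/\binom{4N-2j}{2N-j}$. Your proposal never computes it: the displayed form is read off from the target, the ``$d$-independent binomials'' are left blank, and matching small $N$ is not a proof. The paper does not derive this factor either. It imports it as \Cref{thm:P_k_uniform_squeezed_closed}, after which the proposition is pure algebra. You should therefore either invoke that theorem or genuinely prove it.

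If you prove it, the Pfaff--Saalsch\"utz strategy of \Cref{lem:tr_p2r_dmn} does not transfer verbatim. There, the swap values were hypergeometric products (\Cref{prop:mean_purity_simple}); here they are not. Already for $N=2$, decomposing the two-particle reduced state along $\Sym^2(\mathbb C^d)=\mathbb C\,\omega\oplus\mathcal H_2$ gives $\Tr[\mathbb S_2\rho^{\otimes2}]=(d^2+6d+20)/\bigl(9d(d+2)\bigr)$. So $\Tr[\mathbb P_{2r}\rho^{\otimes 2}]$ is at least a triple sum, and the product form (e.g.\ $\Tr[\mathbb P_0\rho^{\otimes2}]=3(d+4)(d+6)/\bigl(35d(d+2)\bigr)$) only appears after summing over $q$. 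Your representation-theoretic fallback is the route that works, but it should be a linearization statement, not a dimension count. Since $\psi\otimes\psi$ is $\mathrm O(d)$-invariant and each $\mathbb S_{(4N-2r,2r)}(\mathbb C^d)$ has a unique $\mathrm O(d)$-fixed line, $\langle\psi,\varphi(g)\psi\rangle^2=\sum_r\Tr[\mathbb P_{2r}\rho^{\otimes2}]\,\Phi_{(2N-r,r)}(g)$. Here $\Phi_\mu=J^{(2)}_\mu/J^{(2)}_\mu(1^d)$ are the zonal spherical functions of $(\mathrm U(d),\mathrm O(d))$, evaluated on the eigenvalues of $g^{T}g$. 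Hence $\Tr[\mathbb P_{2r}\rho^{\otimes2}]=\kappa_r\,J^{(2)}_{(2N-r,r)}(1^d)/J^{(2)}_{(N)}(1^d)^2$, with $\kappa_r$ the $d$-independent coefficient of $J^{(2)}_{(2N-r,r)}$ in $(J^{(2)}_{(N)})^2$. The evaluation $J^{(2)}_\mu(1^d)=\prod_{(i,j)\in\mu}(d-i+2j-1)$ produces exactly $\left(\frac{d-1}{2}\right)_r\left(\frac d2+N\right)_{N-r}/\left(\frac d2\right)_N$. The constants $\kappa_r$ then follow from the Jack Pieri rule at $\alpha=2$. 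Alternatively, impose $\sum_r\Tr[\mathbb P_{2r}\rho^{\otimes2}]=1$ identically in $d$; this system is triangular at $d/2=\tfrac12-s$. That would give a self-contained proof of the ingredient the paper only cites.
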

        \noindent This formula directly follows from \Cref{thm:P_k_uniform_squeezed_closed} using similar techniques as for \Cref{prop:bs_lxeref_simplified}.
        As expected, this expression is independent of the squeezing parameter \(r\), and only depends on the number of pairs $N$, the number of modes $m$ and the number of squeezed input states $d$.

        \subsubsection*{Uniform losses}
            Similarly to the Boson Sampling setting discussed in \Cref{sssec:uniform_losses_bs}, uniform losses can be commuted onto the input state, so that the problem reduces to evaluating expectation values with respect to lossy single-mode squeezed vacuum states. The matrix elements of these states in the Fock basis can be computed efficiently, for instance using the \textsc{Piquasso} library~\cite{Kolarovszki_2025}. Moreover, since the resulting state remains a product of single-mode states, \Cref{prop:efficient_swap_product} applies directly in this setting. Consequently, the relevant bosonic-swap expectation values can also be computed efficiently for uniformly lossy Gaussian Boson Sampling, yielding an efficient algorithm for the corresponding LXEB reference value as well.

    \subsection{Numerical experiments}
        \label{sec:numerical_experiments}

        In this section, we complement our analytical results for the LXEB reference values with Monte Carlo simulations, with the aim of providing empirical evidence for the following phenomena that remain unproven in this work: the convergence of the estimator \(\widehat{\mathrm{LXE}}(p_{\rho,U},\mathcal{X})\) to \(\mathrm{LXE}(p_{\rho,U},p_{\rho,U})\) as the number of collected samples increases, and the concentration of the ideal LXEB score \(\mathrm{LXE}(p_{\rho,U},p_{\rho,U})\) around its reference value with high probability over Haar-random \(U\).
        All simulations were performed using the \textsc{Piquasso} framework~\cite{Kolarovszki_2025}, which enables exact evaluation of Gaussian and Fock-space matrix elements up to a chosen cutoff.
        All numerical simulations were executed on an Intel Xeon E5-2650 processor platform.

        Before we delve into the numerical experiments, we define the \textit{linear cross-entropy benchmarking fidelity} as
        \begin{equation}
            \mathcal{F}_{\mathrm{LXE}}(p_{\rho, U}, q) \coloneqq \frac{
                | \mathcal{S}_{m,n} | \, \mathrm{LXE}(p_{\rho, U}, q) - 1
            }{
                | \mathcal{S}_{m,n} | \, \mathrm{LXE}_{\text{ref}}^{(n)}(\rho) - 1
            },
        \end{equation}
        which is analog to the version introduced for RCS in Ref.~\cite{Arute_2019}.
        We can easily verify that $\mathbb{E}_{U \sim \mathrm{Haar}(m)} [\mathcal{F}_{\mathrm{LXE}}(p_{\rho, U}, p_{\rho, U})] = 1$ and that $\mathbb{E}_{U \sim \mathrm{Haar}(m)}[\mathcal{F}_{\mathrm{LXE}}(p_\rho, u)] = 0$, for $u$ being the uniform distribution over $\mathcal{S}_{m, n}$. 
        This choice of fidelity is justified by the assumption that the experimental probability distribution $q$ can be modeled as a mixture of the ideal distribution $p_{\rho, U}$ and the uniform distribution $u$, modeling the random noise. In this sense, the LXEB fidelity determines the amount of mixing between the ideal $\rho$ and the maximally mixed state over $n$ particles.
    
        We begin our numerical study by analyzing the convergence of the LXEB fidelity estimator in standard Boson Sampling. We evaluate the LXEB reference value predicted by the formula in \Cref{prop:bs_lxeref_simplified}.
        As discussed in \Cref{sec:basics}, considering an interferometer modeled \(U \in \mathrm{U}(m)\), the fixed-interferometer estimate is given by
        $
            \widehat{\mathrm{LXE}}(p_U, \mathcal{X}) = (1/s)\sum_{i=1}^s p_U(\bm{n}_i)
        $,
        where the outcomes \(\mathcal{X} = (\bm{n}_1, \dots, \bm{n}_s)\) are sampled from the Boson Sampling distribution denoted by \(p_U\). The corresponding fidelity estimator is simply just
        \begin{equation}
            \widehat{\mathcal{F}}_{\mathrm{LXE}}(p_{\rho,U}, \mathcal{X}) =
            \frac{
                |\mathcal{S}_{m,n}|\, \widehat{\mathrm{LXE}}(p_U, \mathcal{X}) - 1
            }{
                |\mathcal{S}_{m,n}|\, \mathrm{LXE}_{\mathrm{ref}}^{(n)}(\rho) - 1
            }.
        \end{equation}
        To illustrate numerically the self-averaging of the fidelity over \(U \sim \mathrm{Haar}(m)\), we generated \(N=10\) independent Haar-random interferometers. For each interferometer, we computed the exact output probabilities corresponding to \(s=1000\) samples. The resulting estimates are shown in \Cref{fig:lxeb_experiment}(a).
        In all tested instances, the fixed-interferometer estimates lie within \(2\sigma\) of the corresponding analytical predictions, where \(\sigma\) denotes the standard error across the ensemble of interferometers:
        \begin{equation}
            \sigma =
            \frac{
                |\mathcal{S}_{m,n}|
            }{
                |\mathcal{S}_{m,n}|\, \mathrm{LXE}_{\mathrm{ref}}^{(n)}(\rho) - 1
            }
            \frac{s_X}{\sqrt{N}},
        \end{equation}
        with \(s_X\) being the standard deviation of \(\widehat{\mathrm{LXE}}(p_U, \mathcal{X})\) across interferometers.
        Numerically, we observe that the variance of the estimator decreases slightly with increasing system size, suggesting that the LXEB fidelity becomes increasingly concentrated over \(U \sim \mathrm{Haar}(m)\) as the system size grows.

        \begin{figure*}[t]
            \centering
            \begin{minipage}[t]{0.7\linewidth}
                \centering
                \includegraphics[width=\linewidth]{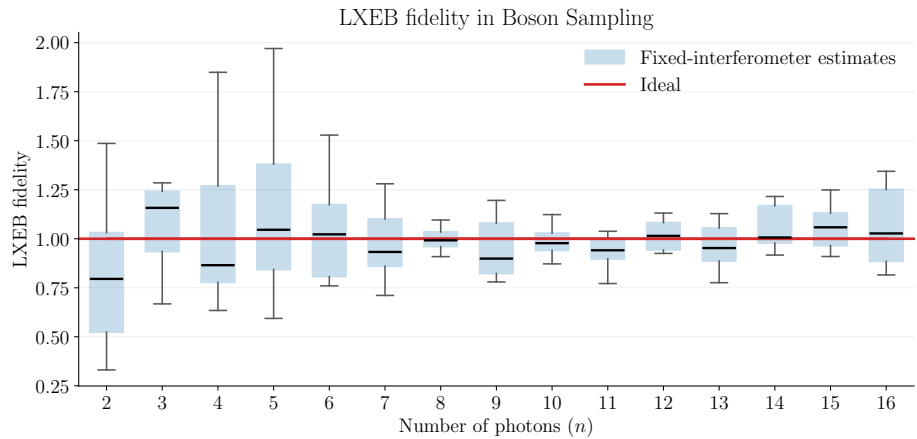}
                
                \vspace{0.5em}
                \footnotesize\textbf{(a)} Boson Sampling with particle-number-resolving detectors with $m = 2n$; see \Cref{sec:basics} for details.
            \end{minipage}
            
            \vspace{1em}
            
            \begin{minipage}[t]{0.7\linewidth}
                \centering
                \includegraphics[width=\linewidth]{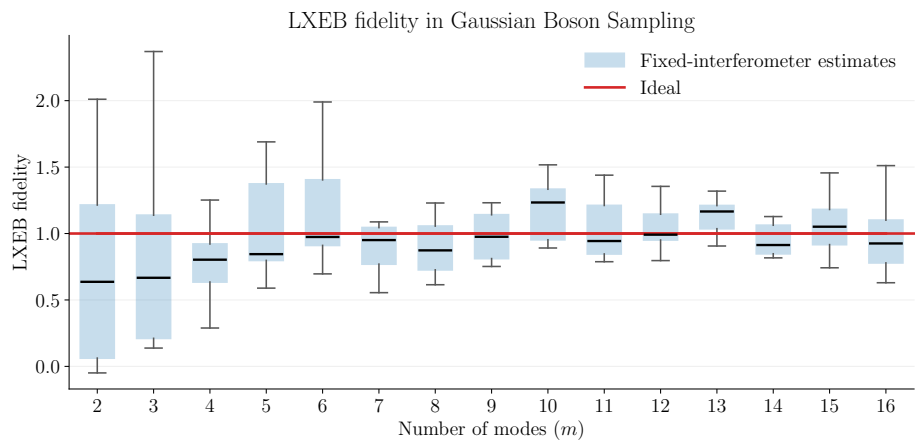}
                
                \vspace{0.5em}
                \footnotesize\textbf{(b)} Gaussian Boson Sampling with uniformly squeezed input states; see \Cref{sec:basics} for details. The squeezing parameter is fixed to \(r=\operatorname{arcsinh}(1)\), so that the mean photon number per mode equals \(1\), and hence the total mean photon number equals the number of modes.
            \end{minipage}
            
            \caption{Comparison of analytical and estimated LXEB reference values. For each value of \(m\), the box plot summarizes the distribution of LXEB fidelity estimates obtained from individual interferometers. For each interferometer, we estimated the LXEB fidelity from \( s=1000\) samples and averaged the results over \(N=10\) independently drawn Haar-random interferometers. The central horizontal line inside each box indicates the median, while the lower and upper edges of the box mark the first and third quartiles, respectively. The whiskers extend to the range of non-outlying values. The red line shows the corresponding ground-truth fidelity. }
            \label{fig:lxeb_experiment}
        \end{figure*}

        We performed an analogous experiment for Gaussian Boson Sampling with uniformly squeezed input states.
        The squeezing parameter $r$ was chosen so that the expected total number of photons equals the number of modes.
        Since the total number of photons is not fixed in Gaussian Boson Sampling, all statistics were conditioned for a total number of $n$ particles.
        Due to the even-parity constraint of squeezed-vacuum states, when the number of modes $m$ was odd, we imposed the postselection condition $n=m-1$ instead of $n = m$.
        An additional requirement in Gaussian Boson Sampling, compared with Boson Sampling, is the evaluation of conditional output probabilities, which are obtained by normalizing the detection probabilities by the total probability of observing the postselected particle number. This normalization does not introduce any additional computational difficulty: for uniformly squeezed states, the required factor is available in closed form, and for arbitrary product states it can be computed using \Cref{prop:polynomial_machinery} from \Cref{app:proof_bosonic_swap_efficient}.
        The estimated LXEB fidelities are compared with the analytical reference values obtained from \Cref{sec:framework}.
        As illustrated in \Cref{fig:lxeb_experiment}(b), the experimental estimates agree with the theoretical predictions. As in the Boson Sampling case, the variance of the estimator exhibits a decreasing trend with increasing system size.
        
        Finally, to illustrate the effect of optical loss on the LXEB benchmark in Gaussian Boson Sampling, we plot in \Cref{fig:lxeb_fidelity_in_lossy_gbs} the exact LXEB fidelity as a function of the number of modes $m$ for several values of the loss parameter, in the regime $n=m$. As shown in the figure, the LXEB fidelity decreases monotonically with increasing loss, and this degradation becomes more pronounced as the system size grows. This provides a quantitative illustration of the sensitivity of the LXEB benchmark to loss in a Gaussian Boson Sampling experiment.

        \begin{figure*}[t]
             \centering
             \includegraphics[width=0.7\linewidth]{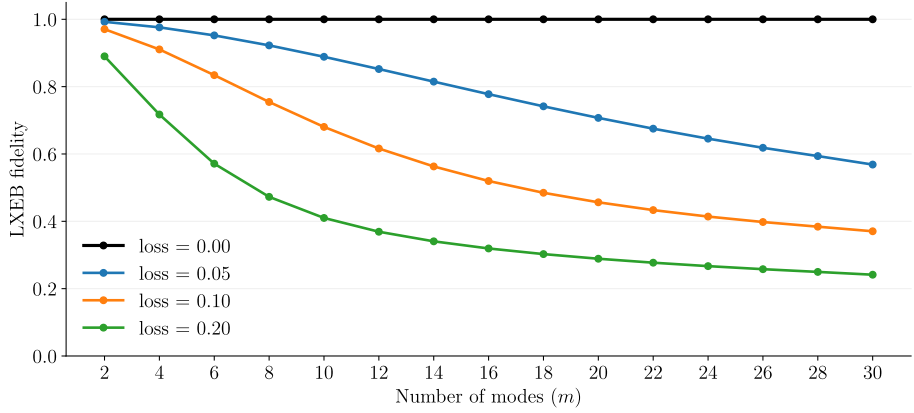}
             \caption{LXEB fidelity in lossy Gaussian Boson Sampling. Exact LXEB fidelities are shown as a function of the number of modes $m$ for several values of the loss parameter $\eta$ (see \Cref{eq:loss_channel}), in the regime $n=m$. The squeezing parameter is fixed to $r=\operatorname{arcsinh}(1)$, so that the mean photon number per mode equals $1$, and hence the total mean photon number equals the number of modes. Each curve is obtained by evaluating the lossy LXEB reference value and normalizing it with respect to the ideal lossless case. As expected, the figure shows the systematic suppression of the LXEB fidelity as the loss increases, with a more pronounced degradation for larger systems.
             The density matrices of the corresponding single-mode lossy squeezed states are computed numerically using \textsc{Piquasso}~\cite{Kolarovszki_2025}.
             }
            \label{fig:lxeb_fidelity_in_lossy_gbs}
        \end{figure*}

\section{Anticoncentration from second moments}\label{sec:anticoncentration}

In this part we connect LXEB scores to anticoncentration results based on second moments of output probabilities $\mathbb{E}_{U\sim\mathrm{Haar}(m)}[p_{\rho,U}(\bm{n})^2]$. To establish these, we use Paley-Zygmund inequality (a standard tool used in this context in the literature \cite{Bremner_2017,dalzell_2022,Oszmaniec_2022}) adjusted accordingly to accommodate the fact that in outside of the dilute regime we do not have hiding and therefore the we cannot reduce our analysis to computation of moments of a single outcome probability $p_{\rho,U}(\bm{n}_0)$. 
Our key finding is the meaningful anticoncentration statement for Boson Sampling in the saturated regime ($m=\Theta(n)$); phrased in \Cref{thm:ac}. This strengthens Stockmeyer reduction by level of approximation of output probability from additive to multiplicative (cf. \Cref{thm:stockmeyer_average_anticoncentration}), improving hardness guarantees of Boson Sampling in the saturated regime \cite{bouland2025complexitytheoreticfoundationsbosonsamplinglinear}. Interestingly, analogous analysis proves too weak to establish meaningful anticoncentration for Scattershot Boson Sampling and Gaussian Boson Sampling, and only weaker forms of this property can be obtained; see \Cref{prop:scatter_asymptotic_bounds,prop:gaussian_asymptotic_bounds}.

    As discussed earlier in the paper, the anticoncentration of the probability distribution $p_{\rho,U}$ captures its flatness, i.e., the degree to which it is not concentrated on a small fraction of of outputs.  In what follows, we will be concerned with the following definition of anticoncentration which capture average-case case behavior of $p_{\rho,U}$ (for a state $\rho$ with definite photon number $n$) both over $U\sim\mathrm{Haar}(m)$ and $\bm{n}\sim \mathcal{U}_{m,n}$.

    \begin{definition}[Average-case anticoncentration]\label{def:anticoncentration}
        Let $\rho\in \mathcal{D}(\Sym^n(\mathbb{C}^m))$. We say that $\rho$  exhibits anticoncentration\footnote{Technically speaking, we should be defining anticoncentration as the property of a \emph{sequence} of quantum states $\{\rho_k\}_{k=1}^\infty$ and such that $\rho_k \in \mathcal{D}(\Sym^k(\mathbb{C}^{m_k}))$ (for $(m_k)_{k=1}^\infty$ being a sequence of number of modes), such that  that equation \eqref{eq:anticoncentrationDEF} is satisfied for universal constant $C>0$. We decide to follow simplified avoid unnecessary clutter. } if there exists a constant \(C>0\)  (independent on $n$ and $m$)  such that, for every \(\tau\in[0,1]\) and every \(n\ge 1\) ,
        \begin{equation}\label{eq:anticoncentrationDEF}
            \underset{\substack{U \sim \mathrm{Haar}(m) \\ \bm{n}\sim \mathcal{U}_{m,n}} }{\mathrm{Pr}}
            \left(
            p_{\rho_n,U}(\bm{n})\ge \frac{\tau}{|\mathcal{S}_{m,n}|}
            \right)
            \ge \frac{(1-\tau)^2}{C}.
        \end{equation}
    \end{definition}

    Note that this definition is weaker then the one appearing in the original Permanent anticoncentration conjecture from \cite{aaronson:2010}. This notion is however sufficient for our purposes, as will become apparent form proofs of \Cref{thm:ac} and \Cref{thm:stockmeyer_average_anticoncentration}) below. 

   Recall that from the irreducibility of $\varphi_n$, we have that $\mathbb{E}_{U\sim\mathrm{Haar}(m)} [p_{\rho_n,U}(\bm{n})]= 1 / |\mathcal{S}_{m,n}|$. This observation and the form of  \eqref{eq:anticoncentrationDEF} suggests the use Paley-Zygmund inequality which states that nonnegative random variable \(X\) and any \(\tau\in[0,1]\)
    \begin{equation}
        \mathbb{P}\!\left(X\ge \tau\,\mathbb{E}[X]\right)
        \ge
        (1-\tau)^2\frac{\mathbb{E}[X]^2}{\mathbb{E}[X^2]}.
    \end{equation}
Thus, intuitively, in order to prove anticoncentration, it is enough to control the ratio \(\mathbb{E}[X^2]/\mathbb{E}[X]^2\). This motivates to define, for a state $\rho\in\mathcal{D}(\Sym^n(\mathbb{C}^m))$ and fixed output $\bm{n}\in\mathcal{S}_{m,n}$, ``anticoncentration score''
  \begin{align}\begin{split}
            \mathrm{AC}_{\rho}(\bm{n})
            &\coloneqq
            \frac{
            \underset{U\sim \mathrm{Haar}(m)}{\mathbb{E}}\!\left[p_{\rho,U}(\bm{n})^2\right]
            }{
            \underset{U\sim \mathrm{Haar}(m)}{\mathbb{E}}\!\left[p_{\rho,U}(\bm{n})\right]^2  
            }
            \\&=  |\mathcal{S}_{m,n}|^2 \underset{U\sim \mathrm{Haar}(m)}{\mathbb{E}}\!\left[p_{\rho,U}(\bm{n})^2\right] ,
     \end{split}\end{align}
     where in the second equality we used, again, irreducibility of  $\varphi_n(U)$. We further define the average anticoncentration score by
        \begin{equation}\label{eq:ACaverage}
            \mathrm{AC}_\rho^{(n)} 
            \coloneqq 
           \underset{\bm{n}\sim \mathcal{U}_{m,n}}{\mathbb{E}} \!\left[\mathrm{AC}_\rho(\bm{n})\right] = |\mathcal{S}_{m,n}| \mathrm{LXE}_{\text{ref}}^{(n)}(\rho)\ ,
        \end{equation}
    with the second equality coming directly form the definition of $\mathrm{LXE}_{\text{ref}}^{(n)}(\rho)$. The following Lemma shows that average AC score can put bounds on average-case anticoncentration.
        
    \begin{lemma}[Bounds for anticoncentration from average AC score]
    \label{lem:anticoncFROMacSCORE}
       Let $\rho\in\mathcal{D}(\Sym^n(\mathbb{C}^m))$. Then, for every \(\tau\in[0,1]\) and every \(n\ge 1\), we have
        \begin{equation}
            \underset{\substack{U\sim\mathrm{Haar}(m)\\ \bm n\sim \mathcal U_{m,n}}}{\Pr}
            \left(
                p_{\rho,U}(\bm n)\ge \frac{\tau}{|\mathcal S_{m,n}|}
            \right)
            \ge
            \frac{(1-\tau)^2}{\mathrm{AC}_\rho^{(n)}}.
        \end{equation}
     \end{lemma}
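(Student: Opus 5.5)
We apply the Paley--Zygmund inequality once, to a single random variable defined on the \emph{joint} probability space. That space is the product of $\mathrm{Haar}(m)$ on $\mathrm{U}(m)$ with $\mathcal{U}_{m,n}$ on $\mathcal{S}_{m,n}$. On it, set $X\coloneqq p_{\rho,U}(\bm n)$, a nonnegative random variable. The probability in the statement is then exactly $\Pr\left(X\ge \tau/|\mathcal S_{m,n}|\right)$. To use the inequality in the form stated before \Cref{def:anticoncentration}, we only need $\mathbb{E}[X]$ and $\mathbb{E}[X^2]$ under the joint measure.

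\textbf{First moment.} Take the expectation over $U$ first, for fixed $\bm n$. Since $\varphi_n$ is irreducible, Schur's lemma gives $\mathbb{E}_U[\varphi_n(U)\rho\varphi_n(U)^\dagger]=\mathbbm{1}/\dim\Sym^n(\mathbb{C}^m)$. Hence $\mathbb{E}_U[p_{\rho,U}(\bm n)]=1/|\mathcal S_{m,n}|$ for every $\bm n$, as already recalled in the text. Averaging over $\bm n\sim\mathcal U_{m,n}$ gives $\mathbb{E}[X]=1/|\mathcal S_{m,n}|$. In particular, the threshold $\tau/|\mathcal S_{m,n}|$ equals $\tau\,\mathbb{E}[X]$.

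\textbf{Second moment.} By Fubini, and using the definition \eqref{eq:ref_def},
\[
\mathbb{E}[X^2]=\frac{1}{|\mathcal S_{m,n}|}\sum_{\bm n\in\mathcal S_{m,n}}\underset{U\sim\mathrm{Haar}(m)}{\mathbb E}\left[p_{\rho,U}(\bm n)^2\right]=\frac{\mathrm{LXE}^{(n)}_{\mathrm{ref}}(\rho)}{|\mathcal S_{m,n}|}.
\]
It follows that
\[
\frac{\mathbb{E}[X]^2}{\mathbb{E}[X^2]}=\frac{1}{|\mathcal S_{m,n}|\,\mathrm{LXE}^{(n)}_{\mathrm{ref}}(\rho)}=\frac{1}{\mathrm{AC}^{(n)}_\rho},
\]
where the last equality is \eqref{eq:ACaverage}.

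Substituting these two moments into Paley--Zygmund gives the claimed bound for every $\tau\in[0,1]$. The only point needing care is that $\mathbb{E}[X^2]>0$ and finite: it is finite because probabilities are bounded by $1$, and positive because $\mathbb{E}[X]>0$.

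There is no genuine obstacle here. The one conceptual point is to apply Paley--Zygmund to the joint randomness over $(U,\bm n)$, rather than per outcome $\bm n$. Doing so means that only the $\bm n$-averaged second moment, namely the LXEB reference value, is needed, rather than a uniform bound on $\mathrm{AC}_\rho(\bm n)$. This is precisely what sidesteps the lack of hiding in the saturated regime. The first-moment identity is what makes the threshold $\tau/|\mathcal S_{m,n}|$ coincide with $\tau\,\mathbb{E}[X]$, and it holds uniformly in $\bm n$, so it survives the averaging.
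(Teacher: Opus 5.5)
Your proof is correct, but it takes a slightly different route from the paper.

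\textbf{The paper's route.} The paper applies Paley--Zygmund conditionally, for each fixed outcome $\bm n$, to the random variable $U\mapsto p_{\rho,U}(\bm n)$. This requires the per-outcome first moment $\mathbb{E}_U[p_{\rho,U}(\bm n)]=1/|\mathcal S_{m,n}|$, which comes from irreducibility of $\varphi_n$, and gives the bound $(1-\tau)^2/\mathrm{AC}_\rho(\bm n)$. The paper then averages over $\bm n\sim\mathcal U_{m,n}$. Finally it uses convexity of $x\mapsto 1/x$ to pass from $\mathbb{E}_{\bm n}[1/\mathrm{AC}_\rho(\bm n)]$ to $1/\mathrm{AC}^{(n)}_\rho$.

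\textbf{Your route.} You apply Paley--Zygmund once, on the product space. The Jensen step never appears, and the final bound follows directly from the joint first and second moments.

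\textbf{What each buys.} The paper's route passes through the intermediate bound $(1-\tau)^2\,\mathbb{E}_{\bm n}[1/\mathrm{AC}_\rho(\bm n)]$. This is at least as strong as the stated bound, and would be useful if per-outcome scores were controlled. Your route is shorter. It also needs less than you actually invoke: the joint mean $\mathbb{E}[X]=1/|\mathcal S_{m,n}|$ already follows from $\sum_{\bm n}p_{\rho,U}(\bm n)=1$, with no appeal to irreducibility.

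\textbf{A correction to your framing.} The paper's argument also uses only the averaged score $\mathrm{AC}^{(n)}_\rho$ in the end. So coping with the lack of hiding is not specific to applying Paley--Zygmund jointly.
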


     \begin{proof}
        Applying the Paley-Zygmund inequality to the nonnegative random variable \(p_{\rho,U}(\bm{n})\), and recalling $\underset{U\sim\mathrm{Haar}(m)}{\mathbb{E}}[p_{\rho,U}(\bm{n})]= 1/|\mathcal{S}_{m,n}|$ we obtain
        \begin{multline}
            \underset{U \sim \mathrm{Haar}(m) }{\mathrm{Pr}}
            \left(
            p_{\rho,U}(\bm{n})\ge \frac{\tau}{|\mathcal{S}_{m,n}|}
            \right) \geq
            \frac{(1-\tau)^2}{\mathrm{AC}_\rho(\bm{n})}.
        \end{multline}
        Taking the expectation value over \(\bm{n}\sim \mathcal{U}_{m,n}\) yields
        \begin{multline}
            \underset{\bm{n}\sim \mathcal{U}_{m,n}}{\mathbb{E}}
            \left[
            \underset{U \sim \mathrm{Haar}(m) }{\mathrm{Pr}}
            \left(
            p_{\rho,U}(\bm{n})\ge \frac{\tau}{|\mathcal{S}_{m,n}|}
            \right)
            \right]
            \\ \ge
            (1-\tau)^2
            \underset{\bm{n}\sim \mathcal{U}_{m,n}}{\mathbb{E}}
            \left[
            \frac{1}{\mathrm{AC}_\rho(\bm{n})}
            \right].
        \end{multline}
        On the other hand, denoting the anticoncentration event by
        \begin{equation}
            \mathcal{A}  = \left\{ (U, \bm{n}) : p_{\rho,U}(\bm{n})\ge \frac{\tau}{|\mathcal{S}_{m,n}|} \right\},
        \end{equation}
        and using the independence of $\bm{n} \sim \mathcal{U}_{m,n}$ and $U \sim \mathrm{Haar}(m)$ we can write
        \begin{align}\begin{split}
            \underset{\bm{n}\sim \mathcal{U}_{m,n}}{\mathbb{E}}
            \left[
            \underset{U \sim \mathrm{Haar}(m) }{\mathrm{Pr}}
            \left(
                \mathcal{A}
            \right)
            \right]
            &= 
            \underset{\bm{n}\sim \mathcal{U}_{m,n}}{\mathbb{E}}
            \left[
            \underset{U \sim \mathrm{Haar}(m) }{\mathbb{E}}
            \left[
                \mathbbm{1}_{\mathcal{A}}
            \right]
            \right]
            \\&=
            \underset{\substack{U \sim \mathrm{Haar}(m) \\ \bm{n}\sim \mathcal{U}_{m,n}} }{\mathbb{E}}
            [ \mathbbm{1}_{\mathcal{A}}]
            \\&=
            \underset{\substack{U \sim \mathrm{Haar}(m) \\ \bm{n}\sim \mathcal{U}_{m,n}} }{\mathrm{Pr}}
            \left(
            \mathcal{A}
            \right),
        \end{split}\end{align}
        where $\mathbbm{1}_{\mathcal{A}}$ denotes the indicator function corresponding to the event $\mathcal{A}$.
        Hence, we know that
        \begin{multline}
          \underset{\substack{U \sim \mathrm{Haar}(m) \\ \bm{n}\sim \mathcal{U}_{m,n}} }{\mathrm{Pr}}
        \left(
        p_U(\bm{n})\ge \frac{\tau}{|\mathcal{S}_{m,n}|}
        \right) \\ \geq 
        (1-\tau)^2
            \underset{\bm{n}\sim \mathcal{U}_{m,n}}{\mathbb{E}}
            \left[
            \frac{1}{\mathrm{AC}_\rho(\bm{n})}
            \right].
        \end{multline}
        Since \(x\mapsto 1/x\) is convex on \((0,\infty)\), we lower bound the right-hand side as
        \begin{equation}
            \underset{\bm{n}\sim \mathcal{U}_{m,n}}{\mathbb{E}}
            \left[
            \frac{1}{\mathrm{AC}_\rho(\bm{n})}
            \right]
            \ge
            \frac{1}{
            \underset{\bm{n}\sim \mathcal{U}_{m,n}}{\mathbb{E}}[\mathrm{AC}_\rho(\bm{n})]
            }
            =
            \frac{1}{\mathrm{AC}_\rho^{(n)}}.
        \end{equation}
    \end{proof}

The following theorem provides a strong bound on AC score for regular Boson Sampling in the saturated regime, which in the light of proceeding lemma guarantees average-case anticoncentration.

    \begin{theorem}[Anticoncentration in saturated Boson Sampling]
        \label{thm:ac}
        Let $\ketbra{\bm{n}_0}$ be a collision-free Fock state supporting $n$ photons in $m$ modes (cf. \eqref{eq:fockin}). Define $p_{U}(\bm{n})\coloneq p_{\ketbra{\bm{n}_0}{\bm{n}_0},U}(\bm n)$ and $\mathrm{AC}_{\ketbra{\bm{n}_0}{\bm{n}_0}} \coloneq\mathrm{AC}(m, n)$. We then have
        \begin{equation}\label{eq:ACaverage1}
            \mathrm{AC}(m, n)\le \const
        \end{equation} 
        As a result, for every \(\tau\in[0,1]\) and every \(n\ge 1\), we have
        \begin{equation}
            \underset{\substack{U\sim\mathrm{Haar}(m)\\ \bm n\sim \mathcal U_{m,n}}}{\Pr}
            \left(
                p_{U}(\bm n)\ge \frac{\tau}{|\mathcal S_{m,n}|}
            \right)
            \ge
            \frac{(1-\tau)^2}{\const}.
        \end{equation}
        In particular, when \(m=\Theta(n)\), this lower bound remains bounded away from zero uniformly in \(n\). Thus, Boson Sampling in the saturated regime \(m=\Theta(n)\) satisfies average anticoncentration according to Definition \ref{def:anticoncentration}.
    \end{theorem}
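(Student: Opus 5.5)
The plan is to bound the average AC score explicitly and then invoke \Cref{lem:anticoncFROMacSCORE}. By \eqref{eq:ACaverage}, $\mathrm{AC}(m,n)=|\mathcal S_{m,n}|\,\mathrm{LXE}^{(n)}_{\mathrm{ref}}(\ketbra{\bm n_0})$. Using \Cref{prop:bs_lxeref_simplified} together with $|\mathcal S_{m,n}|=(m)_n/n!$, this becomes
\begin{equation*}
\mathrm{AC}(m,n)=\sum_{r=0}^{\lfloor n/2\rfloor} T_r,\qquad
T_r= 2^n (m)_n\,\frac{2n-4r+1}{2n-2r+1}\,
\frac{16^{-r}\binom{2r}{r}}{\binom{2n-2r}{n-r}\left(\frac m2\right)_r\left(\frac{m+1}{2}\right)_{n-r}} .
\end{equation*}
Each factor is elementary, so the task reduces to bounding this finite sum by $19.1\,(m/n+1)$. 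Once that is done, the second claim of the theorem follows immediately from \Cref{lem:anticoncFROMacSCORE}. For $m=\Theta(n)$ the constant $19.1(m/n+1)$ is $O(1)$, which gives average anticoncentration in the sense of Definition \ref{def:anticoncentration}.

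First I would normalize the summand. I would write $(m)_n=2^n(\tfrac m2)_{\lceil n/2\rceil}(\tfrac{m+1}{2})_{\lfloor n/2\rfloor}$, or more cleanly use the duplication formula $(m)_{2k}=4^k(\tfrac m2)_k(\tfrac{m+1}{2})_k$, to express $T_r$ as a ratio of Gamma functions. As a check, the $r=0$ term should be $T_0=\frac{4^n}{\binom{2n}{n}}\cdot\frac{(m)_n}{(\frac{m+1}{2})_n}\cdot\frac{2n+1}{2n+1}$. Since $4^n/\binom{2n}{n}\sim\sqrt{\pi n}$, this is the source of the growth, and combined with the Pochhammer ratio it should give something of order $(m+n)/n$ times a constant. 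I would then compute the consecutive ratio $T_{r+1}/T_r$, which is a rational function of $r,n,m$. The plan is to show it is bounded by a geometric factor $\theta<1$ uniformly for $r$ beyond a small threshold, or else to identify a dominant region.

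Concretely, the ratio is
\begin{equation*}
\frac{T_{r+1}}{T_r}=\frac{(2r+1)}{8(r+1)}\cdot\frac{\binom{2n-2r}{n-r}}{\binom{2n-2r-2}{n-r-1}}\cdot\frac{\frac{m+1}{2}+n-r-1}{\frac m2+r}\cdot(\text{prefactor ratio}),
\end{equation*}
and the central binomial ratio is about $4$. This gives roughly $\frac12\cdot\frac{m+2(n-r)}{m+2r}$. That ratio is not uniformly below one when $m\ll n$, so a pure geometric bound fails. Instead I would bound $T_r\le T_0\cdot\prod_{j<r}\rho_j$ and sum via a Beta-integral/hypergeometric comparison. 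Equivalently, I would bound $\binom{2r}{r}16^{-r}\le 4^{-r}/\sqrt{\pi r}$ and $4^{n-r}/\binom{2n-2r}{n-r}\le\sqrt{\pi(n-r+1)}$, after which the remaining sum $\sum_r \frac{(m)_n 4^{-r}\dots}{(\frac m2)_r(\frac{m+1}{2})_{n-r}}$ can be recognized as a truncated ${}_2F_1$ at argument $1$ and evaluated by Chu--Vandermonde. The Pfaff--Saalschütz identity mentioned in the acknowledgements is likely the intended tool here.

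The main obstacle is this last step: obtaining a sharp, uniform-in-$(m,n)$ upper bound with an explicit constant (here $19.1$) rather than only an asymptotic one. I would handle it in two stages. First, I would prove the inequality asymptotically via the hypergeometric closed form, with Stirling bounds carrying explicit error terms such as Robbins' bounds. Second, I would dispose of small $n$ (say $n\le n_0$) by direct evaluation of the exact formula from \Cref{prop:bs_lxeref_simplified}, choosing the constant $19.1$ to dominate both regimes.
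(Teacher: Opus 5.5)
Your reduction matches the paper's: $\mathrm{AC}(m,n)=\sum_r T_r$ with exactly your summand, and the probability bound then follows from the Paley--Zygmund lemma. The gap is in the only nontrivial step, the uniform bound on $\sum_r T_r$, because your picture of where the sum's mass lies is wrong.

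\textbf{Where the mass lies.} First, $T_0=2^n(m)_n/[\binom{2n}{n}(\frac{m+1}{2})_n]$, with $2^n$ rather than $4^n$. For $m=\alpha n$ this term is exponentially small, $T_0\le\frac{e^{3/2}}{2}\sqrt{3n}\,e^{-n/(2(\alpha+1))}$, so it is not where the factor $1+\alpha$ comes from. Second, the exact consecutive ratio is $T_{r+1}/T_r=(\text{prefactor ratio})\cdot\frac{(2r+1)(2n-2r-1)}{4(r+1)(n-r)}\cdot\frac{m+2n-2r-1}{m+2r}$. The middle factor is $\approx\frac{2r+1}{2(r+1)}$, so it equals $\frac12$ only at $r=0$ and tends to $1$ for $r\gg1$. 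The last factor exceeds $1$ for every $r<\frac{2n-1}{4}$ and every $m$. Hence, for $\alpha=O(1)$, the terms grow towards $r=n/2$, and the sum is carried by $s=n-2r=O(\sqrt{(1+\alpha)n})$. A geometric-decay argument from $r=0$ therefore cannot work. Writing $T_r\le T_0\prod_{j<r}\rho_j$ only moves the whole difficulty into controlling a product of ratios larger than $1$, starting from an exponentially small $T_0$.

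\textbf{The other two routes.} The exact evaluation you propose is not available. The leftover sum carries the non-hypergeometric weights $\frac{2n-4r+1}{2n-2r+1}\sqrt{(n-r+1)/r}$ and is cut at $\lfloor n/2\rfloor$ with no terminating $(-N)_j$. Even its bare part $\sum_r 1/[(\frac m2)_r(\frac{m+1}{2})_{n-r}]$ is a truncated ${}_2F_1$ at argument $-1$. So neither Chu--Vandermonde nor Pfaff--Saalsch\"utz applies; the paper uses those identities only to derive the closed-form overlaps of $\mathbb{P}_{2r}$. Finally, ``asymptotics plus direct evaluation for $n\le n_0$'' does not give a bound uniform in $m$. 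For each small $n$ infinitely many $m$ remain, and a large-$n$ asymptotic is not uniform in $\alpha$.

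\textbf{The missing idea} is a non-asymptotic Gaussian bound centred at $r=n/2$. Write $T_r=\frac{2n-4r+1}{2n-2r+1}\cdot\frac{\binom{2r}{r}}{4^r}\cdot\frac{4^{n-r}}{\binom{2n-2r}{n-r}}\cdot G(r)$ with $G(r)=\frac{2^{-n}(m)_n}{(m/2)_r(\frac{m+1}{2})_{n-r}}$. The two central-binomial factors contribute only a factor of order $\sqrt{(n-r+1)/\max(r,1)}$, and $G(\lfloor n/2\rfloor)\le 1$ by the duplication formula.

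The paper proceeds as follows:
\begin{itemize}
\item It bounds $G(r)$, up to a constant, by $e^{n\phi_\alpha(r/n)}$ using Robbins' Stirling bounds, and handles the remaining Gamma ratios with Gautschi's inequality.
\item From $\phi_\alpha(\frac12)=\phi_\alpha'(\frac12)=0$ and $\phi_\alpha''\le-\frac{4}{\alpha+1}$ it gets $\phi_\alpha(x)\le-\frac{2}{\alpha+1}(x-\frac12)^2$.
\item It splits the sum into $r=0$ and $1\le r\le n/4$, both exponentially suppressed and $O(\alpha+1)$ after a $\sup_t$ bound, and $n/4<r\le n/2$, where the sum is $\lesssim\frac1n\sum_{s\ge0}(s+1)e^{-s^2/(2(\alpha+1)n)}=O(\alpha+1)$.
\item Using $\alpha\ge1$ (the lemma assumes $m\ge n$) to absorb the $1/n$ and $\sqrt{(\alpha+1)/n}$ terms, this gives $19.1(\alpha+1)$ for every $n\ge1$, with no numerics.
\end{itemize}

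Your ratio computation could deliver the same estimate once correctly oriented. For $j<\lfloor n/2\rfloor$ one has $\log\frac{G(j+1)}{G(j)}=\log\frac{m+2n-2j-1}{m+2j}\ge\frac{2n-4j-1}{m+2n}$. Summing from $j=r$ to $\lfloor n/2\rfloor-1$ gives $G(r)\le 2e^{-s^2/(2(m+2n))}$, and the same three-region summation then yields $O(\alpha+1)$.
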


     \begin{figure*}
        \centering
        \includegraphics[width=0.8\linewidth]{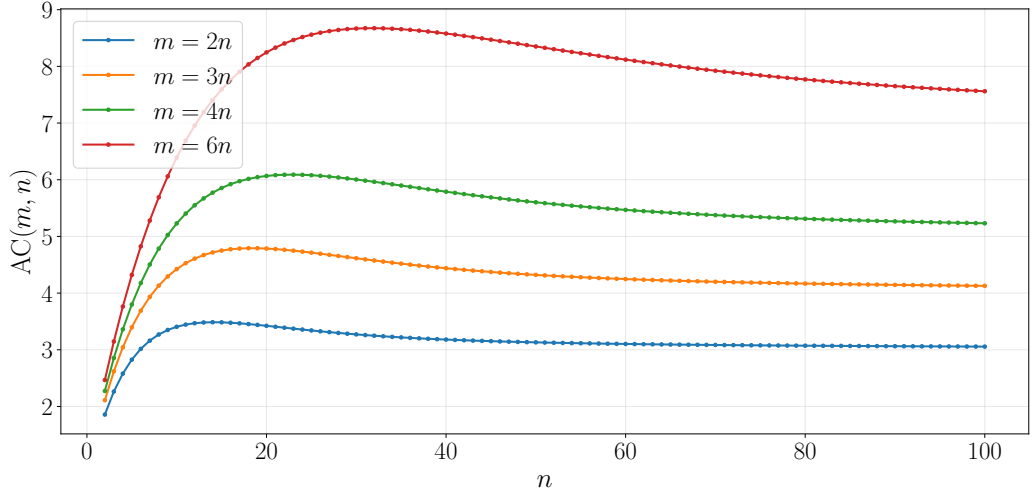}
        \caption{Numerical values of $\mathrm{AC}(m,n)$ in the saturated regime for $m=2n,3n,4n,6n$ and $2 \le n \le 100$. The data indicate that the general upper bound proven in \Cref{lem:ac} can be made tighter, and are consistent with the empirical estimate $\mathrm{AC}(m,n)\le 1+1.285\,\frac{m}{n}$.}
        \label{fig:ac_scores_saturated_bs}
    \end{figure*}

    We stress that the bound in \eqref{eq:ACaverage1} is unlikely to be optimal. The numerics suggests that as stronger estimate should hold (see \Cref{fig:ac_scores_saturated_bs} for illustration), 
    \(
    \mathrm{AC}(m,n) \le 1 + 1.285\,\frac{m}{n},
    \)
    In fact, it is possible to show
    \begin{equation}
        \lim_{n \to \infty} \mathrm{AC}(m,n) = 1 + \frac{m}{n},
    \end{equation}
    whenever the ratio $\frac{m}{n}$ converges. We provide a proof for this statement in \Cref{lemma:ac_asymptotic} from \Cref{app:ac_asymptotic}.

    This shows that the output distribution does not concentrate excessively near zero on average over occupation patterns, and provides the quantitative control needed in hardness argument that follows.     More concretely, this property can be used to turn an additive estimate into a multiplicative estimate in the  Stockmayer reduction adapted to saturated regime Boson Sampling from Ref~\cite{bouland2025complexitytheoreticfoundationsbosonsamplinglinear}.

    Let us recall that weak classical sampler is a probabilistic polynomial-time classical algorithm which, on input an \(m\times m\) interferometer matrix \(U\) and an error tolerance \(\beta>0\), outputs samples from a distribution \(q_U\) sych that $TV(p_U,q_U)\leq \beta$. The existence of such a sampler can be leveraged to approximately compute output probabilities. The key tool is Stockmeyer's approximate counting algorithm~\cite{stockmeyer}, which runs in \(\mathsf{BPP}^{\mathsf{NP}}\). In our setting, average anticoncentration can be used to turn an additive estimate into a multiplicative estimate in the  Stockmayer reduction adapted to saturated regime Boson Sampling from Ref~\cite{bouland2025complexitytheoreticfoundationsbosonsamplinglinear}.

    \begin{restatable}[Stockmeyer reduction using average anticoncentration]{theorem}{stockmeyer}
    \label{thm:stockmeyer_average_anticoncentration}
        Assume that \(m=\Theta(n)\), and suppose that the Boson Sampling problem admits a classical sampler \(q_U\) satisfying
        \begin{equation}
            \|p_U-q_U\|_{\mathrm{TV}}\le \beta
        \end{equation}
        for some $\beta \geq 0$ and \(U\in \mathrm U(m) \). Then, for every \(\tau\in(0,1)\) and every \(k_1,k_2>0\), there exists a \(\mathsf{BPP}^{\mathsf{NP}}\) algorithm which, given \(U\sim \mathrm{Haar}(m)\) and \(\bm n \sim \mathcal U_{m,n}\), outputs an estimate \(\widetilde p_U(\bm n)\) such that
        \begin{equation}
            \frac{\left|\widetilde p_U(\bm n)-p_U(\bm n)\right|}{p_U(\bm n)}
            \leq
            \frac{2\beta k_1+\frac{k_2}{\mathrm{poly}(n)}}{\tau}
        \end{equation}
        with probability at least
        \begin{equation}
            \frac{(1-\tau)^2}{\const}
            -\frac{1}{k_1}
            -\frac{1}{k_2}.
        \end{equation}
    \end{restatable}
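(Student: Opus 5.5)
The plan follows the standard Aaronson--Arkhipov Stockmeyer argument. Anticoncentration from \Cref{thm:ac} replaces the usual hiding step, and Markov's inequality over the uniformly random outcome $\bm n\sim\mathcal U_{m,n}$ replaces the usual averaging over a hidden submatrix.

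\textbf{Step 1: additive estimate.} Given $U$ and $\bm n$, we run Stockmeyer's approximate counting on the polynomial-time classical sampler for $q_U$. Concretely, $q_U(\bm n)$ is the fraction of random strings on which the sampler outputs $\bm n$. This yields, in $\mathsf{BPP}^{\mathsf{NP}}$, an estimate $\widetilde p_U(\bm n)$ with
\[
|\widetilde p_U(\bm n)-q_U(\bm n)|\le \frac{q_U(\bm n)}{\mathrm{poly}(n)}
\]
with high probability.

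\textbf{Step 2: Markov bounds.} Because $\sum_{\bm n}|p_U(\bm n)-q_U(\bm n)| = 2\|p_U-q_U\|_{\mathrm{TV}}\le 2\beta$, we have
\[
\underset{\bm n\sim\mathcal U_{m,n}}{\mathbb E}\,|p_U(\bm n)-q_U(\bm n)|\le \frac{2\beta}{|\mathcal S_{m,n}|}.
\]
Markov's inequality then gives $|p_U(\bm n)-q_U(\bm n)|\le 2\beta k_1/|\mathcal S_{m,n}|$ except with probability at most $1/k_1$. Similarly, $\underset{\bm n\sim\mathcal U_{m,n}}{\mathbb E}\,q_U(\bm n)=1/|\mathcal S_{m,n}|$, so $q_U(\bm n)\le k_2/|\mathcal S_{m,n}|$ except with probability at most $1/k_2$. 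Both bounds hold for every fixed $U$, hence also jointly over $U\sim\mathrm{Haar}(m)$ and $\bm n\sim\mathcal U_{m,n}$. On the intersection of the good events, the triangle inequality gives the additive error
\[
|\widetilde p_U(\bm n)-p_U(\bm n)|\le \frac{2\beta k_1 + k_2/\mathrm{poly}(n)}{|\mathcal S_{m,n}|}.
\]

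\textbf{Step 3: upgrade to multiplicative error.} By \Cref{lem:anticoncFROMacSCORE} together with the bound $\mathrm{AC}(m,n)\le\const$ of \Cref{thm:ac}, the event $p_U(\bm n)\ge \tau/|\mathcal S_{m,n}|$ has probability at least $(1-\tau)^2/\const$. On this event, dividing the additive bound by $p_U(\bm n)$ gives the stated relative error $\bigl(2\beta k_1+k_2/\mathrm{poly}(n)\bigr)/\tau$. A union bound over the two Markov failure events yields success probability at least $(1-\tau)^2/\const-1/k_1-1/k_2$. We can either absorb Stockmeyer's own small failure probability into the $\mathrm{poly}$ term, or amplify it away by repetition.

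The main point requiring care is Step 2 outside the dilute regime. There is no hiding property, so we cannot argue that a fixed outcome behaves like a random one. Instead, all averaging is done directly over $\bm n\sim\mathcal U_{m,n}$, and we must check that the anticoncentration event and the Markov events live on the same joint probability space $(U,\bm n)$. The union bound is legitimate precisely because both are measured with respect to the product measure $\mathrm{Haar}(m)\times\mathcal U_{m,n}$. The condition $m=\Theta(n)$ is used only to make the constant $\const$ bounded.
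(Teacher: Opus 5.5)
Your proposal is correct and follows essentially the same route as the paper. Both use Markov's inequality over $\bm n\sim\mathcal U_{m,n}$ to control the TV discrepancy and the size of $q_U(\bm n)$, Stockmeyer's multiplicative estimate of $q_U(\bm n)$, a union bound that holds jointly over $(U,\bm n)$, and intersection with the anticoncentration event from the saturated-regime AC bound; your remark about absorbing or amplifying away Stockmeyer's own failure probability is a small point the paper leaves implicit.
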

    \begin{proof}
        Let us start by fixing \(U\in \mathrm U(m)\). Since \(\|q_U-p_U\|_{\mathrm{TV}}\le \beta\), averaging over the uniform distribution on \(\mathcal S_{m,n}\) gives
        \begin{align}\begin{split}
            \mathbb E_{\bm n\sim \mathcal U_{m,n}}
            \left[
                |p_U(\bm n)-q_U(\bm n)|
            \right]
            \le
            \frac{2\beta}{|\mathcal S_{m,n}|}.
        \end{split}
        \end{align}
        Therefore, by Markov's inequality, for every \(k_1>0\),
        \begin{equation}
        \label{eq:stock_avg_1}
            \underset{\bm n\sim \mathcal U_{m,n}}{\mathrm{Pr}}
            \left(
                |p_U(\bm n)-q_U(\bm n)|
                \ge
                \frac{2\beta k_1}{|\mathcal S_{m,n}|}
            \right)
            \le \frac{1}{k_1}.
        \end{equation}
        Next, Stockmeyer's approximate counting algorithm~\cite{stockmeyer} implies that in \(\mathsf{BPP}^{\mathsf{NP}}\) one can estimate \(q_U(\bm n)\) by some \(\widetilde q_U(\bm n)\) satisfying
        \begin{equation}
        \label{eq:stockmeyer_rel}
            |\widetilde q_U(\bm n)-q_U(\bm n)|
            \le
            \frac{1}{\mathrm{poly}(n)}\, q_U(\bm n).
        \end{equation}
        Since
        $
            \mathbb E_{\bm n\sim \mathcal U_{m,n}}[q_U(\bm n)]
            =
            \frac{1}{|\mathcal S_{m,n}|},
        $
        another application of Markov's inequality yields, for every \(k_2>0\),
        \begin{equation}
        \label{eq:stock_avg_2}
            \underset{\bm n\sim \mathcal U_{m,n}}{\mathrm{Pr}}
            \left(
                q_U(\bm n)\ge \frac{k_2}{|\mathcal S_{m,n}|}
            \right)
            \le \frac{1}{k_2}.
        \end{equation}
        Combining \Cref{eq:stockmeyer_rel,eq:stock_avg_2}, we get
        \begin{equation}
            \underset{\bm n\sim \mathcal U_{m,n}}{\mathrm{Pr}}
            \left(
                |\widetilde q_U(\bm n)-q_U(\bm n)|
                \ge
                \frac{k_2}{\poly(n)\,|\mathcal S_{m,n}|}
            \right)
            \le \frac{1}{k_2}.
        \end{equation}
        Hence, by the union bound, for every fixed \(U\), we can write
        \begin{equation}
            \underset{\bm n\sim \mathcal U_{m,n}}{\mathrm{Pr}}
            \left(
                |p_U(\bm n)-\widetilde q_U(\bm n)|
                \ge
                \frac{2\beta k_1+\frac{k_2}{\poly(n)}}{|\mathcal S_{m,n}|}
            \right)
            \le
            \frac{1}{k_1}+\frac{1}{k_2}.
        \end{equation}
        Since this holds for every fixed \(U\), it also holds jointly over independent \(U\sim \mathrm{Haar}(m)\) and \(\bm n\sim \mathcal U_{m,n}\):
        \begin{equation}
        \label{eq:joint_additive}
            \underset{\substack{\bm n\sim \mathcal U_{m,n}\\ U\sim \mathrm{Haar}(m)}}{{\mathrm{Pr}}} \!\!\!
            \left(
                |p_U(\bm n)-\widetilde q_U(\bm n)|
                \ge
                \frac{2\beta k_1+\frac{k_2}{\mathrm{poly}(n)}}{|\mathcal S_{m,n}|}
            \right)
            \le
            \frac{1}{k_1}+\frac{1}{k_2}.
        \end{equation}

        Now we need to turn this additive estimate into a multiplicative one. By \Cref{thm:ac}, we know that in the $m = \Theta(n)$ regime, Boson Sampling exhibits average anticoncentration in the sense of \Cref{def:anticoncentration}, i.e., for every \(\tau\in[0,1]\),
        \begin{equation}
            \underset{\substack{\bm n\sim \mathcal U_{m,n}\\ U\sim \mathrm{Haar}(m)}}{\mathrm{Pr}}
            \left(
                p_U(\bm n)\ge \frac{\tau}{|\mathcal S_{m,n}|}
            \right)
            \ge \frac{(1-\tau)^2}{\const}.
        \end{equation}
        Let us denote the anticoncentration event by
        \begin{equation}
            \mathcal A
            \coloneqq
            \left\{
                (U,\bm n):
                p_U(\bm n)\ge \frac{\tau}{|\mathcal S_{m,n}|}
            \right\},
        \end{equation}
        and the event corresponding to unsuccessful approximation as
        \begin{equation}
            \mathcal{B} \coloneqq \left\{ (U, \bm{n} ) : |p_U(\bm n)-\widetilde q_U(\bm n)|
                \ge
                \frac{2\beta k_1+\frac{k_2}{\mathrm{poly}(n)}}{|\mathcal S_{m,n}|} \right\}
        \end{equation}
        The anticoncentration event $\mathcal{A}$ with a successful additive-approximation event $\overline{\mathcal{B}}$, we obtain
        \begin{equation}
            \underset{\substack{\bm n\sim \mathcal U_{m,n}\\ U\sim \mathrm{Haar}(m)}}{{\mathrm{Pr}}}
            \left[
                \; \mathcal{A} \; \cup\; \overline{\mathcal{B}} \;
            \right]
            \ge
            \frac{(1-\tau)^2}{\const}
            -\frac{1}{k_1}-\frac{1}{k_2}.
        \end{equation}
        On this event, we get an approximation in the form of
        \begin{equation}
            \frac{
                |\widetilde q_U(\bm n)-p_U(\bm n)|
            }{
                p_U(\bm n)
            }
            \le
            \frac{2\beta k_1+\frac{k_2}{\mathrm{poly}(n)}}{\tau},
        \end{equation}
        which proves the claim.
    \end{proof}
    In contrast to the standard Stockmeyer reduction, one must here work with a uniformly random output pattern  $\bm{n}$, a subtlety that is absent in the dilute regime of Boson Sampling, where all outcomes are equivalent up to permutations of the modes. The key additional ingredient is average anticoncentration, which guarantees that a non-negligible fraction of Haar-random output probabilities are of order \(1/|\mathcal S_{m,n}|\). On this set of instances, the additive estimate furnished by Stockmeyer's algorithm is automatically promoted to an inverse-polynomial multiplicative approximation.

   Let \(\mathcal{D}_{m,n}\) denote the probability distribution on \(n\times n\) matrices \(U_S\) induced by drawing \(U\sim \mathrm{Haar}(m)\) and \(S\) uniformly from \(\mathcal{S}_{m,n}\).   The following problem is a multiplicative version of $|\mathrm{SUPER}|^2_{\pm}|$ used in arguments computational hardness of Boson Sampling in the saturated regime ~\cite{bouland2025complexitytheoreticfoundationsbosonsamplinglinear}.

    \begin{problem}[Sub-Unitary Permanent Estimation with Repetitions (multiplicative error version), $|\mathrm{SUPER}|^2_{\times}$]
        Given $V \sim \mathcal{D}_{m, n}$, output $z \in \mathbb{C}$ such that
        \begin{equation}
            \left| z - |\per(V)|^2 \right| \leq \frac{1}{\poly(m,n)} | \per(V)|^2,
        \end{equation}
        with probability $P_{\mathrm{succ}}$ at least $1- \delta$ over choice of $V$.
    \end{problem}

   In analogy to  ~\cite{bouland2025complexitytheoreticfoundationsbosonsamplinglinear} we can conjecture that $|\mathrm{SUPER}|^2_{\times}$ is hard $\# \mathsf{P}$-hard in the saturated regime.
    \begin{conjecture}\label{conj:SUPERmult}
        $|\mathrm{SUPER}|^2_{\times}$ is hard $\# \mathsf{P}$ for $m=\theta(n)$  with success probability $P_{\mathrm{succ}}$ greater than an absolute constant.
    \end{conjecture}

    \noindent As a consequence of \Cref{thm:stockmeyer_average_anticoncentration}, one can prove the following statement:
    \begin{theorem}\label{th:MultiplicativeAPPROX}
         The existence of a classical sampler that samples from a distribution approximating the saturated regime Boson Sampling distribution from \Cref{eq:particle_detection_probability} to accuracy $\beta = 1 / \poly(m)$ implies a  \(\mathsf{BPP}^{\mathsf{NP}}\) algorithm for \( |\mathrm{SUPER}|^2_{\times} \) with
        \begin{equation}\label{eq:boundONP}
            P_{succ} \geq \frac{1}{8 \cdot \const}\ .
        \end{equation}
    \end{theorem}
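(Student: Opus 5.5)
The plan is to instantiate \Cref{thm:stockmeyer_average_anticoncentration} with concrete parameters and translate the resulting estimate of $p_U(\bm n)$ into one for $|\per(V)|^2$. Given an instance $V\sim\mathcal{D}_{m,n}$, generated as $U\sim\mathrm{Haar}(m)$ with $\bm n\sim\mathcal{U}_{m,n}$, we use the Boson Sampling relation
\[
p_U(\bm n)=\frac{|\per(U_{\bm n,\bm n_0})|^2}{\bm n!}.
\]
So a multiplicative estimate of $p_U(\bm n)$, multiplied by the efficiently computable factor $\bm n!$, is a multiplicative estimate of $|\per(V)|^2$ with the same relative error. One caveat: we must make sure the instance distribution $\mathcal{D}_{m,n}$ coincides with the pushforward of $(U,\bm n)$ used in \Cref{thm:stockmeyer_average_anticoncentration}, namely the submatrix with rows repeated according to $\bm n$ and the first $n$ columns, with $\bm n$ uniform on $\mathcal S_{m,n}$. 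By Haar invariance the choice of columns is immaterial, so this identification is direct. The algorithm is then: take the hypothesized sampler $q_U$ with $\beta=1/\poly(m)$ and run the $\mathsf{BPP}^{\mathsf{NP}}$ Stockmeyer procedure on it.

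The core step is choosing $\tau,k_1,k_2$ so that the error is inverse polynomial and the success probability is at least the bound in \eqref{eq:boundONP}. Take $\tau=1/2$, so that $(1-\tau)^2=1/4$ and the anticoncentration term is $1/(4C_{m,n})$ with $C_{m,n}=\const$, which is $O(1)$ for $m=\Theta(n)$. Then set $k_1=k_2=16\,C_{m,n}$. The subtracted terms total $1/(8C_{m,n})$, leaving success probability
\[
\frac{1}{4C_{m,n}}-\frac{1}{8C_{m,n}}=\frac{1}{8C_{m,n}}.
\]
This is exactly \eqref{eq:boundONP}. The relative error is $(2\beta k_1+k_2/\poly(n))/\tau$. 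Since $k_1,k_2=O(1)$ in this regime, $\beta=1/\poly(m)$, and $m=\Theta(n)$, this error is $1/\poly(m,n)$. If a prescribed polynomial accuracy is required, we first tighten Stockmeyer's precision and the sampler's $\beta$ to absorb the constants.

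The step I expect to need the most care is the probability bookkeeping. The proof of \Cref{thm:stockmeyer_average_anticoncentration} writes $\mathcal A\cup\overline{\mathcal B}$, but what is actually needed is the intersection $\mathcal A\cap\overline{\mathcal B}$. Its probability is bounded via $\Pr(\mathcal A)-\Pr(\mathcal B)$, which is the expression used above, so the union bound gives exactly what we need. On this event, dividing the additive error by $p_U(\bm n)\ge\tau/|\mathcal S_{m,n}|$ yields the multiplicative guarantee. A second point to confirm is that $\bm n!$ and the instance-to-$(U,\bm n)$ correspondence can be computed within the reduction. Beyond this, the argument is a direct combination of \Cref{thm:ac} and \Cref{thm:stockmeyer_average_anticoncentration}.
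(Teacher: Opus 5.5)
Your parameter choice and probability bookkeeping match the paper's proof exactly. The paper sets $k_1=k_2=4C_{m,n}/(1-\tau)^2$ and then $\tau=1/2$, which is your $k_1=k_2=16\,C_{m,n}$. This gives success probability $1/(8C_{m,n})$ and relative error $32\,C_{m,n}\bigl(2\beta+1/\mathrm{poly}(n)\bigr)$. You are also right that the event in the proof of \Cref{thm:stockmeyer_average_anticoncentration} should be $\mathcal A\cap\overline{\mathcal B}$, bounded below by $\Pr(\mathcal A)-\Pr(\mathcal B)$.

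The gap is the step you defer as ``a second point to confirm.'' Beyond the parameter choice, it is the only substantive step, and ``this identification is direct'' does not supply it. That $\mathcal D_{m,n}$ is the pushforward of $\mathrm{Haar}(m)\times\mathcal U_{m,n}$ under $(U,\bm n)\mapsto U_{\bm n}$ holds by definition, but it is the wrong direction for the reduction.

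\begin{itemize}
\item An algorithm for $|\mathrm{SUPER}|^2_{\times}$ receives only the $n\times n$ matrix $V$.
\item The hypothesized sampler $q_U$, and hence Stockmeyer's procedure, needs the full $m\times m$ unitary $U$ as input.
\item The guarantee of \Cref{thm:stockmeyer_average_anticoncentration} is an average over $(U,\bm n)\sim\mathrm{Haar}(m)\times\mathcal U_{m,n}$.
\end{itemize}

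So from $V$ alone you must sample $(U,\bm n)$ from the conditional law given $U_{\bm n}=V$. Only then is the generated pair distributed exactly as in that theorem when $V\sim\mathcal D_{m,n}$, so that its success probability transfers to $P_{\mathrm{succ}}$ over $V$. Haar invariance in the columns does not give you this. The paper fills the step by invoking the Reverse Embedding lemma of Bouland et al.

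You could also build the embedding directly:
\begin{enumerate}
\item Read off the ordered nonzero multiplicities from the blocks of identical rows of $V$; this also gives $\bm n!$.
\item Assign them to a uniformly random increasing set of $k$ occupied modes. Exchangeability of the rows of a Haar unitary makes this the correct posterior for $\bm n$.
\item Complete the $k\times n$ block $W$ of distinct rows to an $m\times m$ unitary. For the rest of the first $n$ columns take $A X_0$, where $X_0^\dagger X_0=I-W^\dagger W$ and $A\sim\mathrm{Haar}(m-k)$. For the remaining columns take a Haar-random orthonormal basis of the orthogonal complement.
\end{enumerate}
Left invariance on the unoccupied rows and right invariance on the last $m-n$ columns make the fiber $\{U: U_{R,[n]}=W\}$ a single orbit, so this samples the conditional Haar law. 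With either the citation or this construction, your proof is complete.
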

    \begin{proof}
        Let \(\beta=1/\mathrm{poly}(n)\), and set
        \begin{equation}
            k_1=k_2=\frac{4\cdot \const}{(1-\tau)^2}.
        \end{equation}
        Then, according to \Cref{thm:stockmeyer_average_anticoncentration}, existence of a classical sampler with probability distribution \(q_U\) satisfying 
        $
            \|p_U-q_U\|_{\mathrm{TV}}\le \beta
        $, implies the existence of a \(\mathsf{BPP}^{\mathsf{NP}}\) algorithm which approximates the output probabilities \(p_U(\bm n)\) to inverse-polynomial multiplicative error 
        \begin{equation}
            \frac{4\cdot \const}{\tau (1-\tau)^2} \left(2\beta +\frac{1}{\mathrm{poly}(n)}\right)
        \end{equation}
        with probability at least
        \begin{equation}
            \frac{(1-\tau)^2}{2\cdot \const}.
        \end{equation}
        By choosing $\tau = 1/2$, we get the probability from \eqref{eq:boundONP}.

        Now, we need to use the Reverse Embedding lemma~\cite{bouland2025complexitytheoreticfoundationsbosonsamplinglinear} in order to relate approximate Boson Sampling to the problem \(|\mathrm{SUPER}|^2_{\times}\). This comes from a mismatch in input types: Stockmeyer's algorithm applies to Haar-random unitaries \(U\), whereas \(|\mathrm{SUPER}|^2_{\times}\) is formulated in terms of \(n\times n\) matrices drawn from the correlated distribution of submatrices of Haar-random unitaries, allowing repeated rows. The lemma resolves this mismatch by showing that, given such a matrix \(V\sim\mathcal{D}_{m,n}\), one can generate in $\mathsf{BPP}^{\mathsf{NP}}$ a Haar-random unitary \(U\) together with an output pattern \(\bm n\) such $U_{\bm{n}}=V$.

        Once this correspondence is established, an approximation to the Boson Sampling output probability \(p_U(\bm n)=\frac{|\mathrm{Per}(U_{\bm{n}})|^2}{\bm{n}!}\) immediately yields an approximation to the associated squared Permanent, since these quantities are related by the usual normalization factor corresponding to the output collisions. Therefore, the multiplicative approximation to \(p_U(\bm n)\) furnished by \Cref{thm:stockmeyer_average_anticoncentration} translates into a solution of \(|\mathrm{SUPER}|^2_{\times}\).
    \end{proof}

    Combining \Cref{th:MultiplicativeAPPROX} with Conjecture \ref{conj:SUPERmult}, we obtain the usual hardness consequence for sampling from Boson Sampling circuits in the saturated regime. Indeed, an week classical sampler for saturated Boson Sampling up to $\beta=1/\mathrm{poly}(n)$ TV error would, by \ref{th:MultiplicativeAPPROX}, yield a \(\mathsf{BPP}^{\mathsf{NP}}\) algorithm for \( |\mathrm{SUPER}|^2_{\times} \) on a constant fraction of instances. Assuming Conjecture \ref{conj:SUPERmult}, this puts a \(\#\mathsf{P}\)-hard multiplicative approximation problem in \(\mathsf{BPP}^{\mathsf{NP}}\). By Toda's theorem \cite{AroraBarak2009}, this implies a collapse of the polynomial hierarchy. The assumption here is the same multiplicative average-case hardness assumption used previously for RCS \cite{Hangleiter2018anticoncentration}, IQP \cite{Bremner_2017}, and Fermion Sampling \cite{Oszmaniec_2022}.

    \subsection*{Anticoncentration for other photonic schemes}

    We also apply our methods to study anticoncentration in further quantum advantage schemes, for example, in Scattershot Boson Sampling~\cite{Bentivegna_2015}. 
    Here, the correct definition of anticoncentration used in the Paley-Zygmund inequality is
    \begin{equation}\label{eq:ac_sbs_lxeb}
        \mathrm{AC}_{\mathrm{SBS}}(m, n, d)=|\mathcal{S}_{m,n}|\,|\mathcal{S}_{d,n}|\,\mathrm{LXEB}_{\mathrm{ref},\mathrm{SBS}}(m, n, d),
    \end{equation}
    where $\mathrm{LXEB}_{\mathrm{ref},\mathrm{SBS}}(m, n)$ is defined by \Cref{eq:lxe_ref_sbs_def}.
    Using the expression in \Cref{prop:scattershot_final_form}, we can also compute the corresponding anticoncentration scores and derive their asymptotic bounds.
    \begin{proposition}\label{prop:scatter_asymptotic_bounds}
        Fixing the number of uniform two-mode squeezing sources to be equal to the number of modes $m$, the anticoncentration score for Scattershot Boson Sampling satisfies
        \begin{subequations}
        \begin{align}
            \mathrm{AC}_{\mathrm{SBS}}(m,n,m)
            &= \Omega\!\left(\frac{mn}{m+n}\right), \\
            \mathrm{AC}_{\mathrm{SBS}}(m,n,m)
            &= O\!\left(\sqrt{mn}\right).
        \end{align}
        \end{subequations}
    \end{proposition}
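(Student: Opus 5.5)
Setting $d=m$ in \Cref{prop:scattershot_final_form}, the normalization $\binom{m+n-1}{n}^{-2}$ is exactly cancelled by the prefactor $|\mathcal S_{m,n}|^2$ in \eqref{eq:ac_sbs_lxeb}. This gives the explicit sum
\begin{equation*}
\mathrm{AC}_{\mathrm{SBS}}(m,n,m)=2^{2n}\sum_{r=0}^{\lfloor n/2\rfloor}\frac{2n-4r+1}{2n-2r+1}\,\frac{2^{-4r}\binom{2r}{r}}{\binom{2n-2r}{n-r}}\,\frac{\left(\frac{m}{2}\right)_{n-r}\left(\frac{m-1}{2}\right)_r}{\left(\frac m2\right)_r\left(\frac{m+1}{2}\right)_{n-r}}.
\end{equation*}
The plan is to estimate each summand $T_r$ sharply with Stirling-type bounds and then bound the sum from above and below.

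\textbf{Simplifying the summand.} Use the standard estimates $\binom{2r}{r}\asymp 4^r/\sqrt{\pi r}$ and $\binom{2n-2r}{n-r}\asymp 4^{n-r}/\sqrt{\pi(n-r)}$, with exact two-sided constants valid for all $r\ge 1$ (and trivial control at $r=0$). The powers of two then cancel completely: $2^{2n}\,2^{-4r}\,4^{r}\,4^{-(n-r)}=1$. For the Pochhammer ratios, write
\begin{equation*}
\frac{(m/2)_{n-r}}{((m+1)/2)_{n-r}}=\frac{\Gamma(m/2+n-r)\,\Gamma((m+1)/2)}{\Gamma(m/2)\,\Gamma((m+1)/2+n-r)}\asymp\sqrt{\frac{m}{m+2(n-r)}},
\end{equation*}
using $\Gamma(x)/\Gamma(x+\tfrac12)\asymp x^{-1/2}$ uniformly for $x\ge \tfrac12$. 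Similarly $((m-1)/2)_r/(m/2)_r\asymp\sqrt{m/(m+2r)}$, with care needed only at $m=1$. This yields, uniformly in $r$,
\begin{equation*}
T_r\asymp \frac{2n-4r+1}{2n-2r+1}\sqrt{\frac{n-r+1}{r+1}}\,\frac{m}{\sqrt{(m+2n-2r)(m+2r)}}.
\end{equation*}

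\textbf{Bounding the sum.} For the lower bound, keep only $r\le n/4$, where the first factor is $\ge c$ and $m+2n-2r\asymp m+n$. The sum becomes $\gtrsim \frac{m\sqrt n}{\sqrt{m+n}}\sum_{r\le n/4}\frac{1}{\sqrt{(r+1)(m+2r)}}$. That inner sum is $\gtrsim \log(1+n/m)+\sqrt{n/m}\wedge 1$ up to constants, which in every regime is $\gtrsim \min(1,\sqrt{n/m})$. The total is therefore $\gtrsim \frac{m\sqrt n}{\sqrt{m+n}}\min(1,\sqrt{n/m})=\frac{\sqrt{mn}\,\min(\sqrt m,\sqrt n)}{\sqrt{m+n}}\asymp \frac{mn}{m+n}$, because $\sqrt{mn}\min(\sqrt m,\sqrt n)/\sqrt{m+n}\asymp\min(m,n)$.

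For the upper bound, bound the first factor by $1$ and $m/\sqrt{m+2n-2r}\le\sqrt m$. The sum is then $\lesssim \sqrt{m}\sum_r \sqrt{n-r+1}/\sqrt{(r+1)(m+2r)}\le \sqrt{mn}\sum_{r\le n/2}\frac{1}{\sqrt{(r+1)(m+2r)}}$. The remaining sum is at most $\sum_r \frac{1}{\sqrt2\,(r+1)}$, a logarithm. A logarithm is not good enough, so this step has to be tightened. Keep $m$ in the denominator: for $r\le m$ each term is $\le 1/\sqrt{m(r+1)}$, contributing $\lesssim\sqrt{\min(n,m)/m}\le 1$. For $r>m$ each term is $\le 1/r$, contributing $\lesssim \log(n/m)$ when $n>m$. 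It is also necessary to retain the factor $\sqrt{m/(m+2n-2r)}$ instead of discarding it. For $n\gg m$ this factor is $\lesssim\sqrt{m/n}$ for $r\le n/2$, which reduces the bound to $O(m\log(n/m))\le O(\sqrt{mn})$, since $m\log(n/m)\le\sqrt{mn}$. For $n\lesssim m$ the sum is directly $O(\sqrt{mn})$. Combining the two regimes gives $O(\sqrt{mn})$.

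\textbf{Main obstacle.} The delicate step is making the Gamma-ratio and central-binomial estimates uniform, including at the endpoints $r=0$, $r=\lfloor n/2\rfloor$ and small $m$. The factor $(2n-4r+1)/(2n-2r+1)$ does not vanish at the top endpoint, so those terms must be checked separately. I would handle this with the explicit inequalities $\sqrt{x-1/4}\le\Gamma(x+1/2)/\Gamma(x)\le\sqrt{x}$ (for $x\ge1/4$) and $4^k/\sqrt{\pi(k+1/2)}\le\binom{2k}{k}\le 4^k/\sqrt{\pi k}$. These give absolute constants, so the argument stays nonasymptotic.
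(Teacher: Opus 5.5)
Your proposal is correct and follows essentially the paper's route. The paper also reduces to the explicit $d=m$ sum, written as $\frac{m-1}{4}\sum_r u_{m,n,r}$ with four Gamma-function ratios, bounds each ratio with Gautschi's inequality, and then estimates the sum.

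The differences are only in bookkeeping. For the upper bound, the paper effectively uses $m+2n-2r\ge n$ and $m+2r\ge m$ in your notation, so each term is $O(\sqrt{m/(r+1)})$ and the sum is $O(\sqrt{mn})$ at once, with no regime split or logarithm. For the lower bound, it keeps only the block $\lceil n/4\rceil\le r\le\lfloor n/3\rfloor$, where every term is of order $m/(m+n)$; this avoids your case analysis over $r\le n/4$.
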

    \noindent The proof is given in \Cref{app:scattershot_bounds}. In particular, these bounds imply that the Scattershot Boson Sampling distribution weakly anticoncentrates when $d=m$. By contrast, when $d=1$, we obtain
    \begin{equation}
        \mathrm{AC}_{\mathrm{SBS}}(m,n,1)
        =
        \frac{n! \binom{n+m-1}{n}}{\left(\frac{m+1}{2}\right)_n}
        \geq \exp(\Theta(n)),
    \end{equation}
    whenever $m = \Omega(n)$. Hence, in this regime, the distribution fails to anticoncentrate.

    Turning to Gaussian Boson Sampling, our framework also enables a numerical investigation of anticoncentration in this setting as well.  In  Refs.~\cite{Ehrenberg_2025_second_moment,Ehrenberg_2025}, the authors developed a graph-theoretic framework for analyzing moments of the Gaussian Boson Sampling distribution in the dilute regime and showed that, similarly to the case of Scattershot Boson Sampling discussed above, the model exhibits a transition in anticoncentration depending on how the number of initially squeezed modes scales relative to the number of detected photons. When the number of squeezed input modes grows too slowly, anticoncentration fails, whereas sufficiently rapid growth leads to weak anticoncentration.
    
    We get insight into such phenomena from the formula in \Cref{prop:LXE_ref_uniform_squeezed_closed_binom} together with the identity
    \begin{equation}
        \mathrm{AC}_{\mathrm{GBS}}(m,n,d)
        =
        |\mathcal S_{m,n}|\,\mathrm{LXE}_{\mathrm{ref}}^{(n)}(\rho),
    \end{equation}
    where $d$ denotes the number of uniformly squeezed single-mode input states.
    Importantly, the analysis from \cite{Ehrenberg_2025_second_moment} only applies to the dilute regime, while our work extends to any relation between $n$ and $m$.
    In the case $d=1$, that is, when only a single input mode is squeezed, the anticoncentration score scales as
    \begin{equation}
        \mathrm{AC}_{\mathrm{GBS}}(m,n,1)
        =
        \frac{(m)_n}{\left(\frac m2+1\right)_n} \geq \exp(\Theta(n))
    \end{equation}
    whenever $m = \Omega(n)$,
    which shows that anticoncentration does not occur in this regime.     At the opposite extreme, namely when $d=m$, the anticoncentration score satisfies the following asymptotic bounds.
    \begin{proposition}\label{prop:gaussian_asymptotic_bounds}
        If the number of uniform squeezing sources equals the number of modes $m$, then the anticoncentration score for Gaussian Boson Sampling satisfies
        \begin{align}\begin{split}
            \mathrm{AC}_{\mathrm{GBS}}(m,n,m)
            &= \Omega\!\left(\sqrt{\frac{nm}{m+n}}\right),\\
            \mathrm{AC}_{\mathrm{GBS}}(m,n,m)
            &= O\!\left(\sqrt{\frac{n(m+n)}{m}}\right).
        \end{split}\end{align}
    \end{proposition}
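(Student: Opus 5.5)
We will work directly from the closed form of \Cref{prop:LXE_ref_uniform_squeezed_closed_binom} with $d=m$. There $n=2N$ and $|\mathcal S_{m,2N}|=(m)_{2N}/(2N)!$. Writing $\mathrm{AC}_{\mathrm{GBS}}(m,n,m)=|\mathcal S_{m,2N}|\,\mathrm{LXE}^{(2N)}_{\mathrm{ref}}(\rho)$ expresses the score as an explicit sum over $j=0,\dots,N$ of positive terms $T_j$. With $d=m$, the factor $\left(\frac{m-1}{2}\right)_j/\left(\frac m2\right)_j$ stays close to one: it lies between $c\,(1+2j/m)^{-1/2}$ and $1$. The remaining Pochhammer symbols also simplify. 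We have $\left(\frac m2+N\right)_{N-j}\big/\left(\frac m2\right)_N=\left(\frac m2\right)_{2N-j}\big/\left(\frac m2\right)_N^2$. Moreover $(m)_{2N}=2^{2N}\left(\frac m2\right)_N\left(\frac{m+1}{2}\right)_N$. Hence each term reduces to a product of central binomial coefficients times a ratio of half-integer Pochhammers of the form $\left(\frac m2\right)_{2N-j}\left(\frac{m+1}{2}\right)_N\big/\big[\left(\frac m2\right)_N\left(\frac{m+1}{2}\right)_{2N-j}\big]$, up to the prefactor $\frac{4N-4j+1}{4N-2j+1}$. The first step is to carry out this bookkeeping and reduce $T_j$ to a clean expression $T_j=A_j\,R_j$. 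Here $A_j$ is purely combinatorial, namely $N!^2\binom{2j}{j}\binom{2N-2j}{N-j}^2\big/\big[(2N)!\binom{4N-2j}{2N-j}\big]$ times powers of $2$. The factor $R_j$ collects the $m$-dependent Pochhammer ratios.

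Next we estimate both factors with Stirling-type bounds. Central binomials satisfy $\binom{2k}{k}=\Theta(4^k/\sqrt{k+1})$. The powers of $4$ in $A_j$ cancel, leaving $A_j=\Theta\!\left(\sqrt{\frac{N}{(j+1)(N-j+1)}}\right)$, up to a factor of order $\sqrt{N/(2N-j)}$ which is $\Theta(1)$. For $R_j$ we use that $\Gamma(x+\frac12)/\Gamma(x)=\Theta(\sqrt{x})$. This gives $\left(\frac m2\right)_k/\left(\frac{m+1}{2}\right)_k=\Theta\!\left(\sqrt{m/(m+2k)}\right)$, and therefore $R_j=\Theta\!\left(\sqrt{\frac{m+2N}{m+4N-2j}}\right)$ times the near-constant factor from the first step. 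Combining the two estimates, the terms behave like
\[
T_j=\Theta\!\left(\sqrt{\tfrac{N}{(j+1)(N-j+1)}}\cdot\sqrt{\tfrac{m+N}{m+4N-2j}}\cdot\tfrac{4N-4j+1}{4N-2j+1}\right).
\]
The exact powers should be double-checked against small cases, for example $N=1$, where the formula is explicit.

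Finally we sum over $j$. For the \emph{lower bound} we keep only the boundary terms, $j=0$ and $j=N$, plus a block of terms near them. The $j=N$ term carries the factor $\frac{1}{2N+1}$ from the prefactor but has a large $A_N$. This is the delicate point: we expect that the $j$ near $0$ contribution gives $\Theta\!\big(\sqrt{N(m+N)}/\sqrt{m+4N}\big)$ up to the $m$-factor. This yields $\Omega(\sqrt{nm/(m+n)})$ after accounting for the $\sqrt{m/(m+n)}$ from the Pochhammer ratio. For the \emph{upper bound} we bound each $\sqrt{m+N}/\sqrt{m+4N-2j}$ uniformly, then use $\sum_j \big[(j+1)(N-j+1)\big]^{-1/2}=O(1)$ together with the extra $\sqrt{N}$. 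This gives $O(\sqrt{N})$ times the worst-case $m$-ratio $\sqrt{(m+n)/m}$, i.e.\ $O(\sqrt{n(m+n)/m})$.

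The main obstacle is the second step. All $m$- and $N$-dependent Gamma ratios must be tracked with uniform constants over the whole range $0\le j\le N$ and all $m\ge1$, including the endpoint $j=N$, where $\left(\frac{m-1}{2}\right)_j$ could be small for small $m$. We would first attempt this with the two-sided Gautschi/Wendel inequality $x^{1-s}\le\Gamma(x+1)/\Gamma(x+s)\le(x+1)^{1-s}$. This inequality gives explicit constants directly, and sidesteps asymptotic expansions.
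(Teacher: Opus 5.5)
Your route (two-sided Gautschi/Stirling bounds on every term $T_j$, then a sum over $j$) differs from the paper's and can be completed. However, the term estimate you display is wrong in two places. Your upper-bound computation is valid only because one error is offset by a lossy step.

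\textbf{First error: the estimate of $A_j$.} From $\binom{2k}{k}=\Theta(4^k/\sqrt{k+1})$ one gets $A_j=\Theta\bigl(\sqrt{N(2N-j)}/(\sqrt{j+1}\,(N-j+1))\bigr)$. This differs from your $\sqrt{N/((j+1)(N-j+1))}$ by the factor $\sqrt{(2N-j)/(N-j+1)}$, which is not $\Theta(1)$. Indeed $A_N=4^N/\binom{2N}{N}\sim\sqrt{\pi N}$, as you remark yourself, whereas your formula gives $A_N=\Theta(1)$. The singular factor $1/(N-j+1)$ is cancelled precisely by the prefactor: uniformly in $0\le j\le N$, $A_j\,\frac{4N-4j+1}{4N-2j+1}=\Theta\bigl((j+1)^{-1/2}\bigr)$. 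So the prefactor must be kept. If you use the corrected $A_j$ and bound the prefactor by $1$, as in your plan, the bounding sum picks up $\Theta(\sqrt N\log N)$ from $j$ near $N$. That does not give $O(\sqrt{n(m+n)/m})=O(\sqrt n)$ when $m\gtrsim n$.

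\textbf{Second error: the factor $Q_j=(\frac{m-1}{2})_j/(\frac m2)_j$.} It is not close to one. It is $\Theta(\sqrt{m/(m+2j)})$, which matches your own lower bound, yet it is missing from your displayed formula for $T_j$. Taken literally, that formula gives $\mathrm{AC}_{\mathrm{GBS}}(m,n,m)=\Theta(\sqrt n)$ for every $m$. This is false for $m\ll n$: for $m=2$ the score grows only logarithmically in $n$.

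Conversely, your $R_j$ is in fact $\Theta(1)$, because $m+2N\le m+4N-2j\le 2(m+2N)$. So no worst-case factor $\sqrt{(m+n)/m}$ ever arises.

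\textbf{The corrected estimate closes the proof.} It reads $T_j=\Theta\bigl((j+1)^{-1/2}\sqrt{m/(m+2j)}\bigr)$, uniformly for $m\ge 2$. Small $m$ is harmless: with $a=m/2\ge 1$, Gautschi gives $\frac12\sqrt{a/(a+j)}\le Q_j\le 1$.
\begin{itemize}
\item Using $Q_j\le 1$ gives $\mathrm{AC}_{\mathrm{GBS}}(m,n,m)=O(\sqrt n)$, which implies the claimed upper bound.
\item Using $Q_j\ge c\sqrt{m/(m+n)}$ on the block $0\le j\le N/2$ gives $\Omega(\sqrt{nm/(m+n)})$.
\end{itemize}
The endpoint $j=N$ plays no role, so the "delicate point" you flag is not needed.

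\textbf{Comparison with the paper.} The paper avoids termwise asymptotics. It writes $\mathrm{AC}_{\mathrm{GBS}}=P_{a,N}\sum_j W_{N,j}V_{a,N}(j)$, where $\{W_{N,j}\}_j$ is a probability distribution (by an exact terminating hypergeometric summation) and $V_{a,N}$ is decreasing in $j$. The score is therefore sandwiched between $P_{a,N}V_{a,N}(N)$ and $P_{a,N}V_{a,N}(0)$. Only endpoint Gautschi bounds are needed, and all constants are explicit.

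Your route needs neither the summation identity nor the monotonicity check, but it requires uniform bounds on every term. In exchange it keeps the $j$-dependence and yields the sharper order $\Theta\bigl(\sum_{j\le N}(j+1)^{-1/2}\sqrt{m/(m+2j)}\bigr)$. This is $\Theta(\sqrt n)$ for $m\gtrsim n$ and $\Theta(\sqrt m\,(1+\log(n/m)))$ for $m\lesssim n$. By contrast, the paper's upper bound uses $V_{a,N}(0)\le 1$ and exceeds $O(\sqrt n)$ by a factor $\sqrt{(m+n)/m}$.
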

    \noindent A proof is provided in \Cref{app:gbs_bounds}. These bounds show that, similarly to Scattershot Boson Sampling, Gaussian Boson Sampling weakly anticoncentrates when $d=m$.

\bibliography{bibliography}

\newpage

\onecolumngrid

\section*{Appendices}
\appendix

This appendix gathers the technical results used throughout the main text. In \Cref{app:expressing_the_projector}, we show how to express the projector \(\mathbb{P}_k\) in terms of the bosonic swap operator, while \Cref{app:simplified_formulas} is devoted to simplifying the associated overlap formulas. In \Cref{app:proof_bosonic_swap_efficient}, we present an algorithm for computing bosonic swap expectation values for product states. In \Cref{app:bounds}, we derive bounds and asymptotic estimates for anticoncentration scores in photonic quantum advantage schemes, and in \Cref{app:further_results}, we discuss additional results and consequences of our work. Finally, \Cref{app:useful_comb} collects several combinatorial identities and estimates used repeatedly in the proofs. Some of these results may also be of independent interest beyond the scope of this work.

\section{Expressing $\mathbb{P}_k$ in terms of the bosonic swap operator}\label{app:expressing_the_projector}
    In this section, we show how to calculate the expression for the projectors onto the irreducible components appearing during the calculation of the $\mathrm{LXE}$ reference value. Roughly speaking, we will construct the projectors by projecting Young symmetrizers onto the symmetric subspace on the two copies. This will result in a linear combination of certain operators that have a particularly nice interpretation.

    Before delving into specific computations, let us start by recapping the relevant parts of \Cref{sec:framework}. To calculate the reference value of the $\mathrm{LXE}$ score, we need to investigate the irreducible decomposition of $\varphi_n^{\otimes 2}$. The subspace $\Sym^n(\mathbb{C}^m)$ is itself irreducible and when $m \geq 2$ we know that its two-fold tensor power decomposes into irreducible subspaces as
    \begin{equation}
        \Sym^n(\mathbb{C}^m) \otimes \Sym^n(\mathbb{C}^m) \cong \bigoplus_{k=0}^n \, \mathbb{S}_{(2n-k, k)}(\mathbb{C}^m).
    \end{equation}
    Here, $\mathbb{S}_{(2n-k, k)}(\mathbb{C}^m)$ represents the irreducible subspace corresponding to the Young diagram with $2n-k$ elements in the first row and $k$ elements in the second.
    For later purposes, let us visualize the irreducible decomposition via Young diagrams:
    \begin{equation}
        \begin{ytableau}
            *(white) & \none & \none[\cdots] & \none & *(white) 
        \end{ytableau}
        \,\otimes\,
        \begin{ytableau}
            *(black) & \none & \none[\cdots] & \none & *(black) 
        \end{ytableau}
        \;
        \cong
        \;
        \begin{ytableau}
            *(white) & \none & \none[\cdots] & \none & *(white) & *(black)
            & \none & \none[\cdots] & \none & *(black) 
        \end{ytableau}
        \,\oplus\,
        \raisebox{0.3em}{
        \begin{ytableau}
            *(white) & \none & \none[\cdots] & \none & *(white) & *(black)
            & \none & \none[\cdots] & \none & *(black) \\
            *(black)\\
        \end{ytableau}}
        \,\oplus\,\dots\,\oplus\,
        \raisebox{0.3em}{
        \begin{ytableau}
            *(white) & \none & \none[\cdots] & \none & *(white) & *(white) & *(black) \\
            *(black)  & \none & \none[\cdots] & \none & *(black)\\
        \end{ytableau}}
        \,\oplus\,
        \raisebox{0.3em}{
        \begin{ytableau}
            *(white) & \none & \none[\cdots] & \none & *(white)  \\
            *(black)  & \none & \none[\cdots] & \none & *(black)\\
        \end{ytableau}}
    \end{equation}
    Considering a Young diagram $\lambda$ of size $n$ the corresponding irreducible subspace can be expressed as the image of a linear operator:
    \begin{equation}\label{eq:S_lambda_by_young_symmetrizer}
        \mathbb{S}_{\lambda}(\mathbb{C}^m) = \mathrm{Im}( Y_{\mu} ),
    \end{equation}
    where $\mu$ is a \textit{Young tableau} of shape $\lambda$ and the linear operator $Y_{\mu}$ is the corresponding \textit{Young symmetrizer} constructed as follows.
    The symmetric group $S_n$ acts on $(\mathbb{C}^m)^{\otimes n}$ by permuting tensor factors.
    Let us denote by $R_\mu \leq S_n$ the permutation group that preserves the rows of the Young tableau $\mu$, and by $C_\mu \leq S_n$ the group that preserves the columns. The Young symmetrizer $Y_{\mu} \in \mathbb{C} S_{n}$ corresponding to this Young tableau is defined as
    \begin{subequations}
    \begin{align}
        Y_{\mu} &\coloneqq A_{\mu} B_{\mu}, \\
        A_{\mu} \coloneqq \sum_{g \in R_\mu} g, &\quad
        B_{\mu} \coloneqq \sum_{g \in C_\mu} \sgn(g) g.
    \end{align}
    \end{subequations}
    Via the natural action of $\mathbb{C}S_n$ on $(\mathbb{C}^m)^{\otimes n}$, the Young symmetrizer $Y_\mu$ defines a linear operator whose image is isomorphic to the Schur functor $\mathbb{S}_\lambda(\mathbb{C}^m)$ as in \Cref{eq:S_lambda_by_young_symmetrizer}.
    The image of $Y_{\mu}$ acting on $(\mathbb{C}^m)^{\otimes n}$ corresponds to an irreducible representation and that $Y_{\mu}$ is idempotent up to some scalar multiple. We note, that $Y_{\mu}$ is not unique as one would get different groups in place of $R_\mu$ and $C_\mu$ if one would choose a different Young tableau, although they are conjugate in the group algebra.
    
    Now, let us turn our attention towards giving an explicit expression for the projectors corresponding to the irreducible subspaces $\mathbb{S}_{(2n-k, k)}(\mathbb{C}^m)$, which we will denote by $\mathbb{P}_k$:
    \lemmapk*
    \begin{proof}
        Considering a Young tableau $\mu$ of shape $(2n-k, k)$ we denote the permutation group that preserves the rows by $R_{\mu} \cong S_{2n-k} \times S_k$, and the subgroup that preserves the columns by $C_\mu \cong S_2^{\times k}$. The Young symmetrizer $Y_\mu$ corresponding to this Young tableau is given by
        \begin{subequations}
        \begin{align}
            Y_\mu &\coloneqq A_\mu B_\mu, \\  
            A_\mu &\coloneqq \sum_{\tau \in R_\mu} \tau =  \left(\sum_{\sigma \in S_{2n-k}} \sigma\right)  \left(\sum_{\xi \in S_k} \xi\right), \\
            B_\mu &\coloneqq \sum_{\pi \in C_\mu} \sgn(\pi) \pi = \sum_{X \subseteq [k]} (-1)^{|X|} \prod_{i \in X} e_{(1, i), (2, i)},
        \end{align}
        \end{subequations}
        where $e_{(1, i), (2, i)} \in \mathbb{C}S_{2n}$ permutes the boxes $(1, i)$ and $(2, i)$ in the Young tableau $\mu$.
        However, we are interested in $\Sym^n(\mathbb{C}^m)^{\otimes 2}$, so we need to consider the symmetrizer acting on elements of $\Sym^n(\mathbb{C}^m)^{\otimes 2}$. The projector onto the symmetric subspace on one copy of $(\mathbb{C}^m)^{\otimes n}$ can be written as
        $
            \mathbb{P}_{\text{sym}} = \frac{1}{n!} \sum_{\pi \in S_n} \pi,
        $
        where $\pi$ is acting on $(\mathbb{C}^m)^{\otimes n}$ by permuting the components in the tensor product.
        Furthermore, let
        \begin{equation}
            \rho : \mathbb{C}S_{2n} \to \mathrm{End}\bigl((\mathbb{C}^m)^{\otimes 2n}\bigr)
        \end{equation}
        denote the natural linear extension of the permutation action of \(S_{2n}\) on \((\mathbb{C}^m)^{\otimes 2n}\). Explicitly, for each \(\pi \in S_{2n}\),
        \begin{equation}
            \rho(\pi)(v_1 \otimes \cdots \otimes v_{2n})
            =
            v_{\pi^{-1}(1)} \otimes \cdots \otimes v_{\pi^{-1}(2n)}.
        \end{equation}
        For brevity, let us denote $\hat{A} \coloneqq \rho(A)$ for any $A \in \mathbb{C}S_{2n}$ in the following discussion.
        
        Note, that $\mathbb{P}_{\text{sym}}^{\otimes 2} \hat{Y}_{\mu} \mathbb{P}_{\text{sym}}^{\otimes 2} \cong \mathbb{P}_{\text{sym}}^{\otimes 2} \hat{Y}_{\mu'} \mathbb{P}_{\text{sym}}^{\otimes 2}$, since
        $\mathbb{P}_{\text{sym}}^{\otimes 2} \hat{Y}_{\mu} \mathbb{P}_{\text{sym}}^{\otimes 2}$ is in the commutant of $\Sym^n(\mathbb{C}^m)^{\otimes 2}$, which is abelian.
        Therefore, we know that the projector in question must be proportional to the symmetrized projected to $\Sym^n(\mathbb{C}^m)^{\otimes 2}$, i.e.,
        \begin{equation}
            \mathbb{P}_k \propto  \mathbb{T}_k \coloneqq \mathbb{P}_{\text{sym}}^{\otimes 2} \hat{Y}_{\mu} \mathbb{P}_{\text{sym}}^{\otimes 2}.
        \end{equation}
        To obtain a closed expression, we can expand it and write
        \begin{subequations}
        \begin{align}
           \mathbb{T}_k &= \sum_{\sigma \in S_{2n-k}, \sigma' \in S_k} \sum_{X \subseteq [k]} (-1)^{|X|} \mathbb{P}_{\text{sym}}^{\otimes 2} (\hat{\sigma} \otimes \hat{\sigma}')  \left(\prod_{i \in X} \mathbb{F}_{i, i+n}\right) \mathbb{P}_{\text{sym}}^{\otimes 2}
            \\&
            =
            k! \sum_{\sigma \in S_{2n-k}} \sum_{X \subseteq [k]}  (-1)^{|X|} (\mathbbm{1} \otimes \mathbb{P}_{\text{sym}}) (\hat{\sigma} \otimes \mathbbm{1})  \left(\prod_{i \in X} \mathbb{F}_{i, i+n}\right) \mathbb{P}_{\text{sym}}^{\otimes 2}
            \\& 
            = k! \sum_{l = 0}^k \binom{k}{l} \sum_{\beta' \in S_n} \sum_{\sigma \in S_{2n-k}} \sum_{\alpha, \alpha' \in S_n} (-1)^{l} (\mathbbm{1} \otimes \mathbb{P}_{\text{sym}} ) (\hat{\sigma} \otimes \mathbbm{1})  \left(\prod_{i = 1}^l \mathbb{F}_{i, i+n} \right) \mathbb{P}_{\text{sym}}^{\otimes 2},
        \end{align}
        \end{subequations}
        where we denoted $\mathbb{F}_{i, i+n} \coloneqq \rho(e_{(1, i), (2, i)})$ for brevity.
        To keep track of the combinatorial factors, we visualize applying this operator to the boxes of the Young tableau, where the boxes corresponding to the second component are colored as $\,\begin{ytableau}
            *(pink)\\
        \end{ytableau}$\,. The colored boxes do not move when a permutation is applied---this is just for highlighting the terms corresponding to the second component during the computation.
        Moreover, we put labels in the boxes, which are numbers $1, \dots, n$ and $1', \dots, n'$, where the prime $'$ denotes that the label originates from the colored boxes. For example, for $n=3$ and $k=2$, we can write
        \ytableausetup{mathmode, boxframe=normal, boxsize=1.5em}
        \begin{equation}
            \begin{ytableau}
                1 & 2 & 3 & *(pink) 3'  \\
                *(pink) 1' & *(pink) 2' \\
            \end{ytableau}
        \end{equation}
        More generally, let us consider the Young tableau
        \ytableausetup{mathmode, boxframe=normal, boxsize=2.5em}%
        \begin{equation}
            \Gamma \;\coloneqq\; \begin{ytableau}
                1 & \none[\cdots] & k & \none[\cdots] & n &*(pink) \scriptscriptstyle (k+1)' & \none[\cdots] &  *(pink) n' \\
                *(pink) 1' & \none[\cdots] & *(pink) k' \\
            \end{ytableau}        
        \end{equation}
        \ytableausetup{mathmode, boxframe=normal, boxsize=1.5em}%
        When a permutation is applied, we can interpret it as exchanging the labels of the boxes.
        When we apply $\mathbb{P}_{\sym}^{\otimes 2}$, formally, it permutes $1, \dots, n$ and $1',\dots, n'$ independently:
        \begin{equation}
            \mathbb{P}_{\sym}^{\otimes 2} \Gamma
            \propto
            \Gamma + (\text{all possible independent permutations of $1, \dots, n$ and $1', \dots, n'$})
        \end{equation}
        Since the subsequent operations are also permutations, we can omit the labelings of the boxes and just write
        \begin{equation}
            \begin{ytableau}
                 *(white) & \none[\cdots] & *(white) &  \none[\cdots] & *(white) & *(pink) \bullet &  \none[\cdots] & *(pink) \bullet  \\
                 *(pink) \bullet &  \none[\cdots] & *(pink) \bullet \\
            \end{ytableau}
            \;\coloneqq\;
            \Gamma + (\text{all possible independent permutations of $1, \dots, n$ and $1', \dots, n'$})
        \end{equation}
        Now, applying $\sum_{l=0}^k \binom{k}{l} \prod_{i = 1}^l \mathbb{F}_i$ we get
        \begin{equation}
            \binom{k}{0}
            \cdot \,
            \raisebox{0.2em}{\begin{ytableau}
                 *(white) & \none[\cdots] & *(white) &  \none[\cdots] & *(white) & *(pink) \bullet &  \none[\cdots] & *(pink) \bullet  \\
                 *(pink) \bullet &  \none[\cdots] & *(pink) \bullet \\
            \end{ytableau}}
            \,
            +
            \binom{k}{1}
            \cdot \,
            \raisebox{0.2em}{\begin{ytableau}
                 *(white) \bullet & \none[\cdots] & *(white) &  \none[\cdots] & *(white) & *(pink) \bullet &  \none[\cdots] & *(pink) \bullet  \\
                 *(pink) &  \none[\cdots] & *(pink) \bullet \\
            \end{ytableau}}
            \,
            +
            \dots
            + 
            \binom{k}{k}
            \cdot
            \,
            \raisebox{0.2em}{\begin{ytableau}
                 *(white) \bullet & \none[\cdots] & *(white) \bullet &  \none[\cdots] & *(white) & *(pink) \bullet &  \none[\cdots] & *(pink) \bullet  \\
                 *(pink) &  \none[\cdots] & *(pink) \\
            \end{ytableau}}
        \end{equation}
        If we apply the permutations on the the first row we get
        \begin{multline}
            \sum_{\sigma \in S_{2n-k}} \sigma\;
            \raisebox{-0.5em}{\begin{ytableau}
                 *(white) \bullet & \none[\cdots] & *(white) \bullet & *(white) & \none[\cdots] & *(white) & *(pink) \bullet &  \none[\cdots] & *(pink) \bullet
            \end{ytableau}}
            \\=
            (n - k + l)! (n-l)!
            \left[
                \raisebox{-0.5em}{\begin{ytableau}
                     *(white) \bullet & \none[\cdots] & *(white) \bullet &  *(white) & \none[\cdots] & *(white) & *(pink) \bullet &  \none[\cdots] & *(pink) \bullet
                \end{ytableau}}
                +
                (\text{all possible permutations})
            \right].
        \end{multline}
        We get the multiplier $(n - k + l)! (n-l)!$ in front because we have $n-l$ empty and $n-k+l$ filled boxes in the first row, but the permutation permutes the underlying labels $1, \dots, n$ and $1', \dots, n'$. Hence, we need to multiply by the number of possible label arrangements in the first row.
        Moreover, after performing a symmetrization again, it becomes clear that the result only depends on the number of bullets ending up in the first component. Let us call this number $q$. We know that $k-l \leq q \leq n-l$, and that for each $q$, there is a multiplicative factor of $\binom{n}{q}$, i.e., the number of ways one can arrange $q$ bullets in the first component, and a multiplicative factor of $\binom{n-k}{n-l-q}$, i.e., the number of ways one can arrange the remaining bullets in the colored $(n-k)$-element tail of the first row.
    
        Finally, we can write
        \begin{equation}
             \mathbb{T}_k =
             k!
             \sum_{l=0}^k (-1)^l \binom{k}{l} (n-k+l)! (n-l)! \sum_{q=k-l}^{n-l} \binom{n}{q} \binom{n-k}{n-l-q} \mathbb{S}_q,
        \end{equation}
        where we denoted by $\mathbb{S}_q$ the \textit{bosonic swap operator}, defined on two copies as
        \begin{equation}\label{eq:bosonic_swap_first_quantized}
            \mathbb{S}_q \coloneqq \mathbb{P}_{\text{sym}}^{\otimes 2} \left(\prod_{i = 1}^q \mathbb{F}_{i, i+n} \right) \mathbb{P}_{\text{sym}}^{\otimes 2}.
        \end{equation}
        However, in order to fully determine $\mathbb{P}_k$, we need to get the normalization factor that ensures the correct dimensionality using the hook-content formula~\cite{fultonharris}, i.e.,
        \begin{equation}
             \Tr \mathbb{P}_k = \dim \mathbb{S}_{(2n-k, k)}(\mathbb{C}^m) = 
             \prod_{(i, j) \in \text{shape}(\mu)} \frac{m-i-j}{h_{i,j}}
             =
             \frac{2n-2k+1}{2n-k+1}
             \binom{2n+m-k-1}{m-1}\binom{m-2+k}{m-2},
        \end{equation}
        where $h_{i, j}$ is the hook length corresponding to the box in row $i$ and column $j$.
        To obtain the scaling between $\mathbb{T}_k$ and $\mathbb{P}_k$, we view $\Tr \mathbb{P}_k$ as a polynomial in $m$, and extract the leading term corresponding to $m^{2n}$:
        \begin{equation}
            \Tr \mathbb{P}_k = \dim \mathbb{S}_{(2n-k, k)}(\mathbb{C}^m)
            = \frac{m^{2n}}{ \prod_{(i, j) \in \text{shape}(\mu)} h_{i, j}} + (\text{lower order terms}).
        \end{equation}
        Moreover, to obtain the same leading term for $\Tr \mathbb{P}_k$, we use the fact that the trace of the bosonic swap is
        \begin{align}\begin{split}
            \Tr \mathbb{S}_q = \Tr_{(\mathbb{C}^m)^{\otimes n} \otimes (\mathbb{C}^m)^{\otimes n}} \left[
                \mathbb{P}_\sym^{\otimes 2} 
                \prod_{i = 1}^q \mathbb{F}_{i, i+n}
            \right]
            &=
            \frac{1}{n!^2}
            \sum_{\pi, \sigma \in S_n}
            \Tr_{(\mathbb{C}^m)^{\otimes n} \otimes (\mathbb{C}^m)^{\otimes n}} \left[
                (\hat{\pi} \otimes \hat{\sigma})
                \prod_{i = 1}^q \mathbb{F}_{i, i+n}
            \right]
            \\&= \frac{\binom{m+n-1}{n}^2}{\binom{m+q-1}{q}}
            = \frac{q!}{n!^2} \frac{(m)_n^2}{(m)_q}.
        \end{split}\end{align}
        With this formula, we can also expand $\Tr \mathbb{T}_{\mu}$ and extract the leading coefficient corresponding to $m^{2n}$, which is
        \begin{equation}
            \Tr \mathbb{T}_{k} = \frac{(-1)^k}{\binom{n}{k}} m^{2n} + (\text{lower order terms}).
        \end{equation}
        We know that $\Tr \mathbb{P}_k \propto \Tr \mathbb{T}_{k}$, which means that the ratio of the leading order coefficient gives the ratio between $\mathbb{P}_k$ and $\mathbb{T}_k$.
        Hence, we can write that
        \begin{align}
        \begin{split}
            \mathbb{P}_k = \frac{\Tr \mathbb{P}_k}{\Tr \mathbb{T}_k} \mathbb{T}_k &= (-1)^k \frac{2n-2k+1}{(2n-k+1)!k!} \binom{n}{k} \mathbb{T}_k
            \\&=
            (-1)^k \frac{2n-2k+1}{(2n-k+1)!} \binom{n}{k}
             \sum_{l=0}^k (-1)^l \binom{k}{l} (n-k+l)! (n-l)! \sum_{q=k-l}^{n-l} \binom{n}{q} \binom{n-k}{n-l-q} \mathbb{S}_q.
        \end{split}
        \end{align}
        We can collect the factorials into binomial coefficients to get
        \begin{equation}
                \mathbb{P}_k = \frac{2n-2k+1}{2n-k+1} \binom{n}{k}
             \sum_{l=0}^k (-1)^{k-l} \frac{\binom{k}{l}}{\binom{2n-k}{n-l}} \sum_{q=k-l}^{n-l} \binom{n}{q} \binom{n-k}{n-l-q} \mathbb{S}_q.
        \end{equation}
        For practical reasons, we extract the coefficients of the bosonic swap operators as
        \begin{equation}
            \mathbb{P}_k
            = \sum_{q=0}^{n} c_{k,q}\,\mathbb{S}_q, \quad \text{where}\quad
            c_{k,q}
            \coloneqq
            \frac{2n-2k+1}{2n-k+1}\binom{n}{k}\binom{n}{q}
            \sum_{l=\max(0,k-q)}^{\min(k,n-q)}
            (-1)^{k-l}
            \frac{\binom{k}{l}\binom{n-k}{\,l+q-k\,}}{\binom{2n-k}{n-l}} .
        \end{equation}
        Finally, a key feature of the bosonic swap operator \(\mathbb{S}_q\) is that its expectation value depends only on the \(q\)-particle reduced states~\cite{Oszmaniec_2016}. More precisely, we can write
        \begin{equation}
            \Tr [\mathbb{S}_q (\rho \otimes \sigma) ]
            =
            \Tr [ \Tr_{n-q}(\rho)\,\Tr_{n-q}(\sigma)],
        \end{equation}
        where \(\Tr_{n-q}\) denotes the partial trace over \(n-q\) particles.
    \end{proof}
    \begin{remark}\label{remark:symmetry}
        The irreducible subspaces of the tensor squares are separated for the symmetric and antisymmetric parts as
        \begin{equation}
            \Sym^2(\Sym^n(\mathbb{C}^m)) \cong \bigoplus_{\substack{k=0 \\ k\text{ even}}}^n
            \mathbb{S}_{(2n-k, k)}(\mathbb{C}^m),  \qquad
            \bigwedge\! {}^2(\Sym^n(\mathbb{C}^m)) \cong \bigoplus_{\substack{k=1 \\ k\text{ odd}}}^n
            \mathbb{S}_{(2n-k, k)}(\mathbb{C}^m), 
        \end{equation}
        which is apparent from the fact that $\mathbb{P}_k$ contains $k$ column swaps, contributing a $(-1)^k$ term when the flip operator is applied on two copies of $\Sym^n(\mathbb{C}^m)$. Hence, when $k$ is odd, $\mathbb{P}_k$ maps two identical copies of the same state to $0$.
    \end{remark}
    The expression in \Cref{eq:bosonic_swap_to_trace} can be used for most computations, but we will also give an explicit expression of this operator in terms of the creation/annihilation operators in the form of
    \begin{equation}\label{eq:bosonic_swap_boxed}
        \mathbb{S}_q = \frac{1}{\binom{n}{q}^2} 
        \sum_{\substack{\bm{q} \geq \bm{0} \\ |\bm{q}| = q}} 
        \sum_{\substack{\bm{r} \geq \bm{0} \\ |\bm{r}| = q}}
        \frac{1}{\bm{q}! \bm{r}!} (\bm{a}^\dagger)^{\bm{r}} \bm{a}^{\bm{q}}
        \otimes 
        (\bm{a}^\dagger)^{\bm{q}} \bm{a}^{\bm{r}},
    \end{equation}
    where we denoted $\bm{a}^{\bm{q}} \coloneqq \prod_{i=1}^m a_i^{q_i}$ for brevity.
    To obtain this formula, it is useful to introduce the notation \( \phi_1 \vee \dots \vee \phi_n \) for the \textit{symmetrized tensor product}:
    \begin{equation}\label{eqn:vee}
        \phi_1 \vee \dots \vee \phi_n \coloneqq \frac{1}{\sqrt{n!}} \sum_{\pi \in S_n} \phi_{\pi(1)} \otimes \dots \otimes \phi_{\pi(n)},
    \end{equation}
    which yields totally symmetric vectors, i.e., \( U_\pi \phi_1 \vee \dots \vee \phi_n = \phi_1 \vee \dots \vee \phi_n \) for any $\pi \in S_n$ and $\phi_i \in \mathbb{C}^m$, and hence $\phi_1 \vee \dots \vee \phi_n \in \Sym^n(\mathbb{C}^m)$.
    Using this notation, we calculate the action of $\mathbb{S}_q$ directly on $\Sym^n(\mathbb{C}^m)^{\otimes 2}$. For basis elements of $e_{i_1},\dots,e_{i_m}, e_{j_1},\dots,e_{j_m} \in \mathbb{C}^m$, the bosonic swap gives
    \begin{multline}
        \mathbb{S}_q \left(
            \bigvee_{k=1}^n e_{i_{k}} \otimes 
            \bigvee_{l=1}^n e_{j_{l}}
        \right)
        = 
        \frac{1}{n!^2}
        \sum_{\pi, \sigma \in S_n}
        \mathbb{P}_{\mathrm{sym}}
        \left(\bigotimes_{a=1}^q e_{j_{\sigma(a)}} \otimes \bigotimes_{k=q+1}^n e_{i_{\pi(k)}}\right)
        \otimes 
        \mathbb{P}_{\mathrm{sym}}
        \left(\bigotimes_{b=1}^q e_{i_{\pi(b)}} \otimes \bigotimes_{l=q+1}^n e_{j_{\sigma(l)}}\right)
        \\=
        \frac{1}{\binom{n}{q}^2}
        \sum_{\pi, \sigma \in S_n / (S_q \times S_{n-q})}
        \left(\bigvee_{a=1}^q e_{j_{\sigma(a)}} \vee \bigvee_{k=q+1}^n e_{i_{\pi(k)}}\right)
        \otimes 
        \left(\bigvee_{b=1}^q e_{i_{\pi(b)}} \vee \bigvee_{l=q+1}^n e_{j_{\sigma(l)}}\right).
    \end{multline}
    In the following, we consider Fock basis states written in the first-quantized form as
    \begin{equation}
        \ket{n_1, \dots, n_m} = \frac{1}{\sqrt{n_1! \dots n_m!}}\; e_{1}^{\vee n_1} \vee \dots \vee e_{m}^{\vee n_m}, \quad
        \ket{m_1, \dots, m_m} = \frac{1}{\sqrt{m_1! \dots m_m!}}\; e_{1}^{\vee m_1} \vee \dots \vee e_{m}^{\vee m_m}.
    \end{equation}
    Let us denote $\ket{\bm{n}} = \ket{n_1, \dots, n_m}$ and $\bm{n!} = n_1! \cdots n_m!$ for brevity.
    Then, each term in the sum corresponds to the removal of particles corresponding to the occupation numbers $\bm{q}$ and $\bm{r}$ with $\sum_{i=1}^m q_i = \sum_{i=1}^m r_i = q$, but the removal of $\bm{q}$ (or $\bm{r}$) can be done $\binom{\bm{n}}{\bm{q}}$ (or $\binom{\bm{m}}{\bm{r}}$) ways. Therefore, we can write
    \begin{multline}
        \mathbb{S}_q (\ket{\bm{n}} \otimes \ket{\bm{m}})
        = \frac{1}{\sqrt{\bm{n}! \bm{m}!}} \mathbb{S}_q ( e_{1}^{\vee n_1} \vee \dots \vee e_{m}^{\vee n_m} \otimes e_{1}^{\vee m_1} \vee \dots \vee e_{m}^{\vee m_m})
        \\= \frac{1}{\binom{n}{q}^2} \frac{1}{\sqrt{\bm{n}! \bm{m}!}} \sum_{\substack{0 \leq \bm{q} \leq \bm{n} \\ |\bm{q}| = q}} 
        \sum_{\substack{0 \leq \bm{r} \leq \bm{m} \\ |\bm{r}| = q}} \!\!\!
        \binom{\bm{n}}{\bm{q}} \binom{\bm{m}}{\bm{r}} ( e_{1}^{\vee n_1-q_1+r_1} \vee \dots \vee e_{m}^{\vee n_m-q_m+r_m} \otimes e_{1}^{\vee m_1-r_1+q_1} \vee \dots \vee e_{m}^{\vee m_m-r_m+q_m})
        \\=
        \frac{1}{\binom{n}{q}^2}
        \sum_{\substack{0 \leq \bm{q} \leq \bm{n} \\ |\bm{q}| = q}} 
        \sum_{\substack{0 \leq \bm{r} \leq \bm{m} \\ |\bm{r}| = q}}
        \sqrt{\frac{(\bm{n}-\bm{q}+\bm{r})!
            (\bm{m}-\bm{r}+\bm{q})!}{\bm{n}! \bm{m}!}}
        \binom{\bm{n}}{\bm{q}} \binom{\bm{m}}{\bm{r}} \ket{\bm{n}-\bm{q}+\bm{r}} \otimes \ket{\bm{m}-\bm{r}+\bm{q}}.
    \end{multline}
    Comparing with \Cref{eq:ladder_operators_action}, we can write the final expression for the bosonic swap operator in second quantization picture as
    \begin{equation}
        \mathbb{S}_q = \frac{1}{\binom{n}{q}^2} 
        \sum_{\substack{\bm{q} \geq \bm{0} \\ |\bm{q}| = q}} 
        \sum_{\substack{\bm{r} \geq \bm{0} \\ |\bm{r}| = q}}
        \frac{1}{\bm{q}! \bm{r}!} (\bm{a}^\dagger)^{\bm{r}} \bm{a}^{\bm{q}}
        \otimes 
        (\bm{a}^\dagger)^{\bm{q}} \bm{a}^{\bm{r}}.
    \end{equation}

\section{Simplified overlap formulas for $\mathbb{P}_k$}\label{app:simplified_formulas}
    To obtain simplified formulas for reference LXEB values, we will need to compute certain overlaps of the projectors $\mathbb{P}_k$.
    To start off, we investigate the overlap with $\mathbb{D}_{m,n}$, which is used in \Cref{eq:lxe_ref_boxed} independently from the initial state considered. Using \Cref{eq:bosonic_swap_boxed} (or \cite{Carrasco_2016}) we can obtain the following bosonic swap expectation value:
    \begin{equation}
        \Tr \left[ \mathbb{S}_{q} 
        \mathbb{D}_{m,n} \right]
        = \frac{1}{| \mathcal{S}_{m, n}| \binom{n}{q}^2} \sum_{\bm{n} \in \mathcal{S}_{m, n} } \sum_{\substack{\bm{0} \leq \bm{q}\leq \bm{n} \\ |\bm{q}| = q}}
        \binom{\bm{n}}{\bm{q}}^2,
    \end{equation}
    However, this formula is not in an efficiently computable form. To simplify this, we will need to make use of the following fact:
    \begin{proposition}\label{prop:F_closed_form}
        The bivariate generating function
        \begin{equation}
        F(x,y) \coloneqq \sum_{a,b =  0}^\infty \binom{a+b}{a}^2 x^a y^b
        \end{equation}
        admits the closed form
        \begin{equation}
        F(x,y)=\frac{1}{\sqrt{1-2x-2y+x^2-2xy+y^2}}.
        \end{equation}
    \end{proposition}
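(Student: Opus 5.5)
We want $F(x,y)=\sum_{a,b\ge0}\binom{a+b}{a}^2x^ay^b=(1-2x-2y+x^2-2xy+y^2)^{-1/2}$. The plan is to sum first along diagonals $a+b=N$, then recognise the resulting series through Legendre polynomials.

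First we group terms by the total degree $N=a+b$:
\begin{equation*}
F(x,y)=\sum_{N\ge0}\sum_{a=0}^N\binom{N}{a}^2x^a y^{N-a}.
\end{equation*}
The inner sum is homogeneous of degree $N$, so it equals $(y-x)^N$ times a polynomial in $t=(x+y)/(y-x)$. The classical identity we invoke is
\begin{equation*}
\sum_{a=0}^N\binom{N}{a}^2 u^a=(1-u)^N P_N\!\left(\frac{1+u}{1-u}\right),
\end{equation*}
where $P_N$ is the $N$-th Legendre polynomial. This follows from the Rodrigues-type expansion $P_N(z)=\sum_a\binom{N}{a}^2\left(\frac{z-1}{2}\right)^a\left(\frac{z+1}{2}\right)^{N-a}$, which can be proved by Leibniz's rule applied to $(z-1)^N(z+1)^N$ in Rodrigues' formula. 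Setting $u=x/y$ and multiplying by $y^N$ gives
\begin{equation*}
\sum_{a=0}^N\binom{N}{a}^2x^ay^{N-a}=s^N P_N(z),\qquad s=y-x,\quad z=\frac{x+y}{y-x}.
\end{equation*}

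Next we apply the Legendre generating function $\sum_N P_N(z)s^N=(1-2zs+s^2)^{-1/2}$. With the values of $s$ and $z$ above we have $zs=x+y$ and $s^2=(y-x)^2$, so
\begin{equation*}
1-2zs+s^2=1-2x-2y+x^2-2xy+y^2,
\end{equation*}
which is exactly the claimed denominator.

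The remaining point is justification. Both sides should be read as formal power series in $x,y$, or for small $|x|,|y|$, and the substitution is legitimate because every diagonal piece is a polynomial. If $y=x$ makes $z$ singular, we can avoid the issue by working with the homogeneous polynomial identity, which is valid identically, or by arguing via analytic continuation.

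The main obstacle is the $\binom{N}{a}^2$–Legendre identity, so we would fall back on an alternative if it causes trouble. One option is a constant-term/residue computation: write $\binom{N}{a}=[w^a](1+w)^N$, sum the geometric series in $N$, and evaluate the residue of $\frac{1}{w}\cdot\frac{1}{1-y(1+w)(1+x/(yw))\cdots}$. This gives the square root of a discriminant of a quadratic in $w$. Another option is to verify that both sides satisfy the same first-order PDE/recurrence with matching initial values $F(x,0)=1/(1-x)$.
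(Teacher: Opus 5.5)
Your proof is correct, but it takes a different route from the paper, which proves the statement essentially by citation. The paper writes $\binom{a+b}{a}^2=\frac{(1)_{a+b}(1)_{a+b}}{(1)_a(1)_b\,a!\,b!}$, identifying $F$ as the Appell function $F_4(1,1;1,1;x,y)$ (Eq.~(16.13.4) of \cite{NIST:DLMF}). A known $F_4$ reduction formula (the one cited from \cite{Ananthanarayan_2023}) then gives $F=(1-x-y)^{-1}\bigl(1-\frac{4xy}{(1-x-y)^2}\bigr)^{-1/2}$.

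You instead slice along the diagonals $a+b=N$. You identify each slice with the homogenized Legendre polynomial $(y-x)^NP_N\bigl(\frac{x+y}{y-x}\bigr)$ via Rodrigues' formula and Leibniz's rule, then resum with the Legendre generating function. Your algebra ($zs=x+y$, $s^2=(y-x)^2$) is right. Reading everything as formal power series in $x+y$ and $y-x$, with each slice a homogeneous polynomial, correctly handles both the apparent singularity at $y=x$ and the $z$-dependent radius of convergence of the Legendre series in $s$.

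Your route is self-contained and uses only textbook facts about Legendre polynomials. The paper's is shorter, fits a family of $F_4$ reductions valid for other parameters, and directly yields the factorized form raised to the power $m$ in the proof of Proposition~\ref{prop:mean_purity_simple}.

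That factorized form also gives an even shorter elementary check, so your fallback options are unnecessary. Expand $\bigl((1-x-y)^2-4xy\bigr)^{-1/2}=\sum_{j\ge0}\binom{2j}{j}(xy)^j(1-x-y)^{-2j-1}$ and extract $[x^ay^b]$: this gives $\sum_j\frac{(a+b)!}{j!^2(a-j)!(b-j)!}=\binom{a+b}{a}\sum_j\binom{a}{j}\binom{b}{j}=\binom{a+b}{a}^2$ by Chu--Vandermonde.
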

    \noindent This fact can be easily deduced using Equation (16.13.4) from \cite{NIST:DLMF} and Equation (B.1) in \cite{Ananthanarayan_2023}.
    With this in mind, we can now turn our attention to simplifying the average-outcome formula for the expectation value of the bosonic swap:

    \begin{restatable}{proposition}{meanpurity}\label{prop:mean_purity_simple}
        For \(0\le q\le n\), we can write
        \begin{equation}
            \Tr \left[ \mathbb{S}_{q} 
            \mathbb{D}_{m,n} \right] = \frac{1}{\binom{n}{q}}\frac{\left(\frac{m+1}{2}\right)_n}
             {\left(\frac{m+1}{2}\right)_q
              \left(\frac{m+1}{2}\right)_{n-q}}.
        \end{equation}
    \end{restatable}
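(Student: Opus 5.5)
Starting from the formula just derived,
\[
\Tr[\mathbb{S}_q\mathbb{D}_{m,n}]=\frac{1}{|\mathcal S_{m,n}|\binom{n}{q}^2}\sum_{\bm n\in\mathcal S_{m,n}}\sum_{\substack{\bm 0\le\bm q\le\bm n\\ |\bm q|=q}}\binom{\bm n}{\bm q}^2 ,
\]
the plan is to evaluate the double sum $G_{m}(n,q)\coloneqq\sum_{\bm n,\bm q}\prod_{i=1}^m\binom{n_i}{q_i}^2$ by a generating function. The sum factorizes over modes. Substituting $a_i=q_i$ and $b_i=n_i-q_i$, each mode contributes $\sum_{a,b}\binom{a+b}{a}^2x^ay^b=F(x,y)$. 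Hence
\[
\sum_{n,q}G_m(n,q)\,x^qy^{n-q}=F(x,y)^m=\bigl(1-2x-2y+x^2-2xy+y^2\bigr)^{-m/2}
\]
by \Cref{prop:F_closed_form}. It then suffices to extract the coefficient of $x^qy^{n-q}$.

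To extract it, I would use the factorization $1-2x-2y+x^2-2xy+y^2=(1-x-y)^2-4xy$. Writing $F^m=(1-x-y)^{-m}\bigl(1-\tfrac{4xy}{(1-x-y)^2}\bigr)^{-m/2}$ and expanding the second factor binomially gives
\[
F^m=\sum_{j\ge0}\frac{(m/2)_j}{j!}\,4^j(xy)^j(1-x-y)^{-m-2j}.
\]
The coefficient of $x^{q-j}y^{n-q-j}$ in $(1-x-y)^{-m-2j}$ is the multinomial $\frac{(m+2j)_{n-2j}}{(q-j)!(n-q-j)!}$. This yields a single finite sum
\[
G_m(n,q)=\sum_j \frac{4^j (m/2)_j (m+2j)_{n-2j}}{j!\,(q-j)!\,(n-q-j)!}.
\]

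The main obstacle is to show that this sum collapses to
\[
\binom{m+n-1}{n}\binom{n}{q}\frac{\left(\frac{m+1}{2}\right)_n}{\left(\frac{m+1}{2}\right)_q\left(\frac{m+1}{2}\right)_{n-q}},
\]
which, after dividing by $|\mathcal S_{m,n}|\binom nq^2$, is exactly the claim. The first step is to use the duplication formula $(m+2j)_{n-2j}=(m)_n/(m)_{2j}$ and $(m)_{2j}=4^j(m/2)_j\left(\frac{m+1}{2}\right)_j$. With these, the summand becomes
\[
\frac{(m)_n}{j!\,(q-j)!\,(n-q-j)!\,\left(\frac{m+1}{2}\right)_j}.
\]
Writing $1/(q-j)!$ and $1/(n-q-j)!$ via $(-q)_j$ and $(-(n-q))_j$ turns the sum into
\[
\frac{(m)_n}{q!(n-q)!}\,{}_2F_1\!\left(-q,-(n-q);\tfrac{m+1}{2};1\right).
\]
Chu--Vandermonde evaluates this hypergeometric as $\left(\frac{m+1}{2}+q\right)_{n-q}\big/\left(\frac{m+1}{2}\right)_{n-q}=\left(\frac{m+1}{2}\right)_n\big/\bigl[\left(\frac{m+1}{2}\right)_q\left(\frac{m+1}{2}\right)_{n-q}\bigr]$. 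Combining with $(m)_n/n!=|\mathcal S_{m,n}|$ then gives the stated formula.

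As a final check, I would verify the edge cases. At $q=0$ and $q=n$ the expression gives $1$, which is correct since $\mathbb S_0=\mathbbm 1$ and each $\ket{\bm n}^{\otimes2}$ is symmetric. For $m=1$ the average is a single pure state.
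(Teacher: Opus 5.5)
Your proposal is correct and follows the paper's proof essentially step for step: the per-mode generating function $F(x,y)$ from \Cref{prop:F_closed_form}, the factorization $(1-x-y)^2-4xy$ with binomial expansion, coefficient extraction from $(1-x-y)^{-m-2j}$, the duplication identity $(m)_{2j}=4^j\left(\frac m2\right)_j\left(\frac{m+1}{2}\right)_j$, and Chu--Vandermonde. The only cosmetic difference is that you apply Chu--Vandermonde with the roles of $q$ and $n-q$ exchanged. This yields the same symmetric expression $\left(\frac{m+1}{2}\right)_n\big/\bigl[\left(\frac{m+1}{2}\right)_q\left(\frac{m+1}{2}\right)_{n-q}\bigr]$.
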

    \begin{proof}
        We start with the formula
        \begin{equation}
             \Tr \left[ \mathbb{S}_{q} 
            \mathbb{D}_{m,n} \right]
            =
            \frac{1}{|\mathcal S_{m,n}|\,\binom{n}{q}^2}
            \sum_{\substack{\bm n\in\mathbb N_0^m\\ |\bm n|=n}}
            \ \sum_{\substack{\bm q \le \bm n\\ |\bm q|=q}}
            \prod_{i=1}^m \binom{n_i}{q_i}^2.
        \end{equation}
        Setting \(\bm k\coloneqq\bm n-\bm l\), we have \(|\bm k|=q\), and therefore
        \begin{equation}
            \frac{1}{|\mathcal S_{m,n}|\,\binom{n}{q}^2}
            \sum_{\substack{\bm n\in\mathbb N_0^m\\ |\bm n|=n}}
            \ \sum_{\substack{\bm q \le \bm n\\ |\bm q|=q}}
            \prod_{i=1}^m \binom{n_i}{q_i}^2
            =
            \frac{1}{|\mathcal S_{m,n}|\,\binom{n}{q}^2}
            \sum_{\substack{\bm l\in\mathbb N_0^m\\ |\bm l|=n-q}}
            \ \sum_{\substack{\bm k\in\mathbb N_0^m\\ |\bm k|=q}}
            \prod_{i=1}^m \binom{l_i+k_i}{l_i}^2.
        \end{equation}
        Hence, defining
        \begin{equation}
        S_{n,q,m}
        \coloneqq
        \sum_{\substack{\bm l\in\mathbb N_0^m\\ |\bm l|=n-q}}
        \ \sum_{\substack{\bm k\in\mathbb N_0^m\\ |\bm k|=q}}
        \prod_{i=1}^m \binom{l_i+k_i}{l_i}^2,
        \end{equation}
        then we can write the expectation value as
        \begin{equation}
          \Tr \left[ \mathbb{S}_{q} 
            \mathbb{D}_{m,n} \right] =\frac{S_{n,q,m}}{|\mathcal S_{m,n}|\,\binom{n}{q}^2}.
        \end{equation}
        Now introduce the bivariate generating function from \Cref{prop:F_closed_form}:
        \begin{equation}
        F(x,y)
        \coloneqq
        \sum_{a, b = 0}^\infty \binom{a+b}{a}^2 x^a y^b .
        \end{equation}
        Then, we can write
        \begin{equation}
        S_{n,q,m}
        =
        [x^{\,n-q}y^q]\,
        F(x,y)^m .
        \end{equation}
        Moreover, using \Cref{prop:F_closed_form}, \(F(x,y)\) admits the closed form 
        \begin{equation}
        F(x,y)=\frac{1}{\sqrt{1-2x-2y+x^2-2xy+y^2}},
        \end{equation}
        Since we can rewrite the denominator as
        \begin{equation}
            1-2x-2y+x^2-2xy+y^2=(1-x-y)^2-4xy,
        \end{equation}
        we may write
        \begin{equation}
            F(x,y)^m
            =
            \bigl((1-x-y)^2-4xy\bigr)^{-m/2}
            =
            (1-x-y)^{-m}
            \left(1-\frac{4xy}{(1-x-y)^2}\right)^{-m/2}.
        \end{equation}
        Expanding the second factor gives
        \begin{equation}
            F(x,y)^m
            =
            \sum_{j\ge 0}\frac{\left(\frac m2\right)_j}{j!}\,
            4^j x^j y^j\,(1-x-y)^{-m-2j}.
        \end{equation}
        Therefore, we can rewrite $S_{n,q,m}$ as
        \begin{equation}
            S_{n,q,m}
            =
            \sum_{j=0}^q
            \frac{4^j\left(\frac m2\right)_j}{j!}
            [x^{n-q-j}y^{q-j}](1-x-y)^{-m-2j}.
        \end{equation}
        Using the fact that
        \(
        [x^a y^b](1-x-y)^{-s}=\frac{(s)_{a+b}}{a!\,b!}
        \)
        for \(a,b\in\mathbb N_0\), we obtain
        \begin{equation}
        S_{n,q,m}
        =
        \sum_{j=0}^q
        \frac{4^j\left(\frac m2\right)_j}{j!}\,
        \frac{(m+2j)_{n-2j}}{(n-q-j)!\,(q-j)!}.
        \end{equation}
        Next, by the duplication formula for rising Pochhammer symbols,
        \begin{equation}
        (m)_{2j}=4^j\left(\frac m2\right)_j\left(\frac{m+1}{2}\right)_j,
        \end{equation}
        and since
        \(
            (m)_n=(m)_{2j}(m+2j)_{n-2j},
        \)
        we get
        \begin{equation}
        4^j\left(\frac m2\right)_j(m+2j)_{n-2j}
        =
        \frac{(m)_n}{\left(\frac{m+1}{2}\right)_j}.
        \end{equation}
        As a result, we get the formula
        \begin{equation}
            S_{n,q,m}
            =
            \frac{(m)_n}{(n-q)!\,q!}
            \sum_{j=0}^q
            \frac{(-q)_j\,(q-n)_j}{\left(\frac{m+1}{2}\right)_j\,j!}.
        \end{equation}

        Using \Cref{prop:elementary_vandermonde} with
        \(
        a=\frac{m+1}{2}
        \) and \(
        b=q-n,
        \)
        and the identity
        \(
        \frac{(-q)_j}{j!}=(-1)^j\binom{q}{j},
        \)
        we obtain
        \begin{equation}
        \sum_{j=0}^q
        \frac{(-q)_j\,(q-n)_j}{\left(\frac{m+1}{2}\right)_j\,j!}
        =
        \frac{\left(\frac{m+1}{2}+n-q\right)_q}{\left(\frac{m+1}{2}\right)_q} = \frac{\left(\frac{m+1}{2}\right)_n}
         {\left(\frac{m+1}{2}\right)_q
          \left(\frac{m+1}{2}\right)_{n-q}}
        \end{equation}
        resulting in
        \begin{equation}
            S_{n,q,m}
            =
            \frac{(m)_n}{(n-q)!\,q!}
            \frac{\left(\frac{m+1}{2}\right)_n}
         {\left(\frac{m+1}{2}\right)_q
          \left(\frac{m+1}{2}\right)_{n-q}}.
        \end{equation}
        Finally, since
        $
        |\mathcal S_{m,n}|=\binom{n+m-1}{n}=\frac{(m)_n}{n!},
        $
        we can conclude that
        \begin{equation}
            \Tr \left[ \mathbb{S}_{q} 
                \mathbb{D}_{m,n} \right] 
            =
            \frac{S_{n,q,m}}{|\mathcal S_{m,n}|\,\binom{n}{q}^2}
            = 
            \frac{1}{\binom{n}{q}}\frac{\left(\frac{m+1}{2}\right)_n}
             {\left(\frac{m+1}{2}\right)_q
              \left(\frac{m+1}{2}\right)_{n-q}}
        \end{equation}
        as claimed.
    \end{proof}

    Next, we will utilize \Cref{prop:mean_purity_simple} to give a simple closed expression for $\Tr [\mathbb{P}_k  \; \mathbb{D}_{m, n}]$ (note, that the odd case is trivial by \Cref{remark:symmetry} from \Cref{app:expressing_the_projector}):
    \trpkdmn*
    \begin{proof}

        Considering the case of $k = 2r$, let us define
        \(
            a\coloneqq\frac{m+1}{2}
        \)
        for brevity.
        With the help of the formulas from \Cref{sec:framework}, we get that
        \begin{equation}
            \Tr [\mathbb{P}_k  \; \mathbb{D}_{m, n}]
            =
            \frac{2n-2k+1}{2n-k+1}\binom{n}{k}(a)_n
            \sum_{q=0}^n
            \frac{1}{(a)_q(a)_{n-q}}
            \sum_{l=\max(0,k-q)}^{\min(k,n-q)}
            (-1)^{k-l}
            \frac{\binom{k}{l}\binom{n-k}{l+q-k}}{\binom{2n-k}{n-l}}.
            \label{eq:Tk1}
        \end{equation}   
        Now let us substitute $k = 2r
        $ and $
            N\coloneqq n-2r
        $.
        Then, \eqref{eq:Tk1} becomes
        \begin{equation}
            \Tr [\mathbb{P}_{2r}  \; \mathbb{D}_{m, n}]
            =
            \frac{2n-4r+1}{2n-2r+1}\binom{n}{2r}(a)_n\,A_r,
        \end{equation}
        where
        \begin{equation}
            A_r
            \coloneqq
            \sum_{q=0}^n
            \frac{1}{(a)_q(a)_{n-q}}
            \sum_{l=\max(0,2r-q)}^{\min(2r,n-q)}
            (-1)^l
            \frac{\binom{2r}{l}\binom{N}{l+q-2r}}{\binom{2n-2r}{n-l}}.
        \end{equation}
        Setting $q=2r-l+s$,
        we observe that \(0\le l\le 2r\) and \(0\le s\le N\), and hence we can simplify $A_r$ as
        \begin{equation}
            A_r
            =
            \sum_{l=0}^{2r}
            \frac{(-1)^l\binom{2r}{l}}{\binom{2n-2r}{n-l}}
            \sum_{s=0}^{N}
            \frac{\binom{N}{s}}{(a)_{2r-l+s}(a)_{N+l-s}}.
            \label{eq:Ar_after_change}
        \end{equation}
        For the inner sum, we can use the identities
        $
        (a)_{2r-l+s}=(a)_{2r-l}(a+2r-l)_s
        $
        and
        \begin{equation}
            \frac{1}{(a)_{N+l-s}}
            =
            \frac{(-1)^s(1-a-N-l)_s}{(a)_{N+l}}
        \end{equation}
        to obtain
        \begin{align}
        \begin{split}
            \sum_{s=0}^{N}
            \frac{\binom{N}{s}}{(a)_{2r-l+s}(a)_{N+l-s}}
            &=
            \frac{1}{(a)_{2r-l}(a)_{N+l}}
            \sum_{s=0}^{N}
            \binom{N}{s}
            \frac{(-1)^s(1-a-N-l)_s}{(a+2r-l)_s}
            \\&=
            \frac{1}{(a)_{2r-l}(a)_{N+l}}
            \sum_{s=0}^{N}
            \frac{(-N)_s(1-a-N-l)_s}{(a+2r-l)_s\,s!}.
        \end{split}
        \end{align}
        Using the identity from \Cref{prop:elementary_vandermonde}, we get
        \begin{equation}
            \sum_{s=0}^{N}
            \frac{(-N)_s(b)_s}{(c)_s\,s!}
            =
            \frac{(c-b)_N}{(c)_N},
        \end{equation}
        and we obtain the simplification
        \begin{equation}
        \sum_{s=0}^{N}
        \frac{\binom{N}{s}}{(a)_{2r-l+s}(a)_{N+l-s}}
        =
        \frac{(2a+n-1)_N}{(a)_{n-l}(a)_{N+l}}.
        \end{equation}
        Substituting into \eqref{eq:Ar_after_change} gives
        \begin{equation}
        A_r
        =
        (2a+n-1)_N
        \sum_{l=0}^{2r}
        \frac{(-1)^l\binom{2r}{l}}
        {\binom{2n-2r}{n-l}(a)_{n-l}(a)_{N+l}}.
        \label{eq:Ar_one_sum}
        \end{equation}
        
        Again, we rewrite everything in terms of Pochhammer symbols using the formulas
        \begin{subequations}
        \begin{align}
            (-1)^l\binom{2r}{l}&=\frac{(-2r)_l}{l!},\\
            \frac{1}{\binom{2n-2r}{n-l}}
            &=
            \frac{1}{\binom{2n-2r}{n}}
            (-1)^l\frac{(N+1)_l}{(-n)_l},
        \end{align}
        \end{subequations}
        and
        \begin{equation}
            \frac{1}{(a)_{n-l}}
            =
            \frac{(-1)^l(1-a-n)_l}{(a)_n}, \qquad
            \frac{1}{(a)_{N+l}}
            =
            \frac{1}{(a)_N(a+N)_l},
        \end{equation}
        the one-sum formula for \(A_r\) becomes
        \begin{equation}
        A_r
        =
        \frac{(2a+n-1)_N}{\binom{2n-2r}{n}(a)_n(a)_N}
        \sum_{l=0}^{2r}
        \frac{(-2r)_l\,(N+1)_l\,(1-a-n)_l}
        {(-n)_l\,(a+N)_l\,l!}.
        \label{eq:Ar_finite_sum}
        \end{equation}
        Similarly, using \Cref{prop:pfaff_saalschutz}, the inner sum evaluates to
        \begin{equation}
            \sum_{l=0}^{2r}
            \frac{(-2r)_l\,(N+1)_l\,(1-a-n)_l}
            {(-n)_l\,(a+N)_l\,l!}
            =
            \frac{\binom{2r}{r}}{\binom{n}{r}}
            \frac{(a-1)_r}{(a+N)_r}.
            \label{eq:finite_sum_eval}
        \end{equation}
        Substituting \Cref{eq:finite_sum_eval} into \Cref{eq:Ar_finite_sum} gives
        \begin{equation}
            A_r
            =
            \frac{\binom{2r}{r}}{\binom{n}{r}}
            \frac{(2a+n-1)_N}{\binom{2n-2r}{n}(a)_n(a)_N}
            \frac{(a-1)_r}{(a+N)_r}.
        \end{equation}
        Since \(N=n-2r\) and
        $(a)_N(a+N)_r=(a)_{N+r}=(a)_{n-r}$,
        this simplifies to
        \begin{equation}
        A_r
        =
        \frac{\binom{2r}{r}}{\binom{n}{r}}
        \frac{(2a+n-1)_{n-2r}(a-1)_r}
        {\binom{2n-2r}{n}(a)_n(a)_{n-r}}.
        \end{equation}
        Now,
        using
        $
        \binom{n}{2r}\binom{2r}{r}
        =
        \binom{n-r}{r}\binom{n}{r}
        $
        and substituting $a=\frac{m+1}{2}$,
        we obtain
        \begin{equation}
            \Tr [\mathbb{P}_{2r}  \; \mathbb{D}_{m, n}]
            =
            \frac{2n-4r+1}{2n-2r+1}
            \frac{\binom{n-r}{r}}{\binom{2n-2r}{n}}
            \frac{(m+n)_{n-2r}\left(\frac{m-1}{2}\right)_r}
            {\left(\frac{m+1}{2}\right)_{n-r}}
            .
        \end{equation}
    \end{proof}

    To simplify the LXEB reference values in Boson Sampling, we will need to compute the overlap with the collision-free input state $\ket{\bm{n}_0}$.
    \begin{proposition}\label{prop:trace_pk_n0}
        For every integer \(r\) with \(0 \le r \le \lfloor n/2 \rfloor\),
        \begin{equation}\label{eq:tr_pk_n0_prop}
            \Tr\!\left[\mathbb{P}_{2r}\,\ketbra{\bm{n}_0}^{\otimes 2}\right]
            =
            2^{\,n-2r}\,\frac{2n-4r+1}{2n-2r+1}\,
            \frac{\binom{n}{r}}{\binom{2n-2r}{\,n-r\,}}.
        \end{equation}
    \end{proposition}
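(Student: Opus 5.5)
Substituting $\Tr[\mathbb{S}_q\ketbra{\bm n_0}^{\otimes 2}]=1/\binom{n}{q}$ from \eqref{eq:swapexp_collfree} into the swap expansion of \Cref{lem:explicit_P_k} gives, for $k=2r$,
\begin{equation*}
\Tr\!\left[\mathbb{P}_{2r}\ketbra{\bm n_0}^{\otimes 2}\right]=\frac{2n-4r+1}{2n-2r+1}\binom{n}{2r}\sum_{l=0}^{2r}\frac{(-1)^{l}\binom{2r}{l}}{\binom{2n-2r}{n-l}}\sum_{q}\binom{n-2r}{l+q-2r}.
\end{equation*}
The factor $\binom{n}{q}$ in $c_{2r,q}$ cancels exactly against the swap expectation value. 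This leaves a double sum of pure binomials, with no $m$-dependence.

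\textbf{Step 1: the inner sum over $q$.} For fixed $l$, the index $q$ runs over $\max(0,2r-l)\le q\le \min(2r,n-q)$-compatible values. After the shift $s=l+q-2r$, the index $s$ covers the whole range $0\le s\le n-2r$, exactly as in the reparametrisation used in the proof of \Cref{lem:tr_p2r_dmn}. The inner sum is therefore $\sum_s\binom{n-2r}{s}=2^{n-2r}$, independent of $l$. This produces the prefactor $2^{n-2r}$ in the claim and reduces the problem to
\begin{equation*}
B_r\coloneqq\sum_{l=0}^{2r}\frac{(-1)^l\binom{2r}{l}}{\binom{2n-2r}{n-l}}.
\end{equation*}
We must then show that
\begin{equation*}
\binom{n}{2r}B_r=\frac{\binom{n}{r}}{\binom{2n-2r}{n-r}}.
\end{equation*}

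\textbf{Step 2: evaluate $B_r$.} I would use the same hypergeometric rewriting as in the proof of \Cref{lem:tr_p2r_dmn}, namely $(-1)^l\binom{2r}{l}=(-2r)_l/l!$ and $1/\binom{2n-2r}{n-l}=\binom{2n-2r}{n}^{-1}(-1)^l(N+1)_l/(-n)_l$ with $N=n-2r$. This gives
\begin{equation*}
B_r=\binom{2n-2r}{n}^{-1}\,{}_2F_1(-2r,N+1;-n;-1).
\end{equation*}
This is a well-poised series at argument $-1$, to be evaluated by Kummer's theorem (terminating form).

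Alternatively, I would take a limit of the Pfaff--Saalschütz evaluation \eqref{eq:finite_sum_eval} already used above. There the extra pair $(1-a-n)_l/(a+N)_l$ tends to $(-1)^l$ as $a\to\infty$. Since $(a-1)_r/(a+N)_r\to 1$, this yields
\begin{equation*}
\sum_{l}\frac{(-2r)_l(N+1)_l}{(-n)_l\,l!}(-1)^l=\frac{\binom{2r}{r}}{\binom{n}{r}}.
\end{equation*}
Both sides are polynomial/rational in $a$, so the limit is legitimate. This second route is the one I would pursue first, because it reuses an identity already established in the paper.

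\textbf{Step 3: assemble.} Combining, we get
\begin{equation*}
\binom{n}{2r}B_r=\binom{n}{2r}\frac{\binom{2r}{r}}{\binom{n}{r}\binom{2n-2r}{n}}.
\end{equation*}
This is reduced to the target using $\binom{n}{2r}\binom{2r}{r}=\binom{n}{r}\binom{n-r}{r}$ together with the elementary check
\begin{equation*}
\frac{\binom{n-r}{r}}{\binom{2n-2r}{n}}=\frac{\binom{n}{r}}{\binom{2n-2r}{n-r}}.
\end{equation*}
To verify the latter, write both sides in factorials: each equals $\frac{(n-r)!^2}{r!\,(2n-2r)!}\cdot\frac{n!}{(n-2r)!\,(n-r)!}$.

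The main obstacle is Step 2, specifically justifying the $a\to\infty$ limit of the Pfaff--Saalschütz identity, or correctly invoking Kummer's theorem. The denominator parameter $-n$ is a negative integer, so one must make sure the series terminates at $l=2r\le n$ before any zero pole appears. I would sanity-check the final formula at $r=0$, where both sides equal $2^n/\binom{2n}{n}\cdot\binom{2n}{n}/\binom{2n}{n}$ consistently, and at $n=2,r=1$ by direct computation.
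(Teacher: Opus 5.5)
Your proposal is correct and is essentially the paper's proof: the cancellation of $\binom{n}{q}$, the $q$-sum producing $2^{n-2r}$, the rewriting of $\Sigma_{n,r}$ as $\binom{2n-2r}{n}^{-1}{}_2F_1(-2r,\,n-2r+1;\,-n;\,-1)$, and the final binomial bookkeeping all coincide with what the paper does. The only difference is how that series is evaluated: your terminating Kummer evaluation (equivalently, the $a\to\infty$ limit of the well-poised identity \eqref{eq:finite_sum_eval}) is the correct justification, whereas the paper invokes Chu--Vandermonde, which applies at argument $+1$, and prints an intermediate value $\frac{(1/2-r)_r}{(-n)_r}$ that is missing a factor $4^r$, though it then arrives at the same closed form as yours (separately, the common factorial value you quote in Step 3 carries a stray $1/(n-2r)!$; both sides actually equal $\frac{n!\,(n-r)!}{r!\,(2n-2r)!}$, so your conclusion still holds).
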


    \begin{proof}
        Using \Cref{lem:explicit_P_k} and the fact that
        \begin{equation}
            \Tr\!\left[\mathbb{S}_q\,\ketbra{\bm{n}_0}^{\otimes 2}\right]
            = \frac{1}{\binom{n}{q}},
        \end{equation}
        we obtain
        \begin{align}
            \Tr\!\left[\mathbb{P}_{2r}\,\ketbra{\bm{n}_0}^{\otimes 2}\right]
            =
            \frac{2n-4r+1}{2n-2r+1}\binom{n}{2r}
            \sum_{q=0}^{n}
            \sum_{l=\max(0,2r-q)}^{\min(2r,n-q)}
            (-1)^l
            \frac{\binom{2r}{l}\binom{n-2r}{l+q-2r}}{\binom{2n-2r}{n-l}}
            =
            2^{\,n-2r}\,\frac{2n-4r+1}{2n-2r+1}\binom{n}{2r}\,\Sigma_{n,r},
            \label{eq:trace_reduced}
        \end{align}
        where
        \begin{equation}\label{eq:def_sigma_nr}
            \Sigma_{n,r}
            :=
            \sum_{l=0}^{2r}
            (-1)^l \binom{2r}{l}\binom{2n-2r}{n-l}^{-1}.
        \end{equation}
        Indeed, for fixed \(l\),
        \begin{equation}\label{eq:q_sum}
            \sum_{q=0}^{n}\binom{n-2r}{l+q-2r}
            = \sum_{j=0}^{n-2r}\binom{n-2r}{j}
            = 2^{\,n-2r}.
        \end{equation}
        Next, using
        \begin{equation}\label{eq:binom_reciprocal}
            \binom{2n-2r}{n-l}^{-1}
            =
            \binom{2n-2r}{n}^{-1}
            (-1)^l \frac{(n-2r+1)_l}{(-n)_l},
        \end{equation}
        we can rewrite \eqref{eq:def_sigma_nr} as
        \begin{equation}\label{eq:sigma_hyper}
            \Sigma_{n,r}
            =
            \binom{2n-2r}{n}^{-1}
            \sum_{l=0}^{2r}
            \frac{(-2r)_l (n-2r+1)_l}{(-n)_l\,l!}\,(-1)^l
        \end{equation}
        Now, using \Cref{prop:elementary_vandermonde} gives
        \begin{equation}\label{eq:vandermonde_eval}
            \Sigma_{n,r}
            =
            \binom{2n-2r}{n}^{-1}
            \frac{\left(\tfrac12-r\right)_r}{(-n)_r}.
        \end{equation}
        Substituting \eqref{eq:vandermonde_eval} into \eqref{eq:sigma_hyper}, and using
        \begin{equation}\label{eq:pochhammer_halfint}
            \left(\tfrac12-r\right)_r = (-1)^r \frac{(2r)!}{4^r r!},
            \qquad
            (-n)_r = (-1)^r \frac{n!}{(n-r)!},
        \end{equation}
        we obtain
        \begin{equation}\label{eq:sigma_closed_form}
            \Sigma_{n,r}
            =
            \frac{(2r)!(n-r)!(n-2r)!}{r!(2n-2r)!}
            =
            \frac{\binom{n}{r}}{\binom{n}{2r}\binom{2n-2r}{n-r}}.
        \end{equation}
        Finally, inserting \eqref{eq:sigma_closed_form} into \eqref{eq:trace_reduced} yields \eqref{eq:tr_pk_n0}.
    \end{proof}

    Moreover, for obtaining simplified LXEB reference values for the case of Gaussian Boson Sampling as well, we need to use the following formula for the overlap of a projector $\mathbb{P}_k$ with a uniformly squeezed state~\cite{second_quantized_unpublished}:
    \begin{theorem}\label{thm:P_k_uniform_squeezed_closed}
        Let \(n=2N\), and let $\rho$ be the normalized $n$-particle restriction of the product of $d$ uniform single-mode squeezed states.
        Then, we can write
        \begin{equation}
            \Tr\!\bigl[\mathbb{P}_k \rho^{\otimes 2}\bigr] = 0
            \qquad\text{for odd }k,
        \end{equation}
        and for \(k=2j\), \(j=0,\dots,N\),
        \begin{equation}
        \label{eq:Pk_uniform_squeezed_closed}
            \Tr\!\bigl[\mathbb{P}_{2j}\rho^{\otimes 2}\bigr]
            =
            \frac{2^{2j+1} N!^2}{j!}\,
            \binom{2N-2j}{N-j}^{2}\,
            \frac{(2N-j+1)!\,(4N-4j+1)}{(4N-2j+2)!}\,
            \frac{\left(\frac{d-1}{2}\right)_j\left(\frac d2+N\right)_{N-j}}{\left(\frac d2\right)_N}.
        \end{equation}
    \end{theorem}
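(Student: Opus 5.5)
Using \Cref{lem:explicit_P_k}, I would write $\Tr[\mathbb{P}_k\rho^{\otimes 2}]=\sum_q c_{k,q}\Tr[\mathbb{S}_q\rho^{\otimes 2}]$. The problem then splits into two parts: computing the purities $\Tr[\Tr_{n-q}(\rho)^2]$ for the uniformly squeezed state in closed form, and resumming against $c_{k,q}$ with hypergeometric identities. Odd $k$ follows at once from \Cref{remark:symmetry}, since $\rho^{\otimes 2}$ is supported on $\Sym^2(\Sym^n(\mathbb{C}^m))$, so we may take $k=2j$.

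\textbf{First step: the $n$-particle component.} The normalized $2N$-photon component of $\ket{r}^{\otimes d}$ is, up to normalization, $\ket{\psi_N}\propto (\sum_{i=1}^d a_i^{\dagger 2})^N\ket{0}$. This is independent of $r$, which explains why $r$ drops out of the final formula. The state is $O(d)$-invariant. I would compute the purities with \eqref{eq:bosonic_swap_boxed}: apply $\bm a^{\bm q}$ to $\ket{\psi_N}$ and use the generating function $e^{\frac12\sum_i t\,a_i^{\dagger2}}\ket{0}$. Partial traces of Gaussian-type states then become coefficient extractions from products of single-mode factors, in the spirit of \Cref{prop:polynomial_machinery}. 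Concretely, I expect $\Tr[\mathbb{S}_q\rho^{\otimes2}]$ to be a coefficient of $y^{q}$ (and $x^{n-q}$) in a power $G(x,y)^{d}$. Here $G$ is the single-mode two-copy generating function, analogous to $F$ in \Cref{prop:F_closed_form}, and involves $\sum\binom{2a}{a}$-type weights.

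\textbf{Second step: resumming.} The structure mirrors the proof of \Cref{lem:tr_p2r_dmn}, where $F^m$ was factored as $(1-x-y)^{-m}(1-\frac{4xy}{(1-x-y)^2})^{-m/2}$. Here I would look for a similar factorization of $G^{d}$ whose binomial expansion produces the factors $(\frac d2)_N$, $(\frac{d-1}{2})_j$ and $(\frac d2+N)_{N-j}$. I would then substitute into the sum over $q$ and $l$ in $c_{2j,q}$ and perform the inner sums in order. First, Chu--Vandermonde (\Cref{prop:elementary_vandermonde}) collapses the $q$-sum. Second, Pfaff--Saalsch\"utz (\Cref{prop:pfaff_saalschutz}) evaluates the remaining balanced ${}_3F_2$ sum over $l$. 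Finally, collecting factorials should produce the prefactor $\frac{2^{2j+1}N!^2}{j!}\binom{2N-2j}{N-j}^2\frac{(2N-j+1)!(4N-4j+1)}{(4N-2j+2)!}$.

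\textbf{Alternative route and checks.} An alternative avoids swaps entirely and uses representation theory. $\ket{\psi_N}$ spans the $O(d)$-invariant line in $\Sym^{2N}(\mathbb{C}^d)$, since it is the power of the quadratic form. So $\Tr[\mathbb{P}_{2j}\rho^{\otimes 2}]$ is the squared norm of the projection of $\ket{\psi_N}^{\otimes 2}$ onto the $\mathrm{U}(d)$-isotypic component $(4N-2j,2j)$. By Helgason/Gelfand-pair theory each such irrep contains at most a one-dimensional $O(d)$-invariant line. This would let me compute the weights via zonal spherical functions (Jack polynomials at $\alpha=2$) evaluated on the invariant. That route would serve as a cross-check, together with small cases $N=1$, $d=1$ (where $\rho$ is a single Fock state $\ket{2N}$) and the normalization $\sum_j\Tr[\mathbb{P}_{2j}\rho^{\otimes 2}]=1$.

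The main obstacle is the first step: obtaining a closed, resummable form for the squeezed-state purities $\Tr[\Tr_{n-q}(\rho)^2]$ as functions of $q$, $N$ and $d$. Unlike the Fock case $1/\binom nq$, these are genuine double sums. If a clean generating-function factorization does not emerge, I would fall back on the spherical-function route.
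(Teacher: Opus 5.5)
The odd-$k$ argument and the reduction to $\ket{\psi_N}\propto(\sum_i a_i^{\dagger 2})^N\ket{0}$ are fine. There is, however, a genuine gap: your main route stops exactly where the proof has to happen. You never compute $\Tr[\mathbb{S}_q\rho^{\otimes 2}]$ for $\ket{\psi_N}$. The factorization of a two-variable $G(x,y)^d$ producing $(\tfrac d2)_N$, $(\tfrac{d-1}{2})_j$, $(\tfrac d2+N)_{N-j}$ is only hoped for; note that for a non-diagonal product state the paper's own generating function behind \Cref{prop:efficient_swap_product} needs four variables.

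More seriously, the template of \Cref{lem:tr_p2r_dmn} does not transfer. There, \Cref{prop:mean_purity_simple} first collapses the purity to a ratio of Pochhammer symbols, so only the $q$- and $l$-sums are left for Chu--Vandermonde and Pfaff--Saalsch\"utz. For the squeezed state this already fails at $N=2$, $q=2$, $d\ge 2$. Write $\Sym^2(\mathbb{C}^d)=\mathbb{C}\hat\Phi\oplus\mathcal{H}_2$ with $\hat\Phi=d^{-1/2}\sum_i e_i\otimes e_i$ and $\mathcal H_2$ the traceless part. Then $O(d)$-invariance forces $\Tr_2(\rho)=p\,|\hat\Phi\rangle\langle\hat\Phi|+\frac{1-p}{\dim\mathcal H_2}\Pi_{\mathcal H_2}$ with $p=\frac{d+2}{3d}$. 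The purity is therefore $\frac{d^2+6d+20}{9d(d+2)}$, whose numerator is irreducible over $\mathbb{Q}$, so it is not a ratio of Pochhammer symbols in $d$.

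The purities must therefore be carried as sums, which introduces a third summation index. Your two identities dispose of only two of the three sums, and nothing in the proposal shows that the resulting triple sum collapses to the single product \eqref{eq:Pk_uniform_squeezed_closed}. The small-case and normalization checks are consistent with the statement but cannot replace this step. The paper itself does not prove the theorem; it imports it from unpublished work.

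The route you list only as a cross-check is the one that actually closes the argument.
\begin{itemize}
\item Because $\mathbb{P}_k$ is an $m$-independent combination of permutation operators (\Cref{lem:explicit_P_k}), you may work in $\Sym^n(\mathbb{C}^d)^{\otimes 2}$ with $\mathrm{U}(d)$.
\item Put $w_j\coloneqq\Tr[\mathbb{P}_{2j}\rho^{\otimes 2}]$. Each $\mathbb{P}_{2j}\ket{\psi_N}^{\otimes 2}$ is $O(d)$-fixed in $\mathbb{S}_{(4N-2j,2j)}(\mathbb{C}^d)$, where the fixed line is unique. Cross terms vanish because these components are orthogonal and $\varphi_n(g)^{\otimes 2}$-invariant. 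Hence $\langle\psi_N|\varphi_n(g)|\psi_N\rangle^2=\sum_j w_j\,\omega_{(4N-2j,2j)}(g)$, with $\omega_{2\mu}$ the normalized spherical functions of $(\mathrm{U}(d),O(d))$.
\item These are $\omega_{2\mu}(g)=J^{(2)}_\mu(z)/J^{(2)}_\mu(1^d)$, with $z$ the eigenvalues of $gg^{T}$, which range over the whole torus. For example, $\omega_{(2)}(g)=\frac1d\Tr(gg^T)$.
\item Comparing with $J^{(2)}_{(N)}J^{(2)}_{(N)}=\sum_j c_j\,J^{(2)}_{(2N-j,j)}$ gives $w_j=c_j\,J^{(2)}_{(2N-j,j)}(1^d)/J^{(2)}_{(N)}(1^d)^2$.
\item Using $J^{(\alpha)}_\lambda(1^d)=\prod_{(a,b)\in\lambda}\bigl(d-a+1+\alpha(b-1)\bigr)$, this ratio becomes exactly $\bigl(\tfrac{d-1}{2}\bigr)_j\bigl(\tfrac d2+N\bigr)_{N-j}/\bigl(\tfrac d2\bigr)_N$.
\end{itemize}
All $d$-dependence is thus explained at once. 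What remains is to show that the $d$-independent Jack--Pieri coefficient $c_j$ equals $\frac{2^{2j+1}N!^2}{j!}\binom{2N-2j}{N-j}^2\frac{(2N-j+1)!\,(4N-4j+1)}{(4N-2j+2)!}$. It does for $N=1$, where $c_0=\frac13$ and $c_1=\frac23$, and for $N=2$, where $c_0=\frac{3}{35}$ and $c_1=\frac{8}{21}$. The Pieri rule gives $c_j$ as an explicit product, and simplifying it to this form is the step you would still have to supply.
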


\section{Efficient algorithm for computing expectation values of bosonic swap operators}\label{app:proof_bosonic_swap_efficient}
    For computing the LXEB reference value for a wider range of photonic quantum advantage experiments efficiently, we rely on the following fact:
     \begin{restatable}{proposition}{polynomialmachinery} \label{prop:polynomial_machinery}
         Consider a polynomials $P_i \in \mathbb{C}[x_1, \dots, x_k]$, and fix a multi-index $\bm{\alpha} \geq \bm{0}$ and $m \geq 1$. 
        The coefficient
        \begin{equation}
            [\bm{x}^{\bm{\alpha}}]\prod_{i=1}^m P_i(\bm{x})
            =[x_1^{\alpha_1}\cdots x_k^{\alpha_k}]\prod_{i=1}^m P_i(x_1, \dots, x_k)
        \end{equation}
        can be computed in $O(m \, M \log M )$ time using $O(M)$ memory where $M \coloneqq \prod_{i=1}^{k}(\alpha_i+1)$.
        Here, we assume that the polynomials $P_i$ are given explicitly and that
        only coefficients up to total degree $\bm{\alpha}$ are required.
    \end{restatable}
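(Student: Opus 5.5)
The approach is to multiply the polynomials one at a time while discarding every monomial that cannot contribute to $\bm{x}^{\bm{\alpha}}$. The key observation is that, in the product $\prod_{i=1}^m P_i$, a monomial $\bm{x}^{\bm{\beta}}$ with $\beta_j > \alpha_j$ for some $j$ can never contribute to the target coefficient. The reason is that all exponents are nonnegative, so multiplying by further factors only increases each $\beta_j$. We therefore work in the quotient ring
\[
R_{\bm{\alpha}} \coloneqq \mathbb{C}[x_1,\dots,x_k]/(x_1^{\alpha_1+1},\dots,x_k^{\alpha_k+1}).
\]
Its elements are represented by coefficient arrays of size $M=\prod_{i=1}^k(\alpha_i+1)$. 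The projection $\mathbb{C}[\bm{x}]\to R_{\bm{\alpha}}$ is a ring homomorphism that preserves the coefficient of $\bm{x}^{\bm{\alpha}}$. Hence it suffices to compute $\prod_i \bar P_i$ in $R_{\bm{\alpha}}$, where $\bar P_i$ denotes the truncation of $P_i$ to the box $\bm{0}\le\bm{\beta}\le\bm{\alpha}$, and then read off the top entry of the resulting array.

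The algorithm keeps a running array $Q_0=1$ and updates $Q_{i}=Q_{i-1}\cdot \bar P_i \bmod (x_j^{\alpha_j+1})_j$ for $i=1,\dots,m$. Only $Q_{i-1}$, $\bar P_i$ and the product need to be stored, which gives $O(M)$ memory. The counting assumes that each $\bar P_i$ can be loaded as an array of size at most $M$; this holds under the hypothesis that only coefficients inside the box are supplied. The overall time is then $m$ times the cost of one truncated multiplication.

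The main step is to show that a single truncated product of two $M$-sized arrays costs $O(M\log M)$. I would use a multidimensional FFT. First embed both arrays into a zero-padded array of size $\prod_j (2\alpha_j+1)\le 2^k M$. Then perform the $k$-dimensional cyclic convolution via FFT, which computes the exact (acyclic) product because the padding prevents wrap-around, and finally discard the entries outside the box. This costs $O(2^kM\log(2^kM))$. For constant $k$ this is $O(M\log M)$, which gives the claimed total $O(mM\log M)$.

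The obstacle is the $2^k$ padding overhead when $k$ is not constant. In the application $k=2$ (the $(n-q)^2q^2$-type sizes in Proposition~\ref{prop:efficient_swap_product}), so it is harmless. To remove it in general, I would try Kronecker substitution: map $x_j\mapsto y^{D_j}$ with $D_1=1$ and $D_{j+1}=D_j(2\alpha_j+1)$, which reduces the problem to a univariate product. Alternatively, one can note that the truncated product only needs the coefficients inside the box, so one can state the bound as $O(M\log M)$ up to a factor depending only on $k$. I would state this dependence explicitly and treat $k$ as fixed.
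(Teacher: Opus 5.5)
Your proof is correct and follows the paper's argument. Both reduce modulo $\langle x_1^{\alpha_1+1},\dots,x_k^{\alpha_k+1}\rangle$, multiply the $m$ factors sequentially in the quotient ring while storing only $O(1)$ coefficient arrays of size $M$, and charge $O(M\log M)$ per truncated multiplication via FFT-based convolution, which the paper simply cites; your zero-padding analysis usefully exposes the $2^k$ factor that the stated bound silently absorbs by treating $k$ as a constant. Two side remarks need correcting: Proposition~\ref{prop:efficient_swap_product} actually uses $k=4$ with $\bm{\alpha}=(n-q,n-q,q,q)$, which is still constant and so harmless, and Kronecker substitution with $D_{j+1}=D_j(2\alpha_j+1)$ does not remove the overhead, because the resulting univariate product again has size $\prod_j(2\alpha_j+1)$.
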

    \begin{proof}
        To extract the coefficient corresponding to $x_1^{\alpha_1}\cdots x_k^{\alpha_k}$, we only need to work in the ring $R$ defined as
        \begin{equation}
            R \coloneqq \mathbb{C}[x_1,\dots,x_k]/I,
        \end{equation}
        where $I$ is the truncation ideal
        $
            I \coloneqq \langle x_1^{\alpha_1+1},\dots,x_k^{\alpha_k+1}\rangle
            \triangleleft \mathbb{C}[x_1,\dots,x_k].
        $
        Roughly speaking, arithmetic in $R$ is just multivariate polynomial arithmetic in which every monomial
        $x_1^{\beta_1}\cdots x_k^{\beta_k}$ with $\beta_i>\alpha_i$ for some $i$ is set to zero.
        Let $Q_i \in R$ denote the image of $P_i$ under the quotient map, i.e., $Q_i\equiv P_i \pmod I$.
        Then
        \begin{equation}\label{eq:coef_trunc_eq}
        [\bm{x}^{\bm{\alpha}}]\prod_{i=1}^m P_i(\bm{x})
        =
        [\bm{x}^{\bm{\alpha}}]\prod_{i=1}^m Q_i(\bm{x}).
        \end{equation}
        Indeed, we can write $P_i = Q_i + H_i$ with $H_i \in I$ and expand $\prod_{i=1}^m (Q_i + H_i)$. Every term
        containing at least one factor $H_i$ lies in $I$ and thus it is $0$ in $R$.
    
        In the quotient ring $R$, a polynomial is uniquely determined by the coefficients of monomials
        $\bm{x}^{\bm{\beta}}=x_1^{\beta_1}\cdots x_k^{\beta_k}$ with $0\le \beta_i\le \alpha_i$.
        Hence, storing an element of $R$ requires
        $
            M \coloneqq \prod_{i=1}^{k}(\alpha_i+1)
        $
        coefficients, i.e., $O(M)$ memory.
        To compute the product $\prod_{i=1}^m Q_i$ in $R$, we perform all polynomial multiplications in the
        quotient ring $R$ (equivalently, we truncate modulo $I$ after each multiplication).
        This algorithm performs $O(m)$ multiplications in $R$ and stores only $O(1)$ elements of $R$,
        so the memory requirement is $O(M)$.
        The execution time is
        $
            O\bigl(m \; T_{\mathrm{mul}}(M)\bigr),
        $
        where $T_{\mathrm{mul}}(M)$ denotes the cost of multiplying two elements of $R$.
        In particular, naive dense multiplication gives $T_{\mathrm{mul}}(M)=O(M^2)$, whereas faster
        convolution-based implementations can reduce this to $O(M\log M)$ (see, e.g., Chapter 4 from Ref.~\cite{czapor_1992}).
        Finally, after computing $\prod_{i=1}^m Q_i \in R$, the desired coefficient is obtained by reading off the
        coefficient of $\bm{x}^{\bm{\alpha}}$ using \Cref{eq:coef_trunc_eq}.
    \end{proof}
    As a remark, in the special case where all factors are identical, i.e., $Q_i \equiv Q$ and we need $Q^m$, we use
    binary (square-and-multiply) exponentiation in $R$. This strategy goes as follows: initialize $E \leftarrow 1$ and $A \leftarrow Q$, and while $m>0$, repeat
    \begin{equation}
        \text{if $m$ is odd then } E \leftarrow E\cdot A;\quad
        A \leftarrow A\cdot A;\quad
        m \leftarrow \left\lfloor \frac{m}{2}\right\rfloor,
    \end{equation}
    where every multiplication is carried out in $R$, i.e., followed by discarding
    all monomials $x_1^{\beta_1}\cdots x_k^{\beta_k}$ with $\beta_i>\alpha_i$ for some $i$.
    Upon termination, $E=Q^m$ in $R$.
    This algorithm performs $O(\log m)$ multiplications in $R$ instead of $O(m)$.

    \medskip
    Using \Cref{prop:polynomial_machinery}, we can prove the following statement:
    \swapexpval*
    \begin{proof}
        We can write
        \begin{equation}
            \rho = \bigotimes_{a=1}^m \sigma^{(a)}, \qquad         \sigma^{(a)} = \sum_{n, m \geq 0} c_{n,m}^{(a)} \ketbra{n}{m}.
        \end{equation}
        Then, we can write its $n$-particle component as
        \begin{equation}
            \rho |_{n} = \sum_{\substack{\bm{n}\geq 0 \\ |\bm{n}| = n}}
            \sum_{\substack{\bm{m}\geq 0 \\ |\bm{m}|  = n}}
            \prod_{a=1}^m c^{(a)}_{n_a, m_a} \ketbra{\bm{n}}{\bm{m}}.
        \end{equation}
        Using the algorithm in \Cref{prop:polynomial_machinery}, we can calculate its norm efficiently using
        \begin{equation}
            \Tr [\rho |_{n}] = \sum_{\substack{\bm{n}\geq 0 \\ |\bm{n}| = n}}
            \prod_{a=1}^m c^{(a)}_{n_a, n_a} = [x^n] \prod_{a=1}^m P_{(a)}(x),
        \end{equation}
        where $P_{(a)}(x) \coloneqq \sum_{k=0}^n c_{k, k}^{(a)} x^k$. We also remark that this method yields an efficient algorithm for computing the global photon-number detection probabilities~\cite{Avagyan_2023}.
        
        In the final result, we can divide by $ \Tr [\rho |_{n}]^2$, and we could focus on $\rho |_n$ in the following.
        Using \Cref{eq:bosonic_swap_boxed}, the expectation value of the $q$-fold bosonic swap operator can be expressed as
        \begin{align}\begin{split}
            \Tr [ \mathbb{S}_q \rho |_n^{\otimes 2}]
            = 
            \frac{1}{\binom{n}{q}^2} 
            \hspace{-1em}
            \sum_{\substack{\bm{q}, \bm{r}  \geq \bm{0} \\ |\bm{q}| = |\bm{r}|  = q}}
            \hspace{-1em}
            A(\bm{r}, \bm{q}) A(\bm{q}, \bm{r}),
        \end{split}
        \end{align}
        where we denoted
        \begin{equation}
            A(\bm{r}, \bm{q}) \coloneqq 
            \Tr [\frac{(\bm{a}^\dagger)^{\bm{r}}}{\sqrt{\bm{r}!}} \frac{\bm{a}^{\bm{q}}}{\sqrt{\bm{q}!}}\, \rho |_n
            ].
        \end{equation}
        Furthermore, we directly calculate
        \begin{equation}
            A(\bm{r}, \bm{q})
            = 
            \hspace{-1.5em} \sum_{\substack{\bm{n}, \bm{m}\geq 0 \\ |\bm{n}| = |\bm{m}| = n}}
            \hspace{-1.5em} c_{\bm{n}, \bm{m}} 
            \Tr [\frac{(\bm{a}^\dagger)^{\bm{r}}}{\sqrt{\bm{r}!}} \frac{\bm{a}^{\bm{q}}}{\sqrt{\bm{q}!}} \ketbra{\bm{n}}{\bm{m}} ]
            = 
             \sum_{\substack{\bm{k}\geq \bm{0} \\ |\bm{k}| = n-q}}
             c_{\bm{k} + \bm{q}, \bm{k} +\bm{r}} 
            \sqrt{\binom{\bm{k} + \bm{q}}{\bm{k}} \binom{\bm{k} + \bm{r}}{\bm{k}}},
        \end{equation}
        where we denoted $c_{\bm{n}, \bm{m}} \coloneqq \prod_{a=1}^m c^{(a)}_{n_a, m_a} $ for brevity and used that
        \begin{equation}
            \Tr [(\bm{a}^\dagger)^{\bm{r}} \bm{a}^{\bm{q}} \ketbra{\bm{n}}{\bm{m}} ]
            = \frac{\sqrt{\bm{n}! \bm{m}!}}{(\bm{n}-\bm{q})!} \delta_{\bm{n}-\bm{q}, \bm{m}-\bm{r}} \mathbbm{1}_{\bm{n} \geq \bm{q}},
        \end{equation}
        where $\mathbbm{1}_{\bm{n} \geq \bm{q}}$ denotes the indicator function yielding $1$ when $\bm{n} \geq \bm{q}$, otherwise $0$. Here, $\bm{n} \geq \bm{q}$ is understood elementwise, i.e., we write \(\bm n \ge \bm q\) when \(n_i \ge q_i\) for every component \(i\).
        As a result, we can write
        \begin{equation}
             \Tr [ \mathbb{S}_q \rho |_n^{\otimes 2}]
             =
             \frac{1}{\binom{n}{q}^2} 
            \sum_{\substack{\bm{q}, \bm{r}, \bm{k}, \bm{l}  \geq \bm{0} \\ |\bm{q}| = |\bm{r}|  = q \\ |\bm{k}| = |\bm{l}| = n-q}} 
            \prod_{a=1}^m K^{(a)}_{k_a, l_a, q_a, r_a},
        \end{equation}
        where we have defined
        \begin{subequations}
        \begin{align}
            K^{(a)}_{k_a, l_a, q_a, r_a} &\coloneqq L^{(a)}_{k_a, q_a, r_a} L^{(a)}_{l_a, r_a,  q_a}, \\
            L^{(a)}_{x, y, z} &\coloneqq
            c^{(a)}_{x+y, x+z}
            \sqrt{\binom{x+y}{x} \binom{x+z}{z}}
        \end{align}
        \end{subequations}
        The key to give an efficient algorithm for computing $\Tr [\mathbb{S}_q (\rho_{(n)})^{\otimes 2}]$ is to define a 4-variable polynomial with $K^{(a)}_{k_a, l_a, q_a, r_a}$ as coefficients, where each variable corresponds to an index in $K^{(a)}_{k_a, l_a, q_a, r_a}$. More concretely, we define
        \begin{equation}
            K^{(a)}(t, u, v, w) \! \coloneqq \!
            \sum_{j=0}^{n-q} \sum_{k=0}^{n-q} \sum_{l=0}^j \sum_{m=0}^k
            \! K^{(a)}_{j,k,l,m} t^j u^k v^l w^m.
        \end{equation}
        Using this polynomial, we can easily rewrite $ \Tr [ \mathbb{S}_q \rho|_n^{\otimes 2}]$ using the coefficient of $\prod_a K^{(a)}$ corresponding to the term $t^{n-q} u^{n-q} v^q w^q$, yielding
        \begin{equation} \label{eq:swap_expval_algorithm_form}
             \Tr [ \mathbb{S}_q \rho_{(n)}^{\otimes 2}]
             \!=\!
             \frac{1}{\binom{n}{q}^2\Tr [\rho |_{n}]^2}[t^n u^n v^q w^q] \prod_{a=1}^d K^{(a)}(t, u, v, w) .
        \end{equation}
        One can compute this quantity using $O((n-q)^2 q^2)$ memory in $O(m (n-q)^2 q^2 \log( (n-q) q) )$ time according to \Cref{prop:polynomial_machinery}, implying the same for $\Tr [ \mathbb{S}_q \rho_{(n)}^{\otimes 2}]$.
    \end{proof}

\section{Bounds on the anticoncentration scores for Boson Sampling, Scattershot Boson Sampling and Gaussian Boson Sampling}\label{app:bounds}

    \subsection{Proof of \Cref{eq:ACaverage}}\label{app:proof_ac_bound}
        \begin{restatable}{lemma}{acbound}\label{lem:ac}
            We can upper bound the anticoncentration score $\mathrm{AC}(m, n)$ by
            \begin{equation}
                \mathrm{AC}(m, n)\le \const
            \end{equation}
            for all $m\ge n\ge 1$ and $m \geq 2$.
        \end{restatable}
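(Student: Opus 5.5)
We start from the closed form in \Cref{prop:bs_lxeref_simplified}. Multiplying by $|\mathcal S_{m,n}|=(m)_n/n!$ gives
\begin{equation}
    \mathrm{AC}(m,n)=\sum_{r=0}^{\lfloor n/2\rfloor} T_r,\qquad
    T_r\coloneqq 2^n (m)_n\,\frac{2n-4r+1}{2n-2r+1}\,
    \frac{16^{-r}\binom{2r}{r}}{\binom{2n-2r}{n-r}\left(\frac m2\right)_r\left(\frac{m+1}{2}\right)_{n-r}}.
\end{equation}
The plan is to bound the terms $T_r$ one by one: first handle $T_0$ exactly, then control the ratios $T_{r+1}/T_r$, and finally sum the resulting series.

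For the leading term, use the duplication formula $(m)_{2n}=4^n(\frac m2)_n(\frac{m+1}2)_n$. After simplification, $T_0$ is of the form $\frac{2n+1}{2n+1}\cdot 2^n(m)_n/[\binom{2n}{n}(\frac{m+1}2)_n]$. We have $\binom{2n}{n}=4^n(\tfrac12)_n/n!$ and $(m)_n=2^n(\frac m2)_{\lceil\cdot\rceil}\cdots$; it is simpler to write $(m)_n/(\frac{m+1}{2})_n=\prod_{j<n}\frac{2(m+j)}{m+1+2j}$. This gives
\begin{equation}
    T_0=\frac{n!}{(\tfrac12)_n}\prod_{j=0}^{n-1}\frac{m+j}{m+1+2j}\cdot \frac{2^{n}\cdot 2^n}{4^n}\cdot(\ldots).
\end{equation}
Rather than tracking these constants, I will use the Gamma-ratio estimate $\Gamma(x+\tfrac12)/\Gamma(x)\le\sqrt{x}$ (Gautschi/Wendel), which I assume is collected in \Cref{app:useful_comb}. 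From it we expect $T_0\le c\,(1+m/n)$. Heuristically this matches the known limit $1+m/n$: the product $\prod_j(1+\frac{j}{m})/(1+\frac{2j+1}{m})\cdot\sqrt{\pi n}$ behaves like $\sqrt{\cdot}$ times an exponential factor. The exact constant $19.1$ should come out of making these Wendel-type bounds explicit.

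For the ratio of consecutive terms, a direct computation gives
\begin{equation}
    \frac{T_{r+1}}{T_r}
    =\frac{(2r+1)(2r+2)}{16(r+1)^2}\cdot
    \frac{\binom{2n-2r}{n-r}}{\binom{2n-2r-2}{n-r-1}}\cdot
    \frac{\frac{m+1}{2}+n-r-1}{\frac m2+r}\cdot(\text{ratio of }\tfrac{2n-4r+1}{2n-2r+1}).
\end{equation}
The middle factor is at most $4$ and the first is about $\frac14$. So the ratio is roughly $\frac{m+2n-2r}{m+2r}$, times a bounded factor $\frac{2r+1}{2r+2}\le1$. This ratio exceeds $1$ for small $r$ when $m\lesssim n$. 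Consequently the sum is not dominated by $T_0$ alone, and this is the main obstacle.

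To handle it, I would compare $T_r$ directly with $T_0$ via the product $\prod_{s<r}\frac{(s+\frac12)(m+2n-2s-1)}{(s+1)(m+2s)}$. This product is a ratio of Pochhammer symbols that can be summed in hypergeometric form: $\sum_r T_r/T_0$ is a terminating ${}_3F_2$-like sum, $\sum_r\frac{(\frac12)_r(\frac{1-m}{2}-n)_r(\ldots)}{r!(\frac m2)_r}$. The strategy is then to bound it crudely by replacing the factor $\frac{2n-4r+1}{2n-2r+1}\le1$. After that replacement, the remaining sum should be evaluable exactly by Pfaff–Saalschütz or Chu–Vandermonde (\Cref{prop:elementary_vandermonde}), which yields a closed form.

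Estimating that closed form with the same Gamma-ratio bounds should give a total of at most a constant times $(1+m/n)$. If the exact summation fails, the fallback is to split at $r^*\approx n/2$ and bound the unimodal sequence by (number of terms)$\times$(maximal term). In that case the Stirling estimates must produce a compensating $1/\sqrt n$ factor in the maximal term, which is where the explicit constant $19.1$ would be tuned.
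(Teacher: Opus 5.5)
There is a genuine gap. You never determine where the mass of $\sum_r T_r$ lies, and both concrete strategies you propose provably lose a factor of order $\sqrt{n}$. So neither can give a bound that stays bounded when $m=\Theta(n)$.

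Write $m=\alpha n$ with $\alpha$ fixed and $s=n-2r$, so that $\frac{2n-4r+1}{2n-2r+1}=\frac{2s+1}{n+s+1}$. The terms have the profile $T_r\approx\frac{2s+1}{n}\,e^{-s^2/(2(1+\alpha)n)}$; this is what \Cref{lemma:ac_asymptotic} establishes, and summing it over $s\equiv n \pmod 2$ gives $1+\alpha$. Consistently with your ratio $T_{r+1}/T_r\approx\frac{m+2n-2r}{m+2r}$, the sequence increases not just for small $r$ but all the way up to $s\approx\sqrt{(1+\alpha)n}$. The sum is therefore carried by a window of width $\sqrt{(1+\alpha)n}$ in $s$ next to $r=\lfloor n/2\rfloor$. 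By contrast, $T_0\lesssim\sqrt{n}\,e^{-n/(2(1+\alpha))}$ is exponentially small, so your heuristic linking $T_0$ to the limit $1+m/n$ is off.

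Precisely in that window, the factor $\frac{2s+1}{n+s+1}$ is of order $s/n$. If you replace it by $1$, the terms become $\approx e^{-s^2/(2(1+\alpha)n)}$. The modified sum is then $\Theta(\sqrt{(1+\alpha)n})$, which diverges for fixed $m/n$, so no evaluation of it, exact or approximate, can yield $O(1+m/n)$. The exact evaluation you hope for is also not available: after the replacement, $\tilde T_r/\tilde T_0=(-1)^r\frac{(\frac12)_r(\frac12-n)_r(\frac{1-m}{2}-n)_r}{r!\,(-n)_r(\frac m2)_r}$. This is an artificially truncated ${}_3F_2$ at argument $-1$, which is not of Pfaff--Saalsch\"utz or Chu--Vandermonde type.

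Your fallback fails for the same reason. The maximal term sits at $s\approx\sqrt{(1+\alpha)n}$ and has size $\Theta(\sqrt{(1+\alpha)/n})$. So (number of terms)$\times$(maximal term) is again $\Theta(\sqrt{(1+\alpha)n})$. A $1/\sqrt n$ factor in the maximal term is not enough; you would need $1/n$, which is false.

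The missing ingredient is a uniform Gaussian-decay bound in $s$ that keeps the vanishing factor $2s+1$. The paper obtains it as follows:
\begin{itemize}
\item It rewrites $T_r$ with the duplication formula as a ratio of Gamma functions.
\item It bounds the $m$-dependent part via Stirling by $\frac{e^{3/2}}{2\pi\sqrt2}e^{n\phi_\alpha(r/n)}$. Since $\phi_\alpha(\frac12)=\phi_\alpha'(\frac12)=0$ and $\phi_\alpha''\le-\frac{4}{\alpha+1}$, this gives $\phi_\alpha(x)\le-\frac{2}{\alpha+1}(x-\frac12)^2$.
\item It bounds the remaining Gamma ratios by Gautschi's inequality.
\item It sums $\frac{s+1}{n}e^{-s^2/(2(\alpha+1)n)}$ over $n/4<r\le n/2$ to get $O(\alpha+1)$, while $r=0$ and $1\le r\le n/4$ are exponentially suppressed.
\item It uses the hypothesis $m\ge n$ (so $\alpha+1\ge 2$) to absorb the $1/n$ and $\sqrt{(\alpha+1)/n}$ remainders into multiples of $\alpha+1$, giving $19.05<19.1$.
\end{itemize}
Your ratio computation could become a proof along these lines. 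You would anchor the products at the top term $T_{\lfloor n/2\rfloor}=O(1/n)$ instead of at $T_0$, and show $T_r\le C\,\frac{s+1}{n}\,e^{-c s^2/((1+\alpha)n)}$ uniformly. The proposal as written contains no such step.
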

    
        \begin{proof}
            We may write
            \begin{equation}
                \mathrm{AC}(m,n)=\sum_{r=0}^{\lfloor n/2\rfloor}T_{n,m,r},
            \end{equation}
            where
            \begin{equation}
                T_{n,m,r}
                =
                \binom{n+m-1}{n}\,2^{n-4r} n!\,
                \frac{2n-4r+1}{2n-2r+1}
                \frac{
                \binom{2r}{r}
                }{\binom{2n-2r}{\,n-r\,}
                \left(\frac m2\right)_r
                \left(\frac{m+1}{2}\right)_{n-r}
                }.
            \end{equation}
            Rewriting the binomial coefficients and Pochhammer symbols in terms of Gamma functions, and using the duplication formula (\Cref{prop:duplication_formula})
            one obtains
            \begin{equation}
                T_{n,m,r}
                =
                \frac{
                \sqrt{\pi}\,(2n-4r+1)\,\Gamma(n+m)\,
                \Gamma\!\left(r+\frac12\right)\,
                \Gamma(n-r+1)
                }{
                2^{n+m}\,
                \Gamma\!\left(\frac m2+r\right)\,
                \Gamma(r+1)\,
                \Gamma\!\left(n-r+\frac32\right)\,
                \Gamma\!\left(\frac m2+n-r+\frac12\right)
                }.
            \end{equation}
            In particular, $T_{n,m,r}$ is nonnegative.
            We first bound the Gamma-factor containing the dependence on \(m\). Let $\alpha \coloneqq m / n$ for brevity and let
            \begin{equation}
                A\coloneqq(1+\alpha)n,\qquad
                B\coloneqq\frac{\alpha n}{2}+r,\qquad
                C\coloneqq\frac{\alpha n}{2}+n-r+\frac12.
            \end{equation}
            Using the standard Stirling bounds from \Cref{prop:stirling_bounds}, we obtain
            \begin{align}\begin{split}
                \frac{\Gamma(A)}{2^A\Gamma(B)\Gamma(C)}
                &\le
                \frac{e}{2\pi}\,
                \frac{A^{A-\frac12}e^{-A}}{2^A\,B^{B-\frac12}e^{-B}\,C^{C-\frac12}e^{-C}}
                \\
                &=
                \frac{e^{3/2}}{2\pi}\,
                \sqrt{\frac{BC}{A}}\,
                \frac{A^A}{2^A B^B C^C}
                \\
                &\le
                \frac{e^{3/2}}{2\pi}\,
                \sqrt{\frac{B}{A}}\,
                \frac{A^A}{2^A B^B (C-\frac12)^{C-\frac12}}
                \\
                &=
                \frac{e^{3/2}}{2\pi}\,
                \sqrt{\frac{\frac{\alpha}{2}+x}{1+\alpha}}\,
                e^{\,n\phi_\alpha(x)}
                \\
                &\le
                \frac{e^{3/2}}{2\pi\sqrt2} e^{\,n\phi_\alpha(x)},
            \end{split}
            \end{align}
            since \(\frac{\alpha}{2}+x\le \frac{\alpha+1}{2}\) and we denoted
            \begin{equation}
                x\coloneqq\frac{r}{n}\in\Bigl[0,\frac12\Bigr],
                \qquad
                \phi_\alpha(x)
                \coloneqq
                \log\!\frac{(1+\alpha)^{1+\alpha}}
                {2^{1+\alpha}\left(\frac\alpha2+x\right)^{\frac\alpha2+x}
                \left(1+\frac\alpha2-x\right)^{1+\frac\alpha2-x}}.
            \end{equation}
            Next, for the remaining Gamma function ratios we use sharper explicit bounds. For \(r\ge 1\), Gautschi's inequality~\ref{prop:gautschi} gives two following two upper bounds
            \begin{equation}
                \frac{\Gamma\!\left(r+\frac12\right)}{\Gamma(r+1)}
                \le \frac{1}{\sqrt r}, \qquad \frac{\Gamma(n-r+1)}{\Gamma\!\left(n-r+\frac32\right)}
                \le
                \frac{1}{\sqrt{n-r+\frac12}}.
            \end{equation}
            Hence, for every \(r\ge 1\),
            \begin{equation}\label{eq:T_master_improved_app}
                T_{n,\alpha n,r}
                \le
                \frac{e^{3/2}}{2\sqrt{2\pi}}
                \,
                \frac{2n-4r+1}{\sqrt{r(n-r+\frac12)}}\,
                e^{\,n\phi_\alpha(r/n)}.
            \end{equation}
            For \(r=0\), using \(\Gamma(\tfrac12)=\sqrt{\pi}\), we obtain
            \begin{equation}\label{eq:T0_improved_app}
                T_{n,\alpha n,0}
                \le
                \frac{e^{3/2}}{2\sqrt2}
                \frac{2n+1}{\sqrt{n+\frac12}}\,
                e^{\,n\phi_\alpha(0)}.
            \end{equation}
        
            We now split the sum into three parts: \(r=0\), \(1\le r\le n/4\), and \(n/4<r\le n/2\).
            Let us first bound the $1 \le r \le n/4$ part. Since
            \begin{equation}
                \phi_\alpha'\!(x)=
                \log\!\frac{1+\frac\alpha2-x}{\frac\alpha2+x},
                \qquad
                \phi_\alpha''(x)=
                -\frac{1}{\frac\alpha2+x}
                -\frac{1}{1+\frac\alpha2-x},
            \end{equation}
            we have
            \begin{equation}
                \phi_\alpha\!\left(\frac12\right)=0,\qquad
                \phi_\alpha'\!\left(\frac12\right)=0,\qquad
                \phi_\alpha''(x)\le -\frac{4}{\alpha+1}
                \qquad (x\in[0,1/2]).
            \end{equation}
            Hence Taylor's theorem with Lagrange remainder yields
            \begin{equation}\label{eq:phi_quadratic_app}
                \phi_\alpha(x)\le -\frac{2}{\alpha+1}\left(x-\frac12\right)^2.
            \end{equation}
            In particular, for \(x\in[0,1/4]\),
            \begin{equation}
                \phi_\alpha(x)\le -\frac{1}{8(\alpha+1)},
            \end{equation}
            and therefore
            \(
                e^{\,n\phi_\alpha(r/n)}\le e^{-n/(8(\alpha+1))}
            \).
            Moreover, for \(1\le r\le n/4\), we have \(n-r+\tfrac12\ge 3n/4\), so
            \begin{equation}
                \frac{2n-4r+1}{\sqrt{r(n-r+\frac12)}}
                \le
                \frac{2n+1}{\sqrt{3nr/4}}
                \le
                2\sqrt{3}\,\sqrt{\frac nr}.
            \end{equation}
            Thus \Cref{eq:T_master_improved_app} implies
            \begin{equation}
                T_{n,\alpha n,r}
                \le
                \frac{\sqrt{3} \, e^{3/2}}{\sqrt{2\pi}}
                \,\sqrt{\frac nr}\,e^{-n/(8(\alpha+1))}.
            \end{equation}
            Summing over \(1\le r\le n/4\), and using
            \begin{equation}
                \sum_{r=1}^{\lfloor n/4\rfloor}\frac{1}{\sqrt r}
                \le
                2\sqrt{\lfloor n/4\rfloor}
                \le
                \sqrt n,
            \end{equation}
            we obtain
            \begin{equation}
                \sum_{1\le r\le n/4}T_{n,\alpha n,r}
                \le
                \frac{\sqrt{3} \, e^{3/2}}{\sqrt{2\pi}}n\,e^{-n/(8(\alpha+1))}.
            \end{equation}
            Since
            \begin{equation}
                \sup_{t>0} t\,e^{-t/(8(\alpha+1))}
                =
                \frac{8(\alpha+1)}{e},
            \end{equation}
            it follows that
            \begin{equation}\label{eq:region1_improved_app}
                \sum_{1\le r\le n/4}T_{n,\alpha n,r}
                \le
                \frac{8\sqrt{3e}}{\sqrt{2\pi}}(\alpha+1).
            \end{equation}
    
            For the second region $n/4<r\le n/2$,
            set
            \begin{equation}
                s\coloneqq n-2r\ge 0.
            \end{equation}
            Then \(2n-4r+1=s+1\), and \(r\ge n/4\), \(n-r+\tfrac12\ge n/2\), so
            \begin{equation}
                \frac{2n-4r+1}{\sqrt{r(n-r+\frac12)}}
                \le
                \sqrt{8}\,\frac{s+1}{n}.
            \end{equation}
            Moreover, by \Cref{eq:phi_quadratic_app}, we can bound $\phi_\alpha$ as
            \begin{equation}
                \phi_\alpha\!\left(\frac rn\right)
                \le
                -\frac{2}{\alpha+1}\left(\frac rn-\frac12\right)^2
                =
                -\frac{s^2}{2(\alpha+1)n^2},
            \end{equation}
            yielding an upper bound for $T_{n,\alpha n,r}$ in the form of
            \begin{equation}
                T_{n,\alpha n,r}
                \le
                \frac{e^{3/2}}{\sqrt\pi}
                \,\frac{s+1}{n}\,e^{-s^2/(2(\alpha+1)n)}.
            \end{equation}
            Summing over this region gives
            \begin{equation}
                \sum_{n/4<r\le n/2}T_{n,\alpha n,r}
                \le
                \frac{e^{3/2}}{n\sqrt\pi}\sum_{s =0}^\infty (s+1)e^{-s^2/(2(\alpha+1)n)}.
            \end{equation}
            Setting \(a\coloneqq2(\alpha+1)n\), we can use
            \begin{equation}
                \sum_{s =0}^\infty e^{-s^2/a}\le 1+\frac{\sqrt{\pi a}}{2},
                \qquad
                \sum_{s = 1}^\infty s\,e^{-s^2/a}
                \le
                \frac{a}{2}+\frac{\sqrt{\pi a}}{2},
            \end{equation}
            and conclude that
            \begin{equation}
                \sum_{s\ge 0}(s+1)e^{-s^2/a}
                \le
                1+\frac{a}{2}+\sqrt{\pi a}.
            \end{equation}
            Hence
            \begin{equation}\label{eq:region2_improved_app}
                \sum_{n/4<r\le n/2}T_{n,\alpha n,r}
                \le
                \frac{e^{3/2}}{\sqrt\pi}\left(
                    \frac{1}{n}+(\alpha+1)+\sqrt{\frac{2\pi(\alpha+1)}{n}}
                \right).
            \end{equation}
        
            Finally, for the term $r=0$, \Cref{eq:phi_quadratic_app} at $x=0$ gives
            \begin{equation}
                \phi_\alpha(0)\le -\frac{1}{2(\alpha+1)}.
            \end{equation}
            Therefore \Cref{eq:T0_improved_app} yields
            \begin{equation}
                T_{n,\alpha n,0}
                \le
                \frac{e^{3/2}}{2\sqrt2}
                \frac{2n+1}{\sqrt{n+\frac12}}\,
                e^{-n/(2(\alpha+1))}.
            \end{equation}
            Since
            \begin{equation}
                (2n+1)^2\le 6n\left(n+\frac12\right)
                \qquad (n\ge 1),
            \end{equation}
            we have
            \begin{equation}
                \frac{2n+1}{\sqrt{n+\frac12}}\le \sqrt{6n},
            \end{equation}
            and hence
            \begin{equation}\label{eq:r0_improved_app}
                T_{n,\alpha n,0}
                \le
                \frac{e^{3/2}}{2}
                \sqrt{3n}\,e^{-n/(2(\alpha+1))}.
            \end{equation}
        
            Since \(\alpha\ge 1\), we have \(A\coloneqq \alpha+1\ge 2\). Hence, we can write
            \begin{equation}
                \sqrt{A}\le \frac{A}{\sqrt2},
                \qquad
                1\le \frac{A}{2}.
            \end{equation}
            Since also \(n\ge 1\), it follows that
            \begin{equation}
                \frac1n \le \frac{A}{2},
                \qquad
                \sqrt{\frac{A}{n}} \le \frac{A}{\sqrt2}.
            \end{equation}
            Applying these estimates to \Cref{eq:region1_improved_app,eq:region2_improved_app}, we obtain
            \begin{subequations}
            \begin{align}
                \sum_{1\le r\le n/4}T_{n,\alpha n,r}
                &\le
                \frac{8\sqrt{3e}}{\sqrt{2\pi}}\,A,
                \\
                \sum_{n/4<r\le n/2}T_{n,\alpha n,r}
                &\le
                \frac{e^{3/2}}{\sqrt\pi}\left(
                    \frac{A}{2}
                    +
                    A
                    +
                    \sqrt{\pi}\,A
                \right)=
                \frac{e^{3/2}}{\sqrt\pi}
                \left(
                    \frac32+\sqrt\pi
                \right)A.
            \end{align}
            \end{subequations}
            Moreover, using
            \begin{equation}
                \sup_{t>0}\sqrt{t}\,e^{-t/(2A)}
                =
                \sqrt{\frac{A}{e}},
            \end{equation}
            in \Cref{eq:r0_improved_app}, we get
            \begin{equation}
                T_{n,\alpha n,0}
                \le
                \frac{e^{3/2}}{2}\sqrt{3}\,\sqrt{\frac{A}{e}}
                =
                \frac{e\sqrt3}{2}\sqrt{A}
                \le
                \frac{e\sqrt3}{2\sqrt2}\,A.
            \end{equation}
            Therefore
            \begin{equation}
                \mathrm{AC}(\alpha n,n)
                =
                \sum_{r=0}^{\lfloor n/2\rfloor}T_{n,\alpha n,r}
                \le
                \left(
                    \frac{8\sqrt{3e}}{\sqrt{2\pi}}
                    +
                    \frac{e^{3/2}}{\sqrt\pi}\left(\frac32+\sqrt\pi\right)
                    +
                    \frac{e\sqrt3}{2\sqrt2}
                \right)(\alpha+1).
            \end{equation}
            The numerical constant is
            \begin{equation}
                \frac{8\sqrt{3e}}{\sqrt{2\pi}}
                +
                \frac{e^{3/2}}{\sqrt\pi}\left(\frac32+\sqrt\pi\right)
                +
                \frac{e\sqrt3}{2\sqrt2}
                \approx 19.053050 < 19.1.
            \end{equation}
            Hence, we can finally write
            \begin{equation}
                \mathrm{AC}(m,n)\le 19.1\,\left(\frac{m}{n}+1\right).
            \end{equation}
        \end{proof}

    \subsection{Asymptotic behavior of $\mathrm{AC}(m, n)$} \label{app:ac_asymptotic}
        \begin{lemma}\label{lemma:ac_asymptotic}
            Let \((m_n)_{n\ge1}\) be a sequence of positive integers such that
            \(
            \frac{m_n}{n}\to \alpha\in[0,\infty).
            \)
            Then
            \begin{equation}
            \mathrm{AC}(m_n,n)\to 1+\alpha.
            \end{equation}
        \end{lemma}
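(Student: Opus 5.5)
The plan is to work directly with the exact term-wise representation from the proof of \Cref{lem:ac}: $\mathrm{AC}(m,n)=\sum_{r=0}^{\lfloor n/2\rfloor}T_{n,m,r}$, with $T_{n,m,r}$ written in Gamma-function form. I would reindex by $s\coloneqq n-2r\in\{0,1,\dots,n\}$, where $s\equiv n \pmod 2$. The claim is that, as $n\to\infty$, the sum is a Riemann sum of a Gaussian-type profile living on the scale $s\sim\sqrt n$. Along the way I would also fix how the $T_{n,m,r}$ are normalized, since the bookkeeping of constants is where the argument could go wrong.

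\textbf{Step 1 (pointwise asymptotics).} Write $A_n\coloneqq n+m_n$, so that $\frac{m}{2}+r=\frac{A_n}{2}-\frac s2$ and $\frac m2+n-r+\frac12=\frac{A_n}{2}+\frac s2+\frac12$. Apply the duplication formula (\Cref{prop:duplication_formula}) to $\Gamma(A_n)$. This turns the $m$-dependent factor into
\[
\frac{1}{2\sqrt\pi}\,\frac{\Gamma(\frac{A_n}{2})\,\Gamma(\frac{A_n}{2}+\frac12)}{\Gamma(\frac{A_n}{2}-\frac s2)\,\Gamma(\frac{A_n}{2}+\frac12+\frac s2)}.
\]
A second-order expansion of $\log\Gamma$ (with $\psi'(x)\sim 1/x$) then shows this equals $\frac{1}{2\sqrt\pi}e^{-s^2/(2A_n)}(1+o(1))$, uniformly for $s=O(\sqrt n)$.

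The remaining factors are $\Gamma(r+\frac12)/\Gamma(r+1)$ and $\Gamma(n-r+1)/\Gamma(n-r+\frac32)$. For $r\approx n/2$, Gautschi's inequality (\Cref{prop:gautschi}) together with the standard asymptotic give each of them as $\sqrt{2/n}\,(1+o(1))$. Combining, for $s=O(\sqrt n)$,
\[
T_{n,m_n,r}=c\,\frac{s+1}{n}\,e^{-s^2/(2(1+\alpha)n)}\,(1+o(1)),
\]
where $c$ is an explicit absolute constant to be tracked carefully, and $A_n/n\to1+\alpha$.

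\textbf{Step 2 (Riemann sum).} Set $u=s/\sqrt n$. The sum over admissible $s$ runs in steps of $2$, so it becomes $\frac12\int_0^\infty u\,e^{-u^2/(2(1+\alpha))}\,du\,(1+o(1))$, times the constant $c$. The integral equals $1+\alpha$. The constant $c$, together with the factor $\frac12$ from parity, must combine to give exactly $1+\alpha$. I would verify this normalization independently by checking the dilute limit $\alpha\to\infty$ against the known value $|\mathcal S_{m,n}|\,\mathrm{LXE}_{\mathrm{ref}}\approx 1+n/m$ behaviour. I would also check a small exact case, for instance $n=1$, where $\mathrm{AC}=2m/(m+1)$. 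This is exactly the place where a stray factor of $2$ can slip in, so I would re-derive $T$ from \Cref{prop:bs_lxeref_simplified} if necessary.

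\textbf{Step 3 (tails and uniformity; main obstacle).} To justify exchanging limit and sum, I would use dominated convergence with the explicit bounds already proven in \Cref{lem:ac}.
\begin{itemize}
\item For $n/4<r\le n/2$, the bound $T\le \frac{e^{3/2}}{\sqrt\pi}\frac{s+1}{n}e^{-s^2/(2(\alpha+1)n)}$ is, after rescaling, an integrable majorant in $u$.
\item The regions $r=0$ and $1\le r\le n/4$ are bounded there by quantities that are $O(n\,e^{-n/(8(\alpha+1))})$, which vanish as $n\to\infty$.
\end{itemize}
These bounds were derived under $m\ge n$, so the main obstacle is the regime $\alpha<1$, including $\alpha=0$ with, e.g., $m$ fixed. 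There I would redo the Stirling estimate for $\phi_{\alpha_n}$ with $\alpha_n=m_n/n$. Its concavity bound $\phi''\le-4/(\alpha+1)$ holds for any $\alpha\ge0$, which again gives Gaussian decay $e^{-s^2/(2(1+\alpha_n)n)}$ and an exponentially small contribution from $x\le 1/4$. The only $m$-specific nuisance is the Stirling error terms when $m_n$ stays bounded: $\Gamma(\frac m2+r)$ is then evaluated at a large argument anyway, since $r\ge n/4$ in the bulk. So the bounds should still hold, with constants uniform in $m\ge2$. With the majorant established, dominated convergence gives $\mathrm{AC}(m_n,n)\to1+\alpha$.
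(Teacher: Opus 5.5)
Your proposal follows essentially the same route as the paper. You reindex by $s=n-2r$ and apply the duplication formula to $\Gamma(n+m)$, so the $m$-dependent part becomes a Gamma ratio equal to $e^{-s^2/(2(1+\alpha_n)n)}(1+o(1))$ uniformly on $s\le K\sqrt n$. You then show the two remaining Gamma ratios contribute $2/n$, read the main part as a Riemann sum with mesh $2/\sqrt n$, and control the rest through the concavity bound $\phi_\beta''\le -4/(1+\beta)$. Using the region bounds of \Cref{lem:ac} as a dominated-convergence majorant is only a repackaging of the paper's truncation at $K\sqrt n$ followed by $K\to\infty$. You are also right that those region bounds do not need $m\ge n$. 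That hypothesis enters \Cref{lem:ac} only when the constants are merged into $19.1(\alpha+1)$, and the lower Stirling bound $\Gamma(x)>\sqrt{2\pi}\,x^{x-1/2}e^{-x}$ holds for all $x>0$. So replacing $\alpha_n$ by $\sup_n\alpha_n$ gives an $n$-independent majorant for every $\alpha\ge 0$.

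Two corrections are needed.

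First, the constant is not actually open. Your own factors give $\sqrt\pi\,(2s+1)\cdot\frac{1}{2\sqrt\pi}\cdot\frac{2}{n}=\frac{2s+1}{n}$, i.e.\ $T_{n,m_n,(n-s)/2}=\frac{2s+1}{n}\,e^{-s^2/(2(1+\alpha_n)n)}(1+o(1))$, which is exactly the paper's local formula. Hence $c=2$ and $\frac c2(1+\alpha)=1+\alpha$. State this, rather than saying the constants ``must combine'' to the target, which is circular as written. Also keep $2s+1$ rather than $c(s+1)$: the latter is not a $(1+o(1))$ approximation at bounded $s$, although the resulting error is only $O(K/\sqrt n)$ in the sum.

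Second, drop the dilute-limit calibration, because the value you quote is wrong. From \Cref{prop:bs_lxeref_simplified}, $\mathrm{AC}(m,n)\to n+1$ as $m\to\infty$ at fixed $n$. This matches the Gaussian-permanent ratio $\mathbb{E}|\per|^4/(\mathbb{E}|\per|^2)^2=n+1$, and is not $1+n/m$. A value near $1$ would in any case contradict $1+\alpha\to\infty$. Moreover, once $m\gg n^2$ you have $n+1\ll 1+m/n$, so the convergence to $1+\alpha$ is not uniform in $\alpha$, and the dilute regime cannot be used to fix $c$. Your $n=1$ check, $\mathrm{AC}(m,1)=2m/(m+1)$, is correct and does confirm the Gamma-function form of $T$.
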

        \begin{proof}
            Let
            \(
            \alpha_n:=\frac{m_n}{n},
            \) so that \(
            \alpha_n\to\alpha\in[0,\infty)
            \)
            Choose \(A>\alpha\). Then \(\alpha_n\in[0,A]\) for all sufficiently large \(n\).
            As in \Cref{app:proof_ac_bound}, let us cast $\mathrm{AC}(m_n,n)$ in the form of
            \begin{equation}
            \mathrm{AC}(m_n,n)=\sum_{r=0}^{\lfloor n/2\rfloor} T_{n,m_n,r},
            \end{equation}
            where
            \begin{equation}
            T_{n,m,r}
            =
            \frac{
            \sqrt{\pi}\,(2n-4r+1)\,\Gamma(n+m)\,
            \Gamma\!\left(r+\frac12\right)\,
            \Gamma(n-r+1)
            }{
            2^{n+m}\,
            \Gamma\!\left(\frac m2+r\right)\,
            \Gamma(r+1)\,
            \Gamma\!\left(n-r+\frac32\right)\,
            \Gamma\!\left(\frac m2+n-r+\frac12\right)
            }.
            \end{equation}
            For practical purposes, let us reparametrize the sum and set
            \begin{equation}
                s:=n-2r,
                \qquad r=\frac{n-s}{2},
            \end{equation}
            so that \(s\ge 0\) and \(s\equiv n \pmod 2\).
            The proof strategy is as follows: we distribute the sum into two regimes, and show that the latter contribution is negligible. More concretely, let us take $K > 0$, and let us denote
            \begin{align}
                0 \leq s \leq K \sqrt{n} &\qquad (\text{\textbf{Main contribution}}),\\
                 K \sqrt{n} < s \leq n &\qquad (\text{\textbf{Tail}}).
            \end{align}
            The tail can be made arbitrarily small by choosing $K$ large, uniformly in large $n$.
        
            \medskip
            \noindent \textbf{1. Main contribution.}
             Using the duplication formula from \Cref{prop:duplication_formula}, we may write
            \begin{equation}
            T_{n,m_n,\frac{n-s}{2}}
            =
            \frac{2s+1}{2}\,R_{n,\alpha_n}(s)\,U_n(s),
            \end{equation}
            where
            \begin{equation}
            R_{n,\beta}(s)
            :=
            \frac{
            \Gamma\!\left(\frac{(1+\beta)n}{2}\right)
            }{
            \Gamma\!\left(\frac{(1+\beta)n-s}{2}\right)
            }
            \frac{
            \Gamma\!\left(\frac{(1+\beta)n+1}{2}\right)
            }{
            \Gamma\!\left(\frac{(1+\beta)n+s+1}{2}\right)
            },
            \end{equation}
            and
            \begin{equation}
            U_n(s)
            \coloneqq
            \frac{
            \Gamma\!\left(\frac{n-s+1}{2}\right)
            }{
            \Gamma\!\left(\frac{n-s+2}{2}\right)
            }
            \frac{
            \Gamma\!\left(\frac{n+s+2}{2}\right)
            }{
            \Gamma\!\left(\frac{n+s+3}{2}\right)
            }.
            \end{equation}
            By the standard
            gamma-ratio estimate from \Cref{prop:gamma_ratio_asymptotic}, we can write
            \begin{equation}
            \frac{\Gamma(t+a)}{\Gamma(t+b)}=t^{a-b}(1+o(1))
            \qquad (t\to\infty),
            \end{equation}
            hence, we have
            \begin{equation}
            U_n(s)=\frac{2}{n}(1+o(1))
            \end{equation}
            uniformly for \(0\le s\le K\sqrt n\).
            Next, for \(\beta\in[0,A]\), let
            \begin{equation}
            z:=\frac{(1+\beta)n}{2},
            \qquad
            u:=\frac{s}{2}.
            \end{equation}
            Using Stirling's approximation from \Cref{prop:stirling_bounds}, we get
            \begin{equation}
                R_{n,\beta}(s)
                =
                \exp\!\left(-\frac{s^2}{2(1+\beta)n}\right)(1+o(1))
            \end{equation}
            uniformly for \(0\le s\le K\sqrt n\) and \(\beta\in[0,A]\). Substituting
            \(\beta=\alpha_n\), we obtain
            \begin{equation}\label{eq:local-T}
            T_{n,m_n,\frac{n-s}{2}}
            =
            \frac{2s+1}{n}
            \exp\!\left(-\frac{s^2}{2(1+\alpha_n)n}\right)(1+o(1)),
            \end{equation}
            uniformly for \(0\le s\le K\sqrt n\). Before we proceed, let us discuss the tail bound.
        
            \medskip
        
            \noindent \textbf{2. Tail.}
            We next show that the complementary range is negligible. For \(\beta\in[0,A]\),
            define
            \begin{equation}
            \phi_\beta(x)
            :=
            \log\!\frac{(1+\beta)^{1+\beta}}
            {2^{1+\beta}\left(\frac\beta2+x\right)^{\frac\beta2+x}
            \left(1+\frac\beta2-x\right)^{1+\frac\beta2-x}},
            \qquad 0\le x\le \frac12.
            \end{equation}
            By Stirling's formula, for \(\beta\in[0,A]\) and \(0\le r\le n/2\) we can write
            \begin{equation}
            T_{n,\beta n,r}
            \le
            C_A\,
            \frac{2n-4r+1}{\sqrt{(r+1)(n-r+1)}}\,
            e^{\,n\phi_\beta(r/n)}
            \end{equation}
            for some irrelevant constant \(C_A>0\).
            Moreover, we can differentiate $\phi_\beta$ and check that
            \begin{equation}
            \phi_\beta\!\left(\frac12\right)=0,
            \qquad
            \phi_\beta'\!\left(\frac12\right)=0,
            \end{equation}
            and
            \begin{equation}
            \phi_\beta''(x)
            =
            -\frac{1}{\frac\beta2+x}
            -\frac{1}{1+\frac\beta2-x}
            \le
            -\frac{4}{A+1},
            \end{equation}
            for all \(\beta\in[0,A]\) and \(x\in[0,1/2]\). Hence
            \begin{equation}
            \phi_\beta(x)\le -\frac{2}{A+1}\left(x-\frac12\right)^2.
            \end{equation}
            Next, we choose $\delta \in (0, 1/2)$ and subdivide the tail contribution into two parts, $K \sqrt{n} \leq s \leq n(1-2\delta)$ and $s > n (1-2\delta)$.
            Considering the latter, since \(\phi_\beta\) is continuous in \((\beta,x)\) and strictly negative on
            \([0,A]\times[0,\delta]\), there exists \(\eta_{A, \delta}>0\) such that
            \begin{equation}
                \phi_\beta(x)\le -\eta_{A, \delta}
                \qquad
                (\beta\in[0,A],\ 0\le x\le \delta).
            \end{equation}
            Therefore, we can write
            \begin{equation}
                \sum_{0\le r\le \delta n} T_{n,m_n,r}
                \le
                C_A n^{3/2}e^{-\eta_{A, \delta} n}
                =o(1).
            \end{equation}
            Moreover, considering $K \sqrt{n} \leq s \leq n(1-2\delta)$, writing again \(r=(n-s)/2\), and using that $\frac{r}{n} - \frac{1}{2} = - \frac{s}{2n}$, we show that
            \begin{equation}\label{eq:tail_bound}
                \sum_{\substack{K \sqrt{n} \leq s \leq n(1-2\delta)\\ s\equiv n\!\!\!\pmod 2}}
                T_{n,m_n,\frac{n-s}{2}}
                \le
                \frac{C_A}{n}\sum_{K \sqrt{n} \leq s \leq n(1-2\delta)}(s+1)e^{-c_A s^2/n},
            \end{equation}
            for some \(c_A>0\), where we used that
            \begin{equation}
                \frac{2n-4r+1}{\sqrt{(r+1)(n-r+1)}} = \frac{2s+1}{\sqrt{((n-s)/2+1)((n+s)/2+1)}} =  O\left( \frac{s+1}{n} \right)
            \end{equation}
            whenever $s \leq n(1-2\delta)$.
            For fixed \(K > 0\), the right-hand side of \Cref{eq:tail_bound} tends to
            \begin{equation}
            C_A\int_K^\infty t e^{-c_A t^2}\,dt,
            \end{equation}
            which goes to \(0\) as \(K\to\infty\). Thus the tail \(s\ge K\sqrt n\) is negligible as we increase $K$, i.e.,
            \begin{equation}
                \lim_{K \to \infty} \limsup_{n \to \infty} | R_{n, K}| = 0,
            \end{equation}
            where $R_{n, K}$ is the remainder
            \begin{equation}
                R_{n, K} = \sum_{\substack{s\ge K\sqrt n\\ s\equiv n\!\!\!\pmod 2}}
                T_{n,m_n,\frac{n-s}{2}}
            \end{equation}
        
            \medskip
            \noindent \textbf{Combining the two contributions.}
            In total, we can write
            \begin{equation}
                \mathrm{AC}(m_n,n)
                =
                \sum_{\substack{0\le s\le K\sqrt n\\ s\equiv n\!\!\!\pmod 2}}
                T_{n,m_n,\frac{n-s}{2}}
                +R_{n, K}.
            \end{equation}
            Using \eqref{eq:local-T}, we can write
            \begin{equation}
                \mathrm{AC}(m_n,n)
                =
                \sum_{\substack{0\le s\le K\sqrt n\\ s\equiv n\!\!\!\pmod 2}}
                a_{n, s}
                +o(1)+R_{n, K},
            \end{equation}
            where we denoted
            \begin{equation}
                a_{n, s} \coloneqq \frac{2s+1}{n}
                \exp\!\left(-\frac{s^2}{2(1+\alpha_n)n}\right)
            \end{equation}
            Now set \(t=s/\sqrt n\). Since admissible values of \(s\) have spacing \(2\), the
            mesh in \(t\) is \(2/\sqrt n\), and
            \begin{equation}
            \frac{2s+1}{n}
            =
            t\,\frac{2}{\sqrt n}+o\!\left(\frac{1}{\sqrt n}\right).
            \end{equation}
            Hence, the main contribution can be viewed as a Riemann sum and tends to an integral:
            \begin{equation}
                 \sum_{\substack{0\le s\le K\sqrt n\\ s\equiv n\!\!\!\pmod 2}}
                 a_{n, s} = 
                 \sum_{\substack{0\le s\le K\sqrt n\\ s\equiv n\!\!\!\pmod 2}}
                \frac{2s+1}{n}
                \exp\!\left(-\frac{s^2}{2(1+\alpha_n)n}\right)
                \to
                \int_0^K t\,e^{-t^2/(2(1+\alpha))}\,dt,
            \end{equation}
            where we have used that \(\alpha_n\to\alpha\). Finally, 
            for fixed $K$, we can write
            \begin{equation}
                \mathrm{AC}(m_n,n)
                =
                \sum_{\substack{0\le s\le K\sqrt n\\ s\equiv n\!\!\!\pmod 2}}
                a_{n, s}
                +o(1)+R_{n, K},
            \end{equation}
            which, spelled out, means that
            \begin{equation}
                \limsup_{n\to\infty} \bigg| \mathrm{AC}(m_n,n)
                -
                \sum_{\substack{0\le s\le K\sqrt n\\ s\equiv n\!\!\!\pmod 2}}
                a_{n, s}
                \bigg| \leq \limsup_{n\to\infty} | R_{n, K} |,
            \end{equation}
            and taking also $K \to \infty$, the right hand side tends to $0$, yielding
            \begin{equation}
            \lim_{n\to\infty}\mathrm{AC}(m_n,n)
            =
            \int_0^\infty t\,e^{-t^2/(2(1+\alpha))}\,dt
            =
            1+\alpha,
            \end{equation}
            which proves the lemma.
        \end{proof}

    \subsection{Bounds for the anticoncentration score in Scattershot Boson Sampling when $d=m$}\label{app:scattershot_bounds}
        With the simplified formula from \Cref{prop:scattershot_final_form}, we can prove the following upper and lower bounds for the anticoncentration score in Scattershot Boson Sampling in the case when all the pairs of modes are initialized with uniform two-mode squeezed states:
        \begin{proposition} \label{prop:SBS_lxe_bounds}
            Let \(m\ge 3\) and \(n\ge 12\).
            We can write that
            \begin{equation}
                \frac{\sqrt2\,n(m-1)}{24\sqrt{(3m+2n)(2m+3n)}}
                \le
                \mathrm{AC}_{\mathrm{SBS}}(m, n, m)
                \le
                \frac{3(m-1)}{4}\sqrt{\frac{2\pi}{m-2}} + 3(m-1)\sqrt{\frac{n}{m}}.
            \end{equation}
            In particular, we can write
            \begin{align}\begin{split}
                \mathrm{AC}_{\mathrm{SBS}}(m, n, m)
                &=
                \Omega\!\left(\frac{mn}{m+n}\right), \\
                \mathrm{AC}_{\mathrm{SBS}}(m, n, m)
                &=
                O\!\left(\sqrt{mn}\right).
            \end{split}\end{align}
        \end{proposition}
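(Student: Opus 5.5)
We start from \Cref{prop:scattershot_final_form} with $d=m$ and multiply by the normalization in \Cref{eq:ac_sbs_lxeb}. With $d=m$ the prefactor $|\mathcal S_{m,n}|^2/\binom{m+n-1}{n}^2$ cancels. This leaves an explicit sum with nonnegative terms,
\begin{equation*}
\mathrm{AC}_{\mathrm{SBS}}(m,n,m)=\sum_{r=0}^{\lfloor n/2\rfloor} \frac{2n-4r+1}{2n-2r+1}\,\beta_{n,r}\,\gamma_m(n-r)\,\delta_m(r),
\end{equation*}
where $\beta_{n,r}\coloneqq 2^{2n-4r}\binom{2r}{r}/\binom{2n-2r}{n-r}$, $\gamma_m(k)\coloneqq(\frac m2)_k/(\frac{m+1}{2})_k$ and $\delta_m(r)\coloneqq(\frac{m-1}{2})_r/(\frac m2)_r$. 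All three factors are ratios of Gamma functions with half-integer offsets. The plan is to bound each factor from above and below by elementary square-root expressions, and then bound the resulting sums over $r$ directly.

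\textbf{Factor estimates.} For $\beta_{n,r}$ we use the central-binomial bounds $4^k/\sqrt{\pi(k+\frac12)}\le\binom{2k}{k}\le 4^k/\sqrt{\pi k}$, valid for $k\ge1$, which follow from Gautschi's inequality (\Cref{prop:gautschi}). For $r\ge1$ these give $\beta_{n,r}\asymp\sqrt{(n-r)/r}$ with explicit constants. For $r=0$ they give $\beta_{n,0}=4^n/\binom{2n}{n}\le\sqrt{\pi(n+\frac12)}$. Writing $\gamma_m(k)=\frac{\Gamma(\frac m2+k)\Gamma(\frac{m+1}2)}{\Gamma(\frac{m+1}2+k)\Gamma(\frac m2)}$, Gautschi's inequality bounds both Gamma ratios by square roots. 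This gives $\gamma_m(k)\le\sqrt{(m+1)/(m+2k-1)}$ and a matching lower bound of the same form. In exactly the same way, $\delta_m(r)$ is comparable to $\sqrt{(m-2)/(m+2r-2)}$. The hypothesis $m\ge3$ keeps $\Gamma(\frac{m-1}{2})$ and these square roots well defined; this is where the factors $m-1$ and $m-2$ in the stated bounds come from.

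\textbf{Upper bound.} For the $r=0$ term we use $\delta_m(0)=1$ and $\beta_{n,0}\lesssim\sqrt{n}$, together with $\gamma_m(n)\lesssim\sqrt{m/n}$. This gives a contribution of order $\sqrt{m}$, up to a factor tied to $\sqrt{2\pi/(m-2)}$ once the constants are tracked; this produces the first summand of the stated upper bound. For $r\ge1$ we bound $\frac{2n-4r+1}{2n-2r+1}\le1$ and $\delta_m(r)\le1$. Since $n-r\ge n/2$, we also have $\gamma_m(n-r)\lesssim\sqrt{m/n}$. Together with $\beta_{n,r}\lesssim\sqrt{n/r}$, the remaining sum is at most a constant times $\sqrt{m}\sum_{r\le n/2}r^{-1/2}\le C\sqrt{mn}$. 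This matches the shape $3(m-1)\sqrt{n/m}\sim\sqrt{mn}$ up to constants.

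\textbf{Lower bound.} We keep only a window of $r$, say $n/4\le r\le n/3$; the assumption $n\ge12$ guarantees that this window contains at least $\gtrsim n/12$ integers. On the window, $\frac{2n-4r+1}{2n-2r+1}\ge\frac13$ and $\beta_{n,r}\ge c$. Both $n-r$ and $r$ are $\Theta(n)$ there, so $\gamma_m(n-r)\gtrsim\sqrt{m/(2m+3n)}$ and $\delta_m(r)\gtrsim\sqrt{(m-1)/(3m+2n)}$. Multiplying these and summing over the window gives $\gtrsim n(m-1)/\sqrt{(3m+2n)(2m+3n)}$, which is $\Omega(mn/(m+n))$. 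The asymptotic forms in the statement then follow immediately.

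The main obstacle is bookkeeping the explicit constants, such as $\sqrt2/24$ and $\frac34\sqrt{2\pi/(m-2)}$. This requires choosing the right direction of Gautschi's inequality for each of the four Gamma ratios and handling the boundary term $r=0$ separately. The $r=0$ term is the one place where $\delta_m$ does not help, so that is where I would first check the constants carefully.
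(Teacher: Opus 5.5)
Your proposal is correct and is essentially the paper's proof. Your $\beta_{n,r}$, $\gamma_m(n-r)$, $\delta_m(r)$ are a regrouping of the Gamma ratios in the paper's representation $\mathrm{AC}_{\mathrm{SBS}}(m,n,m)=\frac{m-1}{4}\sum_r u_{m,n,r}$, which it likewise bounds termwise with Gautschi's inequality: it treats $r=0$ separately, uses $\sum_{r\le N}r^{-1/2}\le 2\sqrt N$ for the upper bound, and uses the window $\lceil n/4\rceil\le r\le\lfloor n/3\rfloor$ for the lower bound.

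The only difference is cosmetic. The paper extracts the exact factor $\Gamma(\frac{m+1}{2})/\Gamma(\frac{m-1}{2})=\frac{m-1}{2}$ instead of bounding $\Gamma(\frac{m+1}{2})/\Gamma(\frac m2)$ and $\Gamma(\frac m2)/\Gamma(\frac{m-1}{2})$ separately, and this is where the $(m-1)$ prefactors in the statement come from. The $m-2$ comes from the paper's bound on $\Gamma(\frac m2-\frac12)/\Gamma(\frac m2)$ in its $r=0$ term, not from your $\delta_m$, since $\delta_m(0)=1$. Your own factor bounds, carried through explicitly with $\frac{2n-4r+1}{2n-2r+1}\ge\frac12$ on the window (rather than $\frac13$), still give the stated constants.
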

    
        \begin{proof}
            Using \Cref{eq:ac_sbs_lxeb}, we can write
            \begin{equation}
                \mathrm{AC}_{\mathrm{SBS}}(m, n) = |\mathcal{S}_{m,n}|^2 \mathrm{LXE}_{\mathrm{ref, SBS}}(m, n) = 
                |\mathcal S_{m,n}|^2
                \sum_{r=0}^{\lfloor n/2\rfloor}
                    T_{m, n, r},
            \end{equation}
            where
            \begin{equation}
                T_{m, n, r} \coloneqq \frac{2n-4r+1}{2n-2r+1}
                \frac{\binom{n-r}{r}^2}{\binom{2n-2r}{n}^2}
                \frac{(m+n)_{n-2r}^2\left(\frac{m-1}{2}\right)_r^2}
                {\left(\frac{m+1}{2}\right)_{n-r}^2}
                \frac{1}{\binom{2n+m-2r-1}{m-1}\binom{m+2r-2}{m-2}}.
            \end{equation}
            Using
            \(
                (x)_k=\frac{\Gamma(x+k)}{\Gamma(x)},
            \)
            and
            \(
                \binom{u}{v}=\frac{\Gamma(u+1)}{\Gamma(v+1)\Gamma(u-v+1)},
            \)
            together with the duplication formula (\Cref{prop:duplication_formula}), a direct simplification gives
            \begin{equation}
                T_{m,n, r}=\frac{m-1}{4 \binom{n+m-1}{n}^2}\,u_{m,n,r},
            \end{equation}
            where
            \begin{equation}
                u_{m,n,r}:=
                (2n-4r+1)
                \frac{\Gamma(r+\tfrac12)}{\Gamma(r+1)}
                \frac{\Gamma(\tfrac m2+r-\tfrac12)}{\Gamma(\tfrac m2+r)}
                \frac{\Gamma(n-r+1)}{\Gamma(n-r+\tfrac32)}
                \frac{\Gamma(\tfrac m2+n-r)}{\Gamma(\tfrac m2+n-r+\tfrac12)}.
            \end{equation}
            Hence, we can rewrite the full expression for $|\mathcal{S}_{m,n}|^2 \mathrm{LXE}_{\mathrm{ref, SBS}}(m, n)$ as
            \begin{equation}
                |\mathcal{S}_{m,n}|^2 \mathrm{LXE}_{\mathrm{ref, SBS}}(m, n) = 
                \frac{m-1}{4}
                \sum_{r=0}^{\lfloor n/2\rfloor}
                    u_{m,n,r},
            \end{equation}
        
            The remaining task is to estimate the remaining sum
            \begin{equation}
                S_{m,n}:=\sum_{r=0}^{\lfloor n/2\rfloor}u_{m,n,r}.
            \end{equation}
            Note, that every term in $u_{m, n, r}$ can be bounded by Gautschi's inequalities from \Cref{prop:gautschi} as
            \begin{equation}
                \frac{1}{\sqrt{x+1}}
                \le
                \frac{\Gamma(x+\frac12)}{\Gamma(x+1)}
                \le
                \frac{1}{\sqrt{x}},
                \qquad x>0,
            \end{equation}
            and
            \begin{equation}
                \frac{1}{\sqrt{x}}
                \le
                \frac{\Gamma(x)}{\Gamma(x+\frac12)}
                \le
                \frac{1}{\sqrt{x-\frac12}},
                \qquad x>\frac12.
            \end{equation}
            \textit{Upper bound.} We first derive an upper bound, valid for all \(m\ge 3\) and \(n\ge 12\).
            For \(r=0\),
            \begin{equation}
                u_{m,n,0}
                =
                (2n+1)\sqrt{\pi}\,
                \frac{\Gamma(\frac m2-\frac12)}{\Gamma(\frac m2)}
                \frac{\Gamma(n+1)}{\Gamma(n+\frac32)}
                \frac{\Gamma(\frac m2+n)}{\Gamma(\frac m2+n+\frac12)}.
            \end{equation}
            By the above inequalities, we can write
            \begin{equation}
                \frac{\Gamma(\frac m2-\frac12)}{\Gamma(\frac m2)}
                \le \frac{1}{\sqrt{\frac m2 -1}},
                \qquad
                \frac{\Gamma(n+1)}{\Gamma(n+\frac32)}
                \le \frac{1}{\sqrt n},
                \qquad
                \frac{\Gamma(\frac m2+n)}{\Gamma(\frac m2+n+\frac12)}
                \le \frac{1}{\sqrt n},
            \end{equation}
            and therefore we get the following upper bound for the $u_{m,n,0}$ term:
            \begin{equation}
                u_{m,n,0}
                \le
                (2n+1)\sqrt{\frac{\pi}{\frac m2 -1}} \frac1n
                \le
                3\sqrt{\frac{2\pi}{m-2}}.
            \end{equation}
            Now let \(1\le r\le n/2\). Applying the upper bounds termwise gives
            \begin{equation}
                u_{m,n,r}
                \le
                (2n+1)\,
                \frac{1}{\sqrt r}\,
                \frac{1}{\sqrt{\frac m2+r-1}}\,
                \frac{1}{\sqrt{n-r+\frac12}}\,
                \frac{1}{\sqrt{\frac m2+n-r-\frac12}}.
            \end{equation}
            Since \(m\ge 3\) and \(r\le n/2\),
            \begin{equation}
                \frac m2+r-1\ge \frac m2,
                \qquad
                n-r+\frac12\ge \frac n2,
                \qquad
                \frac m2+n-r-\frac12\ge \frac n2.
            \end{equation}
            Hence
            \begin{equation}
                u_{m,n,r}
                \le
                \frac{2\sqrt2(2n+1)}{n\sqrt{mr}}
                \le
                \frac{6\sqrt2}{\sqrt{mr}}.
            \end{equation}
            Summing over \(1\le r\le \lfloor n/2\rfloor\) and using \Cref{prop:inverse_sqrt_ub} we can write
            \begin{equation}
                \sum_{r=1}^{N}r^{-1/2}\le 2\sqrt N,
            \end{equation}
            which means that
            \begin{equation}
                \sum_{r=1}^{\lfloor n/2\rfloor}u_{m,n,r}
                \le
                \frac{6\sqrt2}{\sqrt m}
                \sum_{r=1}^{\lfloor n/2\rfloor}\frac{1}{\sqrt r}
                \le
                12\sqrt{\frac{n}{m}}.
            \end{equation}
            Therefore
            \begin{equation}
                S_{m,n}
                \le
                3\sqrt{\frac{2\pi}{m -2}}+12\sqrt{\frac{n}{m}}.
                \label{eq:Smn_upper_no_assumption}
            \end{equation}
    
            \textit{Lower bound.} We turn our attention now to the lower bound. Assume \(n\ge 12\), and consider the block
            \begin{equation}
                I_n:=
                \left\{
                    r:
                    \left\lceil\frac n4\right\rceil
                    \le r\le
                    \left\lfloor\frac n3\right\rfloor
                \right\}.
            \end{equation}
            Then \(|I_n|\ge n/12\). For every \(r\in I_n\), the lower bounds give
            \begin{equation}
                u_{m,n,r}
                \ge
                (2n-4r+1)\,
                \frac{1}{\sqrt{r+1}}\,
                \frac{1}{\sqrt{\frac m2+r}}\,
                \frac{1}{\sqrt{n-r+1}}\,
                \frac{1}{\sqrt{\frac m2+n-r}}.
            \end{equation}
            Now, for \(r\in I_n\),
            \begin{equation}
                2n-4r+1\ge \frac{2n}{3},
                \qquad
                r+1\le \frac{4n}{3},
                \qquad
                \frac m2+r\le \frac m2+\frac n3=\frac{3m+2n}{6},
            \end{equation}
            and also
            \begin{equation}
                n-r+1\le n,
                \qquad
                \frac m2+n-r\le \frac m2+\frac{3n}{4}=\frac{2m+3n}{4}.
            \end{equation}
            Therefore
            \begin{equation}
                u_{m,n,r}
                \ge
                \frac{2\sqrt2}{\sqrt{(3m+2n)(2m+3n)}}.
            \end{equation}
            Summing over \(r\in I_n\), we obtain
            \begin{equation}
                S_{m,n}
                \ge
                \frac{\sqrt2\,n}{6\sqrt{(3m+2n)(2m+3n)}}.
                \label{eq:Smn_lower_no_assumption}
            \end{equation}
    
            Combining \Cref{eq:Smn_upper_no_assumption,eq:Smn_lower_no_assumption}, we conclude that for all \(m\ge 3\) and \(n\ge 12\),
            \begin{equation}
                \frac{\sqrt2\,n}{6\sqrt{(3m+2n)(2m+3n)}}
                \le
                S_{m,n}
                \le
                3\sqrt{\frac{2\pi}{m-2}} + 12\sqrt{\frac{n}{m}}.
            \end{equation}
            Since
            \begin{equation}
                \mathrm{AC}_{\mathrm{SBS}}(m, n)
                =
                \frac{m-1}{4}\,S_{m,n},
            \end{equation}
            and it follows that
            \begin{equation}
                \frac{\sqrt2\,n(m-1)}{24\sqrt{(3m+2n)(2m+3n)}}
                \le
                \mathrm{AC}_{\mathrm{SBS}}(m, n)
                \le
                \frac{3(m-1)}{4}\sqrt{\frac{2\pi}{m-2}} + 3(m-1)\sqrt{\frac{n}{m}}.
            \end{equation}
        \end{proof}

    \subsection{Bounds for the anticoncentration score in Gaussian Boson Sampling when $d = m$}\label{app:gbs_bounds}
        \begin{proposition}\label{prop:gbs_bounds}
            Let \(d=m\) and \(m\ge 2\). In this case, we can write
            \begin{equation}
                \sqrt{\frac{\pi n}{2}}\,
                \frac{m-1}{\sqrt{m(m+n-1)}}
                \;\le\;
                \mathrm{AC}_{\mathrm{GBS}}(m,n,m)
                \;\le\;
                \sqrt{\frac{\pi(n+2)(m+n)}{2(m-1)}}.
            \end{equation}
            Consequently,
            \begin{equation}
                \mathrm{AC}_{\mathrm{GBS}}(m,n,m)
                =
                \Omega\!\left(\sqrt{\frac{nm}{m+n}}\right),
                \qquad
                \mathrm{AC}_{\mathrm{GBS}}(m,n,m)
                =
                O\!\left(\sqrt{\frac{n(m+n)}{m}}\right).
            \end{equation}
        \end{proposition}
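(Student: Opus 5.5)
We would follow the same route as in the proof of \Cref{prop:SBS_lxe_bounds}. We start from the identity $\mathrm{AC}_{\mathrm{GBS}}(m,n,m)=|\mathcal S_{m,n}|\,\mathrm{LXE}^{(n)}_{\mathrm{ref}}(\rho)$ with $n=2N$. Then we substitute the closed form of \Cref{prop:LXE_ref_uniform_squeezed_closed_binom} at $d=m$, and write $|\mathcal S_{m,n}|=(m)_{2N}/(2N)!$.

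\textbf{Step 1: simplify the summand.} At $d=m$ several factors telescope: $(\frac m2+N)_{N-j}/(\frac m2)_N=(\frac m2)_{2N-j}/(\frac m2)_N^2$. Then the pair $(\frac m2)_{2N-j}(\frac{m-1}{2})_j$ over $(\frac m2)_j(\frac{m+1}{2})_{2N-j}$ turns into ratios of Gamma functions whose arguments differ by $\frac12$. We would convert all binomials and Pochhammer symbols into Gamma functions and apply the duplication formula (\Cref{prop:duplication_formula}) to $(m)_{2N}$, $\binom{2j}{j}$, $\binom{2N-2j}{N-j}$ and $\binom{4N-2j}{2N-j}$. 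The goal is a representation
\[
\mathrm{AC}_{\mathrm{GBS}}(m,n,m)=\kappa_{m,N}\sum_{j=0}^{N}(4N-4j+1)\,w_{m,N,j},
\]
where $w_{m,N,j}$ is a product of four or five half-integer-shifted Gamma ratios of the form $\Gamma(x+\tfrac12)/\Gamma(x+1)$ or $\Gamma(x)/\Gamma(x+\tfrac12)$, with $x\in\{j,\,N-j,\,\tfrac m2+j,\,\tfrac m2+2N-j,\dots\}$. The prefactor $\kappa_{m,N}$ should be something like $(m-1)\,\Gamma(\frac m2)^2/\Gamma(\frac{m+1}{2})^2\approx 2(m-1)/m$, up to explicit constants. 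We expect $\kappa_{m,N}$ to come out independent of $N$ apart from factors like $\Gamma(N+\frac12)/\Gamma(N+1)$; those produce the $\sqrt n$ appearing in both bounds.

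\textbf{Step 2: upper bound.} We would bound every Gamma ratio with Gautschi's inequality (\Cref{prop:gautschi}), separating the boundary terms $j=0$ and $j=N$, where some ratios involve $\Gamma(\tfrac12)$, from the bulk. In the bulk each term behaves like $(N-j+1)/\sqrt{(j+1)(N-j+1)(\frac m2+j)(\frac m2+2N-j)}$ times $\sqrt{N}$-type prefactors. We then compare the sum with an integral, for example $\sum_j j^{-1/2}(N-j)^{1/2}\le \frac{\pi}{2}N$-type estimates as in \Cref{prop:inverse_sqrt_ub}. Bounding $\frac m2+j\ge\frac{m-1}{2}$ and $\frac m2+2N-j\le\frac{m+n}{2}$ appropriately should yield the stated $\sqrt{\pi(n+2)(m+n)/(2(m-1))}$. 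The factor $\pi$ strongly suggests that the sum is ultimately controlled through a Beta-integral $\int_0^1 t^{-1/2}(1-t)^{1/2}\,dt$, or through an exact Gamma identity for a $j$-sum of the form $\sum_j \frac{(\frac12)_j(\frac32)_{N-j}}{j!(N-j)!}$. We would try the exact-identity route first, keeping the $m$-dependent ratios monotone in $j$ so that they can be pulled out at their extremal values.

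\textbf{Step 3: lower bound.} Here we use the lower Gautschi bounds termwise. We replace each $m$-dependent factor by its worst value over $j\in[0,N]$, namely $\frac m2+j\le\frac m2+N$ and $\frac m2+2N-j\le\frac{m+n}{2}$, which produces $\frac{1}{\sqrt{m(m+n-1)}}$. The remaining $j$-sum, independent of $m$, is then summed exactly or bounded below by $\sqrt{\pi n/2}$, using the same Beta/Wallis-type identity as in the upper bound. The asymptotic $\Omega$ and $O$ statements then follow immediately.

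\textbf{Main obstacle.} We expect the hardest part to be Step 1: obtaining a clean Gamma-function form in which the $m$-dependence separates from an exactly summable $j$-part, with all constants tracked so that $\pi$ and $\sqrt2$ appear correctly.
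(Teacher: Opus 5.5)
Your outline has the right overall shape: factor out an $m$-dependent piece that is monotone in $j$, then sum the $m$-independent remainder exactly. But the crux is not the Gamma bookkeeping of Step~1. It is a summation identity your plan never identifies, and the identity you propose to try first is not the relevant one.

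With $a=m/2$ and $n=2N$, the closed form factors as $\mathrm{AC}_{\mathrm{GBS}}(m,n,m)=P_{a,N}\sum_{j=0}^{N}W_{N,j}V_{a,N}(j)$, where $P_{a,N}=\frac{4^N}{\binom{2N}{N}}\frac{\Gamma(a)}{\Gamma(a+\frac12)}\frac{\Gamma(a+N+\frac12)}{\Gamma(a+N)}$ and $V_{a,N}(j)=(a-\tfrac12)\frac{\Gamma(a-\frac12+j)}{\Gamma(a+j)}\frac{\Gamma(a+2N-j)}{\Gamma(a+2N-j+\frac12)}$. The weights $W_{N,j}=\frac{4N-4j+1}{4N-2j+1}\binom{2j}{j}\binom{2N-2j}{N-j}^{2}/\binom{4N-2j}{2N-j}$ do not depend on $m$.

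The key fact is $\sum_{j}W_{N,j}=1$ exactly. After reversing $j\mapsto N-j$ this is a terminating very-well-poised ${}_5F_4$ of Dougall type (DLMF~16.4.9), with $a=\tfrac12$; the factor $\frac{4N-4j+1}{4N-2j+1}$ is the well-poised signature. It is not a Vandermonde-type sum like $\sum_j\frac{(1/2)_j(3/2)_{N-j}}{j!(N-j)!}$.

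Monotonicity then finishes the job. The ratio $\frac{V_{a,N}(j+1)}{V_{a,N}(j)}=\frac{(a+j-\frac12)(a+2N-j-\frac12)}{(a+j)(a+2N-j-1)}$ has denominator exceeding numerator by $N-j-\frac14>0$. This gives $P_{a,N}V_{a,N}(N)\le\mathrm{AC}_{\mathrm{GBS}}(m,n,m)\le P_{a,N}V_{a,N}(0)$, with $V_{a,N}(N)=\frac{a-1/2}{a+N-1/2}$ exactly and $V_{a,N}(0)\le 1$.

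The factor $\pi$ comes only from $\sqrt{\pi N}\le 4^N/\binom{2N}{N}\le\sqrt{\pi(N+1)}$. The $m$-dependence comes from Gautschi on the two $a$-ratios in $P_{a,N}$ together with the endpoint values of $V_{a,N}$. The $j$-sum contributes exactly $1$, so your Beta-integral heuristic puts the $\pi$ in the wrong place. Your bulk heuristic also drops the square on $\binom{2N-2j}{N-j}$: the weights follow the arcsine profile $W_{N,j}\approx\frac{2}{\pi\sqrt{j(2N-j)}}$, not $\sqrt{(N-j+1)/(j+1)}$.

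Without this exact normalization, your fallback cannot produce the displayed inequality. Termwise Gautschi bounds on the $m$-independent ratios, followed by integral comparison, cannot reproduce the value $1$ of the $j$-sum, so they cost constant factors. At best you recover the $\Omega$ and $O$ statements with different constants.

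Step~3 also relies on $\frac m2+2N-j\le\frac{m+n}{2}=\frac m2+N$, which is false for every $j<N$; the worst case over $j\in[0,N]$ is $\frac m2+n$. Replacing each $m$-dependent ratio by its separate worst value then only gives $V_{a,N}(j)\ge\frac{m-1}{\sqrt{(m+n-1)(m+2n)}}$. Even granting $\sum_jW_{N,j}=1$, this yields $\sqrt{\pi n/2}\,\frac{m-1}{\sqrt{m(m+2n)}}$, strictly weaker than the claimed $\sqrt{\pi n/2}\,\frac{m-1}{\sqrt{m(m+n-1)}}$. To reach the stated bound you need the monotonicity of $V_{a,N}$ and its exact value at $j=N$, not independent worst-case bounds on each factor.
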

        \begin{proof}
            Let \(n=2N\) and set \(a:=m/2\), so \(a\ge 1\) because \(m\ge 2\).
            Starting from
            \begin{equation}
                \mathrm{AC}_{\mathrm{GBS}}(m,n,m)
                =
                \binom{m+n-1}{n}\,\mathrm{LXE}_{\mathrm{ref}}^{(n)}(\rho),
            \end{equation}
            and substituting \(d=m=2a\), \(n=2N\), we obtain
            \begin{equation}
                \mathrm{AC}_{\mathrm{GBS}}(m,n,m)
                =
                \binom{2a+2N-1}{2N}
                N!^2
                \sum_{j=0}^{N}
                \frac{4N-4j+1}{4N-2j+1}
                \frac{\binom{2j}{j}\binom{2N-2j}{N-j}^{2}}
                {\binom{4N-2j}{2N-j}}
                \frac{\left(a-\frac12\right)_j(a+N)_{N-j}}
                {(a)_N(a)_j\left(a+\frac12\right)_{2N-j}}.
            \end{equation}
            Using
            \begin{equation}
                \binom{2a+2N-1}{2N}=\frac{(2a)_{2N}}{(2N)!},
                \qquad
                (2a)_{2N}=4^N(a)_N\left(a+\tfrac12\right)_N,
                \qquad
                \frac{N!^2}{(2N)!}=\frac{1}{\binom{2N}{N}},
            \end{equation}
            this becomes
            \begin{equation}
                \mathrm{AC}_{\mathrm{GBS}}(m,n,m)
                =
                \frac{4^N}{\binom{2N}{N}}
                \sum_{j=0}^{N}
                \frac{4N-4j+1}{4N-2j+1}
                \frac{\binom{2j}{j}\binom{2N-2j}{N-j}^{2}}
                {\binom{4N-2j}{2N-j}}
                \frac{\left(a+\frac12\right)_N\left(a-\frac12\right)_j(a+N)_{N-j}}
                {(a)_j\left(a+\frac12\right)_{2N-j}}.
            \end{equation}
            Now write the Pochhammer symbols in terms of gamma functions. A straightforward simplification gives
            \begin{equation}
                \mathrm{AC}_{\mathrm{GBS}}(m,n,m)
                =
                P_{a,N}\sum_{j=0}^{N} W_{N,j}\,V_{a,N}(j),
            \end{equation}
            where
            \begin{equation}
                P_{a,N}
                :=
                \frac{4^N}{\binom{2N}{N}}
                \frac{\Gamma(a)}{\Gamma\!\left(a+\tfrac12\right)}
                \frac{\Gamma\!\left(a+N+\tfrac12\right)}{\Gamma(a+N)},
            \end{equation}
            \begin{equation}
                W_{N,j}
                :=
                \frac{4N-4j+1}{4N-2j+1}
                \frac{\binom{2j}{j}\binom{2N-2j}{N-j}^{2}}
                {\binom{4N-2j}{2N-j}},
            \end{equation}
            and
            \begin{equation}
                V_{a,N}(j)
                \coloneqq
                \left(a-\tfrac12\right)
                \frac{\Gamma\!\left(a-\tfrac12+j\right)}{\Gamma(a+j)}
                \frac{\Gamma(a+2N-j)}{\Gamma\!\left(a+2N-j+\tfrac12\right)}.
            \end{equation}
            Using Equation 16.4.9 from Ref.~\cite{NIST:DLMF}, we can show that the weights \(W_{N,j}\) satisfy
            \begin{equation}
            \sum_{j=0}^{N}W_{N,j}=1,
            \end{equation}
            hence, $\{ W_{N,j}\}_{j=0}^N$ is a probability distribution.
            Next, \(V_{a,N}(j)\) is decreasing in \(j\). Indeed,
            \begin{equation}
                \frac{V_{a,N}(j+1)}{V_{a,N}(j)}
                =
                \frac{\left(a+j-\tfrac12\right)\left(a+2N-j-\tfrac12\right)}
                {(a+j)(a+2N-j-1)},
            \end{equation}
            and
            \(
                (a+j)(a+2N-j-1)
                -
                \left(a+j-\tfrac12\right)\left(a+2N-j-\tfrac12\right)
                =
                N-j-\tfrac14>0
            \)
            for \(0\le j\le N-1\), meaning that the denominator is always bigger.
            Therefore
            \begin{equation}
            V_{a,N}(N)\le \sum_{j=0}^{N}W_{N,j}V_{a,N}(j)\le V_{a,N}(0).
            \end{equation}
            Consequently,
            \begin{equation}
            P_{a,N}V_{a,N}(N)
            \le
            \mathrm{AC}_{\mathrm{GBS}}(m,n,m)
            \le
            P_{a,N}V_{a,N}(0).
            \end{equation}
            
            We now estimate the two endpoints. First, we can write
            \begin{equation}
                V_{a,N}(N)
                =
                \left(a-\tfrac12\right)
                \frac{\Gamma\!\left(a+N-\tfrac12\right)}{\Gamma(a+N)}
                \frac{\Gamma(a+N)}{\Gamma\!\left(a+N+\tfrac12\right)}
                =
                \frac{a-\tfrac12}{a+N-\tfrac12}.
            \end{equation}
            Also, using \Cref{prop:gautschi} we get
            \begin{equation}
                V_{a,N}(0)
                =
                \frac{\Gamma\!\left(a+\tfrac12\right)}{\Gamma(a)}
                \frac{\Gamma(a+2N)}{\Gamma\!\left(a+2N+\tfrac12\right)}
                \le \sqrt{\frac{a}{a+2N-\frac 12}} \leq 1.
            \end{equation}
            Hence, we can write
            \begin{equation}
                P_{a,N}\frac{a-\tfrac12}{a+N-\tfrac12}
                \le
                \mathrm{AC}_{\mathrm{GBS}}(m,n,m)
                \le
                P_{a,N}.
            \end{equation}
            
            The remaining task is to bound \(P_{a,N}\). Firstly, using \Cref{prop:stirling_bounds} for the central binomial coefficient, we get
            \begin{equation}
                \sqrt{\pi N}
                \le
                \frac{4^N}{\binom{2N}{N}}
                \le
                \sqrt{\pi(N+1)}.
            \end{equation}
            Secondly, \Cref{prop:gautschi} yields the bounds
            \begin{equation}
                \frac{1}{\sqrt a}
                \le
                \frac{\Gamma(a)}{\Gamma\!\left(a+\tfrac12\right)}
                \le
                \frac{1}{\sqrt{a-\tfrac12}}, \qquad         \sqrt{a+N-\tfrac12}
            \le
            \frac{\Gamma\!\left(a+N+\tfrac12\right)}{\Gamma(a+N)}
            \le
            \sqrt{a+N}.
            \end{equation}
            Consequently, $P_{a,N}$ can be bounded as
            \begin{equation}
                \sqrt{\pi N}\sqrt{\frac{a+N-\tfrac12}{a}}
                \le
                P_{a,N}
                \le
                \sqrt{\pi(N+1)}\sqrt{\frac{a+N}{a-\tfrac12}}.
            \end{equation}
            Finally, combining the bounds and substituting \(a=m/2, N=n/2\) gives
            \begin{equation}
                \sqrt{\frac{\pi n}{2}}\,
                \frac{m-1}{\sqrt{m(m+n-1)}}
                \le
                \mathrm{AC}_{\mathrm{GBS}}(m,n,m)
                \le
                \sqrt{\frac{\pi(n+2)(m+n)}{2(m-1)}}.
            \end{equation}
        \end{proof}

\section{Further results}\label{app:further_results}

\subsection{Sample complexity of certification tests}\label{app:sample_complexity}
    The anticoncentration bounds established in \Cref{sec:anticoncentration} have consequences beyond hardness-of-simulation arguments. In particular, they imply limitations on the sample complexity of certifying the photonic quantum advantage experiments considered here. Using the framework developed by Hangleiter et al.~\cite{Hangleiter_2019}, based on earlier results of Valiant and Valiant~\cite{valiantvaliant}, we derive lower bounds on the certification sample complexity for Boson Sampling, Scattershot Boson Sampling, and Gaussian Boson Sampling. These lower bounds arise from the fact that the associated anticoncentration scores scale at most polynomially in \(n\) and \(m\).

    In the following, we will treat the quantum advantage experiment as a black box. Suppose we are given sample access to a such black-box device, and its output distribution is an unknown probability distribution \(q\). The task is to distinguish the case \(q=p\) from the case \(\norm{p-q}_1>\epsilon\) with high success probability. This is formalized by the following definition.
    \begin{definition}[$\epsilon$-certification test~\cite{Hangleiter_2019}]\label{def:certification}
          Let $p$ be a (target) probability distribution on a sample space $\mathcal E$.
          We call
          $\mathcal T: \mathcal E^s \to \{0,1\}$
          an $\epsilon$-\emph{certification test} of $p$ from $s$ samples if the following completeness and soundness conditions are satisfied for any distribution $q$ over $\mathcal E$:
          \begin{subequations}
              
          \begin{align}
                q = p
                & \ \Rightarrow \
                \Pr_{ X \sim q^s } [\mathcal T(X) =1 ] \geq \frac23,
                \\
                \norm{p-q}_{1} > \epsilon
                &\ \Rightarrow\
                \Pr_{ X \sim q^s} [\mathcal T(X) =1 ] < \frac13 \, .
          \end{align}
          \end{subequations}
          Moreover, for a family of probability distributions we call a family of tests $\{\mathcal T_{n}\}$ a \emph{sample-efficient $\epsilon$-certification test} if for every $n$, $\mathcal T_{n}$ is an $\epsilon$-certification test from $s = O(\poly(n,1/\epsilon))$ samples.
    \end{definition}
    \noindent It turns out that the optimal sample complexity for $\epsilon$-certifying any distribution $p$ is essentially controlled by a single quantity $\| p^{-\max}_{-\epsilon}\|_{2/3}$, which we define as follows.
    Given a probability distribution $p$, let $p^{-\max}_{-\epsilon}$ denote the vector obtained from the image of $p$ by setting its largest entry to $0$ and also iteratively removing the smallest elements from $p$ with the condition that the sum of the removed entries is at most $\epsilon$. Moreover, we will make use of the $\ell_{2/3}$-norm as given by $\norm{\bm{x}}_{2/3} = ( \sum_i |x_i|^{2/3} )^{3/2}$.
    The following result gives guarantees on the sample efficiency of $\epsilon$-certification tests using $\| p^{-\max}_{-\epsilon}\|_{2/3}$
    \begin{theorem}[Optimal $\epsilon$-certification tests~\cite{valiantvaliant}]\label{thm:valiantvaliant}
          There exist constants $c_1, c_2 > 0 $ such that for any $\epsilon > 0$ and any target distribution $p$, there exists an $\epsilon$-certification test from \begin{equation}
            c_1\,\max \{ \frac 1 \epsilon, \frac1{\epsilon^2}\,\norm{p_{- \epsilon/16}^{- \mathrm{max}} } _{{2/3}} \}    
          \end{equation}
          many samples, but there exists no $\epsilon$-certification test from fewer than \begin{equation}
            c_2\, \max \{ \frac 1 \epsilon, \frac 1{\epsilon^2}\,\norm{p_{- 2\epsilon}^{- \mathrm{max}} } _{{2/3}}  \}
        \end{equation}
        samples.
    \end{theorem}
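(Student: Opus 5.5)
This statement is quoted from Valiant and Valiant and the paper simply imports it. If I were to reprove it, I would treat the upper and lower bounds separately, and Poissonize throughout: draw $\mathrm{Poi}(s)$ samples instead of exactly $s$. The counts $X_i$ of the outcomes $i\in\mathcal E$ then become independent $\mathrm{Poi}(s q_i)$ variables, and de-Poissonizing only changes $s$ by a constant factor.

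\emph{Upper bound.} I would first remove the largest entry of $p$ and the tail of smallest entries of total mass $\epsilon/16$. For this removed set I would run a separate cheap test that compares the empirical mass it receives with its $p$-mass; this costs $O(1/\epsilon^2)$ samples, or $O(1/\epsilon)$ for the tail part, and accounts for the $1/\epsilon$ term. On the remaining support $\mathcal E'$ I would use the weighted chi-square-type statistic
\begin{equation}
Z=\sum_{i\in\mathcal E'}\frac{(X_i-s p_i)^2-X_i}{p_i^{2/3}} .
\end{equation}
\begin{itemize}
\item Its mean is $s^2\sum_{i\in\mathcal E'} (q_i-p_i)^2/p_i^{2/3}$, which vanishes when $q=p$.
\item When $\norm{p-q}_1>\epsilon$, Cauchy--Schwarz with weights $p_i^{2/3}$ gives $\sum_{i\in\mathcal E'} |q_i-p_i|\le \bigl(\sum_{i\in\mathcal E'}(q_i-p_i)^2/p_i^{2/3}\bigr)^{1/2}\bigl(\sum_{i\in\mathcal E'} p_i^{2/3}\bigr)^{1/2}$. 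Hence the mean is at least $s^2\epsilon^2/\norm{p_{-\epsilon/16}^{-\max}}_{2/3}^{2/3}$, up to constants.
\item The variance is a sum over $i$ of $O(s^2 q_i^2 + s^3 q_i(q_i-p_i)^2)/p_i^{4/3}$.
\end{itemize}
Chebyshev's inequality then separates the two hypotheses once $s\gtrsim \norm{p_{-\epsilon/16}^{-\max}}_{2/3}/\epsilon^2$. The delicate point is the soundness case: elements where $q_i\gg p_i$ inflate the variance. I would handle this by splitting $\mathcal E'$ into indices where $q_i\le 2p_i$ and the rest, and arguing that the second group contributes to the mean at least as fast as it contributes to the standard deviation.

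\emph{Lower bound.} I would use Le Cam's two-point method. Compare $p$ itself with a random perturbation
\begin{equation}
q^{(\sigma)}_i=p_i\bigl(1+\sigma_i\,\delta_i\bigr), \qquad \sigma_i\in\{\pm1\} \text{ i.i.d.},
\end{equation}
with the signs paired so that $q^{(\sigma)}$ remains normalized. The largest entry is left free to absorb any residual mass, which is why $p^{-\max}$ appears. The amplitudes are chosen as $\delta_i\propto \epsilon\, p_i^{-1/3}/\norm{p_{-2\epsilon}^{-\max}}_{2/3}^{1/3}$, capped at $1$. This choice is exactly the extremizer of the Cauchy--Schwarz step above, so $\norm{q^{(\sigma)}-p}_1>\epsilon$ on the retained support. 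The smallest elements, of total mass $2\epsilon$, are excluded because they cannot carry enough perturbation. Under Poissonization, the total variation (or chi-square) distance between the product law under $p$ and the mixture over $\sigma$ factorizes over $i$. Each factor is $1+O(s^2 p_i^2\delta_i^4)$, and summing gives a quantity of order $s^2\epsilon^4/\norm{p_{-2\epsilon}^{-\max}}_{2/3}^2$. This stays below a small constant unless $s\gtrsim \norm{p_{-2\epsilon}^{-\max}}_{2/3}/\epsilon^2$. The $1/\epsilon$ term comes from the standard two-point lower bound of moving mass $\epsilon$ onto a single element.

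The main obstacle I expect is the tight variance control in the upper bound, rather than the lower-bound construction. In particular, the upper bound must be shown to match the lower bound with only the constant loss in the trimming parameter, from $\epsilon/16$ to $2\epsilon$. If that analysis becomes unwieldy, I would fall back on citing the original theorem, as the paper does.
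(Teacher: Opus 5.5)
The paper gives no proof of this statement; it imports it from Valiant and Valiant. Your fallback of citing the theorem is therefore exactly what the paper does. Your self-contained sketch follows the original architecture: Poissonization, the $p_i^{-2/3}$-weighted collision statistic, a tail test, and a random-sign Le Cam construction. However, some steps fail as written.

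\emph{Upper bound.} Budgeting $\Theta(1/\epsilon^2)$ samples for a test on the largest entry breaks the claimed bound. If $1-p_{\max}\le\epsilon/16$, then $p^{-\max}_{-\epsilon/16}=0$ and the claimed complexity is only $c_1/\epsilon$. A Bernstein-type analysis would make such a test affordable, but it is not needed at all. Since $\sum_i(q_i-p_i)=0$, you have $|q_{\max}-p_{\max}|\le\sum_{i\neq\max}|q_i-p_i|$, so at least $\epsilon/2$ of the distance already lies off the maximal entry. That, not the need to absorb mass, is why $p^{-\max}$ appears on the upper-bound side.

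For the trimmed tail $T$ you only need the one-sided check $q(T)=O(\epsilon)$, which costs $O(1/\epsilon)$ samples. It suffices because $\sum_{i\in T}|q_i-p_i|\le q(T)+\epsilon/16$, which leaves $\Omega(\epsilon)$ of distance on $\mathcal E'$. By contrast, comparing only the aggregate mass of $\{\max\}\cup T$ is blind to $q$ shifting mass from the maximal entry onto a tail element.

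Your variance heuristic is not automatic either. For $q_i\ge 2p_i$, the ratio of the $i$-th summand's standard deviation to its mean is of order $(sq_i)^{-1/2}+(sq_i)^{-1}$. So heavy indices that $q$ rarely emits do not pay for themselves individually. Controlling them collectively is where the original analysis needs its nontrivial (``hairy'') inequalities.

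\emph{Lower bound.} Excluding the trimmed tail while capping $\delta_i$ at $1$ can leave you at distance at most $\epsilon$. Consider:
\begin{itemize}
\item one heavy entry of mass $1-\tfrac{5}{2}\epsilon$;
\item a block $A$ of $N$ entries of mass $\epsilon/(2N)$ each;
\item a block $R$ of $M>4N$ entries of mass $2\epsilon/M$ each.
\end{itemize}
Then $p^{-\max}_{-2\epsilon}$ lives on $A$ and $\norm{p^{-\max}_{-2\epsilon}}_{2/3}=\sqrt{N}\,\epsilon/2$, so a lower bound of order $\sqrt{N}/\epsilon$ is claimed. Yet any perturbation of $A$ with $\delta_i\le 1$ moves at most $\epsilon/2$ of mass there. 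Hence $\norm{q-p}_1\le\epsilon$ even after the heavy entry absorbs the residual.

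The fix is to use the trimmed mass as a reservoir rather than discard it. Let $\tau=(c\epsilon)^3/\norm{p^{-\max}_{-2\epsilon}}_{2/3}^2$ be the cap threshold. If some retained entry falls below $\tau$, then so do the smallest retained entry $p_{\min}$ and every entry of $R$. Perturb all of these fully, i.e.\ $q_i\in\{0,2p_i\}$. The stopping rule of the trimming gives distance at least $p(R)+p_{\min}>2\epsilon$. The chi-square cost is at most $s^2\tau\, p(R)=O\bigl(s^2\epsilon^4/\norm{p^{-\max}_{-2\epsilon}}_{2/3}^2\bigr)$, the same order as your main term.

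Two smaller points. The normalization must be $\delta_i\propto\epsilon\, p_i^{-1/3}/\norm{p^{-\max}_{-2\epsilon}}_{2/3}^{2/3}$, not with exponent $1/3$. Otherwise neither the $\ell_1$ distance $\epsilon$ nor your final $s^2\epsilon^4/\norm{p^{-\max}_{-2\epsilon}}_{2/3}^2$ follows. Also, sign-pairing preserves normalization only if paired entries receive equal absolute perturbations, for example sorted neighbours with a common amplitude.
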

    \noindent Hence, the quantity $\norm{p_{- \epsilon}^{- \mathrm{max}} } _{{2/3}}$ controls the sample complexity of the $\epsilon$-certification. We can give a lower bound of this quantity using the min-entropy $H_\infty(p)$ as follows (a special case of Lemma 3 from~\cite{Hangleiter_2019}): 
    \begin{equation}
      2^{\frac12 H_\infty(p)}  \left(1- \epsilon - 2^{-H_\infty(p)} \right)^{3/2}
      \leq \norm{p_{- \epsilon}^{- \mathrm{max}} } _{{2/3}}.
    \end{equation}
    In Boson Sampling, and likewise in Scattershot and Gaussian Boson Sampling when \(d=m\), the anticoncentration score, which we denote by \(\mathrm{AC}(m,n)\), is bounded above by a polynomial in \(m\) and \(n\), i.e., $\mathrm{AC}(m,n)\le \poly(m,n)$.
    As a result, the averaged second moment of the output distribution is of the same order as the inverse number of possible outcomes.
    This immediately yields a lower bound on the min-entropy of the output distribution.
    Indeed, the tail bound of the min-entropy from Ref.~\cite[Lemma 5]{Hangleiter_2019} states that for every \(\delta\in(0,1)\),
    \begin{equation}
        \Pr_{U\sim \mathrm{Haar}(m)}
        \!\left[
        H_\infty(p_U)
        \ge
        \frac12\left(
        \log \delta
        -
        \log
        \sum_{\bm n \in \mathcal E}
        \underset{U\sim\mathrm{Haar}(m)}{\mathbb E}\!\bigl[p_U(\bm n)^2\bigr]
        \right)
        \right]
        \ge 1-\delta,
    \end{equation}
    which can be cast into the following form using the anticoncentration score $\mathrm{AC}(m, n)$:
    \begin{equation}            
        \Pr_{U\sim \mathrm{Haar}(m)}
        \!\left[
        H_\infty(p_U)
    \ge
    \frac12 \log\!\bigl(\delta |\mathcal E|\bigr)
    -
    \log \mathrm{AC}(m, n)
        \right]
        \ge 1-\delta.
    \end{equation}
    Consequently, for a typical Haar-random interferometer, in the considered photonic quantum advantage schemes, the output distribution has min-entropy of order \(\frac12 \log |\mathcal E |\).
    Roughly speaking, an $\epsilon$-certification test requires at least
    \begin{equation}
        c_2 \frac{\delta^{1/4} |\mathcal E|^{1/4}}{\epsilon^2 \sqrt{\mathrm{AC}(m, n)}} \left( 1 - 2\epsilon - \frac{\mathrm{AC}(m, n)}{\delta^{1/2} |\mathcal E|^{1/2}}\right)^{3/2}
    \end{equation}
    many samples with probability $1-\delta$ over the choice of $U \sim \mathrm{Haar}(m)$, where $c_2$ is the constant from \Cref{thm:valiantvaliant}.
    Therefore, when we fix \(\epsilon,\delta\in(0,1)\) and assume that
    \(\mathrm{AC}(m,n)\ll |\mathcal E|^{1/2}\), the sample complexity is lower bounded by
    \begin{equation}
    \Omega\!\left(
    |\mathcal E|^{1/4}\,\mathrm{AC}(m,n)^{-1/2}
    \right)
    \end{equation}
    with probability at least \(1-\delta\) over \(U\sim\mathrm{Haar}(m)\). Using the results from \Cref{sec:anticoncentration}, the assumption \(\mathrm{AC}(m,n)\ll |\mathcal E|^{1/2}\) trivially applies to Boson Sampling, Scattershot Boson Sampling, and Gaussian Boson Sampling, and hence, none of these models admits a sample-efficient \(\epsilon\)-certification test.
        
\subsection{Proof of Hunter-Jones conjecture for $t=2$}\label{app:proof_hunter_jones}
    In this section, we set out to prove the following statement, a special case of the Hunter-Jones conjecture ~\cite{nezami2021permanentrandommatricesrepresentation}:
    \begin{restatable}[Hunter-Jones conjecture for \(t=2\)]{proposition}{hunterjones}
        \label{prop:hunter_jones_t2}
        Let \(U\sim \mathrm{Haar}(n)\), and define
        \begin{equation}
            X_U \coloneqq |\! \per(U)|^2.
        \end{equation}
        Then
        \begin{equation}
            \lim_{n\to\infty}
            \frac{
                \underset{U\sim \mathrm{Haar}(n)}{\mathbb{E}}\!\left[X_U^2\right]
            }{
                \left(
                \underset{U\sim \mathrm{Haar}(n)}{\mathbb{E}}\!\left[X_U\right]
                \right)^2
            }
            =2.
        \end{equation}
    \end{restatable}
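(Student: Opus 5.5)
The plan is to recognise $X_U$ as a Boson Sampling output probability with $m=n$ and then apply the second-moment machinery of \Cref{sec:framework}. Take $m=n$ and the collision-free input $\ket{\bm n_0}=\ket{1,\dots,1}$, and look at the output pattern $\bm n=\bm n_0$. Then $p_{\ketbra{\bm n_0},U}(\bm n_0)=|\per(U)|^2/\bm n_0!=X_U$.

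\textbf{Step 1 (numerator and denominator).} By irreducibility of $\varphi_n$,
\begin{equation*}
\mathbb E[X_U]=\frac{1}{|\mathcal S_{n,n}|}=\binom{2n-1}{n}^{-1}.
\end{equation*}
By \Cref{prop:secondmoment}, with both the state and the outcome equal to $\ketbra{\bm n_0}$,
\begin{equation*}
\mathbb E[X_U^2]=\sum_{r=0}^{\lfloor n/2\rfloor}\frac{\Tr[\mathbb P_{2r}\ketbra{\bm n_0}^{\otimes 2}]^2}{\Tr\mathbb P_{2r}}.
\end{equation*}
Every ingredient here is explicit. The numerators come from \Cref{prop:trace_pk_n0}, and $\Tr\mathbb P_{2r}$ comes from \Cref{eq:dimensions} at $m=n$. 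The ratio to be studied is therefore the finite sum
\begin{equation*}
Q_n\coloneqq\binom{2n-1}{n}^2\sum_{r}\frac{\Tr[\mathbb P_{2r}\ketbra{\bm n_0}^{\otimes 2}]^2}{\Tr\mathbb P_{2r}},
\end{equation*}
which is the single-outcome analogue of $\mathrm{AC}(n,n)$.

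\textbf{Step 2 (Gamma-function form).} I will rewrite each term of $Q_n$ through Gamma functions, using the duplication formula (\Cref{prop:duplication_formula}), exactly as was done for $T_{n,m,r}$ in \Cref{app:proof_ac_bound}. I then reparametrise by $s=n-2r$. The target is a local form
\begin{equation*}
\text{(term)}=\frac{c\,s^{a}}{n^{b}}\,e^{-\gamma s^2/n}\bigl(1+o(1)\bigr)
\end{equation*}
holding uniformly for $s\le K\sqrt n$, with explicit exponents $a,b$ and constant $\gamma$. This is obtained from the gamma-ratio asymptotics (\Cref{prop:gamma_ratio_asymptotic}) and Stirling's bounds (\Cref{prop:stirling_bounds}), following the derivation of \eqref{eq:local-T}. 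The square of the binomial ratio $\binom{n}{r}/\binom{2n-2r}{n-r}$, the factor $2^{2(n-2r)}$, and $1/\Tr\mathbb P_{2r}$ should combine so that the exponential growth cancels. What remains is a Gaussian in $s/\sqrt n$ times a polynomial prefactor.

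\textbf{Step 3 (tail and limit).} For the tail I write the terms as $e^{n\psi(r/n)}$ for a concave rate function $\psi$ with $\psi(1/2)=\psi'(1/2)=0$. The quadratic bound $\psi(x)\le-c(x-\tfrac12)^2$ then controls $K\sqrt n<s\le n(1-2\delta)$, and strict negativity of $\psi$ on $[0,\delta]$ kills the remaining range. This mirrors the proof of \Cref{lemma:ac_asymptotic} with $\alpha=1$.

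The main part is then a Riemann sum over $t=s/\sqrt n$ with mesh $2/\sqrt n$, and it converges to an integral of the form $\int_0^\infty c\,t^{a}e^{-\gamma t^2}\,dt$. The final check is that this integral equals $2$.

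\textbf{Expected obstacle.} The main obstacle is Step 2: carrying the exact constants through the squared Pochhammer and binomial factors, so that the prefactor and the Gaussian width are correct and the value $2$ really comes out. The tail estimate is routine once that local expansion is in hand. As a consistency check I will evaluate $Q_n$ numerically for moderate $n$ from the exact finite sum, which will confirm both the bookkeeping and the limit.
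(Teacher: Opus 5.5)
Your plan is correct, and carried through it proves the limit directly. It differs from the paper in how the asymptotics of the finite sum are organized.

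\textbf{The paper's route.} It writes the ratio as $\sum_k a_k(n)p_k(n)$, with $p_k(n)=\Tr[\mathbb P_k\ketbra{\bm n_0}^{\otimes 2}]$ and $a_k(n)=|\mathcal S_{n,n}|^2p_k(n)/\Tr\mathbb P_k$, and uses only that $\{p_k(n)\}_k$ is a probability distribution. It then proves $a_{n-\beta}(n)=2+o(1)$ uniformly for $\beta\le n^{\gamma}$ with $\tfrac12<\gamma<\tfrac34$. The bulk $k\le n-n^{\gamma}$ is removed using the concave exponent $F(x)=4\log 2-(3-x)\log(3-x)-(1+x)\log(1+x)$ together with a crude $O(\sqrt{n/k})$ prefactor. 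This yields only $\limsup\le 2$, so the matching $\liminf\ge 2$ is imported from Theorem~5.2 of Ref.~\cite{nezami2021permanentrandommatricesrepresentation}.

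\textbf{Your route.} It is essentially the proof of \Cref{lemma:ac_asymptotic} transplanted to the single-outcome sum. It resolves the profile of $p_k$ itself, which is a binomial local limit, because $p_{2r}$ equals $2^{-n}\binom{n}{r}$ up to polynomial factors. Concretely, $p_{n-s}(n)=\frac{2s+1}{n}e^{-s^2/(2n)}(1+o(1))$ for $s\le K\sqrt n$. Combined with $a_{n-s}(n)\to 2$ (at $s=0$ one has exactly $a_n(n)=\frac{2n-1}{n-1}$), your local form is $\frac{2(2s+1)}{n}e^{-s^2/(2n)}(1+o(1))$. In your notation this is $a=b=1$ and $\gamma=\tfrac12$. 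The Riemann sum over $s\equiv n \pmod 2$ then gives $\int_0^\infty 2t\,e^{-t^2/2}\,dt=2$. So you obtain both bounds at once with no external input, at the cost of exact bookkeeping. The paper, by contrast, never needs the Gaussian profile of $p_k$, only the value of $a_k$ near $k=n$.

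\textbf{The tail.} Your rate function is $\psi(y)=F(2y)$, and $F''\le -1$ on $[0,1]$ gives $n\psi(r/n)\le -s^2/(2n)$, the same Gaussian as the local term. The cutoff at $K\sqrt n$ works only because the prefactor in the range $K\sqrt n<s\le n(1-2\delta)$ is $O((s+1)/n)$ rather than $O(1)$; this holds for two reasons.
\begin{itemize}
\item The factor $\frac{2s+1}{n+s+1}$ survives in $p_k^2/\Tr\mathbb P_k$.
\item For $r\ge\delta n$, the Stirling square-root factors of $\binom{2n-1}{n}^2\binom{n}{r}^2$ against those of $\binom{n+s}{(n+s)/2}^2\binom{2n+s-1}{n-1}\binom{2n-s-2}{n-2}$ cancel to $O(1)$.
\end{itemize}
With only the paper's cruder prefactor you would have to push the cutoff to $n^{\gamma}$ with $\gamma>\tfrac12$, as the paper does.
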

    \begin{proof}
        For brevity, let us denote
        \begin{equation}
            A_n
            \coloneqq
            \frac{
                \underset{U\sim \mathrm{Haar}(n)}{\mathbb{E}}\!\left[X_U^2\right]
            }{
                \left(
                \underset{U\sim \mathrm{Haar}(n)}{\mathbb{E}}\!\left[X_U\right]
                \right)^2
            }.
        \end{equation}
        By the discussion from \Cref{sec:framework}, we may write
        \begin{equation}
            A_n
            =
           | \mathcal{S}_{n,n}|^2
            \sum_{k=0}^n
            \frac{\Tr[ \mathbb{P}_k \ketbra{\bm{n}_0}^{\otimes 2} ]^2}{\Tr P_k},
        \end{equation}
        where \(\bm n_0=(1,\dots,1)\) and
        we use the dimension formula from \Cref{eq:dimensions},
        \begin{equation}
            \Tr \mathbb{P}_k
            =
            \frac{2n-2k+1}{2n-k+1}
            \binom{3n-k-1}{n-2}\binom{n+k-2}{n-2}
            =
            \frac{n^2(n-1)^2(2n-2k+1)}
            {(2n-k+1)^2(3n-k)(n+k)(n+k-1)}
            \binom{3n-k}{n}\binom{n+k}{n}.
        \end{equation}
        together with the collision-free input formula from \Cref{eq:tr_pk_n0}.
        Moreover, for brevity, we use the notation
        \begin{subequations}
        \begin{align}
            p_k(n) &\coloneqq \Tr[ \mathbb{P}_k \ketbra{\bm{n}_0}^{\otimes 2} ],\\
            a_k(n) &\coloneqq | \mathcal{S}_{n,n}|^2 \,\frac{p_k(n)}{\Tr P_k},\\
            b_k(n) &\coloneqq a_k(n)\,p_k(n).
        \end{align}
        \end{subequations}
        Importantly, $p_k(n)$ is a probability distribution over $k$ as $p_k(n) \geq 0$ and $\sum_{k=0}^n p_k(n) = 1$.
        We may simply express $S_n$ as
        \begin{equation}
            A_n=\sum_{k=0}^n b_k(n).
        \end{equation}
        To simplify the expressions, note that
        \begin{equation}
            |\mathcal{S}_{n,n}| = \binom{2n-1}{n} = \frac12 \binom{2n}{n}.
        \end{equation}
        Hence, for even \(k\),
        \begin{equation}
            a_k(n)
            =
            \binom{2n}{n}^2
            \frac{2^{\,n-k}(2n-k+1)(3n-k)(n+k)(n+k-1)}
            {4n^2(n-1)^2}
            \frac{\binom{n}{k/2}}
            {\binom{2n-k}{n-k/2}\binom{3n-k}{n}\binom{n+k}{n}},
        \end{equation}
        while \(a_k(n)=0\) for odd \(k\).
        Next, we choose
        \(
            \alpha_n\coloneqq \lfloor n^\gamma\rfloor,
            \) and \( \frac12<\gamma<\frac34
        \),
        and split the sum as
        \begin{equation}
            A_n
            =
            \underbrace{\sum_{k=0}^{n-\alpha_n} b_k(n)}_{\text{bulk contribution}}
            +
            \underbrace{
            \sum_{k=n-\alpha_n+1}^{n} b_k(n)
            }_{\text{edge contribution}}
            .
        \end{equation}
        The proof strategy is as follows: we will show that the bulk contribution converges to $0$, and that the edge contribution can be upper bounded by $2$ in the limit $n \to \infty$, yielding that 
        \begin{equation}
            \limsup_{n\to\infty} A_n \leq 2.
        \end{equation}
        In the following, we process the two pieces separately, starting with the edge contribution.
    
        \medskip
        \noindent\textbf{1. Edge contribution.}
        Let \(k=n-\beta\), where \(0\le \beta\le \alpha_n\). Since \(p_{n-\beta}(n)=0\) unless \(n-\beta\) is even, only admissible \(\beta\) contribute. For such \(\beta\), a direct calculation gives
        \begin{equation}
            a_{n-\beta}(n)
            =
            2^\beta
            \frac{(n+\beta+1)(4-\beta^2/n^2)(2n-\beta-1)}{4(n-1)^2}
            \frac{
                (2n)!^2
                \bigl(\frac{n+\beta}{2}\bigr)!
                (n-\beta)!
            }{
                \bigl(\frac{n-\beta}{2}\bigr)!
                (2n+\beta)!
                (2n-\beta)!
                n!
            }.
        \end{equation}
        Write \(\varepsilon_n\coloneqq \beta/n\). Since \(\beta\le \alpha_n=o(n)\), Stirling's formula from \Cref{prop:stirling_bounds} yields
        \begin{equation}
            \frac{
                (2n)!^2
                \bigl(\frac{n+\beta}{2}\bigr)!
                (n-\beta)!
            }{
                \bigl(\frac{n-\beta}{2}\bigr)!
                (2n+\beta)!
                (2n-\beta)!
                n!
            }
            =
            \sqrt{\frac{4n(n+\beta)}{4n^2-\beta^2}}\,
            2^{-\beta}
            \exp \bigl(n\,\Xi(\varepsilon_n)\bigr)
            \bigl(1+O(n^{-1})\bigr),
        \end{equation}
        where the exponent is an entropic term of the form
        \begin{equation}
            \Xi(x)
            \coloneqq
            \frac12\Bigl[
                (1+x)\log(1+x)
                +(1-x)\log(1-x)
                -4\Bigl(1+\frac{x}{2}\Bigr)\log\Bigl(1+\frac{x}{2}\Bigr)
                -4\Bigl(1-\frac{x}{2}\Bigr)\log\Bigl(1-\frac{x}{2}\Bigr)
            \Bigr].
        \end{equation}
        Expanding at \(x=0\), the quadratic term cancels and we get that
        \begin{equation}
            \Xi(x)=O(x^4).
        \end{equation}
        Hence
        \begin{equation}
            a_{n-\beta}(n)
            =
            \sqrt{\bigl(1-(\varepsilon_n/2)^2\bigr)(1+\varepsilon_n)}
            \,\frac{(2-\varepsilon_n-1/n)(1 + \epsilon_n + 1/n)}{(1-1/n)^2}\,
            \exp \bigl(n\,O(\varepsilon_n^4)\bigr)
            \bigl(1+O(n^{-1})\bigr).
        \end{equation}
        Since \(\varepsilon_n=O(n^{\gamma-1})\), we obtain uniformly for \(0\le \beta\le \alpha_n\) that
        \begin{equation}
            a_{n-\beta}(n)
            =
            2+O(n^{\gamma-1})+O(n^{4\gamma-3})+O(n^{-1})
            =
            2+o(1),
        \end{equation}
        because \(\gamma<3/4\). Therefore,
        \begin{equation}
            \sum_{k=n-\alpha_n+1}^{n} b_k(n)
            =
            \sum_{\beta=0}^{\alpha_n-1} a_{n-\beta}(n)\,p_{n-\beta}(n)
            =
            (2+o(1))
            \sum_{k=n-\alpha_n+1}^{n} p_k(n).
        \end{equation}
        Since \(\sum_{k=0}^n p_k(n)=1\), it follows that
        \begin{equation}\label{eq:edge_bound_hj}
            \limsup_{n\to\infty}
            \sum_{k=n-\alpha_n+1}^{n} b_k(n)
            \le 2.
        \end{equation}
    
        \medskip
        \noindent\textbf{2. Bulk contribution.}
        We first isolate the \(k=0\) term. Using the exact formulas,
        \begin{equation}
        p_0(n)=\frac{2^n}{\binom{2n}{n}},
        \qquad
        \Tr \mathbb{P}_0=\binom{3n-1}{n-2},
        \qquad
        |\mathcal S_{n,n}|=\binom{2n-1}{n},
        \end{equation}
        we obtain
        \begin{equation}
        b_0(n)
        =
        |\mathcal S_{n,n}|^2\,\frac{p_0(n)^2}{\Tr \mathbb{P}_0}
        =
        \frac{4^{n-1}}{\binom{3n-1}{n-2}}.
        \end{equation}
        In particular, \(b_0(n)=e^{-\Omega(n)}\), and hence \(b_0(n)\to0\).
        It therefore remains to consider \(1\le k\le n-\alpha_n\).
        Set
        \begin{equation}
            x\coloneqq \frac{k}{n}\in \Bigl[\frac1n,1-\frac{\alpha_n}{n}\Bigr].
        \end{equation}
        Using the binomial bounds from \Cref{lem:bin_bounds}, one obtains
        \begin{equation}
            p_k(n)
            \le
            \frac{4}{\pi}
            \sqrt{\frac{n(n-k/2)}{(2n-k)(k/2)}}
            \exp\!\left(
                n\left[
                    h(x)-(1-x)\log 2-\Bigl(1-\frac{x}{2}\Bigr)
                    h\!\left(\frac{x}{2-x}\right)
                \right]
            \right),
        \end{equation}
        and
        \begin{multline}
            \Tr P_k
            \ge
            \frac{1}{8}
            \frac{n(n-1)^2}{(n+k)(n+k-1)(3n-k)(2n-k+1)}
            \sqrt{\frac{(3n-k)(n+k)}{(2n-k)k}}
            \\
            \times
            \exp\!\left(
                n\left[
                    (3-x)h\!\left(\frac{1}{3-x}\right)
                    +(1+x)h\!\left(\frac{1}{1+x}\right)
                \right]
            \right),
        \end{multline}
        where \(h\) denotes the binary entropy with natural logarithms. Furthermore,
        \begin{equation}
            | \mathcal{S}_{n,n}|^2
            =
            \binom{2n-1}{n}^2
            \le
            C\frac{16^n}{n}
        \end{equation}
        for some absolute constant \(C>0\). Combining the preceding bounds, we obtain
        \begin{equation}
            b_k(n)
            \le
            C_0\,
            \frac{(1+x)(1+x-1/n)(3-x)(2-x+1/n)}{2x(1-1/n)^2}
            \sqrt{\frac{(2-x)x}{(3-x)(1+x)}}
            \exp\bigl(nF(x)\bigr),
        \end{equation}
        for some absolute constant \(C_0>0\), where
        \begin{equation}
            F(x)
            \coloneqq
            2h(x)+2(1+x)\log 2
            -(2-x)h\!\left(\frac{x}{2-x}\right)
            -(3-x)h\!\left(\frac{1}{3-x}\right)
            -(1+x)h\!\left(\frac{1}{1+x}\right).
        \end{equation}
        A direct simplification gives
        \begin{equation}
            F(x)=4\log 2-(3-x)\log(3-x)-(1+x)\log(1+x).
        \end{equation}
        In particular,
        \begin{equation}
            F(1)=0,\qquad F'(1)=0,
            \qquad
            F''(x)
            =
            -\frac{1}{3-x}-\frac{1}{1+x}
            \le -1
            \qquad (0\le x\le 1).
        \end{equation}
        Therefore, Taylor's theorem around \(x=1\) yields
        \begin{equation}
            F(x)\le -\frac{(1-x)^2}{2}
            \qquad (0\le x\le 1).
        \end{equation}
        Also, the prefactor is \(O(x^{-1/2})\) uniformly on \(x\in[1/n,1]\), so for \(1\le k\le n-\alpha_n\),
        \begin{equation}
            b_k(n)
            \le
            C_1\sqrt{\frac{n}{k}}
            \exp\!\left(-\frac{(n-k)^2}{2n}\right)
        \end{equation}
        for some absolute constant \(C_1>0\). Since \(n-k\ge \alpha_n\), we infer that
        \begin{equation}
            b_k(n)
            \le
            C_1\sqrt{\frac{n}{k}}
            \exp\!\left(-\frac{\alpha_n^2}{2n}\right).
        \end{equation}
        Summing over \(1\le k\le n-\alpha_n\), we obtain
        \begin{align}
            \sum_{k=1}^{n-\alpha_n} b_k(n)
            \le
            C_1
            \exp\!\left(-\frac{\alpha_n^2}{2n}\right)
            \sqrt{n}\sum_{k=1}^{n-\alpha_n}\frac{1}{\sqrt{k}}
            \le
            2C_1 n
            \exp\!\left(-\frac{\alpha_n^2}{2n}\right),
        \end{align}
        where we have used the standard fact that $\sum_{k=1}^{n-\alpha_n}\frac{1}{\sqrt{k}} \leq 2 \sqrt{n-\alpha_n} \leq 2 \sqrt{n}$.
        Since \(\alpha_n=\lfloor n^\gamma\rfloor\) with \(\gamma>1/2\), we have \(\alpha_n^2/n\to\infty\), and therefore
        \begin{equation}\label{eq:bulk_bound_hj}
            \lim_{n\to\infty}\sum_{k=1}^{n-\alpha_n} b_k(n)=0.
        \end{equation}
        Together with \(b_0(n)\to0\), this implies
        \begin{equation}
            \lim_{n\to\infty}\sum_{k=0}^{n-\alpha_n} b_k(n)=0.
        \end{equation}
    
        Combining this with \Cref{eq:edge_bound_hj}, we conclude that
        \begin{equation}
            \limsup_{n\to\infty} A_n\le 2.
        \end{equation}
        On the other hand, Theorem~5.2 of Ref.~\cite{nezami2021permanentrandommatricesrepresentation} gives the matching lower bound
        \begin{equation}
            \liminf_{n\to\infty} A_n\ge 2.
        \end{equation}
        Hence, we can conclude that
        $
            \lim_{n\to\infty} A_n=2
        $, as claimed.
    \end{proof}

\subsection{Bosonic swaps and R\'enyi entropies}\label{app:entanglement}

We show the relevance of our framework for calculating the 
particle-R\'enyi entropy. From \Cref{eq:bosonic_swap_to_trace} it follows directly that the R\'enyi-2 entropy of a particle reduced density matrix $\rho_q = \Tr_{n-q}(\rho)$ can be expressed as the negative logarithm  of the expectation value of $\mathbb{S}_q$ in the state $\rho \otimes \rho$: 
\begin{equation}     
S_2(\rho_q) = -\log \Tr\left[\Tr_{n-q}(\rho)^2\right]= -\log \Tr\left[\mathbb{S}_q\,\rho^{\otimes 2}\right] .
    \end{equation}

Importantly, \Cref{prop:efficient_swap_product} makes this connection computationally useful: for fixed-particle-number projections of product states, the relevant bosonic-swap expectation values can be evaluated efficiently. Since these determine the second Rényi and Tsallis entropies of particle-reduced states, they provide an efficient numerical tool for studying particle entanglement in a broad class of bosonic many-body states. In particular, this applies naturally to generalized Dicke states and to many-body models whose ground states admit such a description~\cite{Carrasco_2016}.

    Considering bosonic Fock states, let $\ket{\bm n}$ be the Fock state with occupation vector $\bm n=(n_1,\dots,n_m)$ and total particle number $|\bm n|=n$. Then
    \begin{align}
        \Tr\!\left[\mathbb{S}_q\,\ketbra{\bm n}^{\otimes 2}\right]
        =
        \Tr\!\left[\left(\Tr_q\ketbra{\bm n}\right)^2\right]
        =
        \frac{1}{\binom{n}{q}^2}
        \sum_{\substack{\bm q\geq \bm 0\\ |\bm q|=q}}
        \prod_{i=1}^m \binom{n_i}{q_i}^2,
    \end{align}
    see also Refs.~\cite{Popkov_2005,Carrasco_2016}. In other words, the purity of the $q$-particle reduced state of a Fock state is given by a weighted sum of squared multinomial coefficients.
    This gives a closed-form expression for the particle entanglement of Dicke and generalized Dicke states in the bosonic language, and it can be evaluated efficiently using \Cref{prop:efficient_swap_product}.

      Moreover, \Cref{prop:mean_purity_simple} can be interpreted as the exact ensemble average of the purity of particle-reduced density matrices for uniformly random bosonic Fock states. Hence, we can write
    \begin{equation}
        \underset{\bm n\sim \mathcal U_{m,n}}{\mathbb E}
        \Tr\!\left[\left(\Tr_q\ketbra{\bm n}\right)^2\right]
        =
        \frac{1}{\binom{n}{q}}\frac{\left(\frac{m+1}{2}\right)_n}
             {\left(\frac{m+1}{2}\right)_q
              \left(\frac{m+1}{2}\right)_{n-q}}.
    \end{equation}
    Thus, the quantity computed in \Cref{prop:mean_purity_simple} is precisely the mean purity of the particle-reduced state obtained by tracing out $q$ particles from a uniformly random $n$-particle Fock state in $m$ modes.
    
    We emphasize that the formula above gives the \emph{average purity} of a $q$-particle-reduced density matrix, and does not translate to the average Rényi entropy itself. Nevertheless, it immediately yields a lower bound, since the function $x\mapsto -\log x$ is convex:
    \begin{equation}
        \underset{\bm n\sim \mathcal U_{m,n}}{\mathbb E}\,
        S_2\!\left(\Tr_q \ketbra{\bm n}\right)
        \ge
        -\log
        \underset{\bm n\sim \mathcal U_{m,n}}{\mathbb E}\,
        \Tr\!\left[\left(\Tr_q \ketbra{\bm n}\right)
^2\right].
    \end{equation}
    Substituting the exact expression for the average purity therefore gives
    \begin{equation}
        \underset{\bm n\sim \mathcal U_{m,n}}{\mathbb E}\,
        S_2\!\left(\Tr_q \ketbra{\bm n}\right)
        \ge
        -\log\!\left[
        \frac{1}{\binom{n}{q}}\frac{\left(\frac{m+1}{2}\right)_n}
             {\left(\frac{m+1}{2}\right)_q
              \left(\frac{m+1}{2}\right)_{n-q}}
        \right].
    \end{equation}
    Thus, our formula yields an explicit lower bound on the average second Rényi entropy of particle-reduced density matrices for uniformly random bosonic Fock states.

 Taken together, these observations show that bosonic-swap expectation values play a dual role. On the one hand, they encode directly meaningful information about the particle-entanglement structure of bosonic many-body states. On the other hand, through \Cref{prop:efficient_swap_product}, they provide an efficient numerical tool for studying second-order entanglement entropies in systems whose ground states admit a Dicke or generalized Dicke description. This is particularly useful for quantum spin models with permutation-invariant ground states, where the bosonic swap operator serves as a probe of correlations and critical behavior.

    Let us denote
    \begin{equation}
        E(n, m, q)
        \coloneqq
        -\log \left[
        \frac{1}{\binom{n}{q}}
        \frac{\left(\frac{m+1}{2}\right)_n}
        {\left(\frac{m+1}{2}\right)_{q}\left(\frac{m+1}{2}\right)_{n-q}}
        \right].
    \end{equation}
    and set \(a = (m+1)/2\) for brevity. Also, let us consider the case when $m = \Theta(n)$ and $\beta = \Theta(n)$, and let us denote $\alpha = m / n$ and $\beta = q / n$. 
    Firstly, using \Cref{lem:bin_bounds}, we can write that
    \begin{equation}
        \log \binom{n}{q} = n h\left(\frac{q}{n}\right) + \frac{1}{2} \log n - \frac{1}{2} \log(q (n-q)) + O(1),
    \end{equation}
    where $h(x) = -x \log x - (1-x) \log(1-x)$ is the binary entropy.
    Moreover, we can write the Pochhammer symbols using Gamma functions as \((x)_k=\Gamma(x+k)/\Gamma(x)\), we can rewrite the second term as
    \begin{equation}\label{eq:entropy_second_term}
        \log\frac{(a)_n}{(a)_q(a)_{n-q}}
        =
        \log\Gamma(n+a) + \log\Gamma(a)
        -\log\Gamma(q+a) - \log\Gamma(n-q+a).
    \end{equation}
    Applying the bounds from \Cref{prop:stirling_bounds} to the Gamma-function representation in \Cref{eq:entropy_second_term}, and collecting the leading terms, we obtain
    \begin{equation}
        E(n,m,q)=n\,\Psi(\alpha,\beta)+O(\log n),
    \end{equation}
    where
    \begin{equation}
        \Psi(\alpha,\beta)
        =
        h(\beta)
        -
        (1+\alpha)\left[
            h\left(\frac{\beta+\alpha/2}{1+\alpha}\right)
            -
            h\left(\frac{\alpha}{2(1+\alpha)}\right)
        \right].
    \end{equation}
    In summary, whenever \(m/n \to \alpha>0\) and \(q/n \to \beta\in(0,1)\), the quantity \(E(n,m,q)\) grows linearly in \(n\). Thus, we get a linear lower bound on R\'enyi-2 entropies, i.e., a volume law holds.

\section{Useful combinatorial identities}\label{app:useful_comb}
    In this section, we compile the formulas that are used in the throughout the appendix. We begin with Stirling’s formula, which is one of the main tools for proving the bounds developed throughout this work.
    \begin{proposition}[Stirling's formula~\cite{Robbins1955ARO}]\label{prop:stirling_bounds}
        For every \(x>0\),
        \begin{equation}\label{eq:stirling_bounds_gamma}
        \sqrt{2\pi}\,x^{\,x+\frac12}e^{-x+\frac{1}{12x+1}}
        <
        \Gamma(x+1)
        <
        \sqrt{2\pi}\,x^{\,x+\frac12}e^{-x+\frac{1}{12x}}.
        \end{equation}
        In particular, for every \(t\ge 1\), we can write
        \begin{equation}\label{eq:stirling_crude}
        \sqrt{2\pi}\,t^{\,t-\frac12}e^{-t}
        <
        \Gamma(t)
        \le
        e\,t^{\,t-\frac12}e^{-t}.
        \end{equation}
        Moreover, equality in the upper bound holds if and only if \(t=1\).
    \end{proposition}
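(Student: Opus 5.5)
The plan is to work with the remainder function
\[
\mu(x)\coloneqq \log\Gamma(x+1)-\Bigl(x+\tfrac12\Bigr)\log x + x-\tfrac12\log(2\pi),\qquad x>0,
\]
so that \Cref{eq:stirling_bounds_gamma} is equivalent to $\frac{1}{12x+1}<\mu(x)<\frac{1}{12x}$. Using $\Gamma(x+2)=(x+1)\Gamma(x+1)$, one gets the one-step difference
\[
\mu(x)-\mu(x+1)=\Bigl(x+\tfrac12\Bigr)\log\Bigl(1+\tfrac1x\Bigr)-1.
\]
Set $z=1/(2x+1)\in(0,1)$, so that $1+\tfrac1x=\frac{1+z}{1-z}$. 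Expanding $\frac{1}{2z}\log\frac{1+z}{1-z}$ then gives
\[
\mu(x)-\mu(x+1)=\sum_{k\ge1}\frac{z^{2k}}{2k+1}>0 .
\]
Next I invoke the classical asymptotics $\mu(x)\to0$ as $x\to\infty$. This follows, for example, from log-convexity of $\Gamma$ (Bohr--Mollerup) combined with Wallis' product to fix the constant $\sqrt{2\pi}$. With it, $\mu$ telescopes as
\[
\mu(x)=\sum_{j\ge0}\bigl[\mu(x+j)-\mu(x+j+1)\bigr].
\]
It follows that $\mu$ is positive and strictly decreasing on $(0,\infty)$.

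\emph{Upper bound.} I bound $\frac{1}{2k+1}\le\frac13$ termwise and sum the geometric series. This gives
\[
\mu(y)-\mu(y+1)<\frac13\,\frac{z^2}{1-z^2}=\frac{1}{12y(y+1)}=\frac{1}{12y}-\frac{1}{12(y+1)}.
\]
Summing over $y=x+j$ telescopes to $\mu(x)<\frac1{12x}$. The inequality is strict because the $k\geq 2$ terms are strictly smaller than in the geometric bound.

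\emph{Lower bound.} This is the main obstacle. I will compare each difference with $\frac{1}{12y+1}-\frac{1}{12y+13}=\frac{12}{(12y+1)(12y+13)}$. Keeping only the $k=1$ term, $\frac{1}{3(2y+1)^2}$, is not enough for small $y$: it would require $24y\ge 23$. So the first attempt is to keep the full series, bounding it below by $\frac{z^2}{3}\sum_{k\ge0}(z^2/3)^{k}=\frac{z^2}{3-z^2}$ as Robbins does. With $z=1/(2y+1)$ this equals $\frac{1}{12y^2+12y+2}$. The desired inequality then becomes
\[
(12y+1)(12y+13)>12\,(12y^2+12y+2),
\]
that is, $144y^2+168y+13>144y^2+144y+24$, i.e.\ $24y>11$. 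This still fails for $y<11/24$. For the residual range $0<x<11/24$ I would handle the first summand (or first few summands) of the telescoping sum exactly, and apply the termwise comparison only from $y=x+1$ onward, where $24y>11$ holds. It then suffices to check that the single exact difference $\mu(x)-\mu(x+1)$ exceeds $\frac1{12x+1}-\frac1{12x+13}$ for $x\in(0,11/24)$; this is an elementary one-variable inequality in $x$. I expect this small-$x$ patch, rather than the series manipulation, to be the delicate part. Should it resist, an alternative is Binet's integral representation $\mu(x)=\int_0^\infty\bigl(\tfrac12-\tfrac1t+\tfrac1{e^t-1}\bigr)\tfrac{e^{-xt}}{t}\,dt$ together with a kernel comparison.

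\emph{Crude bounds.} For $t\ge1$, write $\Gamma(t)=\Gamma(t+1)/t$ and substitute the definition of $\mu$:
\[
\Gamma(t)\,e^{t}\,t^{\frac12-t}=\sqrt{2\pi}\,e^{\mu(t)} .
\]
Positivity of $\mu$ gives the strict lower bound $\sqrt{2\pi}\,t^{t-\frac12}e^{-t}<\Gamma(t)$. Since $\mu$ is strictly decreasing, the left-hand side is maximised on $[1,\infty)$ at $t=1$, where it equals $\Gamma(1)e=e$. Hence $\Gamma(t)\le e\,t^{t-\frac12}e^{-t}$, with equality iff $t=1$. This part needs only positivity and strict monotonicity of $\mu$, both established above, and is routine.
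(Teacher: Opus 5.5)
Your argument is correct. In substance it is Robbins's telescoping proof, which is all the paper relies on: the proposition is stated without proof and attributed to Robbins. Your version adds two things the paper actually needs.

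First, Robbins treats only integers $n\ge 1$, and for those the termwise comparison with $\frac{1}{12n+1}-\frac{1}{12n+13}$ holds for every $n$, since $24n>11$. The paper, however, asserts the bounds for all real $x>0$ and uses them at non-integer arguments in the anticoncentration estimates. So your extension is genuinely needed for the statement as written: the real-variable limit $\mu(x+N)\to 0$ and the small-$x$ patch.

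Second, the crude upper bound and its equality case do not follow from the first display. Since $\sqrt{2\pi}\,e^{1/12}\approx 2.7245>e$, the inequality $\sqrt{2\pi}\,e^{1/(12t)}\le e$ only holds for $t\ge 1.028$ or so. Your use of the strict monotonicity of $\mu$ is the right route.

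The step you leave open is
$$g(x)\coloneqq (x+\tfrac12)\log(1+\tfrac1x)-1>\tfrac{12}{(12x+1)(12x+13)} \quad\text{on } (0,\tfrac{11}{24}).$$
It is true with some margin: at $x=\frac{11}{24}$ the two sides are about $0.109$ and $0.100$, and $g(x)\to\infty$ as $x\to 0$. So it is a loose end rather than a gap. One way to close it:

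\begin{enumerate}
\item Since $2k+1\le 3(5/3)^{k-1}$ for all $k\ge 1$ (equality at $k=1,2$, then induction), you get $g(y)>\frac{z^2}{3}\sum_{j\ge0}(3z^2/5)^j=\frac{5}{60y^2+60y+6}$.
\item This dominates $\frac{12}{(12y+1)(12y+13)}$ exactly when $120y\ge 7$. That pushes your termwise comparison down to all $y\ge \frac{7}{120}$.
\item On the remaining interval $(0,\frac{7}{120})$, both $g$ and $\frac{12}{(12x+1)(12x+13)}$ are decreasing. It therefore suffices to compare $g$ at the right endpoint with the rational function at the left endpoint, on three pieces:
\begin{itemize}
\item on $(0,\frac1{50}]$: $1.044>\frac{12}{13}$;
\item on $[\frac1{50},\frac1{25}]$: $0.759>0.731$;
\item on $[\frac1{25},\frac{7}{120}]$: $0.618>0.601$.
\end{itemize}
\end{enumerate}

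Your single exact first step, followed by telescoping from $x+1$ (which lies above $\frac{7}{120}$), then gives $\mu(x)>\frac{1}{12x+1}$ for all $x>0$.
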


    \noindent As a consequence, we obtain the following bounds for binomial coefficients, which are also ubiquitous throughout our proofs.
    \begin{proposition}[Bounds for binomial coefficients~\cite{Oszmaniec_2022}]\label{lem:bin_bounds}
        Let $n,k$ be a natural numbers such that $k\in\{1,\ldots,n-1\}$, and let $x=\frac{k}{n}$ denote their ratio. The binomial coefficients are bounded as
        \begin{equation}
        \frac{1}{2\sqrt{2}} \sqrt{\frac{n}{k(n-k)}} \exp\left(n\, h(x)\right) \leq \binom{n}{k} \leq \frac{1}{\sqrt{2\pi}} \sqrt{\frac{n}{k(n-k)}} \exp\left(n\, h(x)\right),
        \end{equation}
        where $h(x)=-x\log(x) -(1-x)\log(1-x)$ is the binary entropy.
    \end{proposition}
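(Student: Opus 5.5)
The plan is to apply the Stirling bounds of \Cref{prop:stirling_bounds} separately to each factorial in $\binom{n}{k}=\frac{n!}{k!\,(n-k)!}$. We write $\ell\coloneqq n-k\ge 1$.

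\textbf{Leading term.} For integers $N\ge 1$, \Cref{prop:stirling_bounds} gives $N!=\sqrt{2\pi}\,N^{N+\frac12}e^{-N}e^{\theta_N}$ with $\frac{1}{12N+1}<\theta_N<\frac{1}{12N}$. Substituting this for $N=n,k,\ell$, the powers $e^{-N}$ cancel because $k+\ell=n$. The polynomial parts combine as
\[
\frac{n^{n+\frac12}}{k^{k+\frac12}\ell^{\ell+\frac12}}=\sqrt{\frac{n}{k\ell}}\;e^{n h(x)},\qquad x=\frac{k}{n}.
\]
Hence $\binom{n}{k}=\frac{1}{\sqrt{2\pi}}\sqrt{\frac{n}{k\ell}}\,e^{nh(x)}\,e^{\theta_n-\theta_k-\theta_\ell}$, and it only remains to control the exponent $\theta_n-\theta_k-\theta_\ell$.

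\textbf{Upper bound.} Here we need $\theta_n-\theta_k-\theta_\ell\le 0$. We have $\theta_n<\frac{1}{12n}$, while $\theta_k>\frac{1}{12k+1}$. Since $k\le n-1$, we get $12k+1\le 12n-11<12n$, so $\theta_k>\frac{1}{12n}>\theta_n$. Together with $\theta_\ell>0$, the exponent is negative. This yields the claimed upper bound with constant $\frac{1}{\sqrt{2\pi}}$.

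\textbf{Lower bound, generic case $k,\ell\ge 2$.} Use $\theta_n>0$ and $\theta_k+\theta_\ell<\frac{1}{12k}+\frac{1}{12\ell}\le\frac{1}{24}+\frac{1}{24}=\frac{1}{12}$. This gives the constant
\[
\frac{e^{-1/12}}{\sqrt{2\pi}}\approx 0.368>\frac{1}{2\sqrt2}\approx 0.354,
\]
which is enough.

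\textbf{Lower bound, edge case $k=1$ or $k=n-1$ (main obstacle).} This is the step I expect to be delicate: the Stirling error terms alone only give $e^{-1/6}/\sqrt{2\pi}\approx 0.338$, which is too weak. I would instead verify it directly. By the symmetry $k\leftrightarrow n-k$, take $k=1$, so $\binom{n}{1}=n$. With $m\coloneqq n-1$, we have
\[
e^{nh(1/n)}=n\left(\frac{n}{n-1}\right)^{n-1},
\]
so the right-hand side of the lower bound equals
\[
\frac{n}{2\sqrt2}\left(1+\frac1m\right)^{m+\frac12}.
\]
The claim therefore reduces to $(1+1/m)^{m+1/2}\le 2\sqrt2$ for all $m\ge 1$. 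This holds because $(1+1/m)^{m+1/2}$ is decreasing in $m$ (a standard fact, provable by differentiating $(t+\tfrac12)\log(1+1/t)$) and equals $2\sqrt2$ at $m=1$. Equality occurs at $n=2$, which shows that the constant $\frac{1}{2\sqrt2}$ is sharp.
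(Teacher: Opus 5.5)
Your proof is correct and complete. Robbins' form of Stirling gives the upper constant $1/\sqrt{2\pi}$ because $\theta_k>\frac{1}{12k+1}>\frac{1}{12n}>\theta_n$. For all $k,n-k\ge 2$ the Stirling error terms leave enough room, since $e^{-1/12}/\sqrt{2\pi}\approx 0.367>1/(2\sqrt2)$. Your monotonicity argument for $(1+1/m)^{m+1/2}$ then correctly handles $k\in\{1,n-1\}$; as you note, equality at $n=2$ means no error-term estimate could cover that case.

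The paper gives no proof of this statement and only cites Oszmaniec et al. Your argument is the standard Stirling-based one behind this classical bound, which appears with the same constants in MacWilliams--Sloane. So there is no alternative route in the paper to compare against.
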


    \noindent We also use bounds on a certain ratio of Gamma functions. The following is the special case \(s=\frac12\) of Gautschi's inequality~\cite[Eq.~(5.6.4)]{NIST:DLMF}:
    \begin{proposition}[Gautschi's inequality, \(s=\frac12\)]\label{prop:gautschi}
    Let \(x>\frac12\). Then
    \begin{equation}
        \frac{1}{\sqrt{x}}
        <
        \frac{\Gamma(x)}{\Gamma\!\left(x+\frac12\right)}
        <
        \frac{1}{\sqrt{x-\frac12}}.
    \end{equation}
    \end{proposition}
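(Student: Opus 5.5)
The plan is to derive both inequalities from the strict log-convexity of $\Gamma$ on $(0,\infty)$, combined with the functional equation $\Gamma(y+1)=y\,\Gamma(y)$. No asymptotics are needed.

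First, I would record strict log-convexity. From the series $(\log\Gamma)''(y)=\sum_{k\ge 0}(y+k)^{-2}>0$ for $y>0$ (equivalently, Bohr--Mollerup/Hölder), $\log\Gamma$ is strictly convex on $(0,\infty)$. Hence for any $u<v$ in $(0,\infty)$, the midpoint inequality
\begin{equation}
\Gamma\!\left(\tfrac{u+v}{2}\right)^2<\Gamma(u)\,\Gamma(v)
\end{equation}
holds strictly.

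For the lower bound, I take $u=x$ and $v=x+1$, both positive since $x>\frac12$. This gives $\Gamma(x+\tfrac12)^2<\Gamma(x)\Gamma(x+1)=x\,\Gamma(x)^2$. Taking square roots of these positive quantities yields $\Gamma(x+\tfrac12)<\sqrt{x}\,\Gamma(x)$, which is exactly $\Gamma(x)/\Gamma(x+\tfrac12)>1/\sqrt{x}$.

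For the upper bound, I take $u=x-\tfrac12$ and $v=x+\tfrac12$. The hypothesis $x>\frac12$ is used precisely here, to guarantee $u>0$. The midpoint inequality gives $\Gamma(x)^2<\Gamma(x-\tfrac12)\,\Gamma(x+\tfrac12)$. The functional equation at $y=x-\tfrac12$ gives $\Gamma(x-\tfrac12)=\Gamma(x+\tfrac12)/(x-\tfrac12)$. Substituting, I obtain $\Gamma(x)^2<\Gamma(x+\tfrac12)^2/(x-\tfrac12)$, and taking square roots gives $\Gamma(x)/\Gamma(x+\tfrac12)<1/\sqrt{x-\tfrac12}$.

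There is no real obstacle. The only points needing care are justifying strictness of the convexity, which I would do via the trigamma series, and checking that every argument of $\Gamma$ is positive. The latter is what forces the restriction $x>\tfrac12$ in the upper bound.
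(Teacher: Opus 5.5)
Your proof is correct. Both midpoint applications of the strict log-convexity of $\Gamma$ are made at positive arguments, the functional equation is used correctly, and every quantity is positive before you take square roots.

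The paper gives no proof of this proposition. It only cites Gautschi's inequality in the form of DLMF Eq.~(5.6.4), $x^{1-s}<\Gamma(x+1)/\Gamma(x+s)<(x+1)^{1-s}$ for $x>0$, $0<s<1$. A single specialization of that inequality at $s=\tfrac12$ does not give the stated two-sided bound:
\begin{itemize}
\item Applied at $x$, together with $\Gamma(x+1)=x\,\Gamma(x)$, it gives the stated lower bound $1/\sqrt{x}$, but a different upper bound, $\sqrt{x+1}/x$.
\item Applied at $x-\tfrac12$, it gives the stated upper bound $1/\sqrt{x-\tfrac12}$, but only the weaker lower bound $1/\sqrt{x+\tfrac12}$.
\end{itemize}
The proposition as stated takes one half from each specialization.

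Your route produces exactly the stated form directly, with each side coming from its own pair of points. It is self-contained, and it is essentially the standard mechanism behind Gautschi's inequality anyway: Wendel's H\"older argument is log-convexity. It also shows that the hypothesis $x>\tfrac12$ is needed only for the upper bound; the lower bound holds for all $x>0$. The citation is shorter, but that is its only advantage.
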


    \noindent We also use a similar, looser but more generally applicable fact:
    \begin{proposition}[Gamma-ratio asymptotics]\label{prop:gamma_ratio_asymptotic}
        Let \(a,b\in\mathbb R\) be fixed. Then, as \(t\to\infty\),
        \begin{equation}
        \frac{\Gamma(t+a)}{\Gamma(t+b)}
        =
        t^{a-b}\bigl(1+o(1)\bigr),
        \end{equation}
        or, equivalently,
        \begin{equation}
        \frac{\Gamma(t+a)}{\Gamma(t+b)\,t^{a-b}}\to 1.
        \end{equation}
    \end{proposition}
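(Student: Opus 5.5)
The plan is to derive the asymptotic directly from the two-sided Stirling bounds of \Cref{prop:stirling_bounds}, applied separately to the numerator and the denominator. Since $a,b$ are fixed, for all sufficiently large $t$ we have $x_a\coloneqq t+a-1>0$ and $x_b\coloneqq t+b-1>0$. We then write $\Gamma(t+a)=\Gamma(x_a+1)$ and $\Gamma(t+b)=\Gamma(x_b+1)$. By \eqref{eq:stirling_bounds_gamma},
\begin{equation}
\Gamma(x+1)=\sqrt{2\pi}\,x^{x+\frac12}e^{-x}\,e^{\theta(x)},\qquad 0<\theta(x)<\frac{1}{12x}.
\end{equation}
The factors $e^{\theta(x_a)}$ and $e^{\theta(x_b)}$ therefore both tend to $1$ as $t\to\infty$. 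This reduces the claim to showing
\begin{equation}
\log\frac{x_a^{x_a+\frac12}e^{-x_a}}{x_b^{x_b+\frac12}e^{-x_b}}=(a-b)\log t+o(1).
\end{equation}

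The key step is an expansion of $(x+\tfrac12)\log x$ for $x=t+c-1$, with $c\in\{a,b\}$. We write $\log x=\log t+\log\bigl(1+\tfrac{c-1}{t}\bigr)$ and use $\log(1+u)=u+O(u^2)$, which gives
\begin{equation}
\Bigl(t+c-\tfrac12\Bigr)\log(t+c-1)=\Bigl(t+c-\tfrac12\Bigr)\log t+(c-1)+O(1/t).
\end{equation}
Subtracting $x=t+c-1$ yields
\begin{equation}
\Bigl(t+c-\tfrac12\Bigr)\log t-t+O(1/t),
\end{equation}
because the constants $(c-1)$ and $-(c-1)$ cancel. Taking the difference between the cases $c=a$ and $c=b$, the terms $t\log t$, $-t$ and $-\tfrac12\log t$ all cancel. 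What remains is $(a-b)\log t+O(1/t)$. Exponentiating and combining with the Stirling correction factors gives $\Gamma(t+a)/\Gamma(t+b)=t^{a-b}e^{O(1/t)}=t^{a-b}(1+o(1))$. The equivalent limit formulation follows immediately.

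There is no serious obstacle here. The only point requiring care is bookkeeping of the $O(1/t)$ remainders. In the product $(t+c-\tfrac12)\cdot O\bigl((c-1)^2/t^2\bigr)$, the remainder is still $O(1/t)$ because $c$ is fixed. One also has to ensure $t$ is large enough that $x_a,x_b>0$, so that \Cref{prop:stirling_bounds} applies. As an alternative check, the case $b-a=\tfrac12$ is consistent with Gautschi's inequality (\Cref{prop:gautschi}), though the general statement needs the Stirling route above.
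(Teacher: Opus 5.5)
Your proof is correct: the bound $0<\theta(x)<1/(12x)$ from \Cref{prop:stirling_bounds} holds for all real $x>0$. Your expansion of $(t+c-\tfrac12)\log(t+c-1)-(t+c-1)$, including the $O(1/t)$ bookkeeping, is right.

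The paper states this proposition without proof, as a standard consequence of Stirling's formula. Deriving it from \Cref{prop:stirling_bounds} as you do is exactly the justification the paper implicitly relies on. Your argument also gives the quantitative form $t^{a-b}e^{O(1/t)}$, with constants depending only on $a,b$. That uniform version is what the paper actually needs when it applies the estimate over a range of arguments in the proof of \Cref{lemma:ac_asymptotic}.
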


    \noindent Next, we make use of the duplication formula for the Gamma functions~\cite[Eq.~(5.5.5)]{NIST:DLMF}:
    \begin{proposition}[Legendre duplication formula]\label{prop:duplication_formula}
        For every \(z\in\mathbb{C}\) such that neither side is singular, one has
        \begin{equation}\label{eq:duplication_explicit}
            \Gamma(z)\Gamma\!\left(z+\frac12\right)
            =
            2^{1-2z}\sqrt{\pi}\,\Gamma(2z).
        \end{equation}
    \end{proposition}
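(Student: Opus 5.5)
The statement is the classical Legendre duplication formula, and the plan is to prove it on the half-plane $\operatorname{Re} z>0$ through the Euler Beta integral, then extend it to every $z$ where neither side is singular.

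\textbf{Step 1: the identity for $\operatorname{Re} z>0$.} Start from
\[
B(z,z)=\frac{\Gamma(z)^2}{\Gamma(2z)}=\int_0^1 t^{z-1}(1-t)^{z-1}\,dt .
\]
Substitute $t=\tfrac{1+s}{2}$, so that $t(1-t)=\tfrac{1-s^2}{4}$. This gives
\[
B(z,z)=2^{1-2z}\int_{-1}^{1}(1-s^2)^{z-1}\,ds = 2^{2-2z}\int_0^1 (1-s^2)^{z-1}\,ds,
\]
where the second equality uses that the integrand is even. Next substitute $s=\sqrt{u}$, which turns the last integral into $\tfrac12\int_0^1 u^{-1/2}(1-u)^{z-1}\,du=\tfrac12 B(\tfrac12,z)$. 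Hence
\[
\frac{\Gamma(z)^2}{\Gamma(2z)} = 2^{1-2z}\,\frac{\Gamma(\tfrac12)\,\Gamma(z)}{\Gamma(z+\tfrac12)} .
\]
Using $\Gamma(\tfrac12)=\sqrt{\pi}$ (the Gaussian integral) and cancelling one factor of $\Gamma(z)\neq 0$, then rearranging, gives $\Gamma(z)\Gamma(z+\tfrac12)=2^{1-2z}\sqrt{\pi}\,\Gamma(2z)$ for $\operatorname{Re} z>0$. All Beta integrals used converge on this half-plane.

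\textbf{Step 2: extension to all admissible $z$.} Both sides are meromorphic on $\mathbb{C}$. The left side has poles exactly at $z\in\{0,-1,-2,\dots\}\cup\{-\tfrac12,-\tfrac32,\dots\}$. The right side has poles where $2z\in\{0,-1,-2,\dots\}$, which is the same set. The two sides agree on the open half-plane $\operatorname{Re} z>0$, so by the identity theorem they agree on the connected domain obtained by removing this common pole set. That is precisely the claimed range of validity.

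As a self-contained alternative that avoids complex analysis, one can check directly that both sides satisfy the same shift relation. Using $\Gamma(w+1)=w\Gamma(w)$, the left side gets multiplied by $z(z+\tfrac12)$ under $z\mapsto z+1$. The right side gets multiplied by $\tfrac14(2z)(2z+1)=z(z+\tfrac12)$. So validity on $\operatorname{Re} z>0$ propagates to every non-pole point $z$ by applying the shift finitely many times.

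\textbf{Main obstacle.} The only real care is in Step 2: checking that the pole sets of the two sides coincide, so that "neither side singular" describes a single connected region on which the identity theorem applies. Step 1 is routine once the substitution $t=\tfrac{1+s}{2}$ is chosen.
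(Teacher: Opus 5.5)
Your proof is correct. The paper does not prove this proposition at all: it is listed among the standard tools in the appendix and cited to the DLMF (Eq.~(5.5.5)). Your Beta-integral derivation is therefore a genuinely self-contained alternative rather than a reconstruction of an argument in the paper.

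The substitutions $t=\tfrac{1+s}{2}$ and $s=\sqrt{u}$ give $B(z,z)=2^{1-2z}B(\tfrac12,z)$ exactly as you write. Cancelling $\Gamma(z)$ is legitimate because $\Gamma$ has no zeros.

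Both extension arguments work. For the identity-theorem version, the common pole set $\{0,-\tfrac12,-1,-\tfrac32,\dots\}$ is closed and discrete, so its complement is connected. For the shift version, you should state explicitly that the accumulated multiplier $\prod_{j=0}^{k-1}(z+j)(z+j+\tfrac12)$ is nonzero precisely because $z$ avoids the pole set; that is what lets you divide it out.

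The two approaches buy different things. The citation costs nothing, since the formula is classical, while your proof makes the appendix self-contained. Note also that every place the paper invokes duplication involves only positive real arguments. These are the Gamma-function rewritings of $T_{n,m,r}$ in the anticoncentration bounds, and the identity $(m)_{2j}=4^j(\tfrac m2)_j(\tfrac{m+1}{2})_j$ in the mean-purity computation. So your Step~1 alone already covers all of the paper's uses, and the Pochhammer version is just a regrouping of the finite product $\prod_{i=0}^{j-1}(m+2i)(m+2i+1)$.
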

    
    \noindent The following statement simplifies certain sums of rational expressions containing $2$ Pochhammer symbols in the numerator and $1$ Pochhammer symbols in the denominator:\begin{proposition}[Chu-Vandermonde identity~\cite{Koepf1998}]\label{prop:elementary_vandermonde}
        For every integer \(N \ge 0\),
        \begin{equation}
            \sum_{j=0}^N (-1)^j \binom{N}{j}\frac{(b)_j}{(a)_j}
            =
            \sum_{j=0}^N \frac{(-N)_{j}(b)_j}{(a)_j j!}
            = 
            \frac{(a-b)_N}{(a)_N}.
        \end{equation}
    \end{proposition}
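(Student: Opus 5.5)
We prove the Chu--Vandermonde identity stated in \Cref{prop:elementary_vandermonde} by a generating-function argument. The first equality is immediate from $(-N)_j/j! = (-1)^j\binom{N}{j}$ for $0\le j\le N$, and from $(-N)_j=0$ for $j>N$. The substance is therefore the evaluation
\[
\sum_{j=0}^N \frac{(-N)_j(b)_j}{(a)_j\,j!}=\frac{(a-b)_N}{(a)_N},
\]
which is the terminating Gauss sum ${}_2F_1(-N,b;a;1)$.

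Both sides are rational in $a$ and polynomial in $b$, so it suffices to prove the identity for integer $a>N$ and generic $b$; the general case follows by analytic continuation in $a$. Clearing the denominator $(a)_N$, we must show
\[
\sum_{j=0}^N \binom{N}{j}(-1)^j (b)_j\,(a+j)_{N-j}=(a-b)_N .
\]
This is a polynomial identity in $(a,b)$ of degree at most $N$ in each variable, so it is enough to verify it on sufficiently many points.

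\textbf{Approach 1 (coefficient extraction).} Use
\[
(x)_k/k!=[t^k](1-t)^{-x}, \qquad (-1)^j(b)_j/j!=\binom{-b}{j}, \qquad (a+j)_{N-j}/(N-j)!=\binom{a+N-1}{N-j}.
\]
After dividing by $N!$, the left-hand side becomes $\sum_j \binom{-b}{j}\binom{a+N-1}{N-j}$. This equals $[t^N](1+t)^{-b}(1+t)^{a+N-1}=\binom{a-b+N-1}{N}=(a-b)_N/N!$, which is the claim.

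\textbf{Approach 2 (Pfaff reflection).} Alternatively, one can derive the identity from Euler's integral representation at $z=1$ for $\Re a>\Re b>0$, giving $\Gamma(a)\Gamma(a-b+N)/(\Gamma(a+N)\Gamma(a-b))$, and then extend by polynomiality.

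The coefficient-extraction route is cleanest. The only obstacle is bookkeeping of signs and of Pochhammer-to-binomial conversions, in particular $(a)_N=(a)_j(a+j)_{N-j}$ and $\binom{-b}{j}=(-1)^j(b)_j/j!$. Once these are in place, the proof is a single Vandermonde convolution.
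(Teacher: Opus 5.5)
Your proof is correct. The paper gives no proof of its own: it attributes this proposition to Koepf's book as the classical terminating Gauss (Chu--Vandermonde) summation ${}_2F_1(-N,b;a;1)=(a-b)_N/(a)_N$. Your coefficient-extraction argument therefore supplies a self-contained derivation where the paper relies on the literature. The citation buys brevity; your route buys an elementary argument that also makes the range of validity explicit.

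Every step checks out:
\begin{itemize}
\item $(-N)_j/j!=(-1)^j\binom{N}{j}$;
\item $(a)_N=(a)_j(a+j)_{N-j}$;
\item $(-1)^j(b)_j/j!=\binom{-b}{j}$;
\item $(a+j)_{N-j}/(N-j)!=\binom{a+N-1}{N-j}$;
\item $(a-b)_N/N!=\binom{a-b+N-1}{N}$.
\end{itemize}
After these conversions the claim is exactly the convolution $\sum_j\binom{x}{j}\binom{y}{N-j}=\binom{x+y}{N}$ with $x=-b$ and $y=a+N-1$.

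Two small remarks. First, your preliminary reduction to integer $a>N$ followed by analytic continuation is never used. The convolution holds for arbitrary complex $x,y$: it is a polynomial identity, or equivalently it follows from $(1+t)^x(1+t)^y=(1+t)^{x+y}$ for $|t|<1$. So Approach~1 proves the cleared-denominator identity for all $(a,b)\in\mathbb{C}^2$ directly. Dividing by $(a)_N$ then gives the statement whenever $(a)_N\neq 0$, which also forces $(a)_j\neq 0$ for all $j\le N$; this is the implicit hypothesis of the proposition and is worth stating.

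Second, what you label the ``Pfaff reflection'' route is really Gauss's summation obtained from Euler's integral representation. It is also valid for $\mathrm{Re}\,a>\mathrm{Re}\,b>0$ and extends by polynomiality, but it is heavier than needed.
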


    \noindent Similarly, the following statement simplifies certain sums containing $3$ Pochhammer symbols in the numerator and $2$ Pochhammer symbols in the denominator~\cite[Equation (35.8.6)]{NIST:DLMF}:
    \begin{proposition}[Pfaff-Saalsch\"utz formula]\label{prop:pfaff_saalschutz}
        For every \(N \in \mathbb{Z}_{\ge 0}\),
        \begin{equation}
            \sum_{j=0}^{N}
            \frac{(-N)_j\,(A)_j\,(B)_j}
            {(C)_j\,(1+A+B-C-N)_j\,j!}
            =
            \frac{(C-A)_N\,(C-B)_N}
            {(C)_N\,(C-A-B)_N},
        \end{equation}
        provided that the Pochhammer symbols in the denominator are nonzero.
    \end{proposition}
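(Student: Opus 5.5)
The plan is to derive the identity by comparing Taylor coefficients in Euler's transformation for the Gauss hypergeometric function,
\begin{equation}
    (1-z)^{A+B-C}\,{}_2F_1(A,B;C;z) = {}_2F_1(C-A,C-B;C;z),
\end{equation}
valid as formal power series (or for $|z|<1$) whenever $C\notin\mathbb{Z}_{\le 0}$. Here ${}_2F_1(A,B;C;z)=\sum_{k\ge0}\frac{(A)_k(B)_k}{(C)_k\,k!}z^k$.

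\textbf{Step 1: prove Euler's transformation.} I would do this with the differential-equation argument. Let $F(z)={}_2F_1(A,B;C;z)$; it satisfies
\begin{equation}
    z(1-z)F''+[C-(A+B+1)z]F'-ABF=0.
\end{equation}
Substituting $F=(1-z)^{C-A-B}G$ gives a direct computation showing that $G$ satisfies the same equation with parameters $(C-A,C-B;C)$. When $C\notin\mathbb{Z}_{\le0}$, this equation has a unique power-series solution with $G(0)=1$, because the indicial recursion determines all coefficients. Hence $G={}_2F_1(C-A,C-B;C;z)$. If one prefers to avoid the ODE, the same identity follows from applying Pfaff's transformation $F(z)=(1-z)^{-A}{}_2F_1(A,C-B;C;\tfrac{z}{z-1})$ twice, itself a coefficient check using Proposition~\ref{prop:elementary_vandermonde}. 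I expect this step to be the main obstacle, since it is the only genuinely analytic or structural input; everything after it is bookkeeping.

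\textbf{Step 2: extract the coefficient of $z^N$.} Expand $(1-z)^{A+B-C}=\sum_j \frac{(C-A-B)_j}{j!}z^j$ and take the coefficient of $z^N$ on both sides of Euler's transformation:
\begin{equation}
    \sum_{k=0}^N \frac{(A)_k(B)_k}{(C)_k\,k!}\,\frac{(C-A-B)_{N-k}}{(N-k)!} = \frac{(C-A)_N(C-B)_N}{(C)_N\,N!}.
\end{equation}
Now use the reflection identities
\begin{equation}
    \frac{1}{(N-k)!}=\frac{(-1)^k(-N)_k}{N!},
    \qquad
    (x)_{N-k}=\frac{(-1)^k(x)_N}{(1-x-N)_k},
\end{equation}
the second with $x=C-A-B$. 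This turns the left-hand side into
\begin{equation}
    \frac{(C-A-B)_N}{N!}\sum_k \frac{(-N)_k(A)_k(B)_k}{(C)_k\,(1+A+B-C-N)_k\,k!}.
\end{equation}
Dividing by $(C-A-B)_N/N!$ yields the claimed formula.

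\textbf{Step 3: remove the genericity assumptions.} The division in Step 2 requires $(C-A-B)_N\neq0$. Both sides of the final identity are rational functions of $(A,B,C)$ that agree on a Zariski-dense set, so they agree wherever the denominators $(C)_j$ and $(1+A+B-C-N)_j$ are nonzero. This is exactly the proviso in the statement, up to interpreting the right-hand side by continuity where $(C-A-B)_N=0$. As an independent sanity check, setting $B=C$ reduces the identity to Proposition~\ref{prop:elementary_vandermonde}.
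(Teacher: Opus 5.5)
Your proof is correct, and it takes a different route from the paper only in the sense that the paper does not prove this statement at all. The paper quotes the Pfaff--Saalsch\"utz formula as a classical identity and points to the DLMF.

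What you give is the standard textbook derivation: Euler's transformation, extraction of the $z^N$ coefficient, then the two reflection identities. The key steps check out:
\begin{itemize}
\item the conjugated ODE indeed has parameters $(C-A,C-B;C)$;
\item both reflection formulas hold;
\item the signs $(-1)^k$ cancel as you claim.
\end{itemize}
The advantage over the citation is self-containment. If you take the ``Pfaff twice'' version of Step 1, the only external input is \Cref{prop:elementary_vandermonde}, which the appendix already states. The citation is, of course, shorter.

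Two refinements to your case analysis.

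\emph{The division in Step 2 needs no genericity argument.} Since $(1+A+B-C-N)_N=(-1)^N(C-A-B)_N$, the proviso already forces $(C-A-B)_N\neq 0$, because it covers the $j=N$ denominator on the left. Your remark about interpreting the right-hand side by continuity where $(C-A-B)_N=0$ therefore concerns an empty case.

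\emph{Step 3 is still genuinely needed, but for $C$ rather than for $C-A-B$.} The proviso only excludes $C\in\{0,-1,\dots,-(N-1)\}$. It therefore allows $C\in\{-N,-N-1,\dots\}$, where ${}_2F_1(A,B;C;z)$ is not a well-defined power series and Euler's transformation cannot be invoked directly. Your density argument covers exactly these values: clear denominators, get a polynomial identity on a dense set, then divide back wherever the stated denominators are nonzero. Step 3 should be framed as handling this case.
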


    \noindent Finally, we also use the following standard identity:
    \begin{proposition}\label{prop:inverse_sqrt_ub}
        Let \(N \in \mathbb{N}\). Then
        \begin{equation}
            \sum_{k=1}^N \frac{1}{\sqrt{k}} \le 2\sqrt{N}.
        \end{equation}
    \end{proposition}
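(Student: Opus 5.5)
The plan is a telescoping argument. The whole proof rests on the elementary termwise inequality
\begin{equation}
    \frac{1}{\sqrt{k}} \le 2\left(\sqrt{k}-\sqrt{k-1}\right), \qquad k\ge 1,
\end{equation}
after which summing over $k=1,\dots,N$ collapses the right-hand side to $2(\sqrt{N}-\sqrt{0}) = 2\sqrt{N}$, which is the claim.

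To establish the termwise inequality, I will rationalize the difference:
\begin{equation}
    \sqrt{k}-\sqrt{k-1} = \frac{1}{\sqrt{k}+\sqrt{k-1}}.
\end{equation}
Since $\sqrt{k-1}\le\sqrt{k}$, the denominator is at most $2\sqrt{k}$. Hence $\sqrt{k}-\sqrt{k-1}\ge \frac{1}{2\sqrt{k}}$, which is exactly the inequality above. This holds for every $k\ge1$, including $k=1$, where it reads $1\le 2$.

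An equivalent route is an integral comparison. The function $t\mapsto t^{-1/2}$ is decreasing, so $\frac{1}{\sqrt{k}}\le\int_{k-1}^{k}t^{-1/2}\,dt$. Summing gives $\int_0^N t^{-1/2}\,dt=2\sqrt{N}$; the singularity at $0$ is integrable, so this causes no issue.

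There is no real obstacle here. The only point needing care is handling the first term $k=1$ uniformly, which both routes do without a separate case.
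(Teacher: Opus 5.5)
Your proof is correct: the rationalization $\sqrt{k}-\sqrt{k-1}=1/(\sqrt{k}+\sqrt{k-1})\ge 1/(2\sqrt{k})$ followed by telescoping gives exactly $2\sqrt{N}$, and the integral comparison is an equally valid alternative. The paper gives no proof of this proposition at all, only calling it a standard identity, so your telescoping argument (which also covers the empty-sum case $N=0$ trivially) supplies precisely the omitted justification.
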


\end{document}